\documentclass{article}

\usepackage[dvipsnames]{xcolor}
\usepackage[colorlinks=true,
            linkcolor=MidnightBlue,
            urlcolor=MidnightBlue,
            citecolor=MidnightBlue,
            filecolor=black]{hyperref}
            
\usepackage{amsmath, amsthm, amsfonts, bm, bbm}
\usepackage[authoryear,round]{natbib}
\usepackage{enumitem}
\usepackage{float}
\usepackage{xspace}
\usepackage{algorithm}
\usepackage{algpseudocode}
\usepackage[T1]{fontenc}
\usepackage{lmodern}

\usepackage{url}
\usepackage[english]{babel}
\usepackage{csquotes}
\usepackage{graphicx}
\usepackage{subcaption}
\usepackage{booktabs, multirow, siunitx, tablefootnote}
\usepackage[nameinlink]{cleveref}
\usepackage[letterpaper,top=2cm,bottom=2cm,left=3cm,right=3cm,marginparwidth=1.75cm]{geometry}

\usepackage{authblk}

\newcommand{\R}{\mathbb{R}}
\newcommand{\N}{\mathbb{N}}
\newcommand{\Z}{\mathbb{Z}}

\newcommand{\eps}{\varepsilon}
\newcommand{\parteps}{\varepsilon_{\!P}}
\newcommand{\cceps}{\varepsilon_{\!C}}
\newcommand{\bcceps}{\varepsilon_{\!B}}
\renewcommand{\epsilon}{\varepsilon}
\newcommand{\cM}{\mathcal{M}}

\newcommand{\calX}{\mathcal{X}}
\newcommand{\calC}{\mathcal{C}}
\newcommand{\calE}{\mathcal{E}}
\newcommand{\calY}{\mathcal{Y}}
\newcommand{\calZ}{\mathcal{Z}}

\newcommand{\calM}{\mathcal{M}}

\newcommand{\svt}{\textsf{SVT}\xspace}
\newcommand{\wrp}{\textsf{W}}
\newcommand{\init}{\textsc{Initialize}\xspace}

\newcommand{\chk}{\textsc{Check}\xspace}
\newcommand{\upd}{\textsc{Update}\xspace}

\newcommand{\cc}{\textsf{CC}\xspace}

\newcommand{\bcc}{\textsf{BCC}\xspace}

\newcommand{\ecc}{\textsf{ECC}\xspace}
\newcommand{\partt}{\textsf{Part}\xspace}
\newcommand{\simpart}{\textsf{SimPart}\xspace}
\newcommand{\simecc}{\textsf{SimECC}\xspace}
\newcommand{\basecc}{\textsf{BaseCC}\xspace}
\newcommand{\gausscc}{\textsf{GaussianCC}\xspace}
\newcommand{\partitioncc}{\textsf{PartitionCC}\xspace}
\newcommand{\Lap}{\mathrm{Lap}}
\newcommand{\DLap}{\mathrm{DLap}}
\newcommand{\SPACESUM}{\!\!\!\!\!\!\!\!\!\!\!\!}
\newcommand{\sched}{\mathsf{sched}}
\newcommand{\cdistinct}{\mathsf{CountDistinct}}

\newcommand{\ccpbits}[1]{\mathsf{Count}(\sim_p, \{0,1\}^{#1})}
\newcommand{\ccptrits}[1]{\mathsf{Count}(\sim_p, \{-1,0,1\}^{#1})}
\newcommand{\ccebits}[1]{\mathsf{Count}(\sim_e, \{0,1,\bot\}^{#1})}
\newcommand{\ccgeneric}[1]{\mathsf{Count}(\sim, \mathcal{S}^{#1})}
\newcommand{\cccustom}[2]{\mathsf{Count}(#1, #2)}
\newcommand{\indic}{\mathbbm{1}}

\newcommand{\slvar}{\mathrm{schedList}}
\newcommand{\intsum}{\mathsf{intsum}}
\newcommand{\postfunc}{\mathrm{Post}}

\newcommand{\MS}{\text{MS}}

\newcommand{\pmonetobin}{\mathrm{TritsToBits}}

\newcommand{\err}[2]{$#1_{\pm #2}$}

\usepackage{aliascnt}

\newtheorem{theorem}{Theorem}
\numberwithin{theorem}{section}

\newcommand{\newaliastheorem}[3]{%
  \newaliascnt{#1}{theorem}%
  \newtheorem{#1}[#1]{#2}%
  \aliascntresetthe{#1}%
  \crefname{#1}{#2}{#3}%
  \Crefname{#1}{#2}{#3}%
}

\newaliastheorem{lemma}{Lemma}{Lemmas}
\newaliastheorem{corollary}{Corollary}{Corollaries}
\newaliastheorem{definition}{Definition}{Definitions}
\newaliastheorem{assumption}{Assumption}{Assumptions}
\newaliastheorem{fact}{Fact}{Facts}
\newaliastheorem{claim}{Claim}{Claims}
\newaliastheorem{remark}{Remark}{Remarks}
\newaliastheorem{notation}{Notation}{Notations}
\newaliastheorem{observation}{Observation}{Observations}
\newaliastheorem{proposition}{Proposition}{Propositions}

\newtheorem*{theorem*}{Theorem}
\newtheorem*{lemma*}{Lemma}
\newtheorem*{corollary*}{Corollary}
\newtheorem*{observation*}{Observation}
\newtheorem*{question*}{Question}

\usepackage[nameinlink]{cleveref}

\title{Edit-Neighboring Data Streams and Privacy under Continual Observation}
\author{Joel Daniel Andersson}
\author{Anamay Chaturvedi}
\author{Monika Henzinger}
\author{Roodabeh Safavi}
\affil{Institute of Science and Technology Austria}

\date{}

\begin{document}

\maketitle

\begin{abstract}
        Differential privacy under Continual Observation (CO) quantifies the loss in privacy that occurs when outputs generated using a stream of sensitive input data are published in the online setting.
    In prior work, this formulation requires that any private mechanism when given as input two neighboring streams that differ in the value of at most one stream element must generate output streams that are almost indistinguishable.

    In this paper, we consider a more general notion of privacy wherein an individual's decision to participate in the data collection process may potentially \emph{shift} the entire stream by a time-step.
    We define a new notion of \emph{edit-neighboring} streams that captures this scenario.
    Our findings are as follows.

    First, we prove that on a stream of length $T$, no \emph{additive-noise mechanism} achieves additive error less than $\tilde{\Omega}(\min\{T^{1/3}/\eps^{2/3}, T\})$ when required to be $\eps$-DP under CO for edit-neighboring streams.
    In particular, this includes state-of-the-art continual counters constructed via the factorization mechanism that in the standard neighboring setting incur only polylogarithmic additive error.

    Second, we construct the first mechanisms with polylogarithmic additive error for our more stringent notion of privacy.
    We show that we can recover the same additive error as in the standard notion of privacy albeit with worse constant coefficients for both arbitrary input streams and sparse streams.

    Third, we show that the notion of edit-neighboring streams inhabits a `sweet-spot' in terms of generality and additive error incurred.
    More precisely, we show that the even more general notion of \emph{prefix-sum neighboring} streams---which arises naturally in reductions for problems under CO---must incur additive error scaling as $\tilde{\Omega}(\min\{T^{1/3}/\eps^{2/3}, T\})$ for any mechanism that is $\eps$-DP under continual observation.

    Finally, we show empirically on synthetic data that when compared with prior work, our mechanism achieves a superior trade-off between the success probability of a simple distinguishing attack, and the additive error incurred by the respective mechanisms.
\end{abstract}

\begingroup
\makeatletter
\let\thefootnote\relax
\long\def\@makefntext#1{\noindent#1}%
\makeatother
\footnotetext{Contact: \texttt{\{\href{mailto:joel.andersson@ist.ac.at}{joel.andersson},\href{mailto:anamay.chaturvedi@ist.ac.at}{anamay.chaturvedi},\href{mailto:monika.henzinger@ist.ac.at}{monika.henzinger}\}@ist.ac.at, \href{mailto:roodabehsafavi@gmail.com}{roodabehsafavi@gmail.com}.}}
\endgroup

\newpage

\section{Introduction}
\label{sec:introduction}

The seminal works \cite{dwork2010differentially} and \cite{chan2011private} introduced the notion of differential privacy (DP) under continual observation (CO), or DP-CO in short, to reason about the loss in privacy that occurs when an analyst collects sensitive data over time, and outputs the values of a sensitive function of the accumulated data iteratively. This definition has proven to be widely influential, as it captures important instances of sensitive data monitoring, such as federated learning of on-device language models \citep{xu2023federated}, and continual release of aggregate web-service statistics, where the absence of any privatization has been shown to leak individual behavior \citep{calandrino2011you}. At a high level, we say that a mechanism $\calM: \calX^* \to \calY^*$ is $(\eps,\delta)$ DP-CO if for any pair of \emph{neighboring} data streams $x, x'\in\calX^*$, and any set of output streams $Y \in \calY^*$,
    \[ \Pr[\calM(x) \in Y] \leq e^{\eps} \Pr[\calM(x')\in Y] + \delta. \]
Here, $x$ and $x'$ are considered neighboring if they have the same length, and if there is at most one index $t$ such that $x_t \not= x'_t$ - we will refer to this notion as swap-neighboring. Intuitively, this definition captures the idea that the value of an individual's contribution at a particular time-step does not significantly perturb the output distribution of a DP mechanism.

\cite{dwork2010differentially} motivate this notion of (event-level) adjacency by modeling the iterative process of processing a stream of inputs and generating outputs as occurring over an atomic sequence of discrete time intervals. Each stream element is the value contributed by an individual, and in each time interval a mechanism receives a value, processes it, and generates a new output. They point out that since they are studying real systems where time must be accounted for, they identify time-steps where ``nothing happened'', i.e., no stream element was given to the mechanism, with receiving values of $0$. Further work by Dong, Luo, and Yi~\citep{dong23continual} introduced the symbol $\bot$ to depict the absence of a value received at certain time-steps; this avoids conflating the absence of an input with a $0$-valued input for problem settings where the two symbols are semantically distinct.

There is a large quantity of prior work \citep{kifer2011no, kifer2014pufferfish, kasiviswanathan2014semantics, tschantz2020sok} in the DP literature reasoning about the semantics of privacy and apt definitions of neighboring data sets, given the threat model and the adversary's side information. At a high level, the analyst should be able to promise an individual that the latter's decision to participate in the data collection process does not significantly change the output of the mechanism. More concretely, the posterior belief of an adversary after observing the mechanism's output if the individual participates in the process should be close to the counterfactual posterior if they choose not to participate in the process.

The choice of neighboring relation determines which counterfactual the privacy guarantee binds. The standard decision to model neighboring streams in the CO setting using swap neighbors ensures that the mechanism is insensitive to their value. Adding a $\bot$ symbol to the data universe $\calX$ to indicate non-participation is more general than just the swap-neighboring definition by itself, but we argue below that this by itself does not suffice to capture all counterfactual neighboring streams when the goal is to hide participation.

For example, consider an individual who has the option to participate in the data collection, but with the constraint that they can only contribute their value $x_*$ at some real-world time $r_*$. Following \cite{dwork2010differentially}, we consider a DP-CO mechanism that discretizes time into atomic steps, and accepts at most one individual's value per time-step. If the individual chooses to participate in the data collection, then the real-world time $r_*$ at which they may contribute their value must be mapped to some time-step $t_*$. Let $x = (x_1,\dots)$ be the `default' data stream that would be received by a mechanism in the CO setting if this individual does not participate. We consider two situations. If in the absence of the user's contribution the time-step $t_*$ was such that $x_{t_*} = \bot$, i.e., no other user participated at a real-world time which was mapped to $t_*$, then the counterfactual stream for when the user does participate is defined simply by defining $x'_t = x_t$ for $t\not=t_*$, and $x'_{t_*} = x_*$. On the other hand, if the default stream were such that $x_{t_*}\not=\bot$, then the real-world time $r'_*$ of some other individual was also mapped to $t_*$.

Since the mechanism can process only one individual per time step, a choice must now be made - the mechanism sets either $x'_{t_*}=x_{t_*}$ or $x'_{t_*}=x_*$. Regardless of the tie-breaking rule, we now have an extraneous value that must again be allocated a position in the stream. The natural choice would be to then set $x'_{t+1}$ equal to the extraneous value, but this leads to the same problem when $x_{t+1}\not=\bot$. We see that the counterfactual stream necessarily involves a sequence of shifts until the next $\bot$ value is encountered - in the absence of a $\bot$ value, potentially the whole stream, starting from the time step $t$, is shifted by $1$ unit.

This example illustrates a gap between existing notions of adjacency studied in the CO setting, and what counterfactual streams look like when the goal is to privatize an individual's decision to participate in the data collection of a CO mechanism. In this paper, we study the significance of this shifting artifact by focusing on the \emph{continual counting} problem in DP-CO.

The continual counting problem was introduced by \cite{dwork2010differentially,chan2011private} in the very first works in DP-CO. The analyst is given a stream of real-valued inputs, and must output at each time step an estimate of the sum of all inputs seen so far. If the exact counts are released, then the counter reveals all stream values. When the privacy of the stream must be preserved, the analyst must compute some noisy outputs that balance utility and privacy over time.
As a motivating example, consider a testing facility that publishes a running count of the number of positive COVID test results observed so far. Here, each input $x_t \in \{0,1\}$ indicates whether the $t$-th test is positive or negative, and a value $x_t = \bot$ indicates that no-one tested at time-step $t$. Continual counting is one of the foundational problems in the study of DP-CO, and serves as an excellent starting point for our investigation of counterfactual data streams and neighboring relations.

\subsection{Our contributions}

\noindent\textbf{I. Edit-neighboring streams.} To capture the sort of counterfactual stream described above via the neighborhood relation that privatizes participation, we introduce a notion of adjacency that allows for streams with an \emph{arbitrary number of contiguous shifts} to be neighboring. This captures the collision scenario and queuing phenomenon that we discuss; a single insertion at any point can lead to an arbitrary number of downstream shifts, and any $\bot$
values encountered absorb and terminate this sequence of shifts.

\begin{definition}[edit neighboring, informal]
    Suppose a mechanism accepts input for some $T$ time-steps. Given $x, x'\in (\calX\cup\{\bot\})^*$ and a time step $i\in[T]$, let $j$ be the first time step after $i$ with $x'_j=\bot$; if no such time step exists, let $j = T+1$.
    We define $x$ and $x'$ to be \emph{edit-neighboring} if $x'_k = x_k$ for $k<i$, $x_i\in \calX$, $x_{k-1}'=x_k$ for $i+1\leq k\leq \min\{j, T\}$ and $x'_k=x_k$ for $j<k\leq T$, i.e.~the time steps ``between'' $i$ and $j$ are shifted. 
\end{definition}
The above definition captures settings where the system is run for a certain number of steps $T$, and thus if a shifting occurs and the last input ends up in spot $T+1$, it will not be processed and is not considered part of an input. For example, if the last person of a queue reaches a COVID testing center when no more slots are available, they will not get tested. One can also consider a natural variant of edit neighboring which allows different lengths for neighbor streams and covers the scenarios where the system continues processing until the data generation process indicates that the stream has terminated. We address this as well by a simple reduction to the fixed time-horizon setting defined above, so for exposition we focus primarily on the simpler definition described here.

We note that this definition can lead to the challenging setting where many bits in the input differ. For example, if $(x_i, x_{i+1}, \dots , x_j) =  (0, 1, 0, 1, \dots, 0, 1, \bot)$, then $x$ and $x'$ differ in \emph{all} values starting from  $i+1$ to $j$. Thus, this definition requires us to ``hide'' the difference of $j-i$ many bits from the adversary and not just one bit as required by the classic definition of neighboring streams. 
The edit neighboring definition gives us the expressiveness to represent arbitrarily long time stretches where no data is recorded, and also protect settings where an insertion collision at one time step causes all entries to shift. Further, in contrast with some prior works on continual counting, in this paper we will consider real-valued inputs $x_t \in [0,1]$, as opposed to just the binary setting $x_t\in \{0,1\}$.

\medskip\noindent\textbf{II. Lower bounds for private continual counting for edit-neighboring streams.}

At the outset it is not clear whether one can achieve a good trade-off between privacy and accuracy for continual counting when measuring privacy loss by defining adjacency via edit-neighboring streams. The privacy analyses of the state-of-the-art \citep{henzinger25normalized} data-independent factorization mechanisms require their inputs to have low $\ell_1$-sensitivity, i.e., $\| x - x' \|_1 \leq 1$. However, when shifts are introduced, the $\ell_1$-sensitivity between neighboring inputs can be as high as $T$, leading to polynomially scaling additive error. We find that this is not just a shortcoming of the analyses, but that in fact \emph{all} data-independent additive-noise mechanisms (i.e., mechanisms which add noise to the prefix sums where the additive noise is independent of the sums) \emph{must} incur polynomial error in this setting.

\begin{theorem}[Lower bound, informal]\label{thm:lb-edit-informal}
      Let $\eps\in(0, 1/2], \delta\in[0, 1)$, and $T\in\N$ be sufficiently large.
      Let $\cM$ be an $(\eps, \delta)$-DP algorithm for continual counting on $\{0,1,\bot\}^T$ with edit-neighboring streams.
      Suppose that $\cM$ is a data-independent additive noise mechanism.
      Then, with probability 0.99, it incurs additive $\ell_\infty$ error $\alpha$ where
      \begin{enumerate}
          \item If $\delta = o(\eps / T)$, then $\alpha = \tilde{\Omega}\left(\min\left\{\frac{T^{1/3}}{\eps^{2/3}}, T\right\}\right)$.
          \item If $\delta = 0$, then $\alpha = \Omega\left(\min\left\{\sqrt{\frac{T}{\eps}}, T\right\}\right)$.
      \end{enumerate}
\end{theorem}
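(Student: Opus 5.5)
The plan is to reduce to a packing-style argument over a family of mutually edit-reachable streams. Because $\cM$ is data-independent additive noise, $\cM(x)=S(x)+Z$, where $S(x)\in\R^T$ is the prefix-sum vector and the distribution of $Z$ does not depend on $x$. The key consequence is that privacy then constrains only the noise distribution $Z$ and its translates. For every edit-neighboring pair $x,x'$, the laws of $Z+S(x)-S(x')$ and $Z$ must be $(\eps,\delta)$-indistinguishable. This holds for every stream, so we may pick the stream freely to maximise the shift $\Delta=S(x)-S(x')$. Taking $x$ alternating, such as $(1,0,1,0,\dots)$ with no $\bot$, a single insertion at step $i$ flips the parity of every later entry. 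The resulting $\Delta$ has $|\Delta_t|\le 1$, but it is a persistent $\pm1$-type pattern on all $t\ge i$. Better still, we can use streams with a block of $k$ ones followed by $k$ zeros, repeated. Then one shift moves prefix sums by roughly up to $k$ on a constant fraction of coordinates after $i$. With a $\bot$ placed at position $j$ to stop the shift, the perturbation is confined to $[i,j]$.

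\textbf{Step 1 (pure DP, $\delta=0$).} By group privacy along a chain of $m$ edit steps, we can realise a shift vector $\Delta$ whose entries have magnitude about $m$ on $\Theta(T)$ coordinates. For example, $m$ successive insertions into a block-structured stream shift it by $m$ positions, changing prefix sums by up to $\min(m,k)$. We then build a packing: $N$ streams, each reachable from a base stream by $m$ edits, whose prefix-sum vectors are pairwise $2\alpha$-separated in $\ell_\infty$. The indices are chosen as $N\approx T/m$ disjoint windows, each of which can be shifted independently using $\bot$ separators. The standard packing argument then gives $N e^{-\eps m}\lesssim 1$. Choosing $m\approx \alpha$ (a shift of $\alpha$ ones creates error $\alpha$) and $N\approx T/\alpha$ yields $\log(T/\alpha)\gtrsim \eps\alpha$. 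That only gives a logarithmic bound, so instead I will exploit additivity more strongly. Because $Z$ is fixed, indistinguishability of $Z$ and $Z+\Delta$ for all realizable $\Delta$ forces $Z$ to be spread along each such direction. A $\Delta$ that equals a $\pm$ pattern with $T/\ell$ sign blocks of length $\ell$ comes from a stream of period $2\ell$ after one shift, so one edit gives a vector of $\ell_\infty$ magnitude about $\ell$. Therefore $Z$ must have scale about $\ell/\eps$ in that direction. Combining one-edit shifts of magnitude $\ell$ with the requirement of error $\alpha$ is a stretch unless $\ell\le\alpha$. The intended balance is between $\ell$ and the number $T/\ell$ of independent windows, giving $\alpha\gtrsim\sqrt{T/\eps}$ under pure DP, via a packing over the $2^{T/\ell}$ choices of window shifts with group size $T/\ell$.

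\textbf{Step 2 (approximate DP).} The same construction is repeated, but pure-DP packing is replaced by a fingerprinting- or reconstruction-style bound appropriate to $\delta=o(\eps/T)$. Here I would first try the known additive-noise lower bound for answering a family of $k$ queries with sensitivity-$\ell$ shifts, of the form $\tilde\Omega(\sqrt{k}\,\ell/\eps)$ in $\ell_\infty$ up to logs for $(\eps,\delta)$. Then balancing $k=T/\ell$ against the self-imposed constraint gives the cube-root $T^{1/3}/\eps^{2/3}$.

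\textbf{Main obstacle.} The hardest step is converting these packing and fingerprinting bounds into exactly the stated exponents. In particular, we must design windows whose shifts are independent, and argue that the additive structure forces a single noise law to cover every direction. The cap at $T$ arises because prefix sums are trivially within $T$.
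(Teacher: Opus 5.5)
Your starting point is right, and it is also the paper's. For a data-independent mechanism, $(\eps,\delta)$-DP is purely a condition on the noise law: $Z$ and $Z+\Delta$ must be indistinguishable for every realizable shift $\Delta=S(x)-S(x')$. The argument breaks at the quantitative step. You claim that one insertion into a block-periodic stream moves prefix sums by about $\ell$, so that $Z$ needs scale $\ell/\eps$ in that direction. This is false for every stream. Write $s,s'$ for the prefix sums and let $\sigma$ be an insertion neighbor of $\sigma'$ at step $i$, with $\bot$ read as $0$. Then $s_t-s'_t=0$ for $t<i$, $s_t-s'_t=x_i-x'_t$ for $i\le t<j$, and $s_t-s'_t=x_i$ for $t\ge j$. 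So $|\Delta_t|\le 1$ always: a one-slot shift moves the $1$-Lipschitz prefix-sum curve by one unit in time, hence by at most one unit vertically. Block structure changes the pattern of $\Delta$, not its size. Consequently the $\ell/\eps$ noise scale, the window packing, and the ``balance'' yielding $\sqrt{T/\eps}$ (which is asserted rather than derived) all fall away. With correct magnitudes each window needs about $\alpha$ edits, so your packing gives only $(T/\ell)\ln 2\le O(\eps\alpha T/\ell)$, i.e.\ $\alpha=\Omega(1/\eps)$. Step 2 likewise names no concrete theorem and no constraint that would produce $T^{1/3}/\eps^{2/3}$.

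The missing idea is that what is large is the \emph{set of directions}, not their magnitude. Take $\sigma=(1,z_1,\dots,z_{T-1})$ and $\sigma'=(z_1,\dots,z_T)$, an insertion at step $1$; then $\Delta_t=1-z_t$. So a single edit realizes every vector of $\{0,1\}^T$, and inserting a $0$ instead realizes every vector of $\{0,-1\}^T$. This is exactly the sensitivity set of $\ccpbits{T}$. The paper's sensitivity-set containment lemma then shows that the same noise $Z$ gives an $(\eps,\delta)$-DP mechanism for $\ccpbits{T}$ with the same error. The paper then imports the unconditional lower bound for that problem. That bound comes from the $\cdistinct$ lower bound of Jain et al.\ via output-difference streams (which are $\sim_p$-neighboring streams in $\{-1,0,1\}^T$) and the $\pmonetobin$ encoding, and both exponents are inherited from there.

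Alternatively, you could finish along your own lines once you know that every $c\in\{0,1\}^T$ is a one-edit shift. For $\delta=0$, translation invariance of the constraint on $Z$ gives $\Pr[Z+nc\in O]\le e^{\eps n}\Pr[Z\in O]$ for every such $c$. This holds even though $nc$ is not itself a realizable shift, which is precisely where data-independence is used. Packing the $2^T$ disjoint $\ell_\infty$-balls of radius $\alpha$ around the points $nc$, with $n=\lfloor 2\alpha\rfloor+1$, then forces $\alpha=\Omega(T/\eps)$. For $\delta>0$, fingerprinting lower bounds for $T$-dimensional one-way marginals should play the same role. Such bounds would be at least as strong as the stated ones, but nothing of this kind is in your proposal.
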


\medskip
\noindent\textbf{III. Continual Counting Mechanisms with polylog additive error for edit-neighboring streams.} The lower bounds above suggest a significant gap between the privacy-accuracy trade-offs for swap neighboring and edit-neighboring streams for data-independent additive-noise mechanisms. We address this gap by developing a meta-algorithm $\simecc$ (Simple edit-neighboring Continual Counter) that takes as input any continual counting algorithm that preserves privacy in the standard setting as a black box, and constructs a continual counter for the edit-neighboring setting whilst incurring additive error close to that of the standard setting.

\begin{theorem}[Upper bound, informal version of \Cref{thm:simecc-privacy} and \Cref{thm:ecc-simpart-accuracy}]\label{thm:informal-ub}
    For privacy parameters $\varepsilon > 0$, $\delta\in(0,1)$, there exists an algorithm $\ecc$ that is $(\varepsilon,\delta)$ DP-CO for edit-neighboring streams, and with probability $1-\delta$ for all time-steps $t \geq 1$, incurs additive error $O(\frac{1}{\eps}\ln t \sqrt{\ln(t/\delta) \ln(1/\delta)}) + O(\tfrac{1}{\eps}\log(t/\delta))$.
\end{theorem}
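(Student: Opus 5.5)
The idea is a reduction: from an edit-neighboring input stream we build derived streams on which an edit changes only $O(1)$ entries, and then run existing swap-neighboring counters on those derived streams.

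\textbf{The transformation.} Write the input as $x\in(\calX\cup\{\bot\})^T$ with $\calX=[0,1]$. I would split it into two streams that the counter handles separately.
\begin{itemize}
\item The \emph{occupancy} stream is $b_t=\indic[x_t\neq\bot]$. Its prefix sums $n_t=\sum_{s\le t}b_s$ count the non-$\bot$ items seen up to time $t$.
\item The \emph{compressed value} stream $v$ lists the non-$\bot$ values in arrival order: $v_k$ is the $k$-th non-$\bot$ element.
\end{itemize}
The count at time $t$ is then $\sum_{k\le n_t}v_k$.

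\textbf{Sensitivity under an edit.} Take an edit at position $i$, with shifts absorbed at the first $\bot$ at $j$.
\begin{itemize}
\item In $v$, the edit is a single insertion of one element.
\item In $b$, it changes at most two coordinates: one new occupied slot, and either the absorbing $\bot$ at $j$ becoming occupied or the element falling off at time $T+1$.
\end{itemize}

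\textbf{Mechanism.} $\ecc$ runs a swap-private counter, e.g.\ the binary tree or factorization mechanism, on $b$ with budget $\eps/2$. This yields noisy $\tilde n_t$ with error $O(\frac1\eps\log t\sqrt{\log(t/\delta)})$.

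A single insertion into $v$ has unbounded $\ell_1$ effect on the positional sequence. So I would not count $v$ directly. Instead I would partition time into blocks, as in \simpart, and privately release per-block sums of values. An insertion inside the block structure affects only the block where it occurs plus a shift of one element across each later block boundary.

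To handle that shift, I would define blocks by data-independent \emph{item counts} rather than time: the $m$-th dyadic node covers items with indices in a fixed range. An insertion then shifts each boundary by one element of value in $[0,1]$. The dyadic tree has $O(\log t)$ levels, and in each level only the boundaries change, so I would try to argue that an insertion shifts every node's sum. This is a problem, and I expect it to be the main obstacle.

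\textbf{Fixing the obstacle.} First, I would instead use the per-time-step view. Consider the stream $y_t=x_t$ if non-$\bot$ and $0$ otherwise. An edit shifts $y$ between $i$ and $j$. Dyadic intervals over time then see differences only at their two endpoints (telescoping): the sum of $y$ over $[a,b]$ changes by at most $|y_a|+|y_{b+1}|\le 2$ for intervals inside $[i,j]$, plus $O(1)$ for intervals containing $i$ or $j$. In each level the total $\ell_1$ change is still large, since every interval in $[i,j]$ changes.

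So I would combine this with a sparse-vector step, as the additive term $O(\frac1\eps\log(t/\delta))$ in the statement suggests. The shifted region consists of times where values move. Using \svt on the occupancy counts, the mechanism outputs a fresh noisy estimate only when the count has changed substantially. Between updates it reports the last output, and a shift by one position changes each released difference by at most $1$.

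\textbf{Analysis.} Privacy follows by composing three parts:
\begin{itemize}
\item the $\ell_1$-sensitivity-$2$ counter on $b$;
\item \svt with threshold queries of sensitivity $O(1)$ under edits;
\item Laplace releases at update points, of which only $O(1)$ are affected.
\end{itemize}
Accuracy is a union bound over the tree-counter error, which gives the $\ln t\sqrt{\ln(t/\delta)\ln(1/\delta)}$ term via Gaussian-style composition, and the \svt threshold error of $O(\frac1\eps\log(t/\delta))$.

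The step I expect to be hardest is showing that an edit affects only $O(1)$ of the \svt comparisons and releases despite the arbitrarily long shift. I would attempt a coupling argument that matches time $k$ in $x$ with time $k+1$ in $x'$ across the shifted segment.
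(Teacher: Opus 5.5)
The step you defer as hardest is essentially the whole proof, and the plan as written does not reach it.

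\textbf{The identity coupling does not give $O(1)$ affected releases.} Your claim that only $O(1)$ \svt comparisons and releases change implicitly assumes both runs have the same block boundaries. Under that coupling occupancy gains nothing. Both occupancy streams equal $1$ on $\{i+1,\dots,j-1\}$, so an \svt on occupancy behaves identically on $\sigma$ and $\sigma'$ there. Meanwhile every block $[a,b]\subseteq\{i+1,\dots,j\}$ has its value sum changed by $x'_{a-1}-x'_b$. So arbitrarily many releases differ, which is exactly the obstacle you found with dyadic intervals.

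\textbf{What the shifted coupling requires.} The realignment you sketch at the end (time $k$ in one run matched with $k+1$ in the other) moves \emph{every} boundary in the shifted segment. It is usable only if two conditions hold. (a) The law of the boundaries must be stable under moving the first and last affected boundary by one. (b) The true boundary times must never enter the output. Condition (b) is needed because the coupled runs have different boundaries; if the output reveals them, you must compare the two runs at the \emph{same} boundaries, where the block sums again differ everywhere. Your mechanism violates (b), since it updates its output exactly at the \svt decision times.

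\textbf{Accuracy.} The accounting also does not close. The counter on $b$ estimates $n_t$, not $s_t$ for values in $[0,1]$. Independent Laplace releases per block accumulate noise when summed. So the main error term has to come from a continual counter run on block sums.

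\textbf{How the paper's \simecc fills these gaps.} For (a) it uses data-independent gaps $t_\ell-t_{\ell-1}=\max\{1,\mu+\lceil 4\ln(\ell)/\eps'\rceil+Z_\ell\}$ with $Z_\ell\sim\DLap(2/\eps')$ and $\eps'=\eps/51$. The injective map $f_{\sigma\to\sigma'}$ of Definition~\ref{def:f} shifts $t_{p+1},\dots,t_{q-2}$ by one. This changes only the gaps at $p+1$ and $q-1$ by $\pm1$, at cost $e^{O(\eps')}$ (Lemma~\ref{lem:simple-partition}; gaps below $3$ are excluded and charged to $\delta$). Afterwards the block sums are $4$-step $1$-neighbours and the gap sequences are $2$-step $1$-neighbours (Observation~\ref{obs:neighbor-mapping-property}).

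For (b) the gaps are fed to the biased counter \bcc, whose integer outputs $\widehat t_\ell\ge t_\ell$ serve as noisy checkpoints. The block sums are fed to \cc, and the estimate $\widehat v_\ell$ of $s_{t_\ell}$ is released at time $\widehat t_\ell$. The output is then a post-processing of two counters run on $O(1)$-step neighbours. Group privacy plus summation over the injective coupling gives $(\eps,\delta)$-DP.

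Accuracy combines three terms: the \cc error on at most $t$ block sums (the $\frac1\eps\ln t\sqrt{\ln(t/\delta)\ln(1/\delta)}$ term), the \bcc delay, and the gap bound $\mu+O(\ln(\ell/\delta)/\eps)$. It is this gap bound, not \svt, that produces the $O(\frac1\eps\log(t/\delta))$ term.

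\textbf{The \svt route.} An \svt-based partition can also work under the same shifted coupling; this is the paper's \ecc, and even your occupancy statistic yields $1$-shift neighbours on the two moved blocks. But it requires a new lemma: \svt halting at $t+1$ on a stream and at $t$ on its $1$-shift neighbour are $e^{O(\eps)}$-close (Lemma~\ref{lem:consecutive-output-distributions}). It also needs the same noisy-checkpoint schedule.
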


In comparison, state-of-the-art $(\eps,\delta)$-DP continual counters in the swap-neighboring setting have the guarantee that for any failure probability $\beta>0$, with probability $1-\beta$, for all $t\geq1$, they incur additive error of the form $O(\tfrac{1}{\eps}\ln(t)\sqrt{\ln(t/\beta)\ln(1/\delta)})$. Given any fixed constant $c >1$, if $\delta = \beta^c$, then the additive error $\simecc$ incurs matches that in the standard setting up to constant factors. 

In real-world data collection the data generation might be bursty and sparse, leading to input streams consisting mostly of $\bot$ values, which increases the additive error polylogarithmically even though most of the time no data is being collected. We show how to handle sparse streams by giving a more general mechanism $\ecc$ (edit-neighboring Continual Counter) that adapts to the \emph{sparsity} $s(x,t)$ of the data generation process, defined by the expression
\[ s(x,t) \,:=\, \sum_{i=1}^t 1(x_i\not=\bot), \]
i.e., $s(x,t)$ is the number of non-$\bot$ values observed. We see that in the real-valued setting wherein $x_t\in [0,1]$ for all $t\geq1$, the prefix sum 
$$s_t = \sum_{i=1}^t x_i$$
is bounded from above by the sparsity $s(x,t)$ for all $t$.
 
\ecc incurs data-dependent additive error that scales primarily with the prefix sum of the input stream. Our mechanism incurs additive error that scales polylogarithmically only with $s_t$, instead of with $t$. The dependence of the additive error on the stream length $t$ is relegated to a lower order $\log t$ additive term.
 This extends prior work by \cite{dwork2015pure} for swap-neighboring streams to the edit-neighboring setting. 

\begin{theorem}[Upper bound, informal version of \Cref{thm:ecc-privacy}, \Cref{thm:ecc-accuracy} and \Cref{obv:sparsity}]\label{thm:informal-ub.2}
    In the same setting as \Cref{thm:informal-ub}, there exists a mechanism that is $(\eps,\delta)$ DP-CO for edit-neighboring streams that, with probability $1-\delta$ for all time-steps $t \geq 1$, incurs additive error 
    $$O\left(\frac{1}{\eps}\ln s_t \sqrt{\ln(s_t/\delta) \ln(1/\delta)}) + O(\tfrac{1}{\eps}\log(t/\delta)\right).$$ 
\end{theorem}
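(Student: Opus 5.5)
The plan is to build $\ecc$ by composing two ingredients. The first is a private \emph{partitioning} step that turns the edit-neighboring input stream into a stream of ``blocks'' on which the standard swap/$\ell_1$-sensitivity notion applies. The second is a black-box standard continual counter run over these blocks.

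\textbf{The partitioning step.} We run a sparse-vector-style threshold process (built from \svt) on the running sum since the last cut. Each time the noisy partial sum crosses a noisy threshold of order $\Theta(\tfrac1\eps\log(t/\delta))$, we close a block and feed that block's exact sum to the counter. Outputs between cuts repeat the last released count.

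\textbf{Why edits become $\ell_1$-small.} Fix edit-neighbors $x,x'$ with the shift starting at $i$ and absorbed at the first $\bot$ at $j$. Every prefix sum of $x'$ differs from that of $x$ by at most $1$, because the shift only delays values by one step. We couple the two runs so that the cut times coincide. This costs at most one extra ``surprise'' in the \svt analysis, paid for with $\eps/2$ and a $\delta$-slack from the noise tail. Under this coupling, the vectors of block sums differ in $\ell_1$ by at most $2$: the block containing $i$ gains a value, and a later block loses the shifted-out value.

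\textbf{Privacy.} We run the standard counter (e.g.\ the $(\eps,\delta)$ factorization/binary-tree mechanism) with $\ell_1$ sensitivity $2$ and budget $\eps/2$. Basic composition then gives $(\eps,\delta)$ DP-CO for edit-neighboring streams, which is the privacy part of \Cref{thm:ecc-privacy}.

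\textbf{Accuracy.} The key point is that the number of blocks up to time $t$ is at most $O(s_t/\tau+1)$, where $\tau$ is the threshold. This holds with probability $1-\delta/2$ by a union bound over the Laplace noises, since each cut requires roughly $\tau/2$ true mass. The counter therefore runs on a stream of length $O(s_t)$ rather than $t$. Its error is $O(\tfrac1\eps\ln s_t\sqrt{\ln(s_t/\delta)\ln(1/\delta)})$, obtained by instantiating the standard bound with failure probability $\beta=\delta$. On top of this sits the within-block lag of at most $\tau+$noise, which is $O(\tfrac1\eps\log(t/\delta))$. Since each $x_i\in[0,1]$, we have $s_t\le s(x,t)$, which gives the sparsity statement of \Cref{obv:sparsity}.

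\textbf{Main obstacle.} The hardest step is the coupling argument that the cut times can be made identical. Inside the shifted window $[i,j]$, the partial sums of $x$ and $x'$ differ by exactly the single displaced value at each step. \svt handles only one potentially differing comparison cheaply, yet there could be many comparisons inside the window. My first approach is to note that the difference of the running partial sums is a fixed one-step offset, bounded by $1$, that is constant in sign over the window. This lets the standard ``shift the threshold noise by $1$'' argument cover all comparisons at once, as in the monotone-query analysis of \svt. Any residual event where the two runs still diverge would be absorbed into $\delta$ through the tail of the threshold noise.
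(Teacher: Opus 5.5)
There is a genuine gap, and it sits exactly at the step you flagged: a coupling in which the cut times of the two runs coincide cannot work. Suppose $x$ is an insertion neighbor of $x'$ at $i$ and the cuts are identical. Then for every block $(t_{\ell-1},t_\ell]$ inside the shifted window, the running sums fed to that block's \svt instance differ by $x'_{t_{\ell-1}}-x'_t$ at step $t$, and the block sums differ by $x'_{t_{\ell-1}}-x'_{t_\ell}$. This has two consequences.
(a) The block-sum vectors are not within $\ell_1$-distance $2$; their distance can be as large as the number of blocks in the window.
(b) Every one of these \svt instances, each with its own fresh threshold noise, sees a perturbed input and costs up to $\eps$ under composition, with no bound on how many such instances there are.
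Your repair targets the wrong issue. A single \svt instance already tolerates a perturbation of every comparison at cost $\eps$ (\Cref{lem:svt-privacy}); the loss accumulates \emph{across instances}. A monotonicity or constant-sign argument would only lower the per-instance cost, and the offset $x'_{t_{\ell-1}}-x'_t$ need not even have constant sign. Nor can the discrepancy be charged to $\delta$: per block it is a genuine $e^{\Theta(\eps)}$ change in the halting distribution, not a rare tail event. Finally, your output changes exactly at the cut times, so the cut times are published, and the privacy loss grows with the number of blocks in the window. For example, for alternating $x'$ the parity of the cut times is biased in opposite directions under $x$ and $x'$.

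The missing idea is the opposite coupling. The paper's map $f_{\sigma\to\sigma'}$ (\Cref{def:f}) shifts every checkpoint strictly inside the window by one. Interior blocks then feed \emph{identical} inputs to their \svt instances and cost nothing; only blocks $p,p+1,q-1,q$ and the final one are affected. Blocks $p+1$ and $q-1$ have lengths differing by one, which requires a new \svt property, \Cref{lem:consecutive-output-distributions}: for $1$-shift neighbors, halting at step $t+1$ versus step $t$ is $e^{O(\eps)}$-close. Its proof uses the slowly growing local threshold $8\ln(t+\lceil 12/\eps\rceil)/\eps$. It also uses a good event under which a fresh instance never fires on its first input, hence the restriction to cut sequences with gaps $\ge 2$.

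Because this coupling maps a cut sequence to a \emph{different} one, the true cut times cannot be released. The paper handles this in three steps:
\begin{enumerate}
\item It feeds the checkpoint differences, which are $2$-step $1$-neighbors under the coupling, to a biased counter $\bcc$ to obtain noisy checkpoints $\widehat t_\ell\ge t_\ell$.
\item It delays each release to $\widehat t_\ell$.
\item It concludes privacy by summing over cut sequences using injectivity of $f_{\sigma\to\sigma'}$, with group privacy for the interval sums, which are $4$-step (not $2$-step) $1$-neighbors.
\end{enumerate}
Your accuracy sketch is fine in outline, but it omits this scheduling delay. That delay contributes the $2E_{\bcc}(s_t)$ term in \Cref{thm:ecc-accuracy}, which is still within the claimed bound.
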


\medskip
\noindent\textbf{IV. Lower bounds for prefix-neighboring streams.} Given that similar additive error is achievable whilst fulfilling a stronger notion of privacy, a natural question is whether it is possible to further generalize the notion of neighboring streams.
In the privacy literature one such notion that is studied is that of \emph{prefix-sum neighboring} streams.
Two streams are prefix-sum neighboring if the prefix sums of one stream always trails the other by at most 1.
This property holds for certain dynamic problems when reduced to continual counting on difference sequences, e.g., as in the case of counting distinct elements~\citep{jain23distinct,andersson26improved}.
However, we show that \emph{no} DP mechanism can achieve sub-polynomial error in this even more general setting.

\begin{theorem}[Lower bound, informal]\label{thm:lb-prefix-informal}
      Let $\eps\in(0, 1/2], \delta\in[0, 1)$, and $T\in\N$ be sufficiently large.
      Let $\cM$ be an $(\eps, \delta)$-DP algorithm for continual counting on $\{0,1\}^T$ with prefix-sum-neighboring streams.
      Then there exist inputs for which, with probability 0.99, it incurs additive $\ell_\infty$ error $\alpha$ where
      \begin{enumerate}
          \item If $\delta = o(\eps / T)$, then $\alpha = \tilde{\Omega}\left(\min\left\{\frac{T^{1/3}}{\eps^{2/3}}, T\right\}\right)$.
          \item If $\delta = 0$, then $\alpha = \Omega\left(\min\left\{\sqrt{\frac{T}{\eps}}, T\right\}\right)$.
      \end{enumerate}
\end{theorem}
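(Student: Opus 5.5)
We will prove \Cref{thm:lb-prefix-informal} by a packing argument, the standard route for continual-counting lower bounds. The one new ingredient is a family of streams that are pairwise close under the prefix-sum neighboring relation but whose prefix-sum vectors are far apart in $\ell_\infty$.

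Fix a block length $b$ and a number of ``slack units'' $k$, both to be tuned later. Call $x,x'\in\{0,1\}^T$ prefix-sum neighboring if $0\le \sum_{i\le t}x_i-\sum_{i\le t}x'_i\le 1$ for all $t$, or the same with the roles of $x$ and $x'$ swapped. The basic gadget is a shift: if $x$ has a single $1$ at position $p$ and $x'$ has a single $1$ at position $q>p$, then the prefix sums of $x'$ trail those of $x$ by exactly $1$ on $[p,q)$ and agree elsewhere. Hence a unit can be moved arbitrarily far in one neighboring step.

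\begin{itemize}
\item \textbf{Hard family.} Split $[T]$ into $m=T/b$ blocks. In each block place $k$ ones, either all at the start of the block or all at the end.
\item \textbf{Distance.} Moving one unit from the start of a block to its end is a single neighboring step. So two streams that differ in the choice for one block are at distance $k$. Two streams in the family are at distance at most $km$, which is too large, so we restrict to a subfamily in which the streams pairwise differ in at most $w$ blocks, giving distance $D=kw$.
\item \textbf{Separation.} Inside a block, the prefix-sum vectors of the two choices differ by $k$ at the block's midpoint. Choose $k>2\alpha$, so that any output with error $\alpha$ identifies the choice made in every block.
\end{itemize}

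For the pure case $\delta=0$, apply group privacy over $D$ steps. The output must land in disjoint accuracy regions for $\binom{m}{w}$ different streams, each with probability at least $0.99\,e^{-\eps D}$ relative to a fixed stream. This forces $\eps k w\gtrsim w\log(m/w)$. With $w=1$ this reads $\eps k\gtrsim \log m$, which is too weak on its own.

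So we sharpen the construction: each block encodes $\log b$ bits by placing the $k$ units at one of $b$ positions, shifted by a single multi-position move. The family is then all of $[b]^m$, and moving from the all-zeros-position stream to any member takes $km$ steps. The packing bound gives $e^{\eps k m}\gtrsim b^m$, i.e.\ $\eps k\gtrsim \log b$. That is still only logarithmic.

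The polynomial bound has to come from a different phenomenon: a single unit moved by distance $L$ changes the prefix sums by $1$ on $L$ coordinates. We therefore switch to the reconstruction (``discrepancy'') framework, using a hereditary-discrepancy or linear-algebraic lower bound in the style of Muthukrishnan--Nikolov adapted to this neighboring relation. In that framework the relevant sensitivity polytope is the set of differences $\{\mathbf{1}_{[p,q)}\}$, i.e.\ indicators of intervals. For answering all prefix sums under interval sensitivity, the factorization norm is polynomial.

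The concrete plan for that regime is as follows.
\begin{itemize}
\item Show that an $\alpha$-accurate mechanism for prefix sums under this relation yields a mechanism for answering the $\Theta(T^2)$ interval-difference queries. Equivalently, the query matrix restricted to interval-sensitivity inputs contains a copy of the Hadamard/one-way-marginal instance of dimension about $n$, in which each ``individual'' contributes through a shift of length $L=T/n$.
\item Apply the Bun--Ullman--Vadhan fingerprinting lower bound. For $(\eps,\delta)$-DP with $\delta=o(\eps/T)$, accuracy $\alpha$ on $d$ one-way marginals with $n$ records requires roughly $n\gtrsim \sqrt d/(\eps\alpha)$.
\item Embed $n$ records by partitioning time into $n$ segments and encoding $d$ attributes through shifts. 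This requires $d\approx T/n$ and amplifies the error scale by $n$.
\item Balance the parameters, which gives $\alpha\approx T^{1/3}/\eps^{2/3}$. In the pure case, use the packing version instead, $n\gtrsim d/(\eps\alpha)$, which gives $\sqrt{T/\eps}$.
\item Cap the bound at $T$ because the trivial error is at most $T$.
\end{itemize}

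The main obstacle is the embedding. We must exhibit the gadget precisely: each record's $d$ bits are realized by streams in $\{0,1\}^T$ such that changing one record is a single prefix-sum-neighboring step, while the prefix sums at selected times recover the marginals up to scale. We would first try encoding each bit as whether a unit sits before or after a marker time. One record's units are then all pushed by one global shift, which stays within trailing distance $1$ because the shifted units are placed in disjoint time windows.
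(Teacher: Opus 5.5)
Your packing attempts are dead ends, as you note, so the proof rests on the last plan. Its decisive step is left open: the map from one-way-marginal datasets to streams in $\{0,1\}^T$, together with the check that neighboring datasets land on $\sim_p$-neighboring streams.

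That check is where the difficulty lies, because $\sim_p$ is \emph{one-sided}: all prefix differences must lie in $\{0,1\}$, or all in $\{0,-1\}$. Your sketch does not say what a changed record is replaced by, and that choice decides everything.
\begin{itemize}
\item Under the replace-one-row relation for which the Bun--Ullman--Vadhan bound is usually stated, the changed record flips some bits $0\to 1$ and others $1\to 0$. Its units then move earlier in some windows and later in others, so the prefix-sum difference is $+1$ on some intervals and $-1$ on others. Disjointness of the windows does not prevent this.
\item Deleting the record's units is worse: the difference then grows up to $d$.
\end{itemize}

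The missing idea is to use the relation ``replace one row by a fixed null row'' (say all zeros). Under it every moved unit travels in the same direction. You then import the fingerprinting bound for this relation by two-step group privacy, at cost $2\eps$ and $O(\delta)$.

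Concretely, give attribute $j$ a window of length $2n+1$ with marker $m_j$. Record $r$ puts its unit at $m_j-r$ if its $j$-th bit is $1$ and at $m_j+r$ otherwise. Then:
\begin{itemize}
\item positions are distinct, and $s_{m_j}-n(j-1)$ is exactly the $j$-th marginal;
\item changing row $r$ from null to arbitrary raises prefix sums by $1$ on disjoint intervals, so the two streams are $\sim_p$-neighbors;
\item $T\approx 2nd$, and $\alpha=\tilde\Omega(\min\{\sqrt{d}/\eps,n\})$ (resp.\ $\Omega(\min\{d/\eps,n\})$ for $\delta=0$) balances to $T^{1/3}/\eps^{2/3}$ (resp.\ $\sqrt{T/\eps}$).
\end{itemize}

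Two smaller points. First, the $\min\{\cdot,T\}$ does not come from ``capping.'' When $\eps T=O(1)$ you must actually show $\alpha=\Omega(T)$, e.g.\ with $d=1$ and $n=\Theta(T)$. Second, the factorization-norm remark could only constrain additive-noise mechanisms, whereas the statement is about every $\cM$.

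For comparison, the paper never builds an embedding itself. It takes the item-level lower bound for $\cdistinct$ of Jain et al.\ as a black box, with flippancy $w=\Theta(T)$ and the two item-level neighbor notions reconciled by group privacy; that bound already packages a marginals reduction of the kind you are attempting. It gets one-sidedness for free: by Andersson et al., the output difference streams of neighboring $\cdistinct$ inputs differ by a vector in $\mathcal{S}_1$, whose prefix sums are all in $\{0,1\}$ or all in $\{0,-1\}$. This gives the bound for $\ccptrits{T}$. The paper then encodes each trit as a pair of bits ($0\mapsto 01$, $1\mapsto 11$, $-1\mapsto 00$) and checks by case analysis that $\sim_p$ is preserved; the even-indexed outputs minus $t$ recover the ternary prefix sums.

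Once repaired as above, your route reaches binary streams directly, skipping the trit stage, and is self-contained. The price is redoing the fingerprinting reduction and its neighbor-relation adjustment yourself.
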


\medskip
\noindent\textbf{V. Experiments on Synthetic Data. }
Our first lower bound (\Cref{thm:lb-edit-informal}) and upper bound (\Cref{thm:informal-ub}) together imply that the vast majority of existing swap-private mechanisms, when applied to edit-neighboring streams, must incur exponentially larger error to provide a privacy guarantee comparable to our mechanism.
This separation is however asymptotic - it says nothing about whether it manifests for streams of moderate length seen in practice, or whether the separation requires carefully engineered, and very long, worst-case inputs.
We empirically demonstrate that a separation appears already at $T=10^4$, on inputs drawn from a structured (and arguably realistic) distribution, under a simple attack.

Recall that edit-neighboring captures a scenario where an individual's participation inserts an event shifting subsequent events along time.
In particular, if the stream $x\in\{0,1\}^T$ is sampled from a distribution with temporal patterns, then the offset in the pattern can potentially be detected in the output of a continual counter.
We instantiate this concretely: a defender draws a Bernoulli stream $x^{(0)}$ of length $T=10^4$ with rate alternating between 0.1 and 0.9 in blocks of 10, samples a challenge bit $b\in\{0,1\}$, and releases $y=\mathcal{M}(x^{(b)})$ where $x^{(1)}$ is an edit neighbor of $x^{(0)}$ derived by inserting a 1 at position 1.
The attacker, given knowledge of the Bernoulli rate profile and $y$, computes a linear statistic on $y$ to infer the value of $b$.

\begin{figure*}[t]
  \centering
  \begin{subfigure}[b]{0.49\linewidth}
    \centering
    \includegraphics[width=\linewidth]{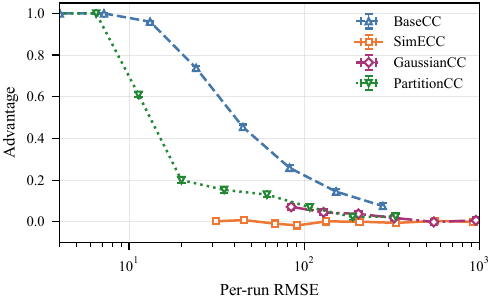}
    \caption{Attacker advantage vs.\ per-run RMSE.}
    \label{fig:adv-vs-rmse-main}
  \end{subfigure}
  \hfill
  \begin{subfigure}[b]{0.49\linewidth}
    \centering
    \includegraphics[width=\linewidth]{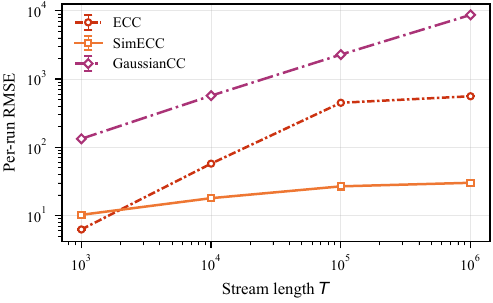}
    \caption{Per-run RMSE vs\ $T$.}
    \label{fig:rmse-vs-t-main}
  \end{subfigure}
  \caption{
    Plots evaluating mechanism performance on Bernoulli streams.
    \Cref{fig:adv-vs-rmse-main} uses $T=10^4$, $\delta=10^{-5}$, and alternating Bernoulli rates $p\in\{0.1, 0.9\}^T$ on blocks of width $10$, with each point shown being the average over $N=20000$ runs for $\varepsilon\in\{0.10, 0.19, 0.38, 0.73, 1.4, 2.8, 5.4, 10, 20, 40, 77, 150\}$.
    \Cref{fig:rmse-vs-t-main} uses $\eps=10$, $\delta = 10^{-5}$ a constant Bernoulli rate of $0.01$, with each point shown being the average over $N=100$ runs for one value of $T\in[10^3, 10^6]$.
    The error bars on the per-run RMSE are showing the standard error (SE), and the error bars on the advantage are 95\% Clopper--Pearson confidence intervals; $N$ has been set such that the error bars are small enough to be barely visible.
    Advantage is defined as $2\cdot \Pr[\text{Adversary guesses $b$ correctly}] - 1$.
    }
  \label{fig:adv-vs-rmse}
\end{figure*}

We compare two continual counters (\basecc, \partitioncc) for swap-neighboring streams against our edit-neighboring private mechanism \simecc, and a baseline for edit-neighboring streams based on releasing the prefix sum at each step with the Gaussian mechanism (\gausscc).
On fixing $\delta=10^{-5}$ and sweeping $\eps$, we plot the trade-off between the \emph{advantage} (defined as $2\Pr[b\ \text{guessed correctly}]-1$) of the attacker, and the per-run root-mean-squared error (RMSE) of the mechanism.
Intuitively, a smaller $\eps$ should promote lower advantage but comes at a cost of higher error, and the shape of this trade-off becomes our basis for comparison.
We are in particular interested in comparing them at a given bound on the attack advantage - the analyst is interested in knowing what error they must pay to bound the attack success rate of the attacker.

\Cref{fig:adv-vs-rmse-main} shows the result.
To bound the attacker's advantage at 0.1, \basecc and \partitioncc must incur error $9\times$ and $3\times$ larger respectively than \simecc; \gausscc also has to pay roughly $3\times$.
\Cref{fig:rmse-vs-t-main} sweeps $T\in\{10^3, 10^4, 10^5, 10^6\}$ for the edit-neighboring continual counters, demonstrating that the error gap to the Gaussian baseline grows with $T$, in line with the predicted polylog vs. polynomial error growth.

\medskip
\noindent\textbf{Summary.}
Our main theoretical findings are that the notion of edit-neighboring streams is strictly more general in terms of characterizing privacy loss than the standard setting, and still permits additive error on the same order. Further, this notion of edit neighboring seems to achieve a `sweet-spot' in terms of the additive error incurred, as the only more general setting that we are aware of must provably suffer additive error that is exponentially larger.

Empirically, we find that in an idealized toy model, even for randomly generated data following a schedule known to the adversary, to restrict the attack advantage of the adversary to the same value requires incurring significantly more error for private continual counters designed for swap-neighboring streams when compared with our mechanisms for edit-neighboring streams.

\subsection{Related work}

The issue of insertion collisions has been implicitly addressed in different ways in prior work. Most works, starting from \cite{dwork2010differentially} and \cite{chan2011private}, assume as a fact that at every atomic time-step, at most one value is received, and that therefore the swap model suffices. In principle this could be achieved by a sufficiently fine discretization of time, depending on the rate of data generation. This approach has the caveat that a finer discretization of time invariably leads to more additive error, even for fundamental tasks like counting, and a good choice of discretization requires some prior knowledge of how high the data generation rate could be. The model assumption fails if this prior is incorrect. 

\cite{DBLP:conf/aistats/Cardoso022} allow arbitrarily many values to be received per time-step, so the input to the mechanism is a sequence of sets, as opposed to singletons.  This avoids the discretization issue but introduces a different issue. Real-world systems have capacity limits and bottlenecks; when the data generation rate exceeds the system's per-step capacity, excess inputs are queued even prior to time-stamping, in which case the model's per-step set no longer matches what the system processes. The natural fix is \emph{truncation}, wherein if the number of inputs exceed per-step capacity, then elements that arrive once capacity is reached are simply dropped. However, truncating sets introduces data-dependent error when server capacity is below the data-generation rate, since discarded data is permanently lost. In contrast, our model buffers excess data; bursts exceeding capacity incur only temporary delay, and any induced error disappears once the buffer clears. Moreover, dropped points may be correlated. If bursts align with server time steps and contain mostly $0$'s initially and $1$'s later, truncation disproportionately removes $1$'s, producing error that need not remain sublinear in the number of time steps.

Beyond the question of per-step capacity, the choice of adjacency relation itself has been refined in prior work. The closest prior precedent for our edit-neighboring relation is the insert-delete adjacency introduced by \cite{casacuberta22widespread} for the batch DP setting. Their work identifies a class of vulnerabilities in deployed DP libraries: finite-precision arithmetic (floating-point non-associativity and integer overflow) makes the implemented sensitivity of basic statistics like sums substantially larger than the idealized sensitivity used to calibrate noise, breaking the DP guarantees of essentially every major library at the time.

Their definition of insert-delete adjacency is structurally similar to the edit-neighboring relation we introduce here; both define neighbors via the insertion or deletion of a single record, with downstream records shifting accordingly. However, other than the difference in setting and techniques, there is also a significant difference in motivation. In the batch setting, ordering is irrelevant to sensitivity except as a finite-precision artifact, so an ordered adjacency relation is needed only to address implementation vulnerabilities. In continual observation, ordering is intrinsic to the setting; participation shifts the timing and order of downstream records even for the theoretical model.

\subsection*{Independent concurrent work} In independent work \cite{chan2026continual} study \emph{single-edit neighboring user streams}. Their goal is to construct DP-CO mechanisms with adaptive adversaries, with a notion of neighboring streams which is a special case of ours: given a stream of \emph{user events} (e.g. value contributions), a neighboring stream is constructed by inserting an event at some time-step $t$, and incrementing the index of every element from this point onward by $1$, but they do not allow $\bot$ values, i.e., they do not formalize \textit{non-participation}.
Algorithmically, their algorithm is a generalization of our \simecc mechanism. As they do not allow $\bot$ values, they do not consider limited cascades of shifts, or sparsity-adaptive methods, such as our \ecc mechanism. This is one of our main contributions. They also do not study any lower bounds for edit-neighboring or prefix-sum-neighboring streams.

\cite{chan2026continual} give a general approach towards adapting standard DP-CO mechanisms for the edit-neighboring setting, with a randomized binning wrapper that converts any standard DP-CO mechanism into an edit-neighboring DP-CO mechanism. 
They incur slightly more additive error (in their setting called  \emph{backlog}) of $O(\tfrac{1}{\eps}\ln (t) \ln(t/\delta))$ compared to the $O(\frac{1}{\eps}\ln(t) \sqrt{\ln(t/\delta) \ln(1/\delta)} + \tfrac{1}{\eps}\log(t/\delta))$ error that
our continual counting method \simecc achieves.
\section{Technical Overview}

\noindent\textbf{I. Edit-neighboring streams and the \simecc mechanism}

Let us consider a pair of edit-neighboring streams $x=(x_1, \dots, x_T)$ and $x'=(x_1', \dots, x_T')$, where for simplicity, in this overview, we assume neither of $x$ and $x'$ contains a $\bot$, and where
$x$ is an insertion neighbor of $x'$ at step $i\in [T]$, i.e.~$x$ is obtained from $x'$ by inserting $x_i$ after $x_{i-1}'$, shifting all subsequent elements $x_i', \dots, x_T'$ one position to the right, and discarding the last element $x_T'$. Note that insertion neighbors are also edit neighbors if there are no $\bot$s.
Our goal is to design a continual counter such that if the respective output streams for $x$ and $x'$ are denoted $y$ and $y'$, then the distributions of $y$ and $y'$ are $(\eps,\delta)$-indistinguishable. Further, for any fixed stream $x$, the output $y$ should be ``close'' to the stream $s = (s_t)_{t\geq1}$ with high probability, i.e., the maximum absolute difference over all time steps ~$\max_{t\geq1} \| y_t - s_t \|$ is small,  where $s$ is the prefix sum stream defined element-wise by the expression $s_t := \sum_{i=1}^t x_i$ for all $t\geq1$.

The starting point for our upper bounds is to try and reduce the edit-neighboring setting to the standard swap neighboring setting, and then apply standard continual counters. 
It will be useful to keep in mind that for a standard continual counter $\cc$ based on the factorization mechanism, if we have the promise that neighboring streams have $\ell_1$-sensitivity $\mathsf{sens}$, then with an appropriate choice of parameters we can achieve $(\eps,\delta)$-DP\footnote{There are also continual counters for $(\eps, 0)$-DP that leverage Laplace noise. Our focus is on $\delta > 0$ where Gaussian noise achieves smaller asymptotic error as measured in $t$.} and the guarantee that with probability $1-\beta$, $\max_{t\geq1} \| y_t - s_t \| \leq E_\cc(t)$, where
    \[E_\cc(t,\beta) \,:=\, O\left(\frac{\mathsf{sens}}{\eps} \ln t \sqrt{\log (t/\beta) \log (1/\delta)} \right). \]
Let $s'$ denote the prefix-sum stream for $x'$. We see that $s$ and $s'$ are identical up to point $i-1$, and starting from point $i$ there is a potential mismatch. 
Standard private continual counters achieve privacy by adding a vector of noise values $N = (N_t)_{t\geq1}$ directly to the prefix sum stream element-wise, i.e., they set $y_t = s_t + N_t$ and $y'_t = s'_t + N_t$. The magnitude of this noise vector scales with the $\ell_1$ sensitivity of the input stream under the neighboring relation. For edit-neighboring streams, starting from the point of difference $t\geq i$, if the terms of $x'$ alternate in value (for instance an alternating sequence $(0,1,0,1\dots)$), then it is easy to see that the $\ell_1$-sensitivity $\mathsf{sens}$ scales as $\Theta(t-i)$, and the standard analysis implies large error.

One idea to reduce the $\ell_1$-sensitivity of edit-neighboring streams is \emph{bucketing}. Instead of releasing a fresh update at every time step, we only change the output value every $B$-many time steps, i.e., for $j \in \N$ we define $y_{j\cdot B} = s_{j\cdot B} + N_{j}$, and for $t = j\cdot B + k$ for some $k\in \{1,\dots,B-1\}$ we define $y_t = y_{j\cdot B}$. This introduces some error since the running sum is only updated intermittently, but the magnitude of this error is bounded by the bucket size. We see that bucketing gets rid of some pathological examples, for instance setting the bucket size $B = 2$ ensures that the $\ell_1$ sensitivity of a pair of edit-neighboring streams with the alternating $0,1$ sequence described above no longer scales with $t$. However, bucketing clearly does not suffice by itself in every case. For example, suppose $x'$ is defined by $x'_i = 1$ for $i = 2kB$ for integers $k$ and $x'_i=0$ otherwise, i.e. that $x'$ has the value $1$ at step $i$ if and only if $i$ is an even multiple of $B$. Then one can check that when $x$ is an insertion neighbor of $x'$ at step $2B$ with $x_{2B} = 0$, then the sequences of sums of the values in each bucket for $x$ and $x'$ are $(0,1,0,1,\dots)$ and $(0,0,1,0,1,\dots)$ respectively. These are streams of length $\approx T/B$, and have $\ell_1$ sensitivity $\sqrt{T/B}$. One could set $B$ to be so large as to offset the $\sqrt{T}$ scaling, but the choice of bucket size itself introduces a  separate additive error of $B-1$, because there is now a delay of $B$ time-steps before the count is updated. In summary, the simple bucketing scheme described here still gives a polynomially scaling additive error.

To address these pathological instances with large $\ell_1$-sensitivity, we turn to \emph{randomizing the bucket size}. For each bucket index $\ell$ we draw the bucket size $B_\ell$ from a discrete Laplace distribution with mean parameter (simplified in this overview) of $(1/\varepsilon) \log (\ell/\delta)$, and scale parameter $1/\varepsilon$; we denote this distribution $\DLap((1/\eps)\log(\ell/\delta),1/\eps)$. With this choice, we have that with probability $1-\delta$, $B_\ell$ is positive for all $\ell \in [1,T]$, and that  $\Pr[B_\ell = b] = e^{ \eps} \Pr[B_\ell = b+1]$ or  $\Pr[B_\ell = b] = e^{- \eps} \Pr[B_\ell = b+1]$ for $b\geq 1$. Further, with high probability, the maximum bucket size scales only logarithmically with the number of buckets generated, and the accuracy error introduced by this choice of distribution will grow roughly as $E_1(\ell)$, where for all $\ell\geq1$,
\[ E_1(\ell) = O(\tfrac{1}{\eps}\log \ell/\delta). \] 

For ease of expression, we define a sequence of \emph{checkpoints} $(t_\ell)_{\ell\geq1} \in \N^*$. Let  $t_0 = 0$ and $t_\ell = \sum_{k=1}^\ell B_k$ for $\ell\geq1$; we see that in our intermittent update scheme, $t_\ell$ are precisely the time-steps at which we change the output, and demarcate the ends of buckets. Let us define the bucket sum values $\intsum_\ell = \sum_{t = t_{\ell-1}+1}^{t_\ell} x_t$ to denote the sum of all stream values $x_t$ such that $t$ falls in the $\ell$-th bucket, i.e., $t\in (t_{\ell-1},t_\ell]$. Since we have fixed a finite time horizon $T$, for any choice of checkpoints sequence $(t_\ell)_{\ell\geq1}$, we implicitly set $t_\ell \gets \min\{T,t_\ell\}$, and adjust bucket lengths accordingly.

Recall that $x$ is an insertion neighbor of $x'$ at $i$, and let $p$ denote the index of the bucket which contains $i$, i.e., $i \in (t_{p-1},t_{p})$. If $p = L$ is the last bucket index, then the $\ell_1$-sensitivity of the  $\intsum$ vector is $1$. If $p\not=L$, then we make the following high-level observation: if we couple the randomized bucket sizes to be equal, with the exception of $B_{p+1} = B'_{p+1} + 1$, then $\intsum_\ell = \intsum'_\ell$, except for $\ell \in \{p,p+1\}$, where they differ by at most 1; $\intsum_p$ and $\intsum'_p$ can vary because the insertion of $x_i$ changes the values of $x_k$ for $k\in (t_{p-1},t_p]$, and $\intsum_{p+1}$ and $\intsum'_{p+1}$ can vary for the same reason as well as their unequal bucket lengths. In other words, under this coupling of bucket sizes, the shift introduced at $i$ is largely absorbed in the buckets indexed $p$, $p+1$ - there is some additional subtlety regarding the end of the stream, but at a high level, we get the guarantee that only a constant number of bucket sums are modified.

This construction and randomized coupling reduces the problem of privatizing continual sums for edit neighbors $x$ and $x'$, to streams that are different at three entries by at most one unit, i.e. $(\intsum_\ell)_{\ell\geq1}$ and $(\intsum'_\ell)_{\ell\geq1}$, and consequently also $O(1)$ apart in $\ell_1$-distance. We can now use a standard private continual counter with $\mathsf{sens} = O(1)$ to generate privatized prefix sums $(\widehat{v}_\ell)_{\ell\geq1}$ for $(\intsum_\ell)_{\ell\geq1}$, and consequently for $x$ with minor error due to the lag introduced by bucketing. Then we have that for all $l\in [L]$, with probability $1-\beta$,
$|\widehat{v}_\ell - \intsum_\ell| \leq E_2 (\ell)$, where
\[ E_2 (\ell) := E_\cc (\ell,\beta).\]
However, there is one more remaining subtlety. The coupling-based privacy analysis above uses as privatizing randomness the obfuscation of the bucket sizes $B_\ell$. If we update the counter at time-steps $t_\ell$, then an adversary can reconstruct $B_\ell = t_\ell - t_{\ell-1}$, which violates the requirement that the privatizing randomness be hidden from the adversary. 

To deal with this, we use a \emph{biased continual counter} (\bcc) to generate privatized proxies $\widehat{t}_j$ for $t_j$, called \emph{noisy checkpoints}; we refer to the original checkpoints as \emph{true checkpoints} to avoid any ambiguity. Using \cc, we construct this biased continual counter that always generates overestimates of the prefix sums of its input streams, and suffers additive error roughly twice the additive error compared to \cc.

Concretely, let $(y'_t)_{t\geq1}$ be the outputs of $\cc$, when given inputs $(x_t)_{t\geq1}$. We define $y_t = \max(y'_t + E_\cc (\delta/2), \sum_{k=1}^t x_k)$. Since $|y'_t - \sum_{k=1}^t x_k| \leq E_\cc (t,\delta/2)$ with probability $1-\delta/2$ for all $t\geq1$, it follows that with probability $1-\delta/2$, the sequence $y_t$ equals the values $y'_t + E_\cc (\delta/2)$. Since $y'_t$ are generated in an $(\eps,\delta/2)$-DP manner, it follows that with probability $1-\delta/2$, the values $y_t=y'_t + E_\cc (\delta/2)$ are also $(\eps,\delta/2)$-DP, i.e. unconditionally $(\eps,\delta)$-DP. Further, by definition, $y_t \geq \sum_{k=1}^t x_k$.

One small complication is that in principle, even though checkpoint $t_2$ is defined after checkpoint $t_1$, the noisy checkpoint $\widehat{t}_2$ could occur before $\widehat{t}_1$, or even collide with $\widehat{t}_1$. 
We construct a schedule data structure $\slvar$ to keep track of when what updates are to be made, and we use any arbitrary tie-breaking rule to deal with noisy checkpoint collisions; since the tuples $((\widehat{t}_i,y_{t_i}))_{i\geq1}$ are already privatized, as long as tie-breaking depends only on these values, no additional privacy is lost.

We give $\bcc$ as input the difference sequence $(t_\ell - t_{\ell-1})_{\ell\geq1}$, and define its output sequence to be the noisy checkpoints $\widehat{t}_\ell$. Under the coupling of the randomized bucket sizes described above, this difference sequence also has constant $\ell_1$-sensitivity, and we can show that with probability $1-\beta$, for all $\ell\geq1$, $|\widehat{t}_\ell - t_\ell| \leq E_3(\ell)$ where $E_3(\ell)$ is the error introduced by the biased continual counter, i.e.
\[ E_3(\ell) = E_\cc(\ell,\delta/2) + E_\cc(\ell,\beta).\]
\sloppy We can now completely describe \simecc, a mechanism for continual counting that achieves $(\eps,\delta)$-DP for edit-neighboring streams. \simecc generates randomized bucket sizes $B_{\ell} \sim \DLap(1/\eps \log (\ell/\delta) , 1/\eps)$, and defines the \emph{true} checkpoints $t_0 = 0$ and $t_{\ell} = \sum_{i=1}^{\ell} B_i$ for $\ell\geq1$. As it processes the input stream $x$, it computes internally the bucket sums $\intsum_\ell = \sum_{i=t_{\ell-1}+1}^{t_\ell} x_t$, and, on reaching the end of each bucket $B_\ell$, it feeds $\intsum_\ell$ to a continual counter to generate the privatized value $\widehat{v}_\ell$, which is an estimate for $\sum_{k=1}^\ell \intsum_k = \sum_{t=1}^{t_\ell}x_t$. In parallel, it feeds the difference sequence of true checkpoints to $\bcc$ which in turn generates the noisy privatized checkpoints $\widehat{t}_\ell$. Starting from the first output $0$ at the first checkpoint $t_0 = 1$, for every $\ell\geq 1$, the mechanism $\simecc$ returns the same output as it did at $t_{\ell-1}$ until reaching a noisy checkpoint $\widehat{t}_\ell$ at which it updates it output value to $\widehat{v}_\ell$. The output of $\simecc$ can be described entirely in terms of the tuples $(\widehat{v}_{\ell},\widehat{t}_\ell)$; for any time-step $t$, the output of the mechanism is understood to be $y_j$ where $j$ is the index such that $\widehat{t}_j$ is the most recent noisy checkpoint that precedes $t$. 

The privacy loss of this routine can be bounded in terms of the coupling of the randomized bucket sizes needed to absorb insertions when comparing output distributions for edit neighbors, the privacy loss of the internal continual counter run on bucket sums, and the privacy loss of the biased continual counter that computes noisy checkpoints and schedules updates. The additive error can be bounded by the maximum bucket size $E_1$, the error of the internal continual counter $E_2(\ell,\beta)$,  and the additional delay introduced by the randomized scheduling of the biased continual counter, i.e. $E_3(\ell,\delta)$. This error expression is dominated by the error of the last term, so if we set $\beta = \delta$, then with probability $1- 3\delta$, for all $t\geq1$, 
\[ \|y - s \|_1 \,= \, O\left(\tfrac{1}{\eps} \ln t \sqrt{\log (t/\delta) \log (1/\delta)} \right). \]

\medskip
\noindent\textbf{II. Sparsity-Adaptive Continual Counting via \ecc}

The approach described above works well in the general case. We recall from the introduction that in our model, for time-steps where no data is received, we must record a $\bot$ value. For the purposes of defining the target continual count value, these placeholders are identified with $0$ and do not change the count, but they do increase the length of the stream and consequently the additive error. This is unnecessarily wasteful, and to address this we turn to the sparsity adaptive methods of~\cite{dwork2015pure}. 

We recall that the \emph{sparsity} of a stream $x$ at a time step $t$ is defined to be the number of non-$\bot$ values observed by $t$, i.e. $s(x,t) = \sum_{i=1}^t 1(x\not=\bot).$
We see that at any given time-step $t$, the prefix sum $s_t$ is always at most the sparsity $s(x,t)$. We introduce a second mechanism, called the \ecc mechanism, which incurs additive error that in the leading polylogarithmic term is only a function of $s_t$.

At a high level, \cite{dwork2015pure} introduce a \emph{stream partitioning mechanism} that for sparse streams incurs lower additive error than the general case. The partitioning mechanism uses a sequence of \emph{sparse vector technique} (SVT) instances to update the online continual count only once sufficiently many nonzero values have been seen since the previous most recent update to the count, i.e. to generate the $\ell$-th checkpoint, it compares the interval sum $s_t - s_{t_{\ell-1}}$ with some threshold $\tau_t$, and generates a new partition of the stream $(t_{\ell-1},t_\ell]$ when indicated to do so by the SVT. The upshot of this approach is that with similar privacy guarantees, one can achieve additive error that scales polylogarithmically only with the sparsity, and reduces the dependence of the additive error on the stream length to an $O((1/\eps)\log t)$ summand.

We introduce a subroutine \partt that operates analogously to that of \cite{dwork2015pure}, although our privacy and accuracy analyses will now be significantly different. As in prior work, it operates by using a sequence of SVT instances to partition the stream. We first briefly recall the guarantee of the SVT. The SVT is a mechanism that compares a stream of values $v = (v_t)_{t\geq 1}$ with a stream of user-defined thresholds $\tau_t$, and generates a stream of boolean outputs $a_t\in \{\bot,\top\}$. For each value, it either rejects the value $v_t$, indicated by the output $a_t = \bot$, or accepts it and halts, outputting $a_t = \top$ and not accepting any more values from the stream $v$. The outputs generated by the SVT have the promise that with probability $1-\beta$, for all $t\geq1$, $a_t = \top \Rightarrow v_t \geq \tau_t - E_\svt(t)$, and $a_t = \bot \Rightarrow v_t \leq \tau_t + E_\svt(t)$, where $E_\svt (t) = \Theta(\tfrac{1}{\eps}\log t/\beta)$. A remarkable property of the SVT is that it is $\varepsilon$-DP under the promise that any two neighboring streams $v$ and $v'$ are $1$-Lipschitz \emph{at every time step}. This is in contrast to the standard notion of neighboring mentioned before, where two neighboring streams could only differ at one time-step; here we have that for all $t$, $|v_t - v'_t|\leq 1$, but no other constraint.

At a high level, we would like $\partt$ to define new checkpoints at time-steps $t$ for which the SVT indicates that we have accumulated a large enough prefix sum value. When the $\ell$-th instance of the SVT is running, it compares the running interval sum $s_t - s_{t_{\ell-1}}$ with some sequence of threshold $\tau_t$. To avoid generating buckets for empty streams, we require that seeing an arbitrarily long sequence of $\bot$ values should not lead to a checkpoint declaration and bucket creation. This can be achieved by a simple modification of the SVT, by setting $\tau_t = \tau + E_\svt(t)$ for some $\tau >1$. The accuracy guarantee of the SVT now gives us that with probability $1-\beta$, for all $t\geq t_{\ell-1}$, if $a_t = \top$, then $s_t - s_{t_{\ell-1}} \geq \tau$. In other words, with high probability, regardless of the number of values seen, a bucket is only generated when the increase in the prefix sum exceeds $\tau$.

Analyzing the privacy loss incurred by \partt for edit-neighboring streams turns out to be a challenging task, primarily due to the fact that our bucketing is now data-dependent, and the coupling used before for \simecc no longer applies.

To address this challenge, we first establish a simple but useful novel observation about the SVT in this problem context. If, in addition to the promise of $1$-sensitivity along every time-step, one has the promise that the values being tested are $1$-sensitive when compared across \emph{consecutive} time-steps (i.e. for all $i\geq 1$, $|v_{i+1} - v'_{i}| \leq 1$), then the event that the SVT accepts a value and halts at time step $t$ is approximate-DP indistinguishable to the event that it accepts a value at time-step $t+1$ and halts. This fact allows us to construct a new data-dependent coupling between the bucket endpoints for the runs of the partitioning mechanism on two neighboring input streams. Essentially, we are able to show that under this data-dependent coupling, one can again ensure that the bucket interval sum streams $\intsum$ and $\intsum'$ defined by the bucket sums differ only in a constant number of elements, and are consequently close in $\ell_1$ distance as well. The details of the data-dependent coupling and its analysis are quite involved, and we direct the reader to \Cref{sec:ecc-privacy} for a more complete description. To privatize the true checkpoints defined by this routine, we use an instance of the \bcc as before. 

To reason about the accuracy of this method, we first note that the lengths of the streams passed to \cc and \bcc scale only with the number of buckets, which are now generated by the partitioning mechanism \partt. We would like to show that the number of buckets declared by the partitioning mechanism now grows only with the prefix sum $s_t$. The SVT guarantee gives us that when the $\ell$-th instance of the SVT returns $a_t = \top$, then $s_t - s_{t_{\ell-1}} \geq \tau$. Summing over all checkpoints up to time-step $t$, it follows that $s_t \geq \ell\cdot \tau$. It will follow that for a choice of $\tau \geq 1$, the number of checkpoints $\ell$ generated by time-step $t$ is at most $s_t$.

Since the number of buckets generated by time-step $t$ is at most $s_t$, it follows that the error incurred by the internal continual counter \cc, and the biased continual counter \bcc, now scale with $s_t$ instead of $t$. However, our accounting for the delay in defining new buckets must now be modified, since the maximum bucket size can be arbitrarily large and does not give us a useful error bound, unlike the analysis for \simecc.

We see that if the $\ell$-th instance of the SVT returns $a_t = \bot$, then we have the promise that $s_t - s_{t_{\ell-1}}\leq 2E_\svt(t)$. Since $E_\svt(t) = O(\tfrac{1}{\eps}\log\tfrac{t}{\beta})$, it now follows that we have the promise that $s_t \leq s_{t_{\ell-1}} + O(\tfrac{1}{\eps}\log\tfrac{t}{\beta})$. Putting everything together, we get the error bound
    \[ O\left(\tfrac{1}{\eps} \ln s_t \sqrt{\log (s_t/\delta) \log (1/\delta)} \right) + O(\tfrac{1}{\eps}\log\tfrac{t}{\beta}).\]

\medskip
\noindent\textbf{III. Extended Edit-Neighboring Streams}

As mentioned in the introduction, there is another plausible definition of edit-neighboring streams that permits neighboring streams to have different lengths. Concretely, here the time-horizon is determined by the data generation process, and not by the analyst. We assume that the input stream is marked by an end of stream symbol $\$$ which indicates to the analyst that the stream has terminated, and the mechanism may halt. Extending our discussion in the introduction on the definition of edit-neighboring streams, we see that if a value is inserted at some time-step $i$ which is not followed by any $\bot$ values, then the length of the stream would increase by $1$ and the end of stream symbol occurs one position later.

In \Cref{sec:modified-insertion-neighboring}, we formally define \emph{extended edit neighbors}, which captures this notion of edit neighboring. In addition to the shift phenomenon that the standard definition of edit-neighboring streams tries to capture, we must now also account for variable stream lengths. We appeal to a simple padding technique that adds a randomized number of $\bot$ values drawn from a Laplace distribution with scale parameter $(2/\eps)$ (if the value is negative, we set it to $0$). After this padding, we appeal to any $(\eps/2,\delta/2)$ edit-neighboring continual counter for the standard setting, and achieve the guarantee that the outputs of the continual counter on the padded stream are $(\eps,\delta)$-DP with respect to this notion of extended edit neighbors.

We characterize the error of this counter using two values - the excess number of values generated to obfuscate the length of the stream, and the difference between the outputs of the continual counter and the true sum value. For outputs generated by the continual counter after the end of the true stream that do not correspond to any real inputs from the stream, we define the additive error in terms of the true count at the end of the stream. Our main finding is that the additive error is identical to that incurred for the fixed time-horizon notion of edit-neighboring continual counters standard setting up to constant factors.

\medskip
\noindent\textbf{IV. Lower bounds}
To discuss our lower bounds, we need to first formally define \emph{prefix-sum neighboring} streams, which we do next.
\begin{definition}[Prefix-Sum Neighbors]
    For $T\in\N$, two sequences $\sigma=(x_1,\dots,x_T)$ and $\sigma'=(x_1',\dots,x_T')$ in $\mathbb{Z}^T$ (or a subset thereof) are said to be \emph{prefix-sum neighboring}, written $\sigma\sim_p\sigma'$, if one of the following conditions hold:
    \begin{itemize}
        \item For every $t\in[T]$, $\sum_{i=1}^t x_i - \sum_{i=1}^t x_i'\in \{0,1\}$.
        \item For every $t\in[T]$, $\sum_{i=1}^t x_i - \sum_{i=1}^t x_i'\in \{0,-1\}$.
    \end{itemize}
\end{definition}
We will use the notation $\ccgeneric{T}$ to denote the problem of continual counting with neighboring relation $\sim$ and input streams from $\mathcal{S}^T$.
Our lower bounds follow from the following chain of reductions:
\begin{align*}
   \cdistinct &\Rightarrow \ccptrits{T}\\
   \ccptrits{T} &\Rightarrow \ccpbits{T}\\
   \ccpbits{T} &\Rightarrow^* \ccebits{T}
\end{align*}
where $\cdistinct$ denotes the problem of counting distinct elements in a turnstile stream of length $T$ under item-level differential privacy (see \Cref{def:cdistinct}).
In particular, the second reduction gives \Cref{thm:lb-prefix-informal} and the third gives \Cref{thm:lb-edit-informal}.
The last reduction is starred to indicate that it only applies to a subclass of mechanisms.
We proceed to sketch the ideas necessary to complete the chain of reductions.

\paragraph{From \texorpdfstring{$\cdistinct$}{CountDistinct} to \texorpdfstring{$\ccptrits{T}$}{Count}.}
\cite{andersson26improved} gave a reduction from $\cdistinct$ to (after translating their notation to ours) $\cccustom{\sim_p}{\mathbb{Z}^T}$.
They considered a version of $\cdistinct$ in which multiple updates (item insertions/deletion) are allowed per step, whereas we restrict the problem to one update per time step.
With this restriction, we can refine their reduction down to $\ccptrits{T}$.

To then prove our lower bound on $\ccptrits{T}$, we use a lower bound for $\cdistinct$ from \cite[Theorem 1.7]{jain23distinct}.
Their version of $\cdistinct$ has a subtly different neighboring relation, which implies that 1-neighboring outputs for them can be up to 2-neighboring for us.
Nevertheless, by group privacy, any $(\eps, \delta)$-DP algorithm $\cM$ for us is also a $(2\eps, (1+e^{\eps})\delta)$-DP algorithm for them.
Hence, their lower bound can be used in a black-box manner at only a constant cost in the privacy parameters.
This gives us a lower bound on $\ccptrits{T}$.

\paragraph{From \texorpdfstring{$\ccptrits{T}$}{Count} to \texorpdfstring{$\ccpbits{T}$}{Count}.}
To extend our lower bound from ternary to binary inputs, we use a natural idea:
encode tritstreams $x\in \{-1,0,1\}^T$ into bitstreams $s\in\{0,1\}^n$.

There are a few constraints on $s$ in order for it to be a useful encoding for our lower bound: (1) prefix sums on $s$ need to encode prefix sums on $x$, (2) $s$ has to be constructible in an \emph{online} manner, (3) $n = O(T)$, and (4) given $x\sim_p x'$, we need $s \sim_p s'$.
Fortunately, a natural encoding works for our purpose.
Choosing $n=2T$ and defining $s$ via
\begin{equation*}
    (s_{2i-1}, s_{2i}) =
    \begin{cases}
        (0,1), & \text{if } x_i = 0, \\[4pt]
        (1,1), & \text{if } x_i = 1, \\[4pt]
        (0,0), & \text{if } x_i = -1,
    \end{cases}
    \qquad \forall i \in [T]\,,
\end{equation*}
suffices.
To decode the prefix sums, note that $t+\sum_{i=1}^t x_i = \sum_{i=1}^{2t} s_i$.
The only property that does not obviously hold is that prefix-sum neighbors are preserved (4), and we prove this via case analysis.

Equipped with this encoding, the reduction is straightforward.
Namely, given an input to $x$ to $\ccptrits{T}$, produce $s$ and feed it into a private algorithm $\cM$ for $\ccpbits{2T}$.
The even-index outputs of $\cM(s)$ (minus an offset) are an estimate of the prefix sums on $x$.
Hence, the lower bound extends (up to constants).

\paragraph{From \texorpdfstring{$\ccpbits{T}$}{Count} to \texorpdfstring{$\ccebits{T}$}{Count}.}
For this reduction, our goal is different.
Given that we have already designed an algorithm for $\ccebits{T}$ with polylogarithmic utility (\Cref{thm:informal-ub}), we cannot extend the lower bound without further qualification.
Hence, the lower bound we show is for a restricted class of mechanisms, namely those which are \emph{data-independent}.
Informally, a private mechanism $\cM$ for releasing a real-valued query $f$ on (any) input $x$, is data-independent if its output $\cM(x)$ is distributionally equivalent to $f(x) + Z$, where $Z$ is sampled from a distribution $\mu$, independent of $x$.

To prove our lower bound, we reason about \emph{sensitivity sets}.
Simplified, for a private estimation problem $P=(f, \sim)$ (interpretation: privately release $f(x)$ for inputs with neighboring relation $\sim$), the sensitivity set of $P$ is $S_P := \{f(x) - f(x')\,\vert\, x\sim x' \}$.
Given two such problems, $P=(f,\sim)$ and $P'=(f,\sim')$ where $S_{P'} \subseteq S_{P}$, we show that, for every data-independent mechanism $\cM$ for $P$ with additive error $\alpha$, there exists a mechanism $\cM'$ for $P'$ that attains the same error $\alpha$.
Hence, an \emph{unconditional} lower bound for $P'$ can be translated into a lower bound for any \emph{data-independent} mechanism for $P$.

Our lower bound for the class of data-independent mechanisms solving $P=\ccebits{T}$ proceeds by showing the corresponding sensitivity-set containment argument for $P' = \ccpbits{T}$.

\section{Preliminaries}\label{sec:prelims}

We make the following notational definition.

\begin{definition}[Notation]
    \begin{enumerate}
        \item We use the symbols $\N$ and $\R$ to denote the natural numbers and real numbers respectively.
        \item For $n\in \N$, we write $[n]$ to denote the set $\{1, \dots, n\}$.
    \end{enumerate}
\end{definition}
 
We recall the definition of differential privacy in both the batch and continual-observation model and introduce two well-studied mechanisms used by our mechanism. Then, we formally define the problem studied in this paper.

\medskip\noindent\textbf{Differential Privacy in Batch Model.} {
Let $\calX$ be a family of data records. A dataset is a multiset over $\calX$. We denote the family of all multisets over $\calX$ by $\MS(\calX)$. A \emph{mechanism} $\cM:\MS(\calX)\to\calY$ is a randomized function mapping a dataset $D\in\MS(\calX)$ to an output in $\calY$. The notion of differential privacy is built on the concept of \emph{indistinguishability} between probability distributions:

\begin{definition}[\texorpdfstring{$(\eps,\delta)$}{(epsilon,delta)}-indistinguishability~\cite{dwork2006calibrating}]
\label{def:indistinguishability}
    Let $\calY$ be a set. For $\eps\geq 0$ and $0\leq \delta\leq 1$, two random variables $Y,Y'$ over $\calY$ are said to be \emph{$(\eps,\delta)$-indistinguishable} if for every measurable set $S\subseteq\calY$,
    \[
        \Pr[Y\in S] \;\leq\; e^{\eps}\cdot \Pr[Y'\in S] + \delta,
        \qquad
        \Pr[Y'\in S] \;\leq\; e^{\eps}\cdot \Pr[Y\in S] + \delta.
    \]
\end{definition}

Differential privacy is defined with respect to a \emph{neighbor relation} $\sim$, which is a binary relation on the dataset space $\MS(\calX)$. In a common definition, two datasets are considered to be neighboring if they differ in the presence or absence of a single record.

\begin{definition}[Differential Privacy~\cite{dwork2006calibrating}]
\label{def:dp}
    Let $\calX$ be a data universe and $\sim$ a neighbor relation on $\MS(\calX)$. For $\eps\geq 0$ and $0\leq \delta\leq 1$, a mechanism $\cM:\MS(\calX)\to\calY$ is said to be \emph{$(\eps,\delta)$-DP with respect to $\sim$} if for every pair of datasets $D, D'\in\MS(\calX)$ satisfying $D\sim D'$,
    the random variables $\cM(D)$ and $\cM(D')$ are $(\eps,\delta)$-indistinguishable.  
    In the special case $\delta=0$, we say $\cM$ is $\eps$-DP.
\end{definition}

The Gaussian mechanism is a key primitive for $(\eps, \delta)$-differential privacy.

\begin{lemma}[(Analytic) Gaussian Mechanism~\citep{dwork2014algorithmic,balle18improving}]
\label{lem:gaussian-mech}
    Let $\calX$ be a data universe, $\sim$ be a neighbor relation on $\MS(\calX)$, and $f : \MS(\calX) \to \mathbb{R}^d$ a function.
    Define the $\ell_2$ sensitivity of $f$ as $\mathsf{sens}_f := \max_{D\sim D'}\| f(D)-f(D') \|_2$.
    Then, for every $\eps > 0$, $\delta\in(0,1)$, there exists a positive constant $C_{\eps, \delta}$ such that, for every dataset $D\in\MS(\calX)$, the mechanism that outputs $f(D) + \mathcal{N}(0, \sigma^2 I_{d\times d})$ for $\sigma = C_{\eps, \delta}\cdot\mathsf{sens}_f$ satisfies $(\eps, \delta)$-DP.
    Additionally, if $\eps \in (0, 1)$, then $C_{\eps, \delta} \leq \frac{1}{\eps}\sqrt{2\ln(1.25/\delta)}$.
\end{lemma}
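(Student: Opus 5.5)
The plan is to reduce to the classical $(\eps,\delta)$-DP Gaussian mechanism under $\ell_2$ sensitivity, which is the statement of \Cref{lem:gaussian-mech}. Privacy is taken from the standard analysis of \cite{dwork2014algorithmic}, and the sharper constant for general $\eps$ from the analytic calibration of \cite{balle18improving}.

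First, the reduction to a one-dimensional problem. Fix neighbors $D\sim D'$ and write $\Delta=f(D)-f(D')$, so that $\|\Delta\|_2\le \mathsf{sens}_f$. Because $\mathcal{N}(0,\sigma^2 I)$ is rotation invariant, the privacy loss random variable
\[
L=\ln\frac{p_{f(D)+Z}(Y)}{p_{f(D')+Z}(Y)}, \qquad Y=f(D)+Z,
\]
depends on $Z$ only through its projection onto the direction of $\Delta$. A direct computation gives
\[
L=\frac{\|\Delta\|_2^2}{2\sigma^2}+\frac{\langle \Delta,Z\rangle}{\sigma^2},
\]
which is distributed as $\mathcal{N}\bigl(\mu,2\mu\bigr)$ with $\mu=\|\Delta\|_2^2/(2\sigma^2)$.

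Next, the privacy condition. $(\eps,\delta)$-indistinguishability in the sense of \Cref{def:indistinguishability} follows once $\Pr[L>\eps]\le\delta$; the standard argument splits any event $S$ into the part where $L\le\eps$ and the part where $L>\eps$. The analogous bound is needed for the symmetric direction, and it follows by the same computation. Since $\|\Delta\|_2\le\mathsf{sens}_f$ and the tail $\Pr[L>\eps]$ is monotone in $\|\Delta\|_2$, it suffices to treat the worst case $\|\Delta\|_2=\mathsf{sens}_f$. Everything then scales with $\mathsf{sens}_f$, which yields the form $\sigma=C_{\eps,\delta}\cdot\mathsf{sens}_f$. For the existence of $C_{\eps,\delta}$ for every $\eps>0$, I would invoke Balle--Wang: the exact condition is
\[
\Phi\!\left(\frac{1}{2c}-\eps c\right)-e^{\eps}\,\Phi\!\left(-\frac{1}{2c}-\eps c\right)\le\delta,
\]
where $c=\sigma/\mathsf{sens}_f$. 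The left-hand side tends to $0$ as $c\to\infty$, so a finite $c$ satisfying it exists.

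Finally, the explicit constant for $\eps\in(0,1)$. Set $\sigma=\frac{\mathsf{sens}_f}{\eps}\sqrt{2\ln(1.25/\delta)}$. One must show that
\[
\Pr\!\left[\mathcal{N}(0,1)>\frac{\eps\sigma}{\mathsf{sens}_f}-\frac{\mathsf{sens}_f}{2\sigma}\right]\le\delta.
\]
For this I would use the Gaussian tail bound $\Pr[\mathcal{N}(0,1)>r]\le \frac{1}{r\sqrt{2\pi}}e^{-r^2/2}$. The main obstacle is this calculation. The correction term $\frac{\mathsf{sens}_f}{2\sigma}$, together with the requirement $\eps<1$, has to be absorbed into the factor $1.25$. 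Following Dwork--Roth Appendix A, I would show $r^2\ge 2\ln(1.25/\delta)-\eps$ and $r\ge \sqrt{2\ln 1.25}$ (up to lower-order terms), which makes the prefactor at most $1$.
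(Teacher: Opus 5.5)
Your one-dimensional reduction ($L\sim\mathcal{N}(\mu,2\mu)$ with $\mu=\|\Delta\|_2^2/(2\sigma^2)$), the sufficiency of $\Pr[L>\eps]\le\delta$, the monotonicity in $\|\Delta\|_2$, and the existence of $C_{\eps,\delta}$ are all correct. They are the standard arguments of the sources the paper cites; the paper itself gives no proof of this lemma.

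The gap is in the explicit constant. Write $a:=\sqrt{2\ln(1.25/\delta)}$. Your threshold is then exactly $r=a-\frac{\eps}{2a}$, so $r<a$ always. Since $a\downarrow\sqrt{2\ln 1.25}\approx 0.668$ as $\delta\uparrow 1$, the claim $r\ge\sqrt{2\ln 1.25}$ is false. The term $\eps/(2a)$ is also not lower order once $\delta$ is a constant. In fact $r$ can be negative: for $\eps\to 1$, $\delta\to 1$ one gets $r\approx -0.08$. There the Mills-ratio bound $\Pr[\mathcal{N}(0,1)>r]\le\frac{1}{r\sqrt{2\pi}}e^{-r^2/2}$ says nothing. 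Your prefactor step needs $\frac{e^{\eps/2}}{1.25\sqrt{2\pi}\,r}\le 1$, i.e.\ roughly $r\ge 0.53$. This fails for $\delta$ above roughly $0.74$ when $\eps$ is close to $1$. The Dwork--Roth calculation you plan to follow has the same blind spot, since it tacitly needs $c\ge 3/2$, i.e.\ small $\delta$.

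The statement itself survives; you only need a case split.
\begin{itemize}
\item If $\delta\le 1/2$, then $a\ge\sqrt{2\ln 2.5}>1.35$, so $r\ge a-\frac{1}{2a}>0.98$. Combining $r^2\ge a^2-\eps$ with the Mills ratio gives $\Pr[\mathcal{N}(0,1)>r]\le\frac{e^{1/2}}{1.25\sqrt{2\pi}\cdot 0.98}\,\delta<0.54\,\delta$.
\item If $\delta>1/2$ and $r\ge 0$, then $\Pr[\mathcal{N}(0,1)>r]\le 1/2<\delta$.
\item If $r<0$, then $a^2<\eps/2<1/2$, which forces $\delta>1.25e^{-1/4}>0.97$. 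Monotonicity of $a\mapsto a-\frac{1}{2a}$ gives $r>\sqrt{2\ln 1.25}-\frac{1}{2\sqrt{2\ln 1.25}}>-0.09$. Hence $\Pr[\mathcal{N}(0,1)>r]<\Phi(0.09)<0.54<\delta$, with $\Phi$ the standard normal CDF.
\end{itemize}
With this patch, $\sigma=\frac{\mathsf{sens}_f}{\eps}\sqrt{2\ln(1.25/\delta)}$ is valid for all $\eps\in(0,1)$ and $\delta\in(0,1)$. The optimal analytic constant of Balle--Wang is then no larger, which is what the lemma asserts.
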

}

\medskip\noindent\textbf{Properties of Differential Privacy.} {
We next state some fundamental properties of differential privacy.

\begin{lemma}[Post-processing~\cite{dwork2014algorithmic}]\label{lem:post}
    For $\eps\geq 0$ and $0\leq \delta\leq 1$, let $Y$ and $Y'$ be two $(\eps,\delta)$-indistinguishable random variables over $\calY$. Let $f:\calY\to \calZ$ be a (post-processing) function. Then the random variables $f(Y)$ and $f(Y')$ are also $(\eps, \delta)$-indistinguishable.
\end{lemma}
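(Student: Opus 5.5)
The plan is to prove the post-processing property directly from \Cref{def:indistinguishability}, using a preimage argument. We first treat deterministic $f$ and then extend to randomized post-processing by conditioning on the internal randomness.

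\textbf{Deterministic $f$.} Fix a measurable set $S\subseteq\calZ$ and set $A := f^{-1}(S) = \{y\in\calY : f(y)\in S\}$. We assume, as is implicit in the statement, that $f$ is measurable, so that $A$ is a measurable subset of $\calY$. Then $\Pr[f(Y)\in S] = \Pr[Y\in A]$ and $\Pr[f(Y')\in S]=\Pr[Y'\in A]$. Applying the $(\eps,\delta)$-indistinguishability of $Y$ and $Y'$ to the set $A$ gives
\[
\Pr[f(Y)\in S] = \Pr[Y\in A] \leq e^{\eps}\Pr[Y'\in A]+\delta = e^{\eps}\Pr[f(Y')\in S]+\delta.
\]
The symmetric inequality follows in the same way by swapping the roles of $Y$ and $Y'$. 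Since $S$ was arbitrary, $f(Y)$ and $f(Y')$ are $(\eps,\delta)$-indistinguishable.

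\textbf{Randomized $f$.} Write $f(y)=g(y,R)$, where $R$ is randomness independent of $Y$ and $Y'$ with law $\mu$. For each fixed $r$, the map $g(\cdot,r)$ is deterministic, so the previous step gives
\[
\Pr[g(Y,r)\in S]\leq e^{\eps}\Pr[g(Y',r)\in S]+\delta .
\]
Integrating this inequality against $\mu(dr)$ and using Fubini (justified by independence) yields $\Pr[f(Y)\in S]\leq e^{\eps}\Pr[f(Y')\in S]+\delta$, because the constant $\delta$ integrates to $\delta$ against the probability measure $\mu$.

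The only potential obstacle is measurability: $f$ must be measurable for $f^{-1}(S)$ to be an event, and $g$ jointly measurable for the integration step. These are standing assumptions and require no further argument.
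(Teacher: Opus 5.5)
Your proof is correct and is essentially the standard argument: the paper does not prove this lemma itself but cites \cite{dwork2014algorithmic}, whose proof is your preimage argument for deterministic $f$, extended to randomized $f$ by treating it as a mixture of deterministic maps, which is what your integration over $R$ does. Your note that $f$ must be measurable (and $g$ jointly measurable) usefully states an assumption the lemma leaves implicit.
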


\begin{lemma}[Basic Composition~\cite{dwork2014algorithmic}] \label{lem:basic-composition}
    Let $\eps_1,\eps_2\geq 0$ and $0\leq \delta_1,\delta_2\leq 1$. Let $Y$ and $Y'$ be two $(\eps_1,\delta_1)$-indistinguishable random variables over $\calY$, and let $Z$ and $Z'$ be two $(\eps_2,\delta_2)$-indistinguishable random variables over $\calZ$. Then the joint random variables $(Y,Z)$ and $(Y',Z')$ are $(\eps_1+\eps_2,\;\delta_1+\delta_2)$-indistinguishable.
\end{lemma}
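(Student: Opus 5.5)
Basic Composition (\Cref{lem:basic-composition}) follows directly from the definition of $(\eps,\delta)$-indistinguishability (\Cref{def:indistinguishability}), provided the two pairs are independent, i.e.\ $Y$ is independent of $Z$ and $Y'$ is independent of $Z'$. This independence is the intended reading in the setting where the lemma is applied. Fix a measurable set $S\subseteq\calY\times\calZ$. It suffices to show $\Pr[(Y,Z)\in S]\le e^{\eps_1+\eps_2}\Pr[(Y',Z')\in S]+\delta_1+\delta_2$; the reverse inequality follows by swapping the roles of the primed and unprimed variables.

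The first step is to replace $Y$ by $Y'$ while holding $Z$ fixed. For each $z$, let $S_z=\{y:(y,z)\in S\}$ denote the slice of $S$ at $z$. Conditioning on $Z$ and using independence,
\[
\Pr[(Y,Z)\in S]=\mathbb{E}_Z\big[\Pr[Y\in S_Z]\big]\le \mathbb{E}_Z\big[\min\{1,\,e^{\eps_1}\Pr[Y'\in S_Z]+\delta_1\}\big].
\]
Keeping the $\min\{1,\cdot\}$ truncation is important for the next step. Write $g(z)=\min\{1,e^{\eps_1}\Pr[Y'\in S_z]\}$. Since $\min\{1,a+b\}\le\min\{1,a\}+b$, the right-hand side is at most $\mathbb{E}_Z[g(Z)]+\delta_1$.

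The second step is to replace $Z$ by $Z'$. The function $g$ is $[0,1]$-valued, so the layer-cake identity $\mathbb{E}[g(Z)]=\int_0^1\Pr[g(Z)>u]\,du$ applies. Applying the indistinguishability of $Z$ and $Z'$ to each level set $\{g>u\}$, and integrating over $u\in[0,1]$, gives $\mathbb{E}[g(Z)]\le e^{\eps_2}\mathbb{E}[g(Z')]+\delta_2$. The additive term is only $\delta_2$ because $g\le 1$, which is why the truncation was kept.

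Finally, $\mathbb{E}[g(Z')]\le e^{\eps_1}\mathbb{E}_{Z'}\Pr[Y'\in S_{Z'}]=e^{\eps_1}\Pr[(Y',Z')\in S]$, again using independence of $Y'$ and $Z'$. Combining the three steps yields
\[
\Pr[(Y,Z)\in S]\le e^{\eps_1+\eps_2}\Pr[(Y',Z')\in S]+\delta_1+\delta_2,
\]
as required.

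The main obstacle is this bookkeeping. Without the truncation, the second term would pick up a $\delta_1$-dependent factor such as $e^{\eps_2}\delta_1$ in place of a clean $\delta_1$. Measurability of the slices $S_z$ is routine.
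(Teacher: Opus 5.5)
The paper gives no proof of this lemma; it is quoted from Dwork and Roth, so there is no in-paper argument to compare against. Your proof is correct and self-contained. Each step is sound:
\begin{itemize}
\item slicing $S$ at each $z$ and swapping $Y\to Y'$ slice by slice;
\item splitting off $\delta_1$ via $\min\{1,a+b\}\le\min\{1,a\}+b$;
\item applying the $(\eps_2,\delta_2)$ guarantee to the level sets of the $[0,1]$-valued function $g$ through the layer-cake formula.
\end{itemize}
Together these yield exactly $\delta_1+\delta_2$, with no stray exponential factors.

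\textbf{The independence hypothesis is necessary.} It is not merely the ``intended reading'': as literally stated, the lemma is false. Take $Y,Y'$ uniform on $\{0,1\}$, with $Z=Y$ and $Z'=1-Y'$. Both marginal pairs are identically distributed, hence $(0,0)$-indistinguishable. Yet the diagonal $S=\{(y,z):y=z\}$ has probability $1$ under $(Y,Z)$ and $0$ under $(Y',Z')$. In the paper's use of the lemma (the proof of Theorem~\ref{thm:ecc-privacy}), once $t_{1:k}$ is fixed, the outputs of \textsf{CC} and \textsf{BCC} are produced with independent coins on deterministic inputs. So your hypothesis holds where it is needed.

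\textbf{Your closing remark misattributes the role of the truncation.} Capping $g$ at $1$ is what keeps the second additive term at $\delta_2$ rather than $e^{\eps_1}\delta_2$. Separately, $\delta_1$ avoids an $e^{\eps_2}$ factor because you pull it out before the second swap.

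\textbf{The adaptive version.} The standard textbook form of basic composition allows the law of $Z$ to depend on the realized $Y$. Your argument covers that case too if you reverse the order of the two swaps. First, for each fixed $y$, swap the conditional law of the second coordinate. Then swap $Y\to Y'$ by applying layer-cake to the resulting $[0,1]$-valued function of $y$.
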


\begin{lemma}[Group Privacy]\label{lem:group-privacy}
    Let $\eps\geq 0$ and $0\leq \delta\leq 1$. For $k\in\N$, let $(Y_1, Y_1'), \dots, (Y_k, Y_k')$ be pairs of random variables over the same domain $\calY$. Suppose, for each $i\in [k]$, the random variables $Y_i$ and $Y_i'$ are $(\eps, \delta)$-indistinguishable. Then the tuples $Y=(Y_1, \dots, Y_k)$ and $Y'=(Y_1', \dots, Y_k')$ are $(k \cdot \eps, k\cdot e^{k\eps}\cdot\delta)$-indistinguishable.
\end{lemma}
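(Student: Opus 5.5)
The plan is to obtain the claimed bound from a weaker intermediate one, namely that $Y$ and $Y'$ are $(k\eps, k\delta)$-indistinguishable, and then weaken it. Read literally, the statement concerns tuples built from $k$ pairs. As in \Cref{lem:basic-composition}, I will take the coordinates to be generated independently, i.e.\ the joint law of $Y$ is the product of the laws of the $Y_i$, and likewise for $Y'$. The argument is an induction on $k$ using \Cref{lem:basic-composition}. For $k=1$ there is nothing to prove. For the inductive step, suppose $(Y_1,\dots,Y_{k-1})$ and $(Y_1',\dots,Y_{k-1}')$ are $((k-1)\eps,(k-1)\delta)$-indistinguishable. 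Apply \Cref{lem:basic-composition} to this pair together with the pair $(Y_k,Y_k')$. This gives $(k\eps,k\delta)$-indistinguishability of the full tuples. Since $e^{k\eps}\geq 1$, we have $k\delta\leq k e^{k\eps}\delta$. Indistinguishability is monotone in both parameters, so the stated guarantee follows.

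Because the name ``group privacy'' and the factor $e^{k\eps}$ suggest the intended use is the chained version, I would also prove that version. There one has random variables $Z_0,Z_1,\dots,Z_k$, with $Z_{i-1}$ and $Z_i$ being $(\eps,\delta)$-indistinguishable for each $i$. The claim is that $Z_0$ and $Z_k$ are $(k\eps, k e^{k\eps}\delta)$-indistinguishable. This is the form invoked in the technical overview, where the $\cdistinct$ neighbors are at distance $2$ in our relation. Fix a measurable set $S$. By induction on $j$ I would show
\[
\Pr[Z_0\in S]\leq e^{j\eps}\Pr[Z_j\in S] + \Big(\textstyle\sum_{m=0}^{j-1} e^{m\eps}\Big)\delta .
\]
The step from $j$ to $j+1$ substitutes $\Pr[Z_j\in S]\leq e^{\eps}\Pr[Z_{j+1}\in S]+\delta$ into the bound for $j$. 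This multiplies the existing sum by nothing new and adds a term $e^{j\eps}\delta$. At $j=k$ the additive term is $\sum_{m<k}e^{m\eps}\delta\leq k e^{(k-1)\eps}\delta\leq k e^{k\eps}\delta$. The reverse inequality follows by symmetry, running the chain from $Z_k$ to $Z_0$.

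Neither step presents a real obstacle. The only care needed is in the bookkeeping of the $\delta$ terms as they are multiplied by powers of $e^{\eps}$, and in checking the direction of the inequalities in the symmetric case. In the tuple reading, the substantive issue is the independence across coordinates: without it, the marginal indistinguishabilities do not determine anything about the joint law. I would therefore state that assumption explicitly, matching the implicit convention of \Cref{lem:basic-composition}.
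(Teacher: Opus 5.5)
The paper states this lemma without proof (it is listed alongside post-processing and basic composition as a standard property), so there is no argument of theirs to compare against. Both of your arguments are correct.

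\textbf{The tuple reading.} Your caveat about independence is essential, not cosmetic. Take $k=2$ and $\eps=\delta=0$. Let $Y_1=Y_2$ be a uniform bit, and let $Y_1'$ be a uniform bit with $Y_2'=1-Y_1'$. Every pair $(Y_i,Y_i')$ has identical marginals, yet $\Pr[Y_1=Y_2]=1$ while $\Pr[Y_1'=Y_2']=0$. So the tuple form is false as printed unless the coordinates are independent, and under independence your induction gives the stronger $(k\eps,k\delta)$.

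\textbf{The chained reading.} The paper never actually uses the tuple form; every use is your chained version.
\begin{itemize}
\item In the proof of \Cref{thm:ecc-privacy}, $4$-step (resp.\ $2$-step) $1$-neighboring inputs of \cc (resp.\ \bcc) are converted into $(4\cceps,4e^{4\cceps}\delta_C)$- (resp.\ $(2\bcceps,2e^{2\bcceps}\delta_B)$-) indistinguishability.
\item The $\cdistinct$ lower bound uses the case $k=2$.
\end{itemize}
To invoke your lemma in the first case, you need the routine chain $x=x^{(0)},x^{(1)},\dots,x^{(c)}=x'$. Here $x^{(m)}$ agrees with $x'$ on the first $m$ differing coordinates and with $x$ elsewhere, so consecutive streams are $1$-step $1$-neighbors.

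Your additive term $\sum_{m=0}^{k-1}e^{m\eps}\delta$ is also sharper than the stated $ke^{k\eps}\delta$. For $k=2$ it gives exactly the $(1+e^{\eps})\delta$ quoted in the technical overview.
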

}

\medskip\noindent\textbf{Continual Observation.} {
The mechanisms discussed so far operate in the \emph{batch model}: they map a static dataset $D$ to a single output, estimating the value of a function $f$ on $D$. In the \emph{continual observation model}, data arrives sequentially and the mechanism estimates the value of $f$ on the current dataset, updating the dataset at each time step.

Let $\calX$ denote the data universe and $\calY$ the output space. A \emph{continual mechanism} $\cM$ maps a data stream $\sigma=(x_1, x_2, \dots)$ to an output sequence $\cM(\sigma)=(y_1, y_2, \dots)$ over time. In the batch model, neighbor relation was defined on the family of datasets $\MS(\calX)$. In contrast, in the continual observation model, the neighbor relation is defined on the family of finite sequences of data, i.e., $\calX^*$. For instance, two sequences might be neighboring if they differ at a single step.

\begin{definition}[Differential Privacy under Continual Observation]
\label{def:dp-continual}
    Let $\calX$ be a data universe and $\sim$ a neighbor relation on $\calX^*$. For $\eps\geq 0$ and $0\leq \delta\leq 1$, a continual mechanism $\cM$ is said to satisfy \emph{$(\eps,\delta)$-DP with respect to $\sim$} if for every $T\in \N$ and every $\sigma,\sigma'\in\calX^T$ satisfying $\sigma\sim \sigma'$, the output sequences $\cM(\sigma)$ and $\cM(\sigma')$ are $(\eps,\delta)$-indistinguishable.
\end{definition}
}

\medskip\noindent\textbf{Neighbor Relations.} {
In this paper, we work with the following neighbor relations.

\begin{definition}[\texorpdfstring{$k$}{k}-Step \texorpdfstring{$\Delta$}{Delta}-Neighboring]\label{def:k-step-delta-neighboring}
    Let $k \in \N$ and $\Delta > 0$. For every $T\in\N$, two sequences $\sigma = (x_1,\dots,x_T)$ and $\sigma' = (x_1',\dots,x_T')$ in $\R^T$ are said to be \emph{$k$-step $\Delta$-neighbors} if there exist indices $i_1,\dots,i_k \in [T]$ such that:
    \begin{itemize}
        \item $x_j = x_j'$ for all $j \notin \{i_1,\dots,i_k\}$, and
        \item $|x_j - x_j'| \le \Delta$ for all $j \in \{i_1,\dots,i_k\}$.
    \end{itemize}
    The sequences $\sigma$ and $\sigma'$ are said to be \emph{all-step $\Delta$-neighbors} if $|x_j - x_j'| \le \Delta$ holds for every $j \in [T]$.
\end{definition}

\begin{definition}[\texorpdfstring{$k$}{k}-Shift \texorpdfstring{$\Delta$}{Delta}-Neighboring]\label{def:one-shift-delta-neighbor}
    Let $k\in\N$, $\Delta\in\R^{+}$, and $T\in\N\setminus \{1, \dots, k-1\}$. For two real-valued sequences $\sigma=(x_1,\dots,x_T)$ and $\sigma'=(x_1',\dots,x_{T-k}')$, we say that $\sigma$ is a \emph{$k$-shift $\Delta$-neighbor} of $\sigma'$ if the following conditions hold:
    \begin{enumerate}[label=(\roman*), ref=(\roman*)]
        \item\label{def:one-shift-delta-neighbor:1} $|x_{i+k}-x_i'|\le \Delta$ for all $i\in[T-k]$;
        \item\label{def:one-shift-delta-neighbor:2} $|x_i|\leq \Delta$ for all $i\in [k]$.
    \end{enumerate}
    We note that the $k$-shift $\Delta$-neighbor relation is not symmetric.
\end{definition}

\begin{definition}[Edit Neighboring]\label{def:queue-neighboring}
    Let $T\in\N$, and let $\sigma=(x_1,\dots,x_T)$ and $\sigma'=(x_1',\dots,x_T')$ be two sequences in $([0,1]\cup\{\bot\})^T$. We say that $\sigma$ is an \emph{insertion neighbor of $\sigma'$ at step} $i\in\{1, \dots, T\}$ if the following conditions hold:
    \begin{itemize}
        \item For all $t< i$, we have $x_t=x_t'$.
        \item Let $j$ be the smallest index in $\{i,\dots,T\}$ such that $x_{j}'=\bot$. If no such index exists, set $j=T+1$. Then, for every $t\in\{i+1,\dots,\min\{j, T\}\}$, we have $x_t=x_{t-1}'$.
        \item For all $t\in\{j+1,\dots,T\}$, we have $x_t=x_t'$.
    \end{itemize}
    We say that $\sigma$ and $\sigma'$ are \emph{edit neighbors}, written $\sigma\sim_e\sigma'$, if there exists $i\in[T]$ such that one of $\sigma$ and $\sigma'$ is an insertion neighbor of the other one at step $i$.
\end{definition}
}
\medskip\noindent\textbf{Private Continual Counter.} {
One of the fundamental problems in differential privacy under continual observation is \emph{private continual counting}. In this problem, a mechanism $\cc$ receives an input $x_t \in \R$ at each time step $t \in \N$ and outputs an estimate $y_t$ of the prefix sum $\sum_{i=1}^t x_i$. Later, when designing a continual counter that is private with respect to the edit-neighbor relation, we restrict the input domain to the bounded range $[0,1]$, which is necessary because the prefix sums of two edit-neighboring sequences could differ by the value of a single data record, and thus must be bounded in order to ensure privacy.

For $\alpha : \N \to \mathbb{R}_{\geq 0}$, and $\beta \in [0,1]$, the mechanism $\cc$ is said to be \emph{$(\alpha,\beta)$-accurate} if for every $T\in\N$ and input sequence $(x_1,\ldots,x_T)$, its output sequence $(y_1,\ldots,y_T) \in \R^{T}$ satisfies
\[
    \Pr\!\left[\forall t\in[T]\,:\,\left|y_t - \sum_{i=1}^t x_i\right| \le \alpha(t) \right] \ge 1 - \beta.
\]  
When important, we write $\alpha_\beta$ to emphasize that the function $\alpha$ has a dependency on $\beta$.

Differential privacy for the continual counting problem is well-studied under the neighbor relation in Definition~\ref{def:k-step-delta-neighboring}. We use the following result as our baseline continual counter.

\begin{lemma}\label{lem:cc-standard}
    For every $\eps > 0$ and $\delta \in (0,1)$, there exists a continual counting mechanism $\cc$ that is $(\eps,\delta)$-DP with respect to the 1-step $1$-neighbor relation.
    For any $\beta \in (0, 1)$, it is $(\alpha, \beta)$-accurate for
    \begin{equation*}
       \alpha_\cc(\eps,\delta,\beta,t) = O(C_{\eps, \delta} \ln(t)\sqrt{\ln(t/\beta)})
    \end{equation*}
    where $C_{\eps,\delta}$ is the Gaussian mechanism constant from \Cref{lem:gaussian-mech}.
\end{lemma}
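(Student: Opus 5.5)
The plan is to realize $\cc$ as the binary tree mechanism (equivalently a factorization $A = LR$ with $L$, $R$ given by the dyadic decomposition) run in the standard doubling fashion so that it works for every $t$ without knowing $T$ in advance. For each level $h\geq 0$ and each dyadic interval $I = ((k-1)2^h, k2^h]$, we release the partial sum $\sum_{i\in I} x_i$ plus independent Gaussian noise $\mathcal{N}(0,\sigma_h^2)$. At time $t$ the output $y_t$ is the sum of the noisy values of the at most $\lfloor\log_2 t\rfloor+1$ dyadic intervals in the binary decomposition of $[1,t]$. Because the outputs are a deterministic function of the released noisy interval sums, privacy reduces to privacy of these sums, and \Cref{lem:post} takes care of the rest.

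For privacy, fix 1-step 1-neighbors $\sigma,\sigma'$ that differ only at position $i$, with $|x_i-x_i'|\leq 1$. At each level exactly one interval contains $i$, and that interval sum changes by at most $1$; every other interval sum is unchanged. The infinite number of levels is the main obstacle: with a uniform $\sigma$ the $\ell_2$ sensitivity $\sqrt{\#\text{levels}}$ is unbounded. I would resolve this by giving level $h$ a growing noise scale, say $\sigma_h = c\,(h+1)\, C_{\eps,\delta}$ with $c=\pi/\sqrt6$. Then the vector of interval sums, rescaled level-wise by $1/((h+1)c)$, has $\ell_2$ sensitivity at most $\big(\sum_h 1/(c^2(h+1)^2)\big)^{1/2}=1$. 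Adding isotropic noise of scale $C_{\eps,\delta}$ to this rescaled vector is $(\eps,\delta)$-DP by \Cref{lem:gaussian-mech}, applied to the finite-dimensional projection for each $T$ as in \Cref{def:dp-continual}. Undoing the rescaling is post-processing.

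For accuracy, the error at time $t$ is $\sum_{h\in H_t} Z_h$, where $H_t$ contains at most $\log_2 t + 1$ levels, all of them $h\leq \log_2 t$. This sum is Gaussian with variance at most $\sum_{h\leq \log_2 t} c^2 (h+1)^2 C_{\eps,\delta}^2 = O(C_{\eps,\delta}^2 \ln^3 t)$, which already gives a polylog bound. To match the stated $\ln t\sqrt{\ln(t/\beta)}$ exactly, I would instead use $\sigma_h \propto C_{\eps,\delta}$ times a slowly growing factor. Alternatively, I would use the epoch version: restart a fresh fixed-horizon tree of size $2^k$ in epoch $k$ with privacy budget $(\eps,\delta)$ scaled by $\propto 1/k^{2}$ via \Cref{lem:basic-composition}. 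This still costs only polylog factors, and the precise exponent would be tuned here, possibly citing the known sharper bound from the factorization literature if the elementary construction loses a $\sqrt{\ln t}$. Finally, a union bound with failure probability $\beta/(2t^2)$ at time $t$ gives the per-time Gaussian tail $\sigma_{\mathrm{tot}}(t)\sqrt{2\ln(4t^2/\beta)}$, which yields $(\alpha,\beta)$-accuracy uniformly over all $t$.
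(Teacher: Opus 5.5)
\textbf{Gap: the construction and both proposed repairs miss the stated rate.} Your privacy argument for the infinite tree is fine, but it does not reach $\alpha_\cc=O(C_{\eps,\delta}\ln(t)\sqrt{\ln(t/\beta)})$.

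\emph{The main construction.} With $\sigma_h=c(h+1)C_{\eps,\delta}$, the output at $t=2^L-1$ sums one node from every level $h<L$. Its error therefore has standard deviation $\Theta(C_{\eps,\delta}L^{3/2})$, and you only get $\alpha=O(C_{\eps,\delta}\ln^{3/2}(t)\sqrt{\ln(t/\beta)})$.

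\emph{Other level-wise schedules.} The claimed $\alpha$ has a constant independent of $\beta$. Letting $\beta\to 0$ in the Gaussian tail shows it requires standard deviation $O(C_{\eps,\delta}\ln t)$ at every $t$, i.e.\ $\sum_{h<L}\sigma_h^2\le K C_{\eps,\delta}^2L^2$ for all $L$. This is incompatible with your privacy condition $\sum_{h\ge 0}C_{\eps,\delta}^2/\sigma_h^2\le 1$. Cauchy--Schwarz on the $L$ levels in $[L,2L)$ gives $\sum_{L\le h<2L}C_{\eps,\delta}^2/\sigma_h^2\ge 1/(4K)$. Summing over the disjoint blocks $L=1,2,4,\dots$ then diverges. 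So no choice of level-wise scales on a single infinite tree, certified by your rescaling argument, can give the lemma.

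\emph{The epoch variant.} This moves in the wrong direction. Giving epoch $k$ a budget $\propto\eps/k^2$ via basic composition inflates its noise by a factor $\Theta(k^2)$.

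\textbf{The missing idea.} Epochs consist of \emph{disjoint} time steps, so each input affects exactly one epoch. By parallel composition, every epoch may use the full budget (or a fixed half of it), with no decay in $k$. What remains is to fold completed epochs into the prefix sum cheaply. This is the paper's doubling trick, with $\mathcal{I}_k=[2^{k-1},2^k-1]$:
\begin{itemize}
\item Release each completed interval's total once with Gaussian noise of scale $C_{\eps/2,\delta/2}$. These sums are disjoint, so jointly they are $(\eps/2,\delta/2)$-DP, and they contribute only $O(C_{\eps,\delta}\sqrt{\ln t})$ standard deviation at time $t$.
\item Run a fresh $(\eps/2,\delta/2)$-DP fixed-horizon counter on the current interval, with standard deviation $O(C_{\eps,\delta}\ln|\mathcal{I}_k|)$.
\end{itemize}
Composing the two halves gives $(\eps,\delta)$-DP with total standard deviation $O(C_{\eps,\delta}\ln t)$. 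Your union bound over $t$ then finishes the proof.

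You do not need the factorization literature for the inner counter. A depth-$k$ binary tree with per-node noise $C_{\eps/2,\delta/2}\sqrt{k}$ already has standard deviation $O(C_{\eps,\delta}k)$. So your elementary tree suffices, provided it is run once per epoch at full budget rather than as a single infinite tree.
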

\Cref{lem:cc-standard} follows from taking any (good) continual counter for bounded streams, based on adding correlated Gaussian noise to the true counts, and using the \enquote{doubling trick} from \cite{chan2011private}.
We give a proof in \Cref{app:cc-proofs}.

We also require private continual counters that in addition to being accurate, (i) never underestimate the true prefix sums and (ii) produce integer outputs. The following lemma allows us to construct such a \emph{biased} continual counter from a generic continual counter. 
\begin{lemma}\label{lem:cc-biased}
    Let $\cc$ be a continual counter 
    such that when given any fixed $\eps'>0$ and $\delta' \in (0,1)$, it is $(\eps', \delta')$-DP with respect to the \emph{1-step 1-neighbor} relation and such that for every $\beta'\in(0,1)$, it is $(\alpha(\eps',\delta',\beta',\cdot), \beta')$-accurate for some arbitrary given function $\alpha(\eps,\delta',\beta',\cdot): \N \to \R_{\geq 0}$.
    Then, for any given $\eps>0$ and $\delta\in(0,1)$, there exists a \emph{biased} continual counter $\bcc$ that is $(\eps, \delta)$-DP with respect to the \emph{1-step 1-neighbor}, and which satisfies the following properties:
    \begin{enumerate}
        \item On receiving $x_t$, its output $y_t$ is an integer and $y_t \geq \sum_{i=1}^t x_i$
        \item For every $\beta \in (0,1)$, it is $(\alpha_\bcc, \beta)$-accurate for
        \begin{equation*}
           \alpha_\bcc(\eps,\delta,\beta,t) = \alpha_\cc(\eps,\delta/2,\eta,t) + \alpha_\cc(\eps,\delta/2,\beta,t) + 1 
        \end{equation*}
        where $\eta = 0.5 \delta / (1+e^{\eps})\in(0, 1)$.
    \end{enumerate}
\end{lemma}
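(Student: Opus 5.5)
We construct $\bcc$ explicitly from $\cc$ and follow the recipe in the technical overview. Run $\cc$ with parameters $(\eps,\delta/2)$ on the input stream, obtaining outputs $y'_t$. Fix the shift function $a(t) := \alpha_\cc(\eps,\delta/2,\eta,t)$ with $\eta = 0.5\delta/(1+e^{\eps})$. Define
\[ y_t := \Bigl\lceil \max\Bigl(y'_t + a(t),\, \textstyle\sum_{i=1}^t x_i\Bigr)\Bigr\rceil. \]
Property 1 is then immediate: $y_t$ is an integer, and $y_t \ge \sum_{i\le t} x_i$ holds deterministically.

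\textbf{Privacy.} Let $G$ be the event that $|y'_t - \sum_{i\le t}x_i| \le a(t)$ for all $t$. By the accuracy of $\cc$, $\Pr[\neg G]\le \eta$. On $G$ the max is attained by the first argument, so $y_t = \lceil y'_t + a(t)\rceil =: g(y')_t$, a data-independent post-processing of $y'$.

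Fix neighboring streams $\sigma\sim\sigma'$ and a measurable set $S$. Then
\[ \Pr[y(\sigma)\in S] \le \Pr[g(y'(\sigma))\in S] + \eta \le e^{\eps}\Pr[g(y'(\sigma'))\in S] + \delta/2 + \eta, \]
using \Cref{lem:post} for the second inequality. Next we bound $\Pr[g(y'(\sigma'))\in S] \le \Pr[y(\sigma')\in S] + \eta$, since $g(y')$ and $y$ agree on $G$ for $\sigma'$ as well. Combining,
\[ \Pr[y(\sigma)\in S] \le e^{\eps}\Pr[y(\sigma')\in S] + \delta/2 + (1+e^{\eps})\eta = e^{\eps}\Pr[y(\sigma')\in S] + \delta. \]
This is exactly why $\eta$ has the factor $1+e^{\eps}$. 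The symmetric inequality follows identically.

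\textbf{Accuracy.} Fix $\beta$. The stream $y'$ is one random variable, so it is simultaneously $(\alpha_\cc(\eps,\delta/2,\eta,\cdot),\eta)$- and $(\alpha_\cc(\eps,\delta/2,\beta,\cdot),\beta)$-accurate. Hence, with probability at least $1-\beta$, $|y'_t - s_t| \le \alpha_\cc(\eps,\delta/2,\beta,t)$ for all $t$, where $s_t = \sum_{i\le t}x_i$. On this event, $y_t - s_t \ge 0$ always holds. For the upper bound, either $y_t = \lceil s_t\rceil \le s_t+1$, or
\[ y_t \le y'_t + a(t) + 1 \le s_t + \alpha_\cc(\eps,\delta/2,\beta,t) + a(t) + 1. \]
Either way we obtain $\alpha_\bcc$ as stated.

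\textbf{Main obstacle.} The only delicate point is the privacy step. The max with the true sum is data-dependent, so we must argue via the coupling with the high-probability event $G$ on both streams rather than by pure post-processing. The resulting $\delta$ bookkeeping is exactly what dictates the choice of $\eta$; the rest is routine.
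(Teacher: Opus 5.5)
Your proposal is correct and follows the paper's proof essentially step for step. It uses the same construction $y_t=\lceil\max\{y'_t+\alpha_\cc(\eps,\delta/2,\eta,t),\sum_{i\le t}x_i\}\rceil$ and the same privacy argument: a comparison on both streams with the data-independent post-processing $\lceil y'_t+\alpha_\cc(\eps,\delta/2,\eta,t)\rceil$, which costs $(1+e^{\eps})\eta=\delta/2$. The accuracy bound is also the same, with the extra $+1$ coming from the ceiling.
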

Both \Cref{lem:cc-standard,lem:cc-biased} are proved in Appendix~\ref{app:cc-proofs}, where we also show their respective construction.
}

\medskip\noindent\textbf{Sparse Vector Technique (SVT).}
The \emph{sparse vector technique (SVT)}, introduced by~\cite{DBLP:conf/stoc/DworkNRRV09}, is a mechanism for continually checking whether elements of a stream exceed a given threshold. In this work, we use a variant of SVT, denoted by $\svt$ and described in Algorithm~\ref{alg:svt}. The mechanism $\svt$ is initialized with a privacy parameter $\eps > 0$ and a threshold $\tau \in \R$, and processes a stream of real-valued inputs $x_1, x_2, \dots$. At each time step $t \in \N$, it privately compares the input $x_t$ with $\tau + 8\ln(t)/\eps$, and outputs either $\top$ or $\bot$ to indicate whether the input exceeds this term. The mechanism halts once it outputs $\top$ for the first time.

\begin{algorithm}
    \begin{algorithmic}
        \Function{Initialize}{Privacy parameter $\eps$, threshold $\tau$}
            \State $Z_0 \leftarrow \Lap(2/\eps)$
        \EndFunction
        \Function{Update}{$x_t \in\mathbb{R}$}\Comment{at time step $t\in \N$}
            \State $Z_t \leftarrow \Lap(4/\eps)$
            \If{$x_t + Z_t > \tau + 8\ln\left(t+\lceil \frac{12}{\eps}\rceil\right)/\eps + Z_0$}
                \State Output $\top$ and \textbf{halt}
            \Else
                \State Output $\bot$
            \EndIf
        \EndFunction
    \end{algorithmic}
    \caption{\texorpdfstring{$\svt$}{SVT}}\label{alg:svt}
\end{algorithm}

\begin{notation}\label{not:svt}
    For an input stream $\sigma$ of length $T \in \N \cup \{0\}$, we denote by $\svt(\sigma) \in [T+1]$ the time step at which $\svt$ outputs $\top$. If $\svt$ never outputs $\top$, we write $\svt(\sigma) = T+1$. In particular, for the empty sequence $()$ of length $0$, the mechanism never outputs any value including $\top$. Thus, $\svt(()) = 1$ with probability $1$.
\end{notation}

\begin{lemma}[\cite{lyu17understanding}]\label{lem:svt-privacy}
    Let $\tau\in\R$, $\eps\ge 0$, and $\Delta>0$. The mechanism $\svt$ described in Algorithm~\ref{alg:svt} is $\Delta\cdot \eps$-DP with respect to the all-step $\Delta$-neighbor relation in Definition~\ref{def:k-step-delta-neighboring}.
\end{lemma}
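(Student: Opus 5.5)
The plan is the standard pointwise argument for the sparse vector technique (as in \cite{lyu17understanding}). The case $\eps=0$ is degenerate, so assume $\eps>0$. Write the data-independent threshold at step $t$ as $T_t := \tau + 8\ln\!\left(t+\lceil 12/\eps\rceil\right)/\eps$. It does not depend on the input, so it plays no role in the privacy argument beyond being common to both runs.

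Fix $T$ and two all-step $\Delta$-neighbors $\sigma=(x_1,\dots,x_T)$ and $\sigma'=(x'_1,\dots,x'_T)$. The entire output transcript of $\svt$ is a deterministic function of the halting time $\svt(\sigma)\in[T+1]$ (\Cref{not:svt}). The output space is therefore discrete, and by post-processing (\Cref{lem:post}) it suffices to show
\[
\Pr[\svt(\sigma)=k]\le e^{\Delta\eps}\Pr[\svt(\sigma')=k]\qquad\text{for every } k\in[T+1].
\]
The reverse inequality then follows by swapping the roles of $\sigma$ and $\sigma'$, since the neighbor relation is symmetric.

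Condition on $Z_0=z$ and use the independence of the $Z_t$. For $k\le T$ we get
\[
\Pr[\svt(\sigma)=k]=\int p_{Z_0}(z)\prod_{t<k}\Pr[x_t+Z_t\le T_t+z]\cdot\Pr[x_k+Z_k>T_k+z]\,dz .
\]
For $k=T+1$ the last factor is absent and the product runs over all $t\le T$.

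Next substitute $z\mapsto z+\Delta$ and compare each factor with the corresponding factor for $\sigma'$.
\begin{itemize}
\item \emph{Rejection factors.} Since $x'_t\le x_t+\Delta$, the event $x_t+Z_t\le T_t+z$ implies $x'_t+Z_t\le T_t+z+\Delta$. These factors therefore cost nothing.
\item \emph{Acceptance factor.} We need $x'_k+Z_k>T_k+z+\Delta$. Since $x'_k\ge x_k-\Delta$, this follows from $Z_k$ exceeding a level that is at most $2\Delta$ above the level required for $\sigma$. Because $Z_k\sim\Lap(4/\eps)$, shifting it by $2\Delta$ costs a factor of at most $e^{2\Delta\eps/4}=e^{\Delta\eps/2}$.
\item \emph{Threshold noise.} Because $Z_0\sim\Lap(2/\eps)$, we have $p_{Z_0}(z)\le e^{\Delta\eps/2}p_{Z_0}(z+\Delta)$.
\end{itemize}
Multiplying these bounds and changing variables back gives $\Pr[\svt(\sigma)=k]\le e^{\Delta\eps}\Pr[\svt(\sigma')=k]$. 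For $k=T+1$ only the $Z_0$ shift is needed, so the bound there is $e^{\Delta\eps/2}$.

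The main point requiring care is the asymmetry of the shifts. The threshold noise is shifted by $+\Delta$ in the same direction for all steps, and this single choice must simultaneously make every rejection event for $\sigma$ imply the corresponding rejection event for $\sigma'$ at no cost. The acceptance step alone then absorbs a $2\Delta$ gap, which is exactly why the query noise has scale $4/\eps$, twice that of $Z_0$. With that choice in place, the rest is bookkeeping with Laplace density ratios.
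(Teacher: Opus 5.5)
Your proof is correct, and it is essentially the standard argument of \cite{lyu17understanding}, which the paper cites without reproducing. You condition on $Z_0$ and shift it by $\Delta$ at cost $e^{\Delta\eps/2}$; the rejection factors then come for free, and a $2\Delta$ shift of the accepting query noise, whose scale is $4/\eps$, costs the remaining $e^{\Delta\eps/2}$.
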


\begin{lemma}\label{lem:svt_accuracy}
    Let $\tau\in\R$, $\eps\ge 0$, $\Delta>0$, and $0< \beta\leq 1$. There is an event of probability $1-\beta$ conditioned on which, for every $T\in \N$ and every input sequence $(x_1, \dots, x_T)$, the mechanism $\svt$ described in Algorithm~\ref{alg:svt} satisfies the following accuracy guarantees for every $t\in [T]$:
    \begin{enumerate}
        \item \textbf{(Above threshold).} If $x_t\geq \tau + 16\ln \left(t+\lceil \frac{12}{\eps}\rceil\right)/\eps + 6\ln (2/\beta)/\eps + 4\ln(\pi^2/6)/\eps$ the mechanism $\svt$ returns $\top$ (and halts) at a step $t'\leq t$.
        \item \textbf{(Below threshold).} If $x_{t}\leq \tau - 6\ln(2/\beta)/\eps - 4\ln (\pi^2/6)/\eps$ the mechanism $\svt$ returns $\bot$.
    \end{enumerate}
\end{lemma}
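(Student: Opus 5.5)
The plan is to fix in advance a single ``good'' event on the noise variables and check both claims deterministically on it. To make such an event independent of the input, I treat the whole noise sequence $Z_0, Z_1, Z_2, \dots$ as sampled up front: $Z_0\sim\Lap(2/\eps)$ and $Z_t\sim\Lap(4/\eps)$, all independent. This is distributionally equivalent to \Cref{alg:svt}, since $Z_t$ is only consulted at step $t$ and the input sequence is fixed. The event then involves no $x_t$, so one event of probability $1-\beta$ works for every $T$ and every input. The parameter $\Delta$ plays no role in accuracy.

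\textbf{The good event.} Write $L_t := \ln(t+\lceil 12/\eps\rceil)$ and use the Laplace tail $\Pr[|\Lap(b)|\geq u]=e^{-u/b}$. I split the failure budget into $\beta/2$ for $Z_0$ and $\beta/2$ for the sequence $(Z_t)_{t\ge1}$.
\begin{itemize}
\item For $Z_0$, I require $|Z_0| < 2\ln(2/\beta)/\eps$, which fails with probability at most $\beta/2$.
\item For $t\ge1$, I set $\beta_t := \frac{\beta}{2}\cdot\frac{6}{\pi^2}\cdot\frac{1}{(t+\lceil 12/\eps\rceil)^2}$. Then $\sum_t \beta_t \le \frac{\beta}{2}\cdot\frac{6}{\pi^2}\sum_t t^{-2} = \beta/2$. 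I require $|Z_t| < \frac{4}{\eps}\ln(1/\beta_t)$, which fails with probability at most $\beta_t$.
\item Expanding gives $\frac{4}{\eps}\ln(1/\beta_t) = \frac{4\ln(2/\beta)}{\eps}+\frac{4\ln(\pi^2/6)}{\eps}+\frac{8L_t}{\eps}$.
\end{itemize}
A union bound shows the event $G$ has probability at least $1-\beta$. Using strict inequalities is harmless, since the boundary cases have probability zero.

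\textbf{Above threshold.} Suppose $x_t \ge \tau + 16L_t/\eps + 6\ln(2/\beta)/\eps + 4\ln(\pi^2/6)/\eps$. If the mechanism already halted before step $t$, we are done. Otherwise step $t$ is executed, and on $G$
\[
x_t+Z_t > x_t - \tfrac{4\ln(2/\beta)}{\eps}-\tfrac{4\ln(\pi^2/6)}{\eps}-\tfrac{8L_t}{\eps} \ge \tau+\tfrac{8L_t}{\eps}+\tfrac{2\ln(2/\beta)}{\eps} > \tau+\tfrac{8L_t}{\eps}+Z_0 .
\]
So the test succeeds and $\top$ is output at step $t$.

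\textbf{Below threshold.} Suppose $x_t\le \tau - 6\ln(2/\beta)/\eps - 4\ln(\pi^2/6)/\eps$ and step $t$ is reached. On $G$
\[
x_t+Z_t < \tau - \tfrac{2\ln(2/\beta)}{\eps}+\tfrac{8L_t}{\eps} < \tau+\tfrac{8L_t}{\eps}+Z_0 .
\]
So the test fails and $\bot$ is output.

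\textbf{Main obstacle.} The only delicate point is the bookkeeping. The tail allocation must reproduce exactly the constants in the statement ($16$, $6$, $4\ln(\pi^2/6)$). This works because the $8L_t/\eps$ term in the $Z_t$ bound adds to the $8L_t/\eps$ already in the threshold, giving $16$. Similarly, the $Z_0$ slack of $2\ln(2/\beta)/\eps$ adds to the $4\ln(2/\beta)/\eps$ from $Z_t$, giving $6$. It remains to argue that the bound on $Z_0$ points the right way in each case, which the two displays above do.
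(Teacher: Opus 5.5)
Your proof is correct and follows the paper's argument. Both fix a single noise-only event, bounding $|Z_0|$ by $2\ln(2/\beta)/\eps$ and every $|Z_t|$ via a Laplace tail with a $1/t^2$-weighted union bound, and then verify both threshold claims deterministically on that event. The only cosmetic difference is that you weight step $t$ by $(t+\lceil 12/\eps\rceil)^{-2}$ instead of $t^{-2}$, which gives the constant $16$ directly rather than through the paper's final relaxation $\ln t\le \ln(t+\lceil 12/\eps\rceil)$.
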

\begin{proof}
    We see that the random variable $Z_0 \sim \Lap(2/\eps)$ and the random variables $Z_t \sim \Lap(4/\eps)$ i.i.d. From the tail bound of the Laplace distribution, we have that for $z_i \geq 0$, $\Pr[|Z_0| \geq z_0] = \exp(-\eps z_0/2)$ and $\Pr[|Z_t|\geq z_t] = \exp(-\eps z_t/4)$. Setting $z_0 = (2/\eps)\ln (2/\beta)$ and $z_t = (4/\eps) \ln (2/\beta_t)$ for $\beta_t = \tfrac{6}{\pi^2} \cdot \tfrac{\beta}{t^2}$, we have that via a union bound, with probability $1-\beta$, $|Z_0| \leq z_0$ and $|Z_t| \leq z_t$ for all $t\geq1$ simultaneously. Conditioning on this $(1-\beta)$-probability event holding, it follows that for all $t\in\N$,
    \[ x_t + Z_t = x_t \pm \frac{4}{\eps} \ln \frac{\pi^2 t^2}{3 \beta} \qquad\text{and}\qquad \tau + Z_0 = \tau \pm \frac{2}{\eps} \ln \frac{2}{\beta}.\]
    We can write 
    \begin{align*} x_t - \frac{4}{\eps} \ln \frac{\pi^2 t^2}{3 \beta} &\geq \tau + \frac{8}{\eps}\ln \left( t + \left\lceil \frac{12}{\eps}\right\rceil \right) + \frac{2}{\eps} \ln \frac{2}{\beta}  \\
        \Leftrightarrow x_t &\geq \tau + \frac{16}{\eps} \ln \left(t +\left\lceil \frac{12}{\eps}\right\rceil\right) + \frac{6}{\eps} \ln \frac{2}{\beta} + \frac{4}{\eps} \ln \frac{\pi^2}{6}. \end{align*}
    Similarly,
        \[ x_t + \frac{4}{\eps} \ln \frac{\pi^2 t^2}{3 \beta} \leq \tau + \frac{8}{\eps}\ln \left(t + \left\lceil \frac{12}{\eps}\right\rceil\right) - \frac{2}{\eps} \ln \frac{2}{\beta} \Leftrightarrow x_t \leq \tau - \frac{6}{\eps} \ln \frac{2}{\beta} - \frac{4}{\eps} \ln \frac{\pi^2}{6}. \]
    We see that in the pseudocode of \Cref{alg:svt}, the values $x_t + Z_t$ are compared with the threshold $\tau + 8\ln\left(t+\lceil \frac{12}{\eps}\rceil\right)/\eps + Z_0$. It follows that conditioned on the $(1-\beta)$-probability event described above, the stated accuracy bounds hold.
\end{proof}

\medskip\noindent\textbf{Probability theory.}
In this paper, we use the following probability facts: 
\begin{fact}\label{fact:sum-conjunctions}
    For every two events $\calE_1$ and $\calE_2$, we have
    \begin{align*}
        \Pr[\calE_1]=\Pr[\calE_1 \wedge \calE_2] + \Pr[\calE_1\wedge \bar{\calE}_2],
    \end{align*}
    where $\bar{\calE}_2$ denotes the complement of the event $\calE_2$.
\end{fact}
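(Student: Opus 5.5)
This is the law of total probability for two events, and I would prove it by splitting $\calE_1$ along $\calE_2$ and then using additivity of the probability measure. Since $\calE_2$ and its complement $\bar{\calE}_2$ partition the sample space, intersecting with $\calE_1$ gives the decomposition
\[
\calE_1 = (\calE_1 \wedge \calE_2) \,\cup\, (\calE_1 \wedge \bar{\calE}_2).
\]

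The steps, in order, are as follows.

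First, I verify the set identity by distributivity: $\calE_1 = \calE_1 \cap \Omega = \calE_1 \cap (\calE_2 \cup \bar{\calE}_2)$, and this equals the union above.

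Second, I note that the two pieces are disjoint, because $\calE_2 \cap \bar{\calE}_2 = \emptyset$.

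Third, both pieces are measurable, being intersections of events. Finite additivity of $\Pr$ on disjoint events then gives $\Pr[\calE_1] = \Pr[\calE_1 \wedge \calE_2] + \Pr[\calE_1 \wedge \bar{\calE}_2]$.

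None of these steps is a real obstacle. The only point needing care is confirming that both pieces are events in the underlying $\sigma$-algebra, which holds because $\sigma$-algebras are closed under complements and intersections.
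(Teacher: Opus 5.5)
Your proof is correct. The disjoint decomposition $\calE_1 = (\calE_1 \wedge \calE_2) \cup (\calE_1 \wedge \bar{\calE}_2)$ followed by finite additivity is the standard justification, and the paper states this as a basic fact without giving a proof, so there is nothing further to compare.
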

\begin{fact}\label{fact:sum-conditions}
    Let $Z$ be a random variable with a measurable domain $\calZ$, and let $Y$ be a random variable with a finite domain $\calY$. Define $\text{supp}(\calY)=\{y\in\calY\mid \Pr[Y=y]>0\}$. For every measurable set $S\subseteq \calZ$, we have
    \[
    \Pr[Z\in S]=\sum_{y\in\text{supp}(\calY)}\Pr[Y=y]\Pr[Z\in S\mid Y=y].
    \]
\end{fact}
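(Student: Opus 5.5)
This is the law of total probability for a finitely-valued conditioning variable. I will prove it directly by partitioning the event $\{Z\in S\}$ according to the value of $Y$.

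First, since $\calY$ is finite, the events $\{Y=y\}$ for $y\in\calY$ are pairwise disjoint and their union is the whole sample space. Intersecting with $\{Z\in S\}$ gives
\[
\{Z\in S\}=\bigcup_{y\in\calY}\bigl(\{Z\in S\}\cap\{Y=y\}\bigr),
\]
a finite disjoint union. Finite additivity then yields $\Pr[Z\in S]=\sum_{y\in\calY}\Pr[Z\in S\wedge Y=y]$. Equivalently, one can obtain this by applying \Cref{fact:sum-conjunctions} repeatedly, peeling off one value $y$ of $Y$ at a time; induction on $|\calY|$ makes that precise.

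Second, I drop the terms with $y\notin\text{supp}(\calY)$. For such $y$, monotonicity gives $\Pr[Z\in S\wedge Y=y]\le\Pr[Y=y]=0$, so these terms vanish. For the remaining $y\in\text{supp}(\calY)$, we have $\Pr[Y=y]>0$, so the elementary conditional probability $\Pr[Z\in S\mid Y=y]=\Pr[Z\in S\wedge Y=y]/\Pr[Y=y]$ is well defined. Rearranging gives $\Pr[Z\in S\wedge Y=y]=\Pr[Y=y]\Pr[Z\in S\mid Y=y]$. Substituting this into the sum proves the claim.

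There is no real obstacle. The only point needing care is restricting to the support, so that every conditional probability that appears is well defined. Measurability of $\{Z\in S\}\cap\{Y=y\}$ is immediate, since it is the intersection of two events.
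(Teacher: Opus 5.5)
Your proof is correct. The paper states this as a standard fact (the law of total probability) without proof, and your argument is the standard justification: partition $\{Z\in S\}$ by the value of $Y$, discard the null atoms, and apply the definition of conditional probability. As a side remark, the paper later invokes this fact with the integer-valued $\lceil U\rceil$ in the extended-edit-neighbor privacy lemma, which falls outside the finite-domain hypothesis. Your argument covers that case verbatim once finite additivity is replaced by countable additivity.
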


An important probability distribution that we will use is the Laplace distribution:
\begin{definition}[Laplace Distribution]\label{def:laplace}
    The Laplace distribution, centered at $0$ with scale parameter $b$, is the distribution with probability density function $f_{\Lap(b)}(x)=\frac{1}{2b}\exp\left(\frac{-|x|}{b}\right)$. We denote a random variable distributed according to this density as $Y\sim \Lap(b)$ or simply $\Lap(b)$.
\end{definition}

\begin{lemma}\label{lem:laplace-concentration-bound}
    Let $Y \sim \Lap(b)$. Then for every $t\geq 0$,
    \begin{align*}
        \Pr[Y<-t\cdot b]= \Pr [Y>t\cdot b]\leq \frac{1}{2}e^{-t}.
    \end{align*}
\end{lemma}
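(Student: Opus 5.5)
The claim is the Laplace tail bound: for $Y\sim\Lap(b)$ and every $t\geq 0$, $\Pr[Y<-tb]=\Pr[Y>tb]\leq \tfrac12 e^{-t}$. I will compute the tail directly from the density in \Cref{def:laplace}, in two steps.

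\textbf{Step 1: symmetry.} The density $f_{\Lap(b)}(x)=\frac{1}{2b}e^{-|x|/b}$ is even. Substituting $x\mapsto -x$ in the integral therefore gives
\[
\Pr[Y<-tb]=\int_{-\infty}^{-tb}f_{\Lap(b)}(x)\,dx=\int_{tb}^{\infty}f_{\Lap(b)}(x)\,dx=\Pr[Y>tb].
\]

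\textbf{Step 2: the tail itself.} For $t\geq 0$ the integration range lies in $[0,\infty)$, so $|x|=x$ there and
\[
\Pr[Y>tb]=\int_{tb}^{\infty}\frac{1}{2b}e^{-x/b}\,dx=\frac12\Bigl[-e^{-x/b}\Bigr]_{tb}^{\infty}=\frac12 e^{-t}.
\]
This is an equality, so the stated inequality holds.

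The only point that needs care is the use of $t\geq 0$ in Step 2, which is what lets us drop the absolute value. The rest is a single elementary integral.
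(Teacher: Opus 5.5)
Your proof is correct: the evenness of the density gives the symmetry, and for $t\ge 0$ the integral evaluates to exactly $\tfrac12 e^{-t}$, so the bound holds with equality. The paper states this lemma as a standard fact without proof, and your direct integration from the density in \Cref{def:laplace} is the canonical argument for it.
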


We also need a discrete version of the Laplace distribution over $\Z$ (instead of $\R$), which is defined as follows:
\begin{definition}[Discrete Laplace Distribution]\label{def:discrete-laplace}
    Let $Y$ denote a random variable drawn from the discrete Laplace distribution centered at $0$ with scale parameter $b\in \R^+$, denoted by $Y\sim \DLap(b)$. Then for every $y\in\Z$, we have
    $$\Pr[Y=y] = \frac{e^{1/b} - 1}{e^{1/b} + 1} \cdot e^{-|y|/b}.$$
\end{definition}

\begin{corollary}\label{cor:discrete-laplace-privacy-accuracy}
    For $b\in \R^+$, let $Y\sim \DLap(b)$. The discrete Laplace distribution satisfies the following properties:
    \begin{enumerate}
        \item For every $y\in \Z$, 
        $$e^{-1/b} \Pr[Y=y-1]\leq \Pr[Y=y]\leq e^{1/b} \Pr[Y=y+1].$$
        \item For every $\tau\in \N$,
        \begin{align*}
            \Pr[Y\leq -\tau]= \Pr[Y\geq \tau]\leq \frac{1}{2}e^{-(\tau-1)/b}.
        \end{align*}
    \end{enumerate}
\end{corollary}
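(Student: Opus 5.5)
Both parts follow directly from the closed form of the probability mass function in \Cref{def:discrete-laplace}. Throughout, write $c_b := \frac{e^{1/b}-1}{e^{1/b}+1}$, so that $\Pr[Y=y] = c_b e^{-|y|/b}$ for every $y\in\Z$.

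For the first property, I will use the reverse triangle inequality $\bigl||y|-|y\pm1|\bigr|\leq 1$ for every $y\in\Z$. It gives $e^{-|y|/b}\leq e^{1/b}e^{-|y+1|/b}$. Multiplying by $c_b$ yields $\Pr[Y=y]\leq e^{1/b}\Pr[Y=y+1]$. In the same way, $e^{-|y-1|/b}\leq e^{1/b}e^{-|y|/b}$, which rearranges to $e^{-1/b}\Pr[Y=y-1]\leq \Pr[Y=y]$. No case split on the sign of $y$ is needed.

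For the second property, the equality $\Pr[Y\leq-\tau]=\Pr[Y\geq\tau]$ holds because the mass function depends only on $|y|$; the substitution $y\mapsto -y$ maps one event onto the other. For the tail itself, I will sum a geometric series, using $\tau\geq1$ so that $|y|=y$ on the range of summation:
\[
\Pr[Y\geq\tau]=c_b\sum_{y\geq\tau}e^{-y/b}=c_b\,\frac{e^{-\tau/b}}{1-e^{-1/b}}=\frac{e^{1/b}-1}{e^{1/b}+1}\cdot\frac{e^{1/b}e^{-\tau/b}}{e^{1/b}-1}=\frac{e^{-(\tau-1)/b}}{e^{1/b}+1}.
\]
Since $b>0$, we have $e^{1/b}+1\geq 2$, and the bound $\frac12 e^{-(\tau-1)/b}$ follows.

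There is no real obstacle here. The only points needing care are the simplification of the normalizing constant $c_b$ against the geometric-series denominator, and the check that $\tau\in\N$ (so $\tau\geq1$) makes every summand have nonnegative index.
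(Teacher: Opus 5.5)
Your proof is correct, and it is exactly the direct computation from the mass function in \Cref{def:discrete-laplace} that the paper leaves implicit: the paper states this corollary without proof. Your reverse-triangle-inequality argument for part 1 and your geometric-series evaluation for part 2 are sound. The latter even gives the slightly sharper tail $e^{-(\tau-1)/b}/(e^{1/b}+1)$.
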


\section{Mechanism \texorpdfstring{$\ecc$}{ECC}}\label{sec:ecc}

We now give a formal description of our sparsity-adaptive mechanism for private continual counting for edit neighboring streams $\ecc$ (\Cref{alg:ecc}). The description of $\simecc$ is almost identical, to avoid redundancy we relegate a formal treatment of the latter to \Cref{sec:simecc}. $\ecc$ accepts as input a stream of values $x_1,x_2,\dots$ drawn from $[0,1]\cup \{\bot\}$ and, after each input $x_t$, outputs an estimate $y_t$ of $s_t = \sum_{i\leq t} x_i$, with $\bot$ identified with $0$. 

\begin{algorithm}
    \begin{algorithmic}
        \Function{Initialize}{Privacy parameters $\eps$ and $\delta$}
            \State $\ell \leftarrow 1$
            \State $t_0 \leftarrow 0$
            \State $y_0 \leftarrow 0$
            \State $\intsum_\ell \leftarrow 0$
            
            \State $\sched \leftarrow \slvar.\init()$
            \State $\partt \leftarrow \partt.\init(\eps/51,\tau = 1+\frac{408\ln(\frac{\pi^2}{2 \delta})}{\eps})$
            \State $\cc \leftarrow \cc.\init(4\eps/27,e^{-19\eps/27}\delta/16)$
            \State $\bcc \leftarrow \bcc.\init(4\eps/27,e^{-19\eps/27}\delta/8)$
        \EndFunction
        \Function{Update}{$x_t \in [0,1]$}
            \State $\intsum_\ell \leftarrow \intsum_\ell + x_t$
            \If{$\partt.\upd(x_t) = \top$}
                \State $t_\ell\leftarrow t$
                \State $\widehat{t}_\ell \leftarrow \bcc(t_\ell - t_{\ell-1})$
                \State $\widehat{v}_\ell \leftarrow \cc(\intsum_\ell)$ 
                \State $\sched.\upd(\widehat{t}_\ell, \widehat{v}_\ell)$
                \State $\ell \leftarrow \ell+1$
                \State $\intsum_\ell \leftarrow 0$ 
            \EndIf
            \State $y_t = y_{t-1}$
            \If{$\sched.\chk(t) \neq \bot$}
                \State $y_t \leftarrow \sched.\chk(t)$
            \EndIf
            \State Output $y_t$
        \EndFunction
    \end{algorithmic}
    \caption{$\ecc$ (input data stream $(x_1, x_2, \dots)$, privacy parameter $\eps,\delta$, black-box access to a continual counter $\cc$ and its biased variant $\bcc$)}\label{alg:ecc}
\end{algorithm}

We enumerate the internal variables used in this mechanism which we shall work with in the privacy and accuracy analyses.

\begin{definition}
    \begin{enumerate}
        \item $x_t \in [0,1]\cup \{\bot\}$ - the input received at time-step $t$.
        \item $s_t \in \R$ - the $t$-th prefix sum that equals $\sum_{i\leq t} x_i$, with $x_i=\bot$ treated as $0$.
        \item $a_t \in \{\top,\bot\}$ - the $\ell$-th output of $\partt$ subroutine that indicates whether $t$ should be marked as a checkpoint.
        \item $t_\ell$ - the $\ell$-th \emph{(true) checkpoint} i.e. $t = t_1, t_2, \dots$ are exactly the time steps at which $a_{t}=\top$.
        \item $\widehat{t}_\ell$ - the $\ell$-th \emph{noisy checkpoints} i.e. the privatized value of $t_\ell$ which is generated by $\bcc$.
        \item $\intsum_\ell \in \R$ - the $\ell$-th interval sum, i.e., $\sum_{t=t_{\ell-1}+1}^{t_\ell}x_t$.
        \item $\widehat{v}_\ell$ - a privatized estimate of $v_\ell=s_{t_\ell}$ generated by the subroutine $\cc$.
        \item $\sched$ - an instance of the $\slvar$ data structure used to store noisy checkpoint values.
        \item $y_t \in \R$ - the value returned by $\ecc$ in step $t$.
    \end{enumerate}
\end{definition}

The construction of \ecc can be broken down into three concurrent parts: (i) we use the stream partitioning algorithm $\partt$ of~\cite{dwork2015pure} to privately identify time-steps $t_1, t_2, \dots$ called \emph{true checkpoints} at which the prefix sum $s_t = x_1 + \dots x_{t_\ell}$ has changed significantly compared to the previous most recent checkpoint and an update is necessitated; (ii) we use a private continual counter $\cc$ that is given as input a stream of \emph{interval sums} $\intsum_\ell := x_{t_{\ell-1}+1} + \dots + x_{t_\ell}$ (one for each checkpoint), and generates an output $\widehat{v}_\ell$ which is a privatized value of $s_{t_{\ell}}$ which is the true prefix sum at $t_\ell$; (iii) we use an auxiliary continual counter $\bcc$ to which we feed the difference sequence $t_\ell-t_{\ell-1}$ and get \emph{noisy checkpoints} $\widehat{t}_\ell$, which are privatized estimates of $t_\ell$. $\ecc$ outputs the privatized prefix sums $\widehat{v}_\ell$ at the noisy checkpoints $\widehat{t}_\ell$. The continual counter $\cc$ is used in a black-box manner; we will appeal to the guarantees of the counter described in Lemma~\ref{lem:cc-standard}, and the continual counter $\bcc$ is a biased continual counter such as the one defined in Lemma~\ref{lem:cc-biased}.

Concretely, the mechanism functions as follows. Upon initialization, it does the following:
\begin{enumerate}
    \item It initializes three subroutines in turn; $\partt$, $\cc$, and $\bcc$.
    \item It defines an initial true checkpoint $t_0 := 0$.
    \item It defines the counter $\ell$ (equaling $1$ at initialization) which keeps track of the index of the \emph{next} checkpoint to be declared.
    \item It initializes the first interval sum $\intsum_1$ to $0$.
    \item It initializes a \emph{schedule list}, implemented here as an instance $\sched$ of the $\slvar$ data structure (a dynamically allocated array of lists, empty at initialization) which will store for every time-step $t$ all values $\widehat{v}_i$ for which $\widehat{t}_i = t$. This data structure only receives values that are generated by differentially private mechanisms and is hence just a post-processing step with regard to the privacy analysis.
    \item It defines $y_0 = 0$, the output value at the virtual time-step $t_0$ that simply defines the default value of the sum in the absence of any input.
\end{enumerate}  

When an input $x_t$ is received (i.e. the Update$(x_t)$ subroutine of $\ecc$ is called), the mechanism passes $x_t$ to $\partt$, receiving as output a boolean value $a_t \in \{\top,\bot\}$. If $a_t = \bot$ then it proceeds to check the schedule list to see what the output ought to be (this will be explained further shortly). On the other hand, if $a_t = \top$, then $t$ has been identified to be the $\ell$-th true checkpoint $t_\ell$. Informally, this means that the output ought to be updated, but for reasons of privacy the update is delayed to the corresponding noisy checkpoint $\widehat{t}_\ell$. The value of $\widehat{t}_\ell$ is generated by $\bcc$, which is given as an update the checkpoint-difference $d_\ell= t_\ell-t_{\ell-1}$, and in turn outputs the privatized estimate of $t_\ell = \sum_{i=1}^{\ell} d_i$. Since an update cannot be scheduled for the past, $\bcc$ is constructed to have the property that it only ever outputs overestimates of the sum that it is privatizing. In other words, it is never the case that $\widehat{t}_\ell < t_\ell$. Furthermore, the estimated prefix-sum $\widehat{v}_\ell$ is generated by $\cc$, which is given as an update the interval-sum $\intsum_\ell=s_{t_\ell} - s_{t_{\ell-1}}$, and in turn outputs the privatized estimate  $\widehat{v}_\ell$ of $v_\ell:=s_{t_\ell}=\sum_{i=1}^\ell \intsum_i$. An instance $\sched$ of the schedule list data structure $\slvar$ is then used to schedule this output update $\widehat{v}_\ell$ for time step $\widehat{t}_\ell$ by calling the $\sched.\upd$ method of $\slvar$ on the inputs $\widehat{t}_\ell$ and $\widehat{v}_\ell$.

Finally, $\sched$ is queried to determine the output value $y_t$ for the current time-step. If $\sched.\chk(t)$ returns a $\bot$ value, then the current time-step was not identified as a noisy checkpoint, and we simply return the same value as the previous time-step, i.e. $y_t = y_{t-1}$. On the other hand, if $\sched.\chk(t) \not=\bot$, then the value returned by the data structure is in fact the noisy sum value itself that was scheduled to be released, and $y_t = \sched.\chk(t)$.

The following observation relates the inputs of the subroutines $\cc$ and $\bcc$ to the input of $\ecc$ and the checkpoints $t_\ell$ generated during runtime.

\begin{observation}\label{obs:inputs-cc-bcc}
    Consider an execution of the mechanism $\ecc$ on the input stream $\sigma=(x_1, \dots, x_T)$, and suppose that $t_1, \dots, t_k$ are the true checkpoints indicated by $\partt$ for some $k\in [T]$. Then, by construction, the subroutine $\cc$ receives as input $\intsum=(\intsum_1,\dots,\intsum_k)$ where $\intsum_\ell = \sum_{t =t_{\ell-1} + 1}^{t_\ell} x_t$ for $\ell\in [k]$. Similarly, $\bcc$ receives as input $d=(d_1,\dots,d_k)$ where $d_\ell = t_\ell - t_{\ell-1}$ for $\ell\in [k]$. 
\end{observation}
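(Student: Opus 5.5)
This observation follows by unwinding the pseudocode of \Cref{alg:ecc}. I would argue by induction on the time step $t$, maintaining two invariants. First, just after $\upd(x_t)$ returns, with $\ell$ the current value of the checkpoint counter, the variable $\intsum_\ell$ equals $\sum_{i=t_{\ell-1}+1}^{t} x_i$, with $\bot$ read as $0$. Second, $\cc$ and $\bcc$ have received exactly $(\intsum_1,\dots,\intsum_{\ell-1})$ and $(d_1,\dots,d_{\ell-1})$ respectively, where $d_j=t_j-t_{j-1}$ for $j\le \ell-1$. The base case is the initialization: $\ell=1$, $t_0=0$, $\intsum_1=0$ (an empty sum), and neither subroutine has received any input.

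For the inductive step at time $t$, the mechanism first executes $\intsum_\ell\leftarrow\intsum_\ell+x_t$, so the running sum now covers $(t_{\ell-1},t]$. If $\partt.\upd(x_t)=\bot$, no call is made to $\cc$ or $\bcc$ and $\ell$ is unchanged, so both invariants carry over to $t$. If $\partt.\upd(x_t)=\top$, the mechanism sets $t_\ell=t$. It then passes $t_\ell-t_{\ell-1}=d_\ell$ to $\bcc$ and the current value $\intsum_\ell=\sum_{i=t_{\ell-1}+1}^{t_\ell}x_i$ to $\cc$. Afterwards it increments $\ell$ and resets the new $\intsum_\ell$ to $0$, which is the empty sum over $(t_{\ell-1},t]$ for the incremented $\ell$. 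Both invariants therefore hold again.

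By the definition of the true checkpoints, these \emph{are} exactly the steps at which $\partt$ returns $\top$. Applying the invariant at $t=T$ with $k$ such checkpoints gives that $\cc$ received $(\intsum_1,\dots,\intsum_k)$ and $\bcc$ received $(d_1,\dots,d_k)$, in that order, which is the statement. There is no real obstacle here; the only care needed is the reset-after-increment ordering, which ensures the interval for $\intsum_\ell$ starts at $t_{\ell-1}+1$ and not at $t_{\ell-1}$.
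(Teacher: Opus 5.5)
Your proof is correct and takes the same approach as the paper, which asserts the observation holds ``by construction'' of \Cref{alg:ecc} and gives no further argument. Your two-invariant induction over time steps, including the check that the reset happens after the increment, simply makes that unwinding of the pseudocode explicit.
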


\medskip
\noindent\textbf{I. Subroutine $\partt$}

\begin{algorithm}
    \begin{algorithmic}
        \Function{Initialize}{Privacy parameter $\eps$, threshold $\tau$}
            \State $\ell\leftarrow 1$
            \State $\intsum_\ell\leftarrow 0$
            \State $\svt_\ell \leftarrow \svt.\init(\eps,\tau+16\ln(\ell)/\eps)$
        \EndFunction
        \Function{Update}{$x_t \in [0,1]$}
            \State $\intsum_\ell \leftarrow \intsum_\ell + x_t$
            \If{$\svt_\ell(\intsum) = \top$}
                \State $\ell \leftarrow \ell + 1$
                \State $\intsum_\ell \leftarrow 0$
                \State $\svt_\ell \leftarrow \svt.\init(\eps,\tau+16\ln(\ell)/\eps)$
                \State Output $\top$
            \Else  
                \State Output $\bot$
            \EndIf
        \EndFunction
    \end{algorithmic}
    \caption{$\partt$}\label{alg:partition}
\end{algorithm}

The mechanism $\partt$, described in Algorithm~\ref{alg:partition}, is initialized with privacy parameter $\eps$ and threshold $\tau$. It first creates an instance of $\svt$, denoted $\svt_1$, with privacy parameter $\eps$ and threshold $\tau + \frac{16\ln(1)}{\eps} = \tau$. Then it initializes a counter $\ell = 1$ and an interval-sum variable $\intsum_1 = 0$. The counter $\ell$ keeps track of the number of SVT instances created so far, and its value is in fact the same as that of the counter $\ell$ in $\ecc$ (we avoid pointers for ease of comprehension). The variable $\intsum_\ell$ stores the cumulative sum of inputs received since the current SVT instance, $\svt_\ell$, was initialized. When this instance outputs $\top$ and terminates, the mechanism $\partt$ increments $\ell \leftarrow \ell + 1$, initializes the next SVT instance $\svt_\ell$ with privacy parameter $\eps$ and threshold $\tau + \frac{16\ln(\ell)}{\eps}$, and sets $\intsum_\ell \leftarrow 0$.

Upon receiving a new input $x_t$ (i.e., when $\ecc$ invokes the \textsc{Update} subroutine of $\partt$), the mechanism $\partt$ proceeds as follows. First, it updates the running sum by setting $\intsum_\ell \leftarrow \intsum_\ell + x_t$. It then feeds $\intsum_\ell$ to the active instance $\svt_\ell$, which in turn conducts a private threshold test and compares $\intsum_\ell$ with $\tau+16\ln(\ell)/\eps$. If $\svt_\ell$ outputs $\top$ (informally, indicating that $\intsum_\ell$ is larger than $\tau+16\ln(\ell)/\eps$), then $\partt$ increases $\ell$ by $1$, initializes $\svt_\ell$ as described above, sets the new interval sum $\intsum_\ell$ to $0$, and returns $\top$ to $\ecc$. Otherwise, if $\svt_\ell$ outputs $\bot$, then $\partt$ returns $\bot$.

\medskip
\noindent\textbf{\texorpdfstring{II. $\slvar$}{schedList} data structure}

\begin{algorithm}
    \begin{algorithmic}
        \Function{Initialize}{}
            \State $\sched \leftarrow \text{a dynamically allocated array}$
            \State return $\sched$
        \EndFunction
        \Function{Update}{$\widehat{t}, \widehat{v}$}
            \State $\sched[\widehat{t}].append(\widehat{v})$
        \EndFunction
        \Function{Check}{$t$}
            \If{$\sched[t] = \emptyset$}
                \State return $\bot$
            \Else
                \State return $\max \{ \widehat{v} : \widehat{v} \in \sched[t]\}$
            \EndIf
        \EndFunction
    \end{algorithmic}
    \caption{$\slvar$ data structure}
\end{algorithm}

$\slvar$ is a simple data structure that we use to keep track of all noisy checkpoints declared, and the respective privatized estimates of the prefix sum that ought to be released when a noisy checkpoint is reached. 
Thus it is a data structure that does not apply any randomization itself and simply post-processes the noisy input it is given. We implement this as a simple array of lists, indexed over time-steps. When a checkpoint $t_\ell$ is reached, the  privatized sum value $\widehat{v}_\ell$ for $t_\ell$  is appended to the list stored at the index of the corresponding noisy checkpoint $\widehat{t}_\ell$. Further, when one checks the data structure to see if any updates have been scheduled for time step $t$ (i.e. in other words, if $t$ were marked as a noisy checkpoint one or more times). The data structure returns a $\bot$ value if $t$ were not marked as a checkpoint, and applies a tie-breaking rule if more than one checkpoints scheduled releases at this time-step. In principle, by the post-processing property of DP, we can use any tie-breaking rule, as long as it does not directly access the true checkpoint time-step values; in this case, we pick the maximum noisy prefix sum.

\subsection{Accuracy Guarantee for \texorpdfstring{\ecc}{ECC}}\label{subsec:ecc-accuracy}

In this section, we will prove the following theorem.

\begin{theorem}
    \label{thm:ecc-accuracy}
    We condition on the event $\calE_{\partt}$ defined in \cref{lem:partt-acc}, and on the $1-\beta_C$ and $1-\beta_B$ events that the error bounds of $\cc$ and $\bcc$ hold. Conditioned on these events, for all $t\in \N$, 
        \[ |s_t - y_t| \leq 2 E_\bcc (s_t) + E_\cc (s_t) + O(\tfrac{1}{\eps}\ln(t/\delta)).\]
\end{theorem}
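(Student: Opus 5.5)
At a fixed time $t$, I would split the error $|s_t - y_t|$ into three parts. The first is the lag between the most recent noisy checkpoint and $t$. The second is the counting error of $\cc$ on the interval sums. The third is the within-bucket drift of the prefix sum, which $\calE_{\partt}$ controls.

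Let $L(t)$ be the number of true checkpoints declared by time $t$. I assume $\calE_{\partt}$ (\cref{lem:partt-acc}) provides two guarantees, as the technical overview indicates. The first, from the below-threshold part of \Cref{lem:svt_accuracy}, is that whenever $\svt_\ell$ fires at $t_\ell$ we have $\intsum_\ell \geq \tau' - O(\tfrac1\eps\ln(\ell/\delta)) \geq 1$, with $\tau$ chosen so this holds. Hence $L(t) \leq s_{t_{L(t)}} \leq s_t$. The second, from the above-threshold part, is that while an instance is running at time $u$, the running interval sum satisfies $s_u - s_{t_{\ell-1}} \leq \tau + O(\tfrac1\eps\ln(u\ell/\delta)) = O(\tfrac1\eps\ln(t/\delta))$. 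Because $E_\cc$ and $E_\bcc$ are nondecreasing, the first guarantee lets us replace every error term indexed by a checkpoint count with the same term evaluated at $s_t$.

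Next, I would identify the output. Let $j$ be the largest index with $\widehat t_j \leq t$, with $j=0$ if no such index exists. By the tie-breaking rule, $y_t = \widehat v_{j'}$ for some $j'$ with $\widehat t_{j'}$ equal to the last noisy checkpoint at most $t$. By \Cref{lem:cc-biased}, $\widehat t_\ell \geq t_\ell$ for every $\ell$, and on the $\bcc$ accuracy event $\widehat t_\ell \leq t_\ell + E_\bcc(\ell)$. So every released index satisfies $t_{j'} \leq \widehat t_{j'} \leq t$, and $s_{t_{j'}} \leq s_t$. The value of $y_t$ is within $E_\cc(s_t)$ of $s_{t_{j'}}$ by the $\cc$ accuracy event applied to $\widehat v_{j'}$ as an estimate of $v_{j'} = \sum_{i\le j'}\intsum_i$. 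It remains to lower bound $s_{t_{j'}}$, that is, to bound $s_t - s_{t_{j'}}$.

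The \textbf{main obstacle} is this lower bound, because noisy checkpoints can be reordered, so the released index $j'$ need not be the latest true checkpoint. My approach is to let $m = L(t)$ and consider checkpoint $m' \leq m$, the largest index with $t_{m'} + E_\bcc(s_t) \leq t$. Its noisy checkpoint $\widehat t_{m'}$ lies in $[t_{m'}, t]$, so it has been released by time $t$. The last released noisy time is at least $\widehat t_{m'}$. However, the value shown at that time could come from an earlier index whose noisy time is later. I would handle this by comparing the $\bcc$ outputs $\widehat t_{j'} \ge \widehat t_{m'}$. Since $\widehat t_\ell$ sits within $E_\bcc$ above $t_\ell$, we get $t_{j'} \geq t_{m'} - E_\bcc(s_t)$. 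Each time step adds at most $1$ to the prefix sum, so $s_{t_{j'}} \geq s_{t_{m'}} - E_\bcc(s_t)$.

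Then $s_t - s_{t_{m'}}$ splits into two pieces. The first is the sum over the window $(t - E_\bcc(s_t), t]$ excluding the checkpoint intervals already counted, which is at most $E_\bcc(s_t)$ because inputs lie in $[0,1]$. The second is the drift since the last true checkpoint $t_m$, which is $O(\tfrac1\eps\ln(t/\delta))$ by the second $\calE_{\partt}$ guarantee. Combining these gives $s_t - y_t \leq 2E_\bcc(s_t) + E_\cc(s_t) + O(\tfrac1\eps\ln(t/\delta))$. For the other direction, $y_t - s_t \leq s_{t_{j'}} - s_t + E_\cc(s_t) \leq E_\cc(s_t)$, which proves the claimed two-sided bound.
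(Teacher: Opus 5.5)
Your route differs from the paper's, and most of it is sound. The paper compares $y_t$ with $s_{\widehat t}$, paying $E_\bcc(s_t)+E_\cc(s_t)$ via Lemma~\ref{lem:part-accuracy}(2). It then splits on whether $\widehat t\in(t^\dagger,t]$ with $t^\dagger=t-E_\bcc(s_t)$, and in the second case argues by contradiction that no true checkpoint lies in $(\widehat t,t^\dagger]$. You instead compare $y_t$ with $s_{t_{j'}}$ directly, so the release costs only $E_\cc(s_t)$. You handle reordering through the chain $t_{j'}\ge\widehat t_{j'}-E_\bcc(s_t)\ge\widehat t_{m'}-E_\bcc(s_t)\ge t_{m'}-E_\bcc(s_t)$. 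Your $m'$ is exactly the paper's $\ell^*$, and your observation $\widehat t_{j'}\ge\widehat t_{m'}\ge t_{m'}$ directly gives what the paper's contradiction argument is used for. This avoids the case split and yields the sharper one-sided bound $y_t-s_t\le E_\cc(s_t)$.

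The last step, however, fails as written. You take the second piece to be the drift $s_t-s_{t_m}$ since the last true checkpoint $t_m$, so the first piece is $s_{t_m}-s_{t_{m'}}$, the sum over $(t_{m'},t_m]$. When $m'<m$ this interval contains $(t_{m'},t^\dagger]$, the part of bucket $m'+1$ that precedes the window. That part lies neither in the window nor after $t_m$. Its sum can be nearly the whole bucket sum, of the order of the SVT threshold $\tau$, which $E_\bcc(s_t)$ does not control. So ``at most $E_\bcc(s_t)$'' does not follow.

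The fix is to split at $t^\dagger$ instead: $s_t-s_{t_{m'}}=(s_t-s_{t^\dagger})+(s_{t^\dagger}-s_{t_{m'}})$. The first term is at most $E_\bcc(s_t)$. For the second term (zero if $t^\dagger=t_{m'}$), maximality of $m'$ gives $t_{m'}\le t^\dagger<t_{m'+1}$. Hence instance $m'+1$ is running at $t^\dagger$ and outputs $\bot$ there. Apply your second $\calE_{\partt}$ guarantee at $u=t^\dagger$ rather than at $u=t$; this bounds the second term by $O(\tfrac1\eps\ln(t/\delta))$. This is precisely the paper's key inequality in its Case~2. With that repair, your argument yields the stated bound.
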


We recall some notation and internal variables of \ecc that we will use often in our accuracy analysis.
\begin{itemize}
    \item $t_\ell$, $\widehat{t}_\ell$: The mechanism \ecc defines a sequence of true checkpoints $t_\ell$ for $\ell\geq1$. It also generates privatized proxies for each true checkpoint, $\widehat{t}_\ell$.
    \item $s_t, y_t$: The objective of the mechanism is to estimate the prefix sums $s_t = \sum_{k=1}^t x_k$ for $t\geq1$. For every time-step $t$, it generates the output $y_t$, and incurs error $|s_t - y_t|$.
    \item $\intsum_\ell, \widehat{v}_\ell$: The mechanism computes interval sums $\intsum_\ell = \sum_{k=t_{\ell-1}+1}^{t_\ell} x_k$, which corresponds to the sum of all values received after checkpoint $t_{\ell-1}$, and before checkpoint $t_\ell$. It computes $\widehat{v}_\ell$ for $\ell\geq1$, where $v_\ell$ is a privatized proxy of $s_{t_\ell}$ by passing to a continual counter \cc the sequence of interval sums $(\intsum_\ell)_{\ell\geq1}$, and $\widehat{v}_\ell$ is the $\ell$-th value generated.
    \item $a_t$: We will find it convenient to denote the output of the \svt instances by $a_t \in \{\bot,\top\}$. $a_t = \bot$ indicates that $t$ was not marked as a true checkpoint, and $a_t = \top$ indicates that $t$ was marked as a checkpoint.
    \item Given a probability $\beta \in [0,1]$, we let $\beta_t = \tfrac{6}{\pi^2}\tfrac{\beta}{t^2}$. With this definition, $\sum_{t=1}^\infty \beta_t = \beta$.
\end{itemize}

We now prove the accuracy guarantee of $\ecc$. We start first by characterizing $\partt$. More concretely, whether a time step is ($a_t = \top$) or isn't ($a_t = \bot$) marked as a true checkpoint internally by the algorithm allows us to determine, with high probability, bounds on the change in the prefix sum since the last checkpoint was declared. This also allows us to bound the total number of checkpoints declared by time step $t$ in terms of the prefix sum value $s_t$.

\begin{lemma}
    \label{lem:partt-acc}
    For every $\beta\in [0,1]$, there exists an event $\calE_{\partt}$ defined over the random coins of $\partt$, such that $\Pr[\calE_\partt] \geq 1-\beta$, and such that conditioned on $\calE_{\partt}$, for all $t\in \N$, the following statements hold:
    \begin{enumerate}[left=0pt]
        \item If $\ell$ is the index of the active SVT instance at time step $t$ then:
        \begin{enumerate}
            \item If the SVT returns $\top$, then $s_t - s_{t_{\ell-1}} \geq \tau - O(\ln(\ell/\beta)/\eps)$.
            \item If the SVT returns $\bot$, then $s_t - s_{t_{\ell-1}} \leq \tau + O(\ln (t + \lceil1/\eps\rceil)/\eps) + O(\ln(\ell/\beta)/\eps)$.
        \end{enumerate}
        \item If $\ell$ denotes the number of checkpoints declared up to time step $t$, then $\ell \leq s_t$.
    \end{enumerate}
\end{lemma}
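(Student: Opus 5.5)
The plan is to reduce everything to \Cref{lem:svt_accuracy}, applied separately to each SVT instance with its own failure probability. Instance $\svt_\ell$ is run with threshold $\tau_\ell:=\tau+16\ln(\ell)/\eps$ and failure probability $\beta_\ell=\tfrac{6}{\pi^2}\tfrac{\beta}{\ell^2}$. Let $\calE_\ell$ be the event of probability $1-\beta_\ell$ from \Cref{lem:svt_accuracy}, defined over the coins $Z_0,Z_1,\dots$ of $\svt_\ell$, and set $\calE_\partt=\bigcap_{\ell\ge1}\calE_\ell$. A union bound gives $\Pr[\calE_\partt]\ge 1-\sum_\ell\beta_\ell=1-\beta$.

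One subtlety: an instance is only created if the previous one halts. To handle this, I will define the coins of every potential instance $\svt_\ell$ up front, independent of the data, and index each instance's internal time from $1$ at its creation. This is legitimate because \Cref{lem:svt_accuracy} holds simultaneously for every input sequence, so conditioning on $\calE_\ell$ covers whatever data-dependent input the $\ell$-th instance actually sees.

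For item 1, fix $t$ and suppose $\svt_\ell$ is active. Its input at that step is $\intsum_\ell=s_t-s_{t_{\ell-1}}$, with $\bot$ counted as $0$.

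For (a), the algorithm outputs $\top$ only if the input is not below threshold. The contrapositive of the "below threshold" clause then gives
\[
s_t-s_{t_{\ell-1}}>\tau_\ell-\tfrac{6}{\eps}\ln(2/\beta_\ell)-\tfrac{4}{\eps}\ln(\pi^2/6)\ \ge\ \tau-O(\ln(\ell/\beta)/\eps),
\]
where the last step uses $\ln(2/\beta_\ell)=O(\ln(\ell/\beta))$ and $\tau_\ell\ge\tau$.

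For (b), suppose $\bot$ is returned. By the contrapositive of the "above threshold" clause, the input is below
\[
\tau_\ell+\tfrac{16}{\eps}\ln\!\big(t'+\lceil 12/\eps\rceil\big)+\tfrac{6}{\eps}\ln(2/\beta_\ell)+O(1/\eps),
\]
where $t'\le t$ is the internal time. Bounding $16\ln(\ell)/\eps$ by $O(\ln(\ell/\beta)/\eps)$ gives exactly the stated bound.

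For item 2, I first choose $\tau$ so that $\tau-\tfrac{6}{\eps}\ln(2/\beta_\ell)-\tfrac{4}{\eps}\ln(\pi^2/6)\ge 1$ for every $\ell$. The $-\ln\beta_\ell$ term grows like $2\ln\ell$, but this is compensated by the $16\ln(\ell)/\eps$ offset inside $\tau_\ell$. This is the step I expect to need care. With the paper's $\tau=1+408\ln(\pi^2/(2\delta))/\eps$ and $\beta$ related to $\delta$, I check that the following holds for all $\ell\ge1$:
\[
\tau+\tfrac{16}{\eps}\ln\ell-\tfrac{6}{\eps}\ln\!\Big(\tfrac{\pi^2\ell^2}{3\beta}\Big)-\tfrac{4}{\eps}\ln\tfrac{\pi^2}{6}\ \ge\ 1.
\]
On $\ell$ the coefficient is $16-12=4\ge0$, so the $\ell$-dependence is harmless and only the constant term in $\beta$ must be absorbed by $\tau$.

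With this $\tau$, item 1(a) in its exact form gives $\intsum_k\ge1$ at every declared checkpoint $t_k$. If $\ell$ checkpoints have been declared by time $t$, then since all $x_i\ge0$,
\[
s_t\ \ge\ s_{t_\ell}\ =\ \sum_{k=1}^{\ell}\intsum_k\ \ge\ \ell,
\]
which proves item 2.
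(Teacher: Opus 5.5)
Your proposal follows the paper's proof essentially step for step: per-instance failure probabilities $\beta_\ell = \tfrac{6\beta}{\pi^2\ell^2}$ with a union bound, the contrapositives of the two clauses of \Cref{lem:svt_accuracy} for item 1, and for item 2 the bound $\tau(k) - C_k \ge 1$ (net coefficient $16-12=4$ on $\ln k$) followed by telescoping over the checkpoints. The one thing to make explicit is your ``$\beta$ related to $\delta$'': the paper assumes $\delta \le \beta$, which is not in the lemma's statement but is needed for item 2. Keep the SVT parameter $\eps_p = \eps/51$ separate from $\eps$, so that $\tau = 1 + 8\ln(\pi^2/(2\delta))/\eps_p$; then $\delta\le\beta$ gives $8\ln(\pi^2/(2\delta)) - 6\ln(1/\beta) - 6\ln(\pi^2/3) - 4\ln(\pi^2/6) > 0$, which is exactly the constant term you need $\tau$ to absorb.
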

\begin{proof}
    Throughout this proof, let $\eps_p \;:=\; \eps/51$
    denote the privacy parameter with which $\ecc$ initializes $\partt$ (see Algorithm~\ref{alg:ecc}), which is also the privacy parameter passed to each internal SVT instance (see Algorithm~\ref{alg:partition}). Since $408 = 8 \cdot 51$, the threshold $\tau$ passed to $\partt$ satisfies
    \[
    \tau \;=\; 1 \;+\; \frac{408\ln(\pi^2/(2\delta))}{\eps} \;=\; 1 \;+\; \frac{8\ln(\pi^2/(2\delta))}{\eps_p},
    \]
    and the threshold of the $\ell$-th SVT instance is
    \[
    \tau(\ell) \;=\; \tau + \frac{16\ln(\ell)}{\eps_p}.
    \]
    A checkpoint $t_\ell$ is declared at the time step at which the $\ell$-th SVT instance first outputs $\top$. Recall $\beta_\ell := \tfrac{6\beta}{\pi^2 \ell^2}$, so that $\sum_{\ell=1}^\infty \beta_\ell = \beta$. By Lemma~\ref{lem:svt_accuracy} and a union bound over $\ell \in \N$, there is an event $\calE_\partt$ of probability at least $1-\beta$ conditioned on which the accuracy guarantee of Lemma~\ref{lem:svt_accuracy} holds simultaneously for every SVT instance with its respective failure probability $\beta_\ell$; we condition on $\calE_\partt$ for the remainder of the proof.

    \begin{enumerate}[left=0pt]
        \item Applying Lemma~\ref{lem:svt_accuracy} to the $\ell$-th SVT instance (privacy parameter $\eps_p$, threshold $\tau(\ell)$, failure probability $\beta_\ell$), we obtain that for every $t \geq t_{\ell-1}+1$:
    \begin{enumerate}
        \item[(a$'$)] if $s_t - s_{t_{\ell-1}} \;\leq\; \tau(\ell) - \tfrac{6\ln(2/\beta_\ell)}{\eps_p} - \tfrac{4\ln(\pi^2/6)}{\eps_p}$, then the SVT returns $\bot$;
        \item[(b$'$)] if $s_t - s_{t_{\ell-1}} \;\geq\; \tau(\ell) + \tfrac{16\ln(t + \lceil12/\eps\rceil)}{\eps_p} + \tfrac{6\ln(2/\beta_\ell)}{\eps_p} + \tfrac{4\ln(\pi^2/6)}{\eps_p}$, then the SVT returns $\top$.
    \end{enumerate}
    Substituting $\ln(2/\beta_\ell) = \ln(\pi^2/3) + 2\ln(\ell) + \ln(1/\beta) = O(\ln(\ell/\beta))$ and taking contrapositives yields the two stated bounds (absorbing $16\ln(\ell)/\eps_p$ from $\tau(\ell)$ into the $O(\ln(\ell/\beta)/\eps)$ error term).
    \item For ease of exposition, we define
    \[
    C_k \;:=\; \frac{6\ln(2/\beta_k)}{\eps_p} + \frac{4\ln(\pi^2/6)}{\eps_p}
    \]
    and
    \[
    \kappa \;:=\; 6\ln(\pi^2/3) + 4\ln(\pi^2/6).
    \]
    For each $k \in [\ell]$, the $k$-th SVT instance returned $\top$ at $t_k$, so the contrapositive of (a$'$) gives
    \[
    s_{t_k} - s_{t_{k-1}} \;>\; \tau(k) - C_k.
    \]
    Since the inputs lie in $[0,1] \cup \{\bot\}$ (with $\bot$ treated as $0$), $s_t$ is non-decreasing in $t$, and hence
    \begin{equation}\label{eq:partt-acc-telescope}
    s_t \;\geq\; s_{t_\ell} \;=\; \sum_{k=1}^{\ell} (s_{t_k} - s_{t_{k-1}}) \;>\; \sum_{k=1}^{\ell} \bigl(\tau(k) - C_k\bigr).
    \end{equation}
    It therefore suffices to show that $\tau(k) - C_k \geq 1$ for every $k \geq 1$; summing over $k \in [\ell]$ then gives $s_t > \ell$.

    Expanding using $\ln(2/\beta_k) = \ln(\pi^2/3) + 2\ln(k) + \ln(1/\beta)$,
    \begin{align*}
        \tau(k) - C_k &\;=\; 1 + \frac{8\ln(\pi^2/(2\delta))}{\eps_p} + \frac{16\ln(k)}{\eps_p} \;-\; \frac{12\ln(k) + 6\ln(1/\beta) + \kappa}{\eps_p} \\
        &\;=\; 1 + \frac{8\ln(\pi^2/(2\delta)) - 6\ln(1/\beta) - \kappa}{\eps_p} + \frac{4\ln(k)}{\eps_p}.
    \end{align*}
    Since $\delta \leq \beta$, we have $\ln(1/\delta) \geq \ln(1/\beta)$, so
    \begin{align*}
        8\ln(\pi^2/(2\delta)) - 6\ln(1/\beta)
    &= 8\ln(\pi^2/2) + 8\ln(1/\delta) - 6\ln(1/\beta)\\
    &\geq 8\ln(\pi^2/2) + 2\ln(1/\beta)\\
    &\geq 8\ln(\pi^2/2).
    \end{align*}
    Numerically, $8\ln(\pi^2/2) > 12$ and $\kappa < 10$, so $8\ln(\pi^2/(2\delta)) - 6\ln(1/\beta) - \kappa > 2 > 0$. Combined with $\ln(k) \geq 0$ for $k \geq 1$, this yields
    \[
    \tau(k) - C_k \;\geq\; 1,
    \]
    as required. Substituting into~\eqref{eq:partt-acc-telescope} gives $s_t > \ell$, completing the proof.
    \end{enumerate} 
\end{proof}

\begin{definition}
    Recall that for a given choice of privacy parameters $\eps,\delta$, and failure probability $\beta$, the continual counters $\cc$ and $\bcc$ are $(\alpha_\cc,\beta)$ and $(\alpha_\bcc,\beta)$-accurate respectively for $\alpha_\cc$ and $\alpha_\bcc$ as defined in \Cref{lem:cc-standard} and \Cref{lem:cc-biased}.
    Let $\beta_C$ denote the error probability of $\cc$, and $\beta_B$ denote the error probability of $\bcc$; let $(\eps_C,\delta_C)$ and $(\eps_B,\delta_B)$ denote the privacy parameters of $\cc$ and $\bcc$ respectively. With this notation, we define the terms $E_\cc$ and $E_\bcc$ as follows:
        \begin{align*}
            E_\cc (i) &= \alpha_\cc(\varepsilon_C,\delta_C,\beta_C, i) \\
            E_\bcc(i) &= \alpha_\bcc(\varepsilon_B,\delta_B,\beta_B,i).
        \end{align*}
    With this notation, we see that in a run of $\ecc$, the instance of $\cc$ is $E_\cc (i)$-accurate for an input stream of length $i$ with probability $1-\beta_C$, and the instance of $\bcc$ is $E_\bcc (i)$-accurate, for an input stream of length $i$ with probability $1-\beta_B$.
\end{definition}

\begin{lemma}
    \label{lem:part-accuracy}
    We condition on the event $\calE_{\partt}$ defined in \cref{lem:partt-acc}, and on the $1-\beta_C$ and $1-\beta_B$ events that the error bounds of $\cc$ and $\bcc$ hold. The following statements hold:
    \begin{enumerate}[left=0pt]
        \item For all $\ell\in \N$, $\widehat{t}_\ell - t_\ell \in [0,E_\bcc(\ell)]$.
        \item  
        For all $\ell\in\N$, $|s_{\widehat{t}_\ell} - \widehat{v}_\ell| \leq E_\bcc (\ell) + E_\cc(\ell)$.
    \end{enumerate}
    \end{lemma}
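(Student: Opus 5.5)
Both parts follow from the guarantees of $\bcc$ and $\cc$ in Lemmas~\ref{lem:cc-biased} and~\ref{lem:cc-standard}, combined with Observation~\ref{obs:inputs-cc-bcc}, which identifies the inputs the two counters receive.

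\textbf{Part 1.} By Observation~\ref{obs:inputs-cc-bcc}, $\bcc$ is fed the differences $d_k = t_k - t_{k-1}$. Its $\ell$-th output $\widehat{t}_\ell$ is therefore a privatized estimate of $\sum_{k\le \ell} d_k = t_\ell - t_0 = t_\ell$.
\begin{itemize}
    \item Property 1 of Lemma~\ref{lem:cc-biased} states that $\bcc$ never underestimates. This holds deterministically, not only on the good event, so $\widehat{t}_\ell \ge t_\ell$.
    \item On the $1-\beta_B$ event, the accuracy guarantee of Lemma~\ref{lem:cc-biased} applied at input length $\ell$ gives $|\widehat{t}_\ell - t_\ell| \le E_\bcc(\ell)$.
\end{itemize}
Together these give $\widehat{t}_\ell - t_\ell \in [0, E_\bcc(\ell)]$. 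The output is also an integer, so $\widehat{t}_\ell$ is a legitimate time step.

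\textbf{Part 2.} I use the triangle inequality:
\[
|s_{\widehat{t}_\ell} - \widehat{v}_\ell| \le |s_{\widehat{t}_\ell} - s_{t_\ell}| + |s_{t_\ell} - \widehat{v}_\ell| .
\]
\begin{itemize}
    \item \emph{Second term.} $\cc$ receives the interval sums $\intsum_1,\dots,\intsum_\ell$, whose prefix sum telescopes to $s_{t_\ell} - s_{t_0} = s_{t_\ell}$. On the $1-\beta_C$ event, Lemma~\ref{lem:cc-standard} gives $|\widehat{v}_\ell - s_{t_\ell}| \le E_\cc(\ell)$.
    \item \emph{First term.} Inputs lie in $[0,1]\cup\{\bot\}$ with $\bot$ counted as $0$, so $s$ is non-decreasing and increases by at most $1$ per step. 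By Part 1, $0 \le \widehat{t}_\ell - t_\ell \le E_\bcc(\ell)$, hence $0 \le s_{\widehat{t}_\ell} - s_{t_\ell} \le \widehat{t}_\ell - t_\ell \le E_\bcc(\ell)$.
\end{itemize}
Summing the two terms gives the bound $E_\bcc(\ell) + E_\cc(\ell)$.

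The event $\calE_\partt$ is not actually needed for these two statements. It is included only so that all three events are in force together for the subsequent Theorem~\ref{thm:ecc-accuracy}.

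\textbf{Main obstacle.} The argument is routine; what needs care is the bookkeeping. First, the accuracy functions must be evaluated at the stream length $\ell$, meaning the number of checkpoints, and not at $t$. Second, one must confirm that the "never underestimate" property of $\bcc$ holds unconditionally. Third, if $\widehat{t}_\ell$ exceeds the horizon $T$, one should either define $s$ beyond $T$ as constant or note that the statement is vacuous in that case.
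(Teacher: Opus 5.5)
Your proposal is correct and follows the paper's proof. Part~1 comes from \bcc never underestimating and being accurate on the checkpoint-difference stream. Part~2 combines, via the triangle inequality, the bound $0 \le s_{\widehat{t}_\ell} - s_{t_\ell} \le \widehat{t}_\ell - t_\ell$ (from the $[0,1]$ inputs and Part~1) with the accuracy of \cc on the interval sums.
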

\begin{proof}
\noindent\textit{(1):} We recall that by definition $\widehat{t}_\ell$ is the last output generated by $\bcc$ when given the input stream $((t_1-t_0),(t_2-t_1),\dots,(t_\ell - t_{\ell-1}))$. It follows from the error bound of $\bcc$ that $ \widehat{t}_\ell - t_\ell \in [0,E_\bcc(\ell)]$, which is the first statement.

\noindent\textit{(2):} Since we are considering a stream of values drawn from $[0,1]\cup{\bot}$, it follows directly from the previous part that $|s_{\widehat{t}_\ell} - s_{t_\ell}| \leq E_\bcc(\ell)$. We recall that $\widehat{v}_\ell$ is the output generated by $\cc$ when given the input stream $(\intsum_1,\dots, \intsum_\ell)$, where $\intsum_i = \sum_{j = t_{i-1}+1}^{t_{i}} x_j$. It follows from the error bound of $\cc$ that $ |\widehat{v}_\ell - s_{t_\ell}| \leq E_\cc(\ell)$. Applying the triangle inequality, we get the second statement. 
\end{proof}

We can now prove \Cref{thm:ecc-accuracy}.

\begin{proof}[Proof of \Cref{thm:ecc-accuracy}]
    We want to bound $|s_t - y_t|$.
    By the pseudocode of \ecc, $y_t$ equals the value released at the most recent noisy checkpoint $\widehat{t}$ at or before $t$, i.e. $y_{\widehat{t}}$. 
    We know that $y_{\widehat{t}} = \max\{\widehat{v} : \widehat{v} \in \sched[\widehat{t}]\}$. Let $\ell$ be the index of the checkpoint whose value $\widehat{v}_\ell$ was released at $\widehat{t}$, i.e. $\widehat{t}_\ell = \widehat{t}$, and when $\sched[\widehat t]$ was accessed at $\widehat t$, $\widehat{v}_\ell$ was the value released. 
    It follows then that $y_t = \widehat{v}_\ell$ and $|s_{\widehat{t}} - y_t| \leq E_\bcc (\ell) + E_\cc (\ell)$. 
    Using that (1) $\ell \leq s_t$, (2) $t\geq t_{\ell}$, (3) $s_t$ is non-decreasing in $t$, and (4) $E_\cc$ and $E_\bcc$ are non-decreasing in their arguments, we can write 
    \begin{align} 
        |s_{\widehat{t}} - y_t| \leq E_\bcc (s_t) + E_\cc (s_t). \label{eqn:ecc-accuracy.1}
    \end{align}
    
    Let $t^\dagger := t - E_\bcc(s_t)$. We have the following two cases:

    \textbf{Case 1:} $\widehat{t} \in (t^\dagger,t]$. Since the input values are drawn from $[0,1]\cup\{\bot\}$, it follows that
        \[ s_t - s_{\widehat{t}} \leq E_\bcc (s_t).\] 
    Applying the triangle inequality on \cref{eqn:ecc-accuracy.1} and the display above, we have
        \[ |s_t - y_t| \leq 2 E_\bcc (s_t) + E_\cc(s_t).\]

    \textbf{Case 2:} $\widehat{t} \not\in (t^\dagger,t]$. First, since $s_t-s_{t-1}\leq1$, we can write
    \begin{align}
        s_t - s_{t^\dagger} \leq E_\bcc (s_t). \label{eqn:ecc-accuracy.2}
    \end{align}
    We claim that $\forall i\in (\widehat{t},t^\dagger]$, $a_{i}= \bot$. This is because if $a_i = \top$ for some such $i$ then by the accuracy guarantee of $\bcc$, the corresponding noisy checkpoint $\widehat{i}$ would be greater than $\widehat{t}$ and smaller than $t$, which contradicts that by definition, $\widehat{t}$ is the last noisy checkpoint to occur at or before $t$.
    
    Let $\ell^*$ denote the index of the most recent true checkpoint at or before $t^\dagger$. Since $t_\ell \leq \widehat{t}$ and $\widehat{t}\leq t^\dagger$, we have $\ell^* \geq \ell$. First suppose $\widehat{t} < t^\dagger$. Then since $\forall i\in(\widehat{t},t^\dagger]$ $a_i = \bot$, we have that the $(\ell^*+1)$-th SVT instance is running at step $t^\dagger$, and this instance outputs $\bot$ at step $t^\dagger$. Applying the accuracy guarantee of $\partt$ (Lemma~\ref{lem:partt-acc}, part 1(b)) to this instance,
    \begin{equation}\label{eqn:ecc-accuracy.3}
        s_{t^\dagger} - s_{t_{\ell^*}} \;\leq\; \tau(\ell^*+1) + \tfrac{16\ln(t + \lceil1/\eps\rceil)}{\eps_p} + \tfrac{6\ln(2/\beta_{\ell^*+1})}{\eps_p} + \tfrac{4\ln(\pi^2/6)}{\eps_p} \;=\; O\!\left(\tfrac{1}{\eps_p}\ln(t/\delta)\right),
    \end{equation}
    where we used $\ell^* \leq s(x,t^\dagger) \leq s(x,t) \leq t$ (Lemma~\ref{lem:partt-acc}) to absorb $\ln((\ell^*+1)/\beta)$ into $\ln(t/\delta)$.
    Since $t_{\ell^*} \leq \widehat{t}$ and the prefix sums are non-decreasing,
    \begin{equation}\label{eqn:ecc-accuracy.4}
        s_{t^\dagger} - s_{\widehat{t}} \;\leq\; s_{t^\dagger} - s_{t_{\ell^*}} \;\leq\; O\!\left(\tfrac{1}{\eps_p}\ln(t/\delta)\right).
    \end{equation}
    Now suppose $\widehat{t} = t^\dagger$; in this case \Cref{eqn:ecc-accuracy.4} holds trivially.
    
    Applying the triangle inequality on \cref{eqn:ecc-accuracy.1}, \cref{eqn:ecc-accuracy.4}, and \cref{eqn:ecc-accuracy.2}, we get
    \begin{align*}
        |s_t - y_t| &\leq |s_{\widehat{t}} - y_t| + (s_{t^\dagger} - s_{\widehat{t}}) + |s_t - s_{t^\dagger}|\\
        &\leq 2 E_\bcc(s_t) + E_\cc(s_t) + O\!\left(\tfrac{1}{\eps_p}\ln(t/\delta)\right).
    \end{align*}
    Since $\eps = \Theta(\eps_p)$, the stated bound follows.
\end{proof}

We now state a simple observation that the prefix sum $s_t$ is always bounded from above by the sparsity $s(x,t)$ of the input stream.

\begin{observation}[Sparsity]\label{obv:sparsity}
    Given a stream $x\in ([0,1] \cup \bot)^*$, recall that $s(x,t)$ denotes the sparsity of $x$ until time $t$, and is defined by the following expression:
        \[ s(x,t) \,:=\, \sum_{i=1}^{t} 1(x_i\not=\bot). \]
    Since $s_t = \sum_{i=1}^t x_i$, and $x_i \leq 1$ (recall that $\bot$ is identified with $0$ for the purposes of defining $s_t$), we have that $s_t \leq s(x,t)$.
\end{observation}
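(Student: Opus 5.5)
The statement is immediate from the definitions, so the plan is a one-line termwise comparison rather than any probabilistic argument.

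First, I will fix a stream $x\in([0,1]\cup\{\bot\})^*$ and a time step $t$. Following the paper's convention, $s_t=\sum_{i=1}^t x_i$ with every $\bot$ treated as $0$.

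Second, I will compare the two sums index by index. For each $i\le t$ I claim $x_i\le 1(x_i\neq\bot)$:
\begin{itemize}
    \item if $x_i=\bot$, both sides are $0$;
    \item if $x_i\neq\bot$, then $x_i\in[0,1]$, so $x_i\le 1=1(x_i\neq\bot)$.
\end{itemize}
Summing this termwise inequality over $i\in[t]$ gives
\[
s_t=\sum_{i=1}^t x_i\;\le\;\sum_{i=1}^t 1(x_i\neq\bot)=s(x,t).
\]

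There is essentially no obstacle. The only point needing care is the convention that $\bot$ contributes $0$ to $s_t$, which the observation states explicitly. The bounded range $[0,1]$ is exactly what makes each non-$\bot$ term at most $1$; without it the termwise bound would fail.

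I would add one remark on how this is used. Combined with \Cref{thm:ecc-accuracy}, and the monotonicity of $E_\cc$ and $E_\bcc$, the bound shows the error of \ecc is also controlled by the sparsity $s(x,t)$.
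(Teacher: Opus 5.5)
Your termwise bound $x_i \le 1(x_i\neq\bot)$, summed over $i\in[t]$, is correct. It is exactly the paper's own reasoning: each entry lies in $[0,1]$ and $\bot$ is identified with $0$. There is no gap.
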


\subsection{Privacy Analysis for \texorpdfstring{\ecc}{ECC}}\label{sec:ecc-privacy}
In this section, we will prove the following privacy guarantee for the mechanism $\ecc$:

\begin{theorem}\label{thm:ecc-privacy}
    Let $\eps>0$ and $0<\delta\le 1$. Suppose there exists a continual counting mechanism $\cc$ that is $(4\eps/27, e^{-19\eps/27}\delta/16)$-DP with respect to the $1$-step $1$-neighbor relation, defined in Definition~\ref{def:k-step-delta-neighboring}. Let $\bcc$ denote the biased version of $\cc$ as constructed in Lemma~\ref{lem:cc-biased}. Then, using $\cc$ and $\bcc$ as black boxes, the continual mechanism $\ecc$, described in Algorithm~\ref{alg:ecc}, is $(\eps,\delta)$-DP with respect to the edit neighbor relation, defined in Definition~\ref{def:queue-neighboring}.
\end{theorem}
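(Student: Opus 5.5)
Fix $\sigma\sim_e\sigma'$; by symmetry of \Cref{def:indistinguishability}, assume $\sigma$ is an insertion neighbor of $\sigma'$ at step $i$, and let $j$ be the first $\bot$ of $\sigma'$ at or after $i$ (or $j=T+1$). The output of $\ecc$ is a post-processing (\Cref{lem:post}) of the pair of sequences $(\widehat{t}_\ell)_\ell$ and $(\widehat{v}_\ell)_\ell$, since $\sched$ only reads these. These sequences are produced by $\bcc$ and $\cc$ run on $d=(t_\ell-t_{\ell-1})_\ell$ and $\intsum=(\intsum_\ell)_\ell$, and both of these are deterministic functions of the input and of the checkpoint sequence $c=(t_1,t_2,\dots)$ output by $\partt$ (\Cref{obs:inputs-cc-bcc}). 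I would therefore proceed in three steps:
\begin{itemize}
\item[(i)] build an injective map $\phi$ on checkpoint sequences and show that $\partt(\sigma)$ and $\phi$-transformed $\partt(\sigma')$ are close in the $(\eps_P,\delta_P)$ sense;
\item[(ii)] show that for every $c$, the streams $(d,\intsum)$ computed from $(\sigma,c)$ and from $(\sigma',\phi(c))$ are $k$-step $1$-neighbors for a constant $k$ (I expect $k\le 3$);
\item[(iii)] combine the two steps by conditioning on $c$ via \Cref{fact:sum-conditions}, and apply \Cref{lem:group-privacy} to $\cc$ and $\bcc$.
\end{itemize}
Each of $\cc$ and $\bcc$ runs at parameter $4\eps/27$. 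With $k=3$, group privacy uses up $24\eps/27$ for the counters, leaving $\eps/9$ for $\partt$. The $\delta$-terms pick up the factors $e^{k\cdot 4\eps/27}$, which explains the parameters $e^{-19\eps/27}\delta/16$ and $e^{-19\eps/27}\delta/8$ in the initialization.

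For step (i), the checkpoints strictly before the SVT instance $p$ that is active at step $i$ coincide, because the prefixes agree. Inside the shifted window $(i,j)$, the running interval sum of $\sigma$ at time $t+1$ differs from that of $\sigma'$ at time $t$ by at most $1$, namely the inserted value $x_i\in[0,1]$. So the natural coupling shifts by $+1$ every checkpoint of $\sigma'$ lying in $[i,j)$.

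The key tool is the SVT observation from the overview. I would state and prove it directly from \Cref{alg:svt} by a change of variables in the Laplace noises $Z_0,Z_t$: if two input streams satisfy $|v_{t+1}-v'_t|\le\Delta$ for all $t$, then the event ``$\svt$ halts at $t+1$ on $v$'' and the event ``$\svt$ halts at $t$ on $v'$'' are $(O(\Delta\eps),\cdot)$-indistinguishable. The threshold offset $8\ln(t+\lceil 12/\eps\rceil)/\eps$ changes by at most a constant between $t$ and $t+1$, and I expect that to be exactly why it appears in the algorithm. Applied to instance $p$, this handles halting inside the window. All instances that are fully inside the window see identical data up to the shift, so their halting distributions match exactly.

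The instance $q$ that straddles $j$ is the delicate case. Before $j$ it sees shifted data; after $j$ the $\bot$ has absorbed the shift, so it sees aligned data whose sum differs by at most $1$. I would split on whether $q$ halts before $j$, where the shifted SVT lemma applies, or after $j$, where \Cref{lem:svt-privacy} with all-step sensitivity applies and the checkpoint is left unshifted. After $q$, the data agree exactly again. The truncation at $T$ produces one extra boundary case, handled the same way. This gives $\partt$-closeness with only $O(1)$ SVT instances paying privacy, which is where the constant $51$ in $\eps/51$ comes from.

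For step (ii), I would check directly that under $\phi$, at most the entries $\ell\in\{p,p+1,q\}$ (plus possibly the last one) of $\intsum$ and $d$ change, each by at most $1$. Here $d_p$ shifts by $1$, and $d_q$ shifts back by $1$ when $q$'s checkpoint is unshifted. The main obstacle will be step (i), specifically making the data-dependent coupling across the straddling instance $q$ rigorous together with the $\delta$-accounting over the events where SVT accuracy fails. For that accounting I would first try to condition on the event $\calE_\partt$ of \Cref{lem:partt-acc} and charge its complement to $\delta$.
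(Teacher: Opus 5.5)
Your architecture matches the paper's: post-processing of $(\widehat t_\ell,\widehat v_\ell)$, an injective coupling of checkpoint sequences, a consecutive-output property of \svt, constant-step neighborhood of the inputs to \cc and \bcc, group privacy, and summation over checkpoint sequences. But the coupling you actually specify in step (i) fails, and it fails at the crux.

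\textbf{The map is not injective.} Shifting by $+1$ every checkpoint of $\sigma'$ in $[i,j)$, while leaving checkpoints at or after $j$ in place, is not injective. Take two $\sigma'$-sequences that agree except that one has a checkpoint at $j-1$ where the other has one at $j$. Both are sent to the same $\sigma$-sequence, since $j-1\mapsto j$ and $j\mapsto j$. The final step bounds $\sum_c \Pr[\partt(\sigma)=\phi(c)]\cdot(\cdots)$ by a single probability. A two-to-one map therefore costs a factor up to $2$ on events that are in general not negligible, and neither $e^{\eps}$ nor $\delta$ absorbs it.

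The same problem appears at $i$ for the reverse inequality, which needs an injective map from $\sigma$-sequences to $\sigma'$-sequences. Sequences with a checkpoint exactly at $i$ have no preimage under your map. The natural extension (keep $i$, shift the next checkpoint back) collides with the sequence having $i+1$ in place of $i$. Any rule that decides the shift by the \emph{value} of a checkpoint relative to $i$ and $j$ runs into this.

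\textbf{The SVT accounting at the straddling instance is misassigned.} When $q>p$, the checkpoint preceding instance $q$ lies in $[i,j)$ and has been shifted, so on $\sigma$ this instance starts one step later than on $\sigma'$.
\begin{itemize}
\item If it halts before $j$ and is shifted, the two local input streams are identical and no lemma is needed.
\item If it halts at or after $j$ and is left unshifted, the two runs halt at local times differing by one. That is the setting of \Cref{lem:consecutive-output-distributions}, not \Cref{lem:svt-privacy}.
\end{itemize}
Your description is correct only when this instance is also the one active at step $i$, i.e.\ when its start is unshifted. In general your count omits the second consecutive-output comparison needed to undo the shift.

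\textbf{How the paper fixes both issues.} It shifts by \emph{index} rather than by value. With $t_{p-1}<i\le t_p$ and $t_{q-1}<j\le t_q$ computed on the sequence being mapped, it keeps $t_p$ and $t_{q-1}$ fixed and subtracts $1$ only from $t_{p+1},\dots,t_{q-2}$; if $p\ge q-2$ the map is the identity (\Cref{def:f}).
\begin{itemize}
\item The bracketing checkpoints are fixed points, so $p$ and $q$ can be read off the image, which gives injectivity.
\item Restricting to gaps $\ge 2$ keeps the image valid. Gap-$1$ sequences get probability zero under an event $E$ that forbids any \svt instance from halting on its first input.
\item $E$ also contains an auxiliary independent noise $Y_\ell$ for the inserted first comparison. $\calE_{\partt}$ does not control this, so conditioning on $\calE_{\partt}$ alone would not suffice for that direction of the consecutive-output bound.
\end{itemize}
Under this map, instances $p$, $q$, $k+1$ are all-step $1$-neighbors and $p+1$, $q-1$ are $1$-shift neighbors. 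The total cost is $(1+1+1+\tfrac32+\tfrac76)\parteps=\tfrac{17}{3}\parteps$. Meanwhile $\intsum$ differs in $4$ coordinates and $d$ in $2$. This gives $\tfrac{17}{3}\parteps+4\cdot\tfrac{4\eps}{27}+2\cdot\tfrac{4\eps}{27}=\eps$ with $\parteps=\eps/51$. The factor $e^{-19\eps/27}$ comes from $\tfrac{19}{27}=\tfrac19+\tfrac{16}{27}$, not from $k=3$.
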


Recall that the mechanism $\ecc$ uses the partitioning mechanism $\partt$ to identify checkpoints $(t_1, t_2, \dots)$; executes the continual counting mechanism $\cc$ on the interval sums $(\intsum_1, \intsum_2, \dots)$, where $\intsum_\ell$ denotes the sum of the real-valued inputs received during the time steps $\{t_{\ell-1}+1, \dots, t_\ell\}$, to yield the update values $(\widehat v_1, \widehat v_2, \dots)$; and runs the biased continual counting mechanism $\bcc$ on the checkpoint-differences $(t_1-0, t_2-t_1, \dots)$ to compute noisy checkpoints $(\widehat t_1, \widehat t_2, \dots)$. We observe that the full output stream of $\ecc$ is a deterministic post-processing of the paired sequence $\left((\widehat v_1,\widehat t_1), (\widehat v_2,\widehat t_2), \dots\right)$ generated by $\cc$ and $\bcc$:

\begin{observation}\label{obs:ecc-post-cc-and-bcc}
    For $T\in\N$, suppose the mechanism $\ecc$ is executed for $T$ time steps. Then there exists a deterministic post-processing function $\postfunc$ that maps the output sequences of the subroutines $\cc$ and $\bcc$ during this execution to the output stream of $\ecc$.
\end{observation}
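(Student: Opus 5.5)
We will construct $\postfunc$ explicitly by replaying the scheduling logic of \Cref{alg:ecc} using only the pairs $(\widehat v_\ell,\widehat t_\ell)$, and then check by induction on $t$ that the replayed outputs coincide with the outputs of $\ecc$. Concretely, given the finite sequence $((\widehat v_1,\widehat t_1),\dots,(\widehat v_k,\widehat t_k))$ produced by $\cc$ and $\bcc$ during an execution of length $T$, $\postfunc$ sets $y_0=0$ and, for each $t\in[T]$, computes the multiset $S_t=\{\widehat v_\ell : \ell\in[k],\ \widehat t_\ell=t\}$. It then outputs $y_t=\max S_t$ if $S_t\neq\emptyset$ and $y_t=y_{t-1}$ otherwise. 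This map is deterministic and depends on nothing but its arguments; in particular it never reads the true checkpoints $t_\ell$, the inputs $x_t$, or the randomness of $\partt$.

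The main step is to verify that at every time $t$ the list $\sched[t]$ consulted by $\ecc$ equals $S_t$. There are two inclusions. First, every element of $\sched[t]$ is some $\widehat v_\ell$ with $\widehat t_\ell=t$, since $\sched.\upd$ is only called with the pair $(\widehat t_\ell,\widehat v_\ell)$. Second, every pair with $\widehat t_\ell=t$ has already been appended by the time $\sched.\chk(t)$ is called. For this we use property~1 of \Cref{lem:cc-biased}: $\bcc$ never underestimates, so $\widehat t_\ell\ge \sum_{i\le\ell}d_i=t_\ell$. The pair $(\widehat t_\ell,\widehat v_\ell)$ is inserted during step $t_\ell\le\widehat t_\ell=t$, and within a step the insertion happens before $\sched.\chk(t)$ is called, so it is present at the check. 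Hence $\sched.\chk(t)$ returns $\bot$ exactly when $S_t=\emptyset$, and otherwise returns $\max S_t$. Induction on $t$ then gives that the output of $\ecc$ at time $t$ equals the value $\postfunc$ assigns at time $t$.

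One subtlety needs care. A pair could have $\widehat t_\ell>T$; such a pair is simply never consulted by $\ecc$, and $\postfunc$ also ignores it because it only ranges over $t\in[T]$. Also, the bias property of \Cref{lem:cc-biased} holds deterministically rather than only with high probability. This matters because it makes the identity $\postfunc(\cc,\bcc \text{ outputs})=\ecc$ output hold on every run, not just on a high-probability event, which is what the later application of \Cref{lem:post} requires.

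I expect the timing argument in the second paragraph to be the only real obstacle, since it is the one place where causality must be checked: we need that no scheduled update is "missed" because it was scheduled for a time that has already passed. Everything else is immediate from the pseudocode.
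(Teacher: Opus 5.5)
Your proposal is correct and follows the same approach as the paper's proof: build $\postfunc$ by feeding every pair $(\widehat t_\ell,\widehat v_\ell)$ into a fresh schedule list, then for $t\in[T]$ output the scheduled maximum, or carry forward $y_{t-1}$ if nothing is scheduled. Your explicit timing check, that the deterministic bound $\widehat t_\ell\ge t_\ell$ from \Cref{lem:cc-biased} guarantees every pair is inserted before the online run checks time $\widehat t_\ell$, is a step the paper leaves implicit in ``comparing this procedure with \Cref{alg:ecc}''. It is exactly what makes the identity hold on every run.
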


\begin{proof}
    For $k\in [T]$, let $(\widehat v_1,\dots,\widehat v_k)$ and $(\widehat t_1,\dots,\widehat t_k)$ denote the output sequences produced by the subroutines $\cc$ and $\bcc$, respectively, during the execution of $\ecc$ for $T$ steps. We define the function $\postfunc$ to map these sequences to a sequence $(y_1,\dots,y_T)$ as follows: Initialize an empty schedule list $L\leftarrow \slvar.\init$. For each $\ell\in[k]$, update the schedule list by executing $L.\upd(\widehat t_\ell,\widehat v_\ell)$. Set $y_0=0$. Then, for each $t\in[T]$, recursively define
    \[
    y_t \;=\;
    \begin{cases}
    y_{t-1}, & \text{if } L.\chk(t)=\bot,\\
    L.\chk(t), & \text{otherwise}.
    \end{cases}
    \]

    Comparing this procedure with the definition of $\ecc$ in Algorithm~\ref{alg:ecc}, we see that $(y_1,\dots,y_T)$ is exactly the output stream produced by $\ecc$ when its subroutines $\cc$ and $\bcc$ generate the sequences $(\widehat v_1,\dots,\widehat v_k)$ and $(\widehat t_1,\dots,\widehat t_k)$, respectively.
\end{proof}

Ideally, for any two edit-neighboring input streams, we would like the induced input streams of both $\cc$ and $\bcc$ to be $1$-step $1$-neighbors. If this held deterministically, then the privacy guarantees of $\ecc$ would follow from basic composition together with the post-processing lemma. However, this property cannot be guaranteed deterministically since the checkpoints produced by $\partt$ and consequently the inputs of $\cc$ and $\bcc$ are random. To resolve this issue, we carefully design a coupling.

Fix any pair of edit-neighboring input streams $\sigma$ and $\sigma'$ of the same length $T\in\N$. We define an injective mapping $f_{\sigma\to\sigma'}$ that, roughly speaking, maps any feasible sequence of checkpoints $(t_1, \dots, t_k)$ to a corresponding checkpoint sequence $(t_1', \dots, t_k')$ such that the following two conditions are satisfied:
\begin{itemize}[left=0pt]
    \item \textbf{Property (I).} When $\ecc$ is executed on $\sigma$ and $\partt$ indicates $(t_1, \dots, t_k)$ as checkpoints, the resulting input stream to $\cc$ (and $\bcc$) is $1$-step $1$-neighbor to the corresponding input stream obtained when $\ecc$ is executed on $\sigma'$ and $\partt$ indicates $(t_1', \dots, t_k')$ as checkpoints.
    \item \textbf{Property (II).} The probability that $\partt$ indicates $(t_1, \dots, t_k)$ as checkpoints on the input stream $\sigma$ is ``close'' to the probability that it indicates $(t_1', \dots, t_k')$ as checkpoints on $\sigma'$.
\end{itemize}
These two properties, together with the fact that $\cc$ and $\bcc$ are private mechanisms, allow us to show that, for any measurable set $\calY$, the probability that the output sequence of $\ecc$ on input $\sigma$ lies in $\calY$ \emph{and} the subroutine $\partt$ identifies $(t_1,\dots,t_k)$ as checkpoints on input $\sigma$ is close to the corresponding probability for $\sigma'$ and the mapped checkpoints
$(t_1',\dots,t_k')$. We then sum over all feasible checkpoint sequences and use the fact that $f_{\sigma\to\sigma'}$ is injective to conclude the desired privacy guarantee for $\ecc$, as stated in Theorem~\ref{thm:ecc-privacy}.

Next, we will define the function $f_{\sigma\to\sigma'}$ and prove the above properties as intermediate lemmas. The function $f_{\sigma\to\sigma'}$ is intended to be a bijection on the space of all valid checkpoint sequences. However, to guarantee that $f_{\sigma\to\sigma'}$ always outputs a valid checkpoint sequence—namely, a strictly increasing sequence of positive integers—we exclude a small subset of pathological sequences from its domain. These excluded cases are handled later separately by directly bounding their probability of occurrence under $\partt$. 

Recall that two streams $\sigma$ and $\sigma'$ are edit-neighboring if one of them is an insertion neighbor of the other one at some time step. In the definition of $f_{\sigma\to\sigma'}$, we distinguish between the case where $\sigma$ is an insertion neighbor of $\sigma'$ and where $\sigma'$ is an insertion neighbor of $\sigma$. This asymmetry is essential to satisfy the neighborhood guarantees required in Property~(I).

\begin{definition}[Function $f_{\sigma\to\sigma'}$]\label{def:f}
    Let $T\in \N$ and define $t_0=0$. Let
    $$\calC_T=\left\{ (t_1,\dots,t_k) \mid k\in [T], t_1,\dots,t_k\in\N, 0<t_1<\dots<t_k\leq T \right\}$$
    denote the set of all checkpoint sequences, and let
    \begin{align*}
        \calC_T^+ = \left\{ (t_1, \dots, t_k)\in \calC_T \mid t_i-t_{i-1}\ge 2 \text{ for all } i\in[k]\right\}
    \end{align*}
    denote the subset of checkpoint sequences whose consecutive gaps are at least~$2$.

    Let $\sigma=(x_1,\dots,x_T)$ and $\sigma'=(x_1',\dots,x_T')$ be two edit-neighboring sequences in $([0,1]\cup\{\bot\})^T$, meaning that one of them is an insertion neighbor of the other one at some step $i\in [T]$. We define a function $f_{\sigma\to\sigma'}:\calC_T^+\to \calC_T$ as follows:
    \begin{itemize}[left=0pt]
        \item Suppose that $\sigma$ is an insertion neighbor of $\sigma'$ at step $i$. Let $j$ be the smallest index in $\{i,\dots,T\}$  such that $x_j'=\bot$, and set $j=T+1$ if no such index exists. For every $(t_1, \dots, t_k)\in \calC_T^+$ and $t_{k+1}=T+1$, let $p\in [k+1]$ be the index satisfying $t_{p-1}<i\leq t_p$, and let $q\in [k+1]$ be the index satisfying $t_{q-1}<j\leq t_q$. Define $f_{\sigma\to\sigma'}(t_1,\dots,t_k)$ as
        \begin{align*}
            \begin{cases}
                (t_1,\dots,t_p, t_{p+1}-1,\dots,t_{q-2}-1, t_{q-1},\dots,t_k), & \text{if } p<q-2\\[2mm]
                (t_1,\dots,t_k), & \text{o.w.}
            \end{cases}
        \end{align*}
        \item Otherwise, suppose that $\sigma'$ is an insertion neighbor of $\sigma$ at step $i$. Let $j$ be the smallest index in $\{i,\dots,T\}$ such that $x_j=\bot$, and set $j=T+1$ if no such index exists. For every $(t_1, \dots, t_k)\in \calC_T^+$ and $t_{k+1}=T+1$, let $p\in [k+1]$ be the index satisfying $t_{p-1}<i\leq t_p$, and let $q\in [k+1]$ be the index satisfying $t_{q-1}<j\leq t_q$. Define $f_{\sigma\to\sigma'}(t_1,\dots,t_k)$ as
        \begin{align*}
            \begin{cases}
            (t_1,\dots,t_{p}, t_{p+1}+1,\dots,t_{q-2}+1, t_{q-1},\dots,t_k), & \text{if } p<q-2,\\[2mm]
            (t_1,\dots,t_k), & \text{o.w.}
            \end{cases}
        \end{align*}
    \end{itemize}
\end{definition}

\begin{observation}\label{obs:injective}
    Let $\sigma$ and $\sigma'$ be two edit-neighboring sequences of length $T\in \N$. Then, the function $f_{\sigma\to\sigma'}$, defined in Definition~\ref{def:f}, is injective.
\end{observation}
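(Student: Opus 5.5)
We show injectivity directly: from the image sequence and the fixed data $(i,j)$ (which depend only on $\sigma,\sigma'$, not on the checkpoints) we reconstruct the preimage. We treat the case where $\sigma$ is an insertion neighbor of $\sigma'$; the other case is symmetric, with $-1$ replaced by $+1$. Suppose $f_{\sigma\to\sigma'}(t_1,\dots,t_k)=f_{\sigma\to\sigma'}(u_1,\dots,u_m)=(w_1,\dots,w_n)$. The map preserves the length of the sequence in both branches, so $k=m=n$. It therefore suffices to show that each coordinate of the preimage is determined by $(w_1,\dots,w_k)$.

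The key step is to recover the indices $p$ and $q$ from the image. In both branches, the coordinates $1,\dots,p$ and $q-1,\dots,k$ are unchanged, and the coordinates $p+1,\dots,q-2$ are each decreased by exactly $1$. Define $p$ via $t_{p-1}<i\le t_p$, where $t_p$ is left unchanged. This means $w_{p-1}<i\le w_p$, so $p$ is determined by the image through the same rule.

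For $q$ we use the gap condition defining $\calC_T^+$. Suppose $p<q-2$. Then $t_{q-2}-1<t_{q-2}<j$, so the shifted coordinates remain below $j$. The coordinate $w_{q-1}=t_{q-1}$ is unchanged and satisfies $t_{q-1}\ge j$ when $q-1\le k$. Hence $q$ is characterized in the image by $w_{q-2}<j\le w_{q-1}$, using the convention $w_{k+1}=T+1$. Applying the same rule to $(u_\ell)$ gives the same $q$.

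With $p$ and $q$ recovered, both preimages coincide:
\[
t_\ell=u_\ell=w_\ell \quad\text{for } \ell\le p \text{ or } \ell\ge q-1,
\]
and
\[
t_\ell=u_\ell=w_\ell+1 \quad\text{for } p<\ell\le q-2.
\]

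The main obstacle is a subtlety: we must confirm that the branch condition $p<q-2$ is itself read off consistently from the image, since the otherwise-branch is the identity. If the image-computed indices give $p<q-2$ but the true preimage was in the identity branch, a collision could occur.

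To rule this out, we argue as follows. If the preimage is in the identity branch, then $w=t$, and the image indices equal the true indices. So the image indices satisfy $p\ge q-2$ exactly when the true indices do. If the preimage is in the shifted branch, the argument above shows the image indices equal the true ones. Either way, the image determines both the branch and the indices.

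The case where $\sigma'$ is an insertion neighbor of $\sigma$ needs one more check, because the coordinates are now increased by $1$. We need $t_{q-2}+1<j$ still, so that the shifted values stay on the same side of $j$. This holds when $t_{q-2}+1\neq j$. If $t_{q-2}+1=j$ the index $q$ shifts, and we would handle it either by the gap-$2$ assumption or by noting that $x_j=\bot$ makes the boundary unambiguous. This is the part we expect to require care.
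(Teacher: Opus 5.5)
Your strategy is the same as the paper's: read $p$ and $q$ off the image, note that they coincide with the preimage's indices (so the image also tells you which branch of Definition~\ref{def:f} was applied), and then undo the shift. The recovery of $p$ and the branch-consistency argument are fine. The recovery of $q$, however, is wrong as written, and that error creates the ``obstacle'' you leave open in the second case. By definition $t_{q-1}<j\le t_q$, so your claim $t_{q-1}\ge j$ is false. Your rule $w_{q-2}<j\le w_{q-1}$ also does not identify $q$: for the true $q$ it fails, because $w_{q-1}=t_{q-1}<j$. As a result, the case where $\sigma'$ is an insertion neighbor of $\sigma$ is left unproved, with only tentative suggestions for handling $t_{q-2}+1=j$.

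The fix is short and makes the two cases identical. In both directions of the insertion, $f_{\sigma\to\sigma'}$ modifies at most the coordinates $p+1,\dots,q-2$. Hence $w_{p-1}=t_{p-1}$, $w_p=t_p$, $w_{q-1}=t_{q-1}$ and $w_q=t_q$, with the convention $w_{k+1}=t_{k+1}=T+1$. The image is strictly increasing by the gap condition of $\calC_T^+$: $t_{p+1}-1>t_p$ in the first case and $t_{q-2}+1<t_{q-1}$ in the second. Therefore the index $r$ with $w_{r-1}<i\le w_r$ is unique and equals $p$, and the index $r$ with $w_{r-1}<j\le w_r$ is unique and equals $q$. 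You never need to compare a shifted coordinate with $j$. In particular, the case $t_{q-2}+1=j$ cannot occur, since $t_{q-2}+1\le t_{q-1}<j$. With this correction your argument is exactly the paper's proof.
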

\begin{proof}
    We prove the observation for the case where $\sigma$ is an insertion neighbor of $\sigma'$. The reverse case follows by symmetry. Suppose $f_{\sigma \to \sigma'}$ maps a checkpoint sequence $(t_1, \dots, t_k)$ to $(t_1', \dots, t_k')$. Let $i,j \in [T]$ be the indices in Definition~\ref{def:f}, and define $t_{k+1}' := T+1$. Let $p^* \in [k+1]$ be the (unique) index satisfying
    \[
    t_{p^*-1}' < i \le t_{p^*}',
    \]
    and let $q^* \in [k+1]$ be the (unique) index satisfying
    \[
    t_{q^*-1}' < j \le t_{q^*}'.
    \]
    By construction, the indices $p^*$ and $q^*$---defined with respect to the output sequence $(t_1', \dots, t_k')$---are equal to the indices $p$ and $q$ from Definition~\ref{def:f}---which are determined by the input sequence $(t_1, \dots, t_k)$. Thus, the preimage $(t_1, \dots, t_k)$ corresponding to a given $(t_1', \dots, t_k')$ can be uniquely reconstructed as
    \begin{align*}
        \begin{cases}
            (t_1',\dots,t_{p^*}', t_{p^*+1}'+1,\dots,t_{q^*-2}'+1, t_{q^*-1}',\dots,t_k'), & \text{if } p^*<q^*-2\\[2mm]
            (t_1',\dots,t_k'), & \text{o.w.}
        \end{cases}
    \end{align*}
    Therefore, $f_{\sigma \to \sigma'}$ is injective.
\end{proof}

We use the following notations throughout this section:
\begin{notation}\label{not:privacy}
    Consider an execution of the mechanism $\ecc$ on an input stream $\sigma$. Recall that $\ecc$ invokes $\partt$ on $\sigma$ and sets the checkpoints to the time steps at which $\partt$ outputs $\top$. We denote by $\partt(\sigma)$ the resulting sequence of checkpoints. For convenience, we write
    $$t_{1:k}:=(t_1, \dots, t_k).$$
    Conditioned on the event $\partt(\sigma)=t_{1:k}$, the mechanism $\ecc$ feeds the deterministically defined interval-sum sequence $\intsum$ and checkpoint-difference
    sequence $d$ defined in Observation~\ref{obs:inputs-cc-bcc} into the subroutines $\cc$ and $\bcc$, respectively. For clarity, in this section, we denote these sequences by
    $$\intsum(\sigma, t_{1:k}) \quad\text{ and }\quad d(t_{1:k}).$$
\end{notation}

The following observation shows a slightly modified version of Property~(I) for the mapping $f_{\sigma\to\sigma'}$, where $1$-step $1$-neighboring input sequences are replaced with $c$-step $1$-neighboring ones for some constant $c\in \N$. This minor change is later handled by applying group privacy. \Cref{obs:neighbor-mapping-property} follows from the definitions of edit-neighboring sequences (Definition~\ref{def:queue-neighboring}) and interval-sum and checkpoint-difference sequences (Observation~\ref{obs:inputs-cc-bcc} and Notation~\ref{not:privacy}). The formal proof is provided in Appendix~\ref{app:missing-privacy}.

\begin{observation}\label{obs:neighbor-mapping-property}
    For $T\in\N$, let $\sigma=(x_1,\dots,x_T)$ and $\sigma'=(x_1',\dots,x_T')$ be two edit-neighboring sequences in $\left([0,1]\cup \{\bot\}\right)^T$. Let $f_{\sigma\to\sigma'}:\calC_T^+\to\calC$ be the mapping in Definition~\ref{def:f}. Let $t_{1:k}\in\calC_T^+$, and define
    \[
    t_{1:k}' = f_{\sigma\to\sigma'}(t_{1:k}).
    \]
    Let $i,j\in[T]$ and $p,q\in[k]$ be the indices associated with the definition of $f_{\sigma\to\sigma'}(t_{1:k})$. Let the sequences $\intsum(\sigma, t_{1:k})$ and $d(t_{1:k})$ denote the input streams of $\cc$ and $\bcc$ when $\ecc$ executes on $\sigma$ and $\partt(\sigma)=t_{1:k}$, and let $\intsum(\sigma', t_{1:k}')$ and $d(t_{1:k}')$ denote the same sequences for $\sigma'$ and $t_{1:k}'$ (see Observation~\ref{obs:inputs-cc-bcc} and Notation~\ref{not:privacy}). Then, the following statements hold:
    \begin{itemize}[left=0pt]
        \item The sequences $\intsum(\sigma, t_{1:k})$ and $\intsum(\sigma', t_{1:k}')$ are $4$-step $1$-neighbors (see Definition~\ref{def:k-step-delta-neighboring}). 
        \item The sequences $d(t_{1:k})$ and $d(t_{1:k}')$ are $2$-step $1$-neighbors (see Definition~\ref{def:k-step-delta-neighboring}).
    \end{itemize}
\end{observation}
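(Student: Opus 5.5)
We prove the statement for the case where $\sigma$ is an insertion neighbor of $\sigma'$ at step $i$. The reverse case follows by the same argument with the roles of the $\pm1$ shifts exchanged, since the second branch of Definition~\ref{def:f} is the mirror image of the first. Throughout, $j$, $p$, $q$ are as in Definition~\ref{def:f}, and $t_{k+1}=T+1$.

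\textbf{Step 1 (checkpoint differences).} Suppose $p<q-2$. Then $t'_\ell=t_\ell$ for $\ell\le p$ and for $\ell\ge q-1$, and $t'_\ell=t_\ell-1$ for $p<\ell\le q-2$. Hence $d'_\ell=t'_\ell-t'_{\ell-1}$ equals $d_\ell$ for every $\ell\notin\{p+1,q-1\}$. At the two exceptional indices we get $d'_{p+1}=d_{p+1}-1$ and $d'_{q-1}=d_{q-1}+1$. In the other branch the map is the identity, so $d=d'$. In both cases the two sequences are $2$-step $1$-neighbors.

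\textbf{Step 2 (interval sums: a local principle).} We use the relation between $x$ and $x'$ given by Definition~\ref{def:queue-neighboring}, namely that $x$ is $x'$ with $x_i$ inserted at step $i$, shifted by one position on $(i,\min\{j,T\}]$, and equal to $x'$ elsewhere. Write each bucket sum $\intsum_\ell(\sigma,t_{1:k})$ as a sum of entries of $x'$.

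We then compare it with $\intsum_\ell(\sigma',t'_{1:k})$ in three regimes.
\begin{itemize}
\item Buckets with $\ell<p$ lie before $i$, so the two sums agree.
\item Buckets with $p+1<\ell\le q-2$ cover $(t_{\ell-1},t_\ell]\subseteq(i,j)$ in $\sigma$, and are shifted to $(t_{\ell-1}-1,t_\ell-1]$ in $\sigma'$. Since $x_k=x'_{k-1}$ on this range, the two sums coincide exactly.
\item Buckets with $\ell>q$ lie after $j$, where $x=x'$ and the checkpoints are unchanged, so the two sums agree.
\end{itemize}

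\textbf{Step 3 (boundary buckets).} Only the buckets $p,p+1,q-1,q$ remain; this caps the number of differing entries at $4$. For each of them we show by direct bookkeeping that the two bucket sums differ by exactly one element entering and one element leaving, so the difference has the form $a-b$ with $a,b\in[0,1]$ (treating $\bot$ as $0$). For instance:
\begin{itemize}
\item Bucket $p$ differs by $x_i-x_{t_p+1}$.
\item Bucket $p+1$ differs by $x_{t_p+1}$.
\item If $j$ falls in bucket $q$, bucket $q-1$ differs by $x'_{t_{q-1}}$, and bucket $q$ differs by $x'_{t_{q-1}}-x'_j=x'_{t_{q-1}}$, using $x'_j=\bot$. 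The case where $j$ falls in bucket $q-1$ is handled analogously.
\end{itemize}
Since every entry lies in $[0,1]$, each of these differences is at most $1$ in absolute value.

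In the identity branch ($p\ge q-2$), only buckets $p,\dots,q$ can be affected, which is at most $3$ buckets. By the same entering/leaving argument, each such bucket sum changes by at most $1$. The edge case $j=T+1$, where the last element is dropped, is handled by the same computation with $x'_j$ replaced by "nothing".

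The main obstacle is this boundary bookkeeping. The delicate points are the sub-cases according to whether $j$ lands in bucket $q-1$ or in bucket $q$, and whether $i=t_p$. The tool to rely on throughout is that each affected sum changes by (one entering element) minus (one leaving element), which keeps every difference within $[-1,1]$.
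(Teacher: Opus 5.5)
Your proposal is correct and follows essentially the same route as the paper. It splits on whether $p<q-2$, notes that $d$ changes only at indices $p+1$ and $q-1$ (by $\mp 1$), and shows that the interval sums agree outside buckets $p,p+1,q-1,q$ (outside $p,\dots,q$ in the identity branch), with each boundary bucket changing by an entering-minus-leaving term of magnitude at most $1$. Your bucket-$(q-1)$ difference $x'_{t_{q-1}}$ is the correct one (the paper's displayed $x'_{t_{q-2}}$ is a typo), and the sub-case ``$j$ falls in bucket $q-1$'' that you flag never arises, since $q$ is defined by $t_{q-1}<j\le t_q$.
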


The next lemma shows Property~(II) for the mapping $f_{\sigma\to\sigma'}$. The proof of this lemma is technical and thus deferred to the end of this section.

\begin{lemma}\label{lem:partition}
    Let $\partt$ be the mechanism described in Algorithm~\ref{alg:partition} with privacy parameter $\eps>0$ and threshold parameter $\tau\geq 3$. Then there exists an event $E$ such that
    $$\Pr[E]\geq 1 - \frac{\pi^2}{4}\cdot e^{-\eps(\tau-1)/8},$$
    and for every $T\in\N$ and every pair of edit-neighboring sequences $\sigma=(x_1,\dots,x_T)$ and $\sigma'=(x_1',\dots,x_T')$ in $\left([0, 1]\cup\{\bot\}\right)^T$, the following statements hold: Define the sets $\calC_T$ and $\calC_T^+$ and the function $f_{\sigma\to\sigma'}$ as in Definition~\ref{def:f}.
    \begin{enumerate}[label=(\roman*)]
        \item \label{item:partition-zero-prob} For every checkpoint sequence $t_{1:k}\in\calC_T\setminus\calC_T^+$,
        \[
        \Pr\left[\partt(\sigma)=t_{1:k}\wedge E\right]= 0 .
        \]
        \item \label{item:partition-dp} For every checkpoint sequence $t_{1:k}\in\calC_T^+$,
        \[
        \Pr\left[\partt(\sigma)=t_{1:k}\wedge E\right]
        \leq
        e^{17\eps/3}
        \Pr\left[\partt(\sigma') = f_{\sigma\to\sigma'}(t_{1:k})\right].
        \]
    \end{enumerate}
\end{lemma}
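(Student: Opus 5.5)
The event $E$ will be defined purely in terms of the noise of $\partt$: every SVT instance's threshold noise $Z_0^{(\ell)}\sim\Lap(2/\eps)$ is bounded, $|Z_0^{(\ell)}|\le(\tau-1)/4$ say, and likewise the per-step noises. With a union bound over $\ell$ using the shifted threshold $\tau+16\ln(\ell)/\eps$, this should give $\Pr[\bar E]\le\sum_\ell \tfrac12 e^{-\eps(\tau-1)/8}\ell^{-2}\cdot O(1)\le \tfrac{\pi^2}{4}e^{-\eps(\tau-1)/8}$. On $E$, no SVT instance fires unless its interval sum exceeds $\tau-(\tau-1)/2\ge 2$ (using $\tau\ge3$). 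Inputs lie in $[0,1]$, so the interval sum after one step is at most $1$. Hence each bucket has length at least $2$, which gives \ref{item:partition-zero-prob}.

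For \ref{item:partition-dp} I will write $\Pr[\partt(\sigma)=t_{1:k}]$ as a product over instances $\ell=1,\dots,k+1$. The factor for instance $\ell$ is the probability that $\svt_\ell$, run on the running interval sums of bucket $(t_{\ell-1},t_\ell]$, first outputs $\top$ exactly at relative time $t_\ell-t_{\ell-1}$ (or never, for the last). These factors are independent because the instances use fresh noise. I will compare factor by factor with $\sigma'$ and $t'_{1:k}=f_{\sigma\to\sigma'}(t_{1:k})$, in three regimes.

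\begin{itemize}[left=0pt]
\item \textbf{Unaffected buckets.} These are buckets before $p$ and from $q$ onward, excluding the boundary ones. Their inputs coincide, because after $j$ the streams agree and the $\bot$ at $j$ absorbs the shift. Their factors are equal.
\item \textbf{Interior shifted buckets} ($p<\ell<q-1$). Take the case where $\sigma$ is an insertion neighbor of $\sigma'$. Bucket $\ell$ of $\sigma$ on $(t_{\ell-1},t_\ell]$ has contents $x'_{t_{\ell-1}},\dots,x'_{t_\ell-1}$. The mapped bucket $(t_{\ell-1}-1,t_\ell-1]$ (with the convention at $\ell=p+1$, where the left endpoint stays at $t_p$) has contents shifted by one position. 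For interior buckets, both endpoints shift by one, so the input sequences to the SVT are identical and the factors are equal. At the two boundary buckets, $\ell=p+1$ and $\ell=q-1$, only one endpoint moves. The stopping time therefore changes by one step, and the running sums differ by at most $1$ at every step and across consecutive steps.
\item \textbf{Boundary buckets} ($p$, $p+1$, $q-1$, $q$). Here I will prove a \emph{shifted-SVT lemma}. Suppose $v,v'$ are such that the event ``$\svt$ halts at step $m$ on $v$'' is compared with ``$\svt$ halts at step $m\pm1$ on $v'$''. If $|v_{s}-v'_{s\pm1}|\le1$ for all relevant $s$, the ratio is at most $e^{c\eps}$. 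To prove it, I will redo the standard Lyu et al.\ coupling. Shift $Z_0$ by $1$ (cost $\eps/2$) so that all the $\bot$ comparisons transfer. Then shift the halting noise $Z_m$ by $2$ (cost $\eps/2$). The extra or missing rejected step is handled by the $\Lap(4/\eps)$ noise at that one step together with the $8\ln(t+\lceil12/\eps\rceil)/\eps$ threshold growth, which makes a single rejection cost only a constant factor. Separately, buckets $p$ and $q$ (with unchanged endpoints but perturbed contents) are handled by \Cref{lem:svt-privacy} with $\Delta=1$, costing $\eps$ each.
\end{itemize}

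Multiplying the constant number of boundary factors should give the total $e^{17\eps/3}$. The symmetric case, where $\sigma'$ is the insertion neighbor, uses the same argument with $+1$ shifts.

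The main obstacle is the shifted-SVT step. The two runs have different numbers of $\bot$ rejections, so an extra rejection factor $\Pr[v_s+Z_s\le\text{thr}]$ appears on one side only and must be bounded below on $E$. Since this factor could be small, it is precisely why $E$ and the $\ln(t+\lceil12/\eps\rceil)$ threshold term are needed. I would first try bounding that extra factor by $e^{-O(\eps)}$ using that, on $E$, the halting step's noisy value exceeds the threshold, so the preceding step's comparison holds with good probability. The remaining work is then tracking the constants so they sum to $17/3$.
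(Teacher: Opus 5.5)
There is a genuine gap, and it sits exactly at the step you flag as the main obstacle. Your architecture matches the paper's: a product over instances, identical interior buckets, Lemma~\ref{lem:svt-privacy} for buckets whose endpoints are unchanged, and a shifted-SVT comparison for buckets $p+1$ and $q-1$. But the shifted comparison is the real content of the lemma, and its hard direction is not established.

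The hard direction is the bucket where the $\sigma$-run is the \emph{shorter} one ($q-1$ if $\sigma$ is the insertion neighbor, $p+1$ otherwise). There the longer $\sigma'$-run must make one rejection with no counterpart on the left. The reason you give for bounding this factor does not work:
\begin{itemize}
\item Knowing that the halting step's noisy value exceeds the threshold says nothing in favour of a rejection at another step.
\item At the step just before halting, the running sum may already exceed the threshold, since the left run rejected there only thanks to its own noise. So a fresh-noise rejection there has no useful lower bound.
\end{itemize}
What makes the extra rejection controllable is the alignment $v_{s+1}\leftrightarrow v'_s$ of your own shifted lemma. Under it, the unmatched step is the longer run's \emph{first} step, whose running sum is a single input in $[0,1]$. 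This is the $|x_1|\le 1$ clause of the $1$-shift relation, which your statement of the lemma omits.

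The paper uses this to make the rejection \emph{certain} on $E$ rather than merely likely. The noise of that step does not exist in the $\sigma$-run, so $E$ cannot be defined purely by the mechanism's own coins. For each instance the paper adds an independent auxiliary $Y_\ell\sim\Lap(4/\eps)$ and sets $E_\ell=\{Z_0^\ell\ge-(\tau(\ell)-1)/2\}\wedge\{Y_\ell\le(\tau(\ell)-1)/2\}\wedge\{Z_1^\ell\le(\tau(\ell)-1)/2\}$, with $\tau(\ell)=\tau+16\ln(\ell)/\eps$. Each of the three one-sided tails is at most $\tfrac12 e^{-\eps(\tau-1)/8}\ell^{-2}$, which gives the stated $\tfrac{\pi^2}{4}e^{-\eps(\tau-1)/8}$. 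On this event, the injective map $(y,z_0,z_1,\dots,z_m)\mapsto(z_0+1,y,z_1,\dots,z_{m-1},z_m+\tfrac83)$ sends noise on which the shorter run halts at $m$ to noise on which the longer run halts at $m+1$. The new first rejection holds because $x_1+y\le 1+(\tau-1)/2\le\tau+z_0$. This direction costs $e^{7\eps/6}$. Since $E$ appears only on the left of part (ii), nothing like this is needed where the $\sigma$-run is longer; there its extra first rejection is simply integrated out.

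Your constants also need correcting. Shifting $Z_0$ by $1$ and the halting noise by $2$ each fails in one direction, because the term $8\ln(s+\lceil 12/\eps\rceil)/\eps$ changes by up to $2/3$ between consecutive steps. Bounding that change via $\ln(y+1)-\ln y\le 1/y$ is what the offset $\lceil 12/\eps\rceil$ is for; the growth is a nuisance, not what pays for a rejection.
\begin{itemize}
\item When the longer run is on the left, its rejections move to earlier, lower thresholds. So $Z_0$ must shift by $5/3$ and the halting noise by $8/3$, costing $e^{3\eps/2}$.
\item In the other direction, $Z_0$ shifts by $1$ and the halting noise by $8/3$, giving the $e^{7\eps/6}$ above.
\end{itemize}
Together with $e^{\eps}$ for each of buckets $p$, $q$ and $k+1$, the total is $e^{17\eps/3}$.

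For part (i), constrain only the first-step noise $Z_1^\ell$; constraining all per-step noises cannot give a $T$-independent failure probability. Its cutoff must be $(\tau(\ell)-1)/2$, because a $\Lap(4/\eps)$ tail at $(\tau(\ell)-1)/4$ decays only like $\ell^{-1}$, which is not summable.
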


\begin{proof}[Proof of Theorem~\ref{thm:ecc-privacy}]
    We must show that for every $T\in \N$ and every pair of edit-neighboring input streams $\sigma=(x_1, \dots, x_T)$ and $\sigma'=(x_1', \dots, x_T')$ in $\left([0, 1]\cup\{\bot\}\right)^T$, the distributions of the output sequence of $\ecc$ on $\sigma$ and $\sigma'$, denoted by $\ecc(\sigma)$ and $\ecc(\sigma')$, are $(\eps, \delta)$-indistinguishable. That is, for every measurable set $\calY\subseteq \R^T$,
    \begin{align}\label{eq:privacy}
        \Pr[\ecc(\sigma)\in \calY] \leq e^{\eps}\Pr[\ecc(\sigma')\in \calY] + \delta,
    \end{align}
    and 
    \begin{align}\label{eq:privacy-symmetric}
        \Pr[\ecc(\sigma')\in \calY] \leq e^{\eps}\Pr[\ecc(\sigma)\in \calY] + \delta.
    \end{align}
    Fix $T$, $\sigma$, $\sigma'$, and $\calY$. In the rest of this proof, we will show Inequality~\eqref{eq:privacy} holds. Inequality~\eqref{eq:privacy-symmetric} follows by symmetry. 

    By construction, $\ecc$ instantiates $\partt$ with privacy parameter $\parteps:=\eps/51$ and threshold $\tau:=1+\frac{408\ln(\frac{2\pi^2}{3 \delta})}{\eps}$. Since $\tau>1$, Lemma~\ref{lem:partition} is applicable.
    Let $E$ be the event defined in that lemma. By Fact~\ref{fact:sum-conjunctions} and that $\Pr[E]\geq 1-\frac{\pi^2}{4}\cdot e^{-\parteps(\tau-1)/8}$, we have
    \begin{align*}
        \Pr[\ecc(\sigma)\in \calY] &= \Pr[\ecc(\sigma)\in \calY\wedge E] + \Pr[\ecc(\sigma)\in \calY\wedge \overline{E}] \\
        &\leq \Pr[\ecc(\sigma)\in \calY \wedge E] + \Pr[\overline{E}]\\
        &\leq \Pr[\ecc(\sigma)\in \calY\wedge E] + \frac{\pi^2}{4}\cdot e^{-\parteps(\tau-1)/8}.
    \end{align*}
    Let $\calC_T$ and $\calC_T^+$ be as in Definition~\ref{def:f}. We expand the first term in the above inequality by conditioning on all possible values for the checkpoint sequence produced by $\partt$ when the event $E$ holds. Define 
    $$\mathcal{F}=\left\{t_{1:k}\in\calC_T\mid \Pr[\partt(\sigma)=t_{1:k} \wedge E]>0\right\}.$$ 
    Then, by Fact~\ref{fact:sum-conditions}, we have
    \begin{align*}
        \Pr[\ecc(\sigma)\in \calY] \leq\! \sum_{t_{1:k}\in\mathcal{F}}\!\! \Pr[\partt(\sigma)\!=\!t_{1:k} \wedge E]\!\cdot\! \Pr\big[\ecc(\sigma)\in \calY\!\mid\! \partt(\sigma)\!=t\!_{1:k}\wedge E\big]
        + \frac{\pi^2}{4}\cdot e^{-\parteps(\tau-1)/8}.
    \end{align*}
    The mechanism $\ecc$ executes $\partt$ as black box and only uses the checkpoints indicated by this subroutine, without accessing its local variables or depending on its internal randomness. Therefore, conditioned on the checkpoint sequence $\partt(\sigma)=t_{1:k}$, the output distribution of $\ecc$ is independent of whether the event $E$--which is about the local variables of $\partt$--holds or not. Therefore,
    \begin{equation}\label{eq:remove-condition-E}
    \begin{aligned}
        \Pr[\ecc(\sigma)\in \calY]
        \leq \sum_{t_{1:k}\in\mathcal{F}} \!\Big(\!\Pr[\partt(\sigma)=t_{1:k}\wedge E]
        \cdot \Pr\!\big[\ecc(\sigma)\in \calY\mid \partt(\sigma)=t_{1:k}\big]\Big)+\frac{\pi^2}{4}\cdot e^{-\parteps(\tau-1)/8}.
    \end{aligned}
    \end{equation}      
    For $t_{1:k}\in\mathcal{F}$, let $\intsum(\sigma,t_{1:k})$ and $d(t_{1:k})$ be the interval-sum sequence and checkpoint-difference sequence defined in Observation~\ref{obs:inputs-cc-bcc} and Notation~\ref{not:privacy}. Conditioned on $\partt(\sigma)=t_{1:k}$, these sequences are the input streams to the subroutines $\cc$ and $\bcc$. We denote the corresponding output sequences by $\cc(\intsum(\sigma,t_{1:k}))$ and $\bcc(d(t_{1:k}))$.

    Let $\postfunc$ be the deterministic post-processing function defined in Observation~\ref{obs:ecc-post-cc-and-bcc}. By that observation, knowing that $\intsum(\sigma, t_{1:k})$ and $d(t_{1:k})$ are the input streams of $\cc$ and $\bcc$, the output stream $\ecc(\sigma)$ equals $\postfunc\left(\cc(\intsum(\sigma, t_{1:k})),\bcc( d(t_{1:k}))\right)$. Therefore, since $\partt(\sigma)=t_{1:k}$ deterministically implies input streams $\intsum(\sigma, t_{1:k})$ and $d(t_{1:k})$ for $\cc$ and $\bcc$, we have
    \begin{align*}
        \Pr[\ecc(\sigma)\in \calY\mid \partt(\sigma)=t_{1:k}] &= \Pr[\postfunc\left(\cc(\intsum(\sigma, t_{1:k})),\bcc( d(t_{1:k}))\right)\in \calY].
    \end{align*}
    Consequently, we can reformulate Inequality~\eqref{eq:remove-condition-E} as 
    \begin{equation}\label{eq:lhs}
    \begin{aligned}
        \Pr[\ecc(\sigma)\in \calY]
        &\leq \sum_{t_{1:k}\in\mathcal{F}} \Big(\Pr[\partt(\sigma)=t_{1:k}\wedge E]
        \cdot \Pr[\postfunc\left(\cc(\intsum(\sigma, t_{1:k})),\bcc( d(t_{1:k}))\right)\in \calY]\Big)\\
        &\qquad\qquad +\frac{\pi^2}{4}\cdot e^{-\parteps(\tau-1)/8}.
    \end{aligned}  
    \end{equation}
    We now consider the execution of $\ecc$ on the neighbor input stream $\sigma'$. Define 
    $$\mathcal{F}'=\left\{t_{1:k}'\in\calC_T\mid \Pr[\partt(\sigma')=t_{1:k}']>0\right\}.$$ 
    By Fact~\ref{fact:sum-conjunctions} and the same post-processing argument, we have
    \begin{equation}\label{eq:rhs}
    \begin{aligned}
        \Pr[\ecc(\sigma')\in \calY]
        &=\sum_{t_{1:k}'\in\mathcal{F}'}\!\! \Pr[\partt(\sigma')\!=\!t_{1:k}']\cdot\Pr[\ecc(\sigma')\!\in\!\calY\!\mid\! \partt(\sigma')\!=\!t_{1:k}']\\
        &=\sum_{t_{1:k}'\in\mathcal{F}'} \Pr[\partt(\sigma')=t_{1:k}']
        \cdot\Pr[\postfunc\!\left(\cc(\intsum(\sigma'\!, t_{1:k}')),\bcc( d(t_{1:k}'))\right)\!\in\!\calY].
    \end{aligned}
    \end{equation}  
    By Inequality~\eqref{eq:lhs} and Equality~\eqref{eq:rhs}, to show that Inequality~\eqref{eq:privacy} holds and to complete the proof, it suffices to prove
    \begin{equation}\label{eq:final}
        \begin{aligned}
            &\sum_{t_{1:k}\in\mathcal{F}} \Big(\Pr[\partt(\sigma)=t_{1:k}\wedge E]
            \cdot \Pr[\postfunc\left(\cc(\intsum(\sigma, t_{1:k})),\bcc( d(t_{1:k}))\right)\in \calY]\Big)
            + \frac{\pi^2}{4}\cdot e^{-\parteps(\tau-1)/8}\\
            &\leq e^{\eps} \sum_{t_{1:k}'\in\mathcal{F}'} \Big(\Pr[\partt(\sigma')=t_{1:k}']
            \cdot\Pr[\postfunc\left(\cc(\intsum(\sigma', t_{1:k}')),\bcc( d(t_{1:k}'))\right)\in \calY]\Big) + \delta
        \end{aligned}
    \end{equation}
    By definition, $\mathcal{F}$ only includes checkpoint sequences $t_{1:k}$ satisfying $\Pr[\partt(\sigma)=t_{1:k} \wedge E]>0$. By Lemma~\ref{lem:partition}, for every $t_{1:k}\in \calC_T\setminus\calC_T^+$, we have $\Pr[\partt(\sigma)=t_{1:k} \wedge E]=0$. Thus, $\mathcal{F}\subseteq \calC_T^+$. Let $f_{\sigma\to\sigma'}:\calC_T^+\to\calC_T$ be the function defined in Definition~\ref{def:f}. By Lemma~\ref{lem:partition}, for every $t_{1:k}\in\mathcal{F}$,
    \begin{align}\label{eq:checkpoints}
        \Pr[\partt(\sigma)=t_{1:k}\wedge E]\leq e^{17\parteps/3}\Pr[\partt(\sigma')=f_{\sigma\to\sigma'}(t_{1:k})].
    \end{align}

    By construction, the mechanism $\ecc$ sets the privacy parameters of $\cc$ to $\cceps:=4\eps/27$ and $\delta_C:=e^{-19\eps/27}\delta/16$, and by assumption, $\cc$ is $(\cceps, \delta_C)$-DP with respect to the $1$-step $1$-neighbor relation. By Observation~\ref{obs:neighbor-mapping-property}, the sequences $\intsum(\sigma, t_{1:k})$ and $\intsum(\sigma', f_{\sigma\to\sigma'}(t_{1:k}))$ are $4$-step $1$-neighbors. Thus, by Lemma~\ref{lem:group-privacy}, the output sequences of $\cc$ on these inputs are $(4\cceps, 4\cdot e^{4\cceps}\delta_C)$-indistinguishable. 

    \sloppy
    Furthermore, the mechanism $\ecc$ sets the privacy parameters of $\bcc$ to $\bcceps:=\cceps$ and $\delta_B:=2\delta_C$. By Lemma~\ref{lem:cc-biased}, since $\cc$ is $(\cceps, \delta_C)$-DP with respect to the $1$-step $1$-neighbor relation, $\bcc$ is $(\bcceps, \delta_B)$-DP with respect to the $1$-step $1$-neighbor relation. By Observation~\ref{obs:neighbor-mapping-property}, the sequences $d(t_{1:k})$ and $d(f_{\sigma\to\sigma'}(t_{1:k}))$ are $2$-step $1$-neighbors, and thus by Lemma~\ref{lem:group-privacy}, the output sequences of $\bcc$ on these inputs are $(2\bcceps, 2e^{2\bcceps}\delta_B)$-indistinguishable.

    \sloppy
    Therefore, by Lemma~\ref{lem:basic-composition}, the composed random variables $\left(\intsum(\sigma, t_{1:k}), d(t_{1:k})\right)$ and $\left(\intsum(\sigma', f_{\sigma\to\sigma'}(t_{1:k})), d(f_{\sigma\to\sigma'}(t_{1:k}))\right)$ are $(4\cceps+2\bcceps, 4\cdot e^{4\cceps}\delta_C+2\cdot e^{2\bcceps}\delta_B)$-indistinguishable. Therefore, by Lemma~\ref{lem:post}, the post-processed random variables $\postfunc\left(\cc(\intsum(\sigma, t_{1:k})),\bcc( d(t_{1:k}))\right)$ and $\postfunc\left(\cc(\intsum(\sigma', f_{\sigma\to\sigma'}(t_{1:k}))), \bcc(d(f_{\sigma\to\sigma'}(t_{1:k})))\right)$ are also $(4\cceps+2\bcceps, 4\cdot e^{4\cceps}\delta_C+2\cdot e^{2\bcceps}\delta_B)$-indistinguishable. Thus
    \begin{equation}\label{eq:cc-bcc}
    \begin{aligned}
        &\Pr\left[\postfunc\left(\cc(\intsum(\sigma, t_{1:k})),\bcc( d(t_{1:k}))\right)\in \calY\right]\\
        &\qquad\qquad\leq e^{4\cceps+2\bcceps}\Pr\Big[\postfunc\Big(\cc(\intsum(\sigma', f_{\sigma\to\sigma'}(t_{1:k}))), \bcc(d(f_{\sigma\to\sigma'}(t_{1:k})))\Big)\in \calY\Big]\\
        &\qquad\qquad+4\cdot e^{4\cceps}\delta_C+2\cdot e^{2\bcceps}\delta_B
    \end{aligned}
    \end{equation}
    Combining Inequalities \eqref{eq:checkpoints} and \eqref{eq:cc-bcc}, we obtain 
    \begin{align*}
        &\sum_{t_{1:k}\in\mathcal{F}}\Pr[\partt(\sigma)=t_{1:k}\wedge E] \cdot \Pr[\postfunc\left(\cc(\intsum(\sigma, t_{1:k})),\bcc( d(t_{1:k}))\right)\in \calY]\\
        &\leq \sum_{t_{1:k}\in\mathcal{F}} \bigg( e^{17\parteps/3}\Pr[\partt(\sigma')=f_{\sigma\to\sigma'}(t_{1:k})] \\
        &\qquad\qquad\cdot\Big(e^{4\cceps+2\bcceps}\Pr\Big[\postfunc\big(\cc(\intsum(\sigma', f_{\sigma\to\sigma'}(t_{1:k})))
        , \bcc(d(f_{\sigma\to\sigma'}(t_{1:k})))\big)\in \calY\Big]\\
        &\qquad\qquad\qquad+4\cdot e^{4\cceps}\delta_C+2\cdot e^{2\bcceps}\delta_B\Big)\bigg)\\
        &= e^{17\parteps/3+4\cceps+2\bcceps} \sum_{t_{1:k}\in\mathcal{F}} \bigg( \Pr[\partt(\sigma')=f_{\sigma\to\sigma'}(t_{1:k})]\\
        &\qquad\qquad\qquad\qquad\qquad\cdot \Pr\Big[\postfunc\big(\cc(\intsum(\sigma', f_{\sigma\to\sigma'}(t_{1:k}))), \bcc(d(f_{\sigma\to\sigma'}(t_{1:k})))\big)\in \calY\Big] \bigg)
        \\
        &+(4\!\cdot\! e^{\frac{17}{3}\parteps/3+4\cceps}\delta_C+2\!\cdot\! e^{\frac{17}{3}\parteps+2\bcceps}\delta_B)\!\cdot\!\!\!\!\! \sum_{t_{1:k}\in\mathcal{F}}\!\! \Pr[\partt(\sigma')\!=\!f_{\sigma\to\sigma'}\!(t_{1:k})]
    \end{align*}

    By the choice of $\parteps=\eps/51$, $\cceps=\bcceps=4\eps/27$, $\delta_C=e^{-19\eps/27}\delta/16$, and $\delta_B=2\delta_C$, we have
    $$17\parteps/3+4\cceps+2\bcceps=\eps$$
    and
    $$4\cdot e^{17\parteps/3+4\cceps}\delta_C+2\cdot e^{17\parteps/3+2\bcceps}\delta_B\leq (4+2\times 2)\cdot e^{19\eps/27} \delta_C = \delta/2.$$
    Thus,
    \begin{align*}
        &\sum_{t_{1:k}\in\mathcal{F}} \Big(\Pr[\partt(\sigma)=t_{1:k}\wedge E]\cdot \Pr\Big[\postfunc\big(\cc(\intsum(\sigma, t_{1:k})), \bcc( d(t_{1:k}))\big)\in \calY\Big]\Big)\\
        &\leq e^{\eps} \sum_{t_{1:k}\in\mathcal{F}}\Big( \Pr[\partt(\sigma')=f_{\sigma\to\sigma'}(t_{1:k})]\\
        &\qquad\qquad\quad\cdot\Pr\Big[\postfunc\big(\cc(\intsum(\sigma', f_{\sigma\to\sigma'}(t_{1:k}))),\bcc(d(f_{\sigma\to\sigma'}(t_{1:k})))\big)\in \calY]\Big)\\
        &+\frac{\delta}{2} \cdot \sum_{t_{1:k}\in\mathcal{F}} \Pr[\partt(\sigma')=f_{\sigma\to\sigma'}(t_{1:k})]
    \end{align*}
    By the definition of $\mathcal{F}$, every $t_{1:k}\in\mathcal{F}$ satisfies $\Pr[\partt(\sigma)=t_{1:k}\wedge E]>0$. By Inequality~\eqref{eq:checkpoints}, this implies that $\Pr[\partt(\sigma')=f_{\sigma\to\sigma'}(t_{1:k})]>0$, and therefore $f_{\sigma\to\sigma'}(t_{1:k})\in \mathcal{F}'$.
    By Observation~\ref{obs:injective}, the function $f_{\sigma\to\sigma'}:\calC_T^+\to \calC_T$ is injective. Thus, due to the positivity of every summand, we have
    \begin{align*}
        &\sum_{t_{1:k}\in\mathcal{F}}\Pr[\partt(\sigma')=f_{\sigma\to\sigma'}(t_{1:k})]
        \cdot\Pr\Big[\postfunc\big(\cc(\intsum(\sigma', f_{\sigma\to\sigma'}(t_{1:k}))), \bcc(d(f_{\sigma\to\sigma'}(t_{1:k})))\big)\in \calY\Big]\\
        &\leq \sum_{t_{1:k}'\in\mathcal{F}'} \Pr[\partt(\sigma')=t_{1:k}']\cdot\Pr\Big[\postfunc\big(\cc(\intsum(\sigma', t_{1:k}')), \bcc(d(t_{1:k}'))\big)\in \calY\Big],
    \end{align*}
    and
    \begin{align*}
        \sum_{t_{1:k}\in\mathcal{F}} \Pr[\partt(\sigma')=f_{\sigma\to\sigma'}(t_{1:k})] \leq \sum_{t_{1:k}'\in\mathcal{F}'} \Pr[\partt(\sigma')=t_{1:k}'] = 1
    \end{align*}
    Combining the last three inequalities implies
    \begin{align*}
        &\sum_{t_{1:k}\in\mathcal{F}} \Pr[\partt(\sigma)=t_{1:k}\wedge E]\cdot \Pr\Big[\postfunc\big(\cc(\intsum(\sigma, t_{1:k})),\bcc( d(t_{1:k}))\big)\in \calY\Big]\\
        &\leq e^{\eps}\!\! \sum_{t_{1:k}'\in\mathcal{F}'} \!\!\Pr[\partt(\sigma')=t_{1:k}']\!\cdot\!\Pr\Big[\postfunc\big(\cc(\intsum(\sigma', t_{1:k}')),\bcc( d(t_{1:k}'))\big)\in \calY\Big]\!+\!\frac{\delta}{2}.
    \end{align*}
    Thus, to prove Inequality~\eqref{eq:final} and finish the proof, it suffices to show
    $$\frac{\pi^2}{4}\cdot e^{-\parteps(\tau-1)/8}\leq \frac{\delta}{2},$$
    which is true by the choice of $\tau=1+\frac{408\ln(\frac{\pi^2}{2 \delta})}{\eps}= 1+\frac{4\ln(\frac{\pi^2}{2 \delta})}{\parteps}$.
\end{proof}

It remains to prove Lemma~\ref{lem:partition}. The goal of this lemma is to show that for every pair of edit-neighboring input streams $\sigma$ and $\sigma'$, the probabilities $\Pr[\partt(\sigma)=t_{1:k}]$ and $\Pr[\partt(\sigma')=t_{1:k}']$ are close, where $t_{1:k}\in \calC_T^+$ is a checkpoint sequence and $t_{1:k}'=f_{\sigma\to\sigma'}(t_{1:k})$. 

Recall that the mechanism $\partt$ maintains an instance of $\svt$ and reinitializes it whenever it outputs $\top$, determining the checkpoint steps. Consider the input streams of these $\svt$ instances from initialization until they output $\top$. The high-level idea of the proof is to couple the output sequences of $\partt(\sigma)$ and $\partt(\sigma')$ using the function $f_{\sigma\to\sigma'}$ in Definition~\ref{def:f} and then show that for every pair of coupled output sequences, the corresponding input streams of all $\svt$ instances executed by $\partt(\sigma)$ and $\partt(\sigma')$ must have been identical, except for at most four instances: the input streams of two $\svt$ instances under $\sigma$ and $\sigma'$ were all-step neighbors as in Definition~\ref{def:k-step-delta-neighboring}, and the input streams of the other two were $1$-shift neighbors as in Definition~\ref{def:one-shift-delta-neighbor}. We apply Lemma~\ref{lem:svt-privacy} to show that the probabilities that $\svt$ outputs $\top$ at the end of two neighbor streams of the former type are close. We then formalize the latter notion via a new neighbor relation and show that the probabilities that $\svt$ outputs $\top$ at the end of two such neighboring input streams are close. Then we use this result to compare $\Pr[\partt(\sigma)=t_{1:k}]$ and $\Pr[\partt(\sigma')=t_{1:k}']$.

We recall that, for an input stream $\sigma$ of length $T\in\N$, $\svt(\sigma)\in [T+1]$ denotes the time step at which the mechanism outputs $\top$, with $\svt(\sigma)=T+1$ if the mechanism never outputs $\top$. The following lemma, proved in Section~\ref{sec:svt-consecutive-outputs}, is required in the proof of Lemma~\ref{lem:partition}.

\begin{lemma}[Consecutive Output Distributions]\label{lem:consecutive-output-distributions}
    Let $\svt$ be the mechanism described in Algorithm~\ref{alg:svt} with privacy parameter $\eps>0$ and threshold parameter $\tau>0$. Recall the random variable $Z_0\sim\Lap(2/\eps)$ from that algorithm, and let $Y\sim\Lap(4/\eps)$ be an independent random variable. Define the event
    $$E^*:=\{Z_0\ge -(\tau-1)/2\}\wedge\{Y\le (\tau-1)/2\}.$$
    For $t\in\N$, let $\sigma=(x_1,\dots,x_{t+1})$ and $\sigma'=(x_1',\dots, x_t')$ be two real-valued streams such that $\sigma$ is a $1$-shift $1$-neighbor of $\sigma'$ (see Definition~\ref{def:one-shift-delta-neighbor}). Then,
    \begin{enumerate}[label=(\roman*)]
        \item \label{item:t+1-leq-t} $\Pr[\svt(\sigma)=t+1]\leq e^{3\eps/2}\Pr[\svt(\sigma')=t]$, and
        \item \label{item:t-leq-t+1}$\Pr[\svt(\sigma')=t \wedge E^*]\leq e^{7\eps/6}\Pr[\svt(\sigma)=t+1]$.
    \end{enumerate}
\end{lemma}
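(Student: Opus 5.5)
The plan is a direct coupling of the noise variables of $\svt$ on $\sigma$ and on $\sigma'$. Write $\theta_s:=\tau+8\ln(s+c)/\eps$ with $c=\lceil 12/\eps\rceil$. Then $\svt(\sigma)=t+1$ is the event
\[
x_s+Z_s\le \theta_s+Z_0 \ \text{ for } s\in[t], \qquad x_{t+1}+Z_{t+1}>\theta_{t+1}+Z_0 .
\]
The event $\svt(\sigma')=t$ is the analogous event with $t$ in place of $t+1$, noise $(Z_0',Z_1',\dots,Z_t')$, and inputs $x'_s$. The natural alignment, dictated by the shift structure, matches step $s$ of $\sigma'$ with step $s+1$ of $\sigma$; that is, we identify $Z'_s$ with $Z_{s+1}$.

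Two facts drive both directions. First, $|x_{s+1}-x'_s|\le 1$ by Definition~\ref{def:one-shift-delta-neighbor}. Second, the thresholds drift only slightly:
\[
0\le \theta_{s+1}-\theta_s=\tfrac{8}{\eps}\ln\tfrac{s+1+c}{s+c}\le \tfrac{8}{\eps(s+c)}\le \tfrac{2}{3},
\]
using $c\ge 12/\eps$. This is exactly why the offset $\lceil 12/\eps\rceil$ appears in Algorithm~\ref{alg:svt}. As in the standard SVT privacy proof, we shift $Z_0$ by a constant $a$ and the halting noise by a constant $b$. By the Laplace densities, this costs $e^{\eps a/2}$ and $e^{\eps b/4}$ respectively, while the non-halting noises are left untouched.

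For \ref{item:t+1-leq-t}, we first drop the step-$1$ constraint of $\sigma$; this only enlarges the probability. We then map each realization to $Z'_s:=Z_{s+1}$ for $s<t$, $Z_0':=Z_0+5/3$, and $Z'_t:=Z_{t+1}+8/3$. For $s<t$, the below-threshold condition transfers as
\[
x'_s+Z_{s+1}\le x_{s+1}+1+Z_{s+1}\le \theta_{s+1}+Z_0+1\le \theta_s+Z_0'.
\]
The halting condition transfers because $x'_t+Z'_t\ge x_{t+1}-1+Z_{t+1}+8/3>\theta_{t+1}+Z_0+5/3\ge\theta_t+Z_0'$. The change-of-measure cost is $e^{5\eps/6}\cdot e^{2\eps/3}=e^{3\eps/2}$.

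For \ref{item:t-leq-t+1}, we go in the reverse direction. We set $Z_{s+1}:=Z'_s$ for $s<t$, $Z_0:=Z_0'+1$, and $Z_{t+1}:=Z'_t+8/3$. We also use the fresh variable $Y$ as $Z_1$, the step-$1$ noise of $\sigma$. For steps $2,\dots,t$, the below-threshold condition follows since $\theta_{s+1}\ge\theta_s$ and $|x_{s+1}-x'_s|\le 1$. The halting condition at $t+1$ follows from
\[
x_{t+1}+Z_{t+1}\ge x'_t-1+Z'_t+8/3>\theta_t+Z_0'+5/3\ge \theta_{t+1}+Z_0 .
\]
The cost is $e^{\eps/2}\cdot e^{2\eps/3}=e^{7\eps/6}$.

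The main obstacle is the extra first step of $\sigma$, which has no counterpart in $\sigma'$ and must not fire. This is where $E^*$ enters. On $E^*$ we have $Z_0=Z_0'+1\ge 1-(\tau-1)/2$ and $Y\le(\tau-1)/2$, and $|x_1|\le 1$ by Definition~\ref{def:one-shift-delta-neighbor}. Hence
\[
x_1+Y\le 1+\tfrac{\tau-1}{2}\le \tau+1-\tfrac{\tau-1}{2}\le \theta_1+Z_0 ,
\]
so step $1$ outputs $\bot$. Since $Y$ is independent of $(Z_0',Z'_1,\dots,Z'_t)$ and has the same law as $Z_1$, we can integrate over $Y$ jointly with the shifted variables. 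This shows that $\Pr[\svt(\sigma')=t\wedge E^*]$ is at most $e^{7\eps/6}$ times the probability of the full event $\svt(\sigma)=t+1$. I would write both parts as explicit integrals against the product Laplace density and perform the translation change of variables.
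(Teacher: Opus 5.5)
Your proposal is correct and matches the paper's proof essentially step for step. It uses the same alignment $Z'_s\leftrightarrow Z_{s+1}$ and the same shifts: $5/3$ on $Z_0$ and $8/3$ on the halting noise in (i), and $1$ and $8/3$ in (ii). Dropping the step-$1$ constraint in (i) corresponds to the paper's marginalization over $Z_1$. Substituting the auxiliary $Y$ for $Z_1$ and using $E^*$ to keep the first step of $\sigma$ from firing in (ii) is also exactly what the paper does.
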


\begin{proof}[Proof of Lemma~\ref{lem:partition}]
    {
    Consider an execution of the mechanism $\partt$ with privacy parameter $\eps$ and threshold parameter $\tau$. For $\ell\in \N$, the $\ell$-th $\svt$ instance (potentially) run by this mechanism has privacy parameter $\eps$ and threshold parameter  
    $$\tau(\ell):=\tau+\frac{16\ln(\ell)}{\eps}.$$ 
    Recall the random variables $Z_0\sim\Lap(2/\eps)$ and $Z_1\sim\Lap(4/\eps)$ used by $\svt$ in Algorithm~\ref{alg:svt}. We denote by $Z_0^\ell$ and $Z_1^\ell$ the corresponding noise variables used by the $\ell$-th (potential) instance of $\svt$. Let $Y_\ell\sim \Lap(4/\eps)$ and define the event
    $$E_\ell:=Z_0^\ell\ge -(\tau(\ell)-1)/2\ \wedge \ Y_\ell\le (\tau(\ell)-1)/2 \ \wedge \  Z_1^\ell\leq (\tau(\ell)-1)/2.$$
    Define
    \[
    E:=\bigwedge_{\ell=1}^\infty E_\ell.
    \]
    By Lemma~\ref{lem:laplace-concentration-bound}, we have $\Pr[Z_0^\ell\ge -(\tau(\ell)-1)/2]\ge 1-\tfrac{1}{2}e^{-\eps(\tau(\ell)-1)/4}$, $\Pr[Z_1^\ell\le (\tau(\ell)-1)/2]\ge 1-\tfrac{1}{2}e^{-\eps(\tau(\ell)-1)/8}$, and $\Pr[Y_\ell\leq (\tau(\ell)-1)/2]\ge 1-\tfrac{1}{2}e^{-\eps(\tau(\ell)-1)/8}$. Hence, applying a union bound over all $\ell\in\N$ yields
    \begin{align*}
    \Pr[E] &\ge 1-\sum_{\ell=1}^\infty \frac{3}{2}\cdot e^{-\eps\cdot(\tau(\ell)-1)/8} = 1-\frac{3}{2}\cdot e^{-\eps(\tau-1)/8}\sum_{\ell=1}^\infty e^{-2\ln(\ell)}\\
    &=1-\frac{\pi^2}{4}e^{-\eps(\tau-1)/8},
    \end{align*}
    where we used the fact that $\sum_{\ell=1}^\infty \frac{1}{\ell^2}= \frac{\pi^2}{6}$.
    
    By the design of $\partt$, if an $\svt$ instance executed by this mechanism receives its first input at time $t\in\N$, then this input equals the $t$-th input of $\partt$ when that input is a real value, and equals $0$ when the input is $\bot$. Hence, if all real inputs of $\partt$ are restricted to lie in the range $[0, 1]$, then for every $\ell\in\N$, the first input $x^*$ received by the $\ell$-th (potential) $\svt$ instance has magnitude at most $1$. Hence, if $Z_0^\ell\ge -(\tau(\ell)-1)/2$ and $Z_1^\ell\leq (\tau(\ell)-1)/2$, we have 
    $$x^*+Z_1^\ell \leq 1 + (\tau(\ell)-1)/2 \leq \tau(\ell)- (\tau(\ell)-1)/2 \leq \tau(\ell)+Z_0.$$
    Therefore, if the event $E_\ell$ holds, then the $\ell$-th $\svt$ mechanism does not output $\top$ upon receiving its first input. 
    
    Fix $T\in\N$ and two edit neighboring sequences $\sigma=(x_1, \dots, x_T)$ and $\sigma'=(x_1', \dots, x_T')$ in $([0,1]\cup\{\bot\})^T$. By the definition of $\calC_T^+$, for every $t_{1:k}\in\calC_T\setminus\calC_T^+$, there exists $\ell\in[k]$ such that $t_\ell=t_{\ell-1}+1$. But this implies that if $\partt(\sigma)=t_{1:k}$, then the $\ell$-th $\svt$ instance of $\partt$ must have output $\top$ upon receiving its first input, which is impossible when the event $E$ holds. Consequently, for the input sequence $\sigma$ in $\left([0, 1]\cup\{\bot\}\right)^T$, we have
    \[
    \Pr[\partt(\sigma)=t_{1:k}\wedge E]=0,
    \]
    finishing the proof of part~\ref{item:partition-zero-prob}.
    }
    
    To prove part~\ref{item:partition-dp}, fix $t_{1:k}\in \calC_T^+$. Let $f_{\sigma\to\sigma'}$ be the function defined in Definition~\ref{def:f}, and define indices $i, j\in [T]$ and $p, q\in [k]$ as in the definition of $f_{\sigma\to\sigma'}(t_{1:k})$. Let $t_{1:k}'=f_{\sigma\to\sigma'}(t_{1:k})$. For every $\ell\in [k]$, define the sequences
    \begin{align*}
        S_\ell := \left(\, \sum_{w=t_{\ell-1}+1}^{t_{\ell-1}+h} x_w\, \right)_{h=1}^{t_{\ell}-t_{\ell-1}} \text{ and } \qquad S_\ell' := \left(\, \sum_{w=t_{\ell-1}'+1}^{t_{\ell-1}'+h} x_w'\, \right)_{h=1}^{t_{\ell}'-t_{\ell-1}'},
    \end{align*}
    with $x_w=\bot$ and $x_w'=\bot$ treated as $0$. Similarly, for the (potentially) remaining elements after index $t_k$, define the sequences
    \begin{align*}
        S_{k+1} := \left(\, \sum_{w=t_k+1}^{t_k+h} x_w\, \right)_{h=1}^{T-t_k}  \text{ and } \qquad S_{k+1}' := \left(\, \sum_{w=t_k'+1}^{t_k'+h} x_w'\, \right)_{h=1}^{T-t_k'}.
    \end{align*}
    By construction, $\partt(\sigma) = t_{1:k}$ if and only if the following conditions hold:
    \begin{itemize}[left=0pt]
        \item[(a)] The mechanism $\partt$ instantiates $k+1$ instances of $\svt$.
        \item[(b)] For each $\ell \in [k]$, the $\ell$-th instance processes the sequence $S_\ell$ (of length $t_\ell - t_{\ell-1}$) and outputs $\top$ at its last input, i.e.,
        \[
        \svt(S_\ell) = t_\ell - t_{\ell-1}.
        \]
        \item[(c)] The $(k+1)$-st instance processes $S_{k+1}$ (of length $T - t_k$) and never outputs $\top$, i.e.,
        \[
        \svt(S_{k+1}) = T - t_k + 1.
        \]
        (Note that if $t_k = T$, then the $(k+1)$-st $\svt$ instance is instantiated at time $T$ but receives no inputs. In this case, $S_{k+1}$ is empty, and by Notation~\ref{not:svt}, we have $\Pr[\svt(S_{k+1}) = T - t_k + 1] = 1$.)
    \end{itemize}
    Recall that, by Definition~\ref{def:f}, we set $t_{k+1} := T+1$. With this notation, $S_{k+1}$ has length $t_{k+1} - t_k - 1$, and condition (c) can be equivalently written as
    \[
    \svt(S_{k+1}) = t_{k+1} - t_k.
    \]
    Therefore,
    \begin{align*}
        \partt(\sigma)=t_{1:k} \quad\Leftrightarrow\quad \svt(S_\ell)=t_\ell-t_{\ell-1} \text{ for all } \ell\in [k+1].
    \end{align*}
    Similarly, for $t_{1:k}'\in \calC_T$, we have $\partt(\sigma')=t_{1:k}'$ if and only if $\svt(S_\ell')=t_\ell'-t_{\ell-1}'$ for all $\ell\in [k+1]$.
    Hence, the inequality of part~\ref{item:partition-dp} is equivalent to
    \begin{align*}
        \Pr\left[\bigwedge_{\ell=1}^{k+1} \svt(S_\ell)=t_\ell-t_{\ell-1}\wedge E\right]
        \leq
        e^{17\eps/3} \Pr\left[\bigwedge_{\ell=1}^{k+1} \svt(S_\ell')=t_\ell'-t_{\ell-1}'\right].
    \end{align*}
    \sloppy
    By the independence of $\svt$ instances, the probability term on the right-hand side equals $\prod_{\ell=1}^{k+1} \Pr\!\left[\svt(S_\ell')=t_\ell'-t_{\ell-1}'\right]$. Moreover, the left-hand side is upper bounded by
    \begin{align*}
        &\Pr\left[\bigwedge_{\ell=1}^{k+1} \svt(S_\ell)=t_\ell-t_{\ell-1}\wedge E\right] = \Pr\left[\bigwedge_{\ell=1}^{k+1} \big(\svt(S_\ell)=t_\ell-t_{\ell-1}\wedge E_\ell\big) \wedge \bigwedge_{\ell=k+1}^\infty E_\ell\right]\\
        &\leq \prod_{\ell=1}^{k+1} \Pr\big[\svt(S_\ell)=t_\ell-t_{\ell-1}\wedge E_\ell\big].
    \end{align*}
    Therefore, to prove part~\ref{item:partition-dp}, it suffices to show
    \begin{equation}\label{eq:partition-simple}
        \begin{aligned}
        \prod_{\ell=1}^{k+1} \Pr\big[\svt(S_\ell)=t_\ell-t_{\ell-1}\wedge E_\ell\big]
        \leq e^{17\eps/3}\prod_{\ell=1}^{k+1} \Pr\big[\svt(S_\ell')=t_\ell'-t_{\ell-1}'\big].
    \end{aligned}
    \end{equation}
    For every $\ell\in \{1, \dots, p-1\}\cup \{p+2, \dots, q-2\}\cup \{q+1, \dots, k\}$, by Definition~\ref{def:f}, $t_\ell-t_{\ell-1} = t_\ell'-t_{\ell-1}'$, and by Observation~\ref{obs:svt-inputs-neighboring}~\ref{item:svt-inputs-equal}, $S_\ell=S_\ell'$. Hence,
    \begin{align*}
        \Pr\big[\svt(S_\ell)=t_\ell-t_{\ell-1}\wedge E_\ell\big] &\leq \Pr\big[\svt(S_\ell)=t_\ell-t_{\ell-1}\big] \\
        &= \Pr\big[\svt(S_\ell')=t_\ell'-t_{\ell-1}'\big].
    \end{align*}
    Therefore, to prove Inequality~\eqref{eq:partition-simple}, it suffices to show\footnote{In Inequality~\eqref{eq:partition-final}, we take the product over the set $(\{p, \dots, q\} \setminus \{p+2, \dots, q-2\})\cup\{k+1\}$ instead of the simpler set $\{p, p+1, q-1, q, k+1\}$ to correctly deal with border cases. The product must include the index $k+1$ and the first two and last two indices in the range $\{p, \dots, q\}$, whenever they exist. For example, if $p = q$, then the correct index set is $\{p, k+1\}$. However, the simpler expression $\{p, p+1, q-1, q, k+1\}$ would incorrectly include additional indices (such as $p\pm 1$) that lie outside the valid range.}
    \begin{equation}\label{eq:partition-final}
    \begin{aligned}
        &\prod_{\ell\in (\{p, \dots, q\} \setminus \{p+2, \dots, q-2\})\cup\{k+1\}} \SPACESUM\Pr\big[\svt(S_\ell)=t_\ell-t_{\ell-1}\wedge E_\ell\big] \\
        &\leq e^{17\eps/3}\SPACESUM\prod_{\ell\in (\{p, \dots, q\} \setminus \{p+2, \dots, q-2\})\cup\{k+1\}} \SPACESUM\Pr\big[\svt(S_\ell')=t_\ell'-t_{\ell-1}'\big].
    \end{aligned}
    \end{equation}
    In the rest of this proof, we will show Inequality~\eqref{eq:partition-final} holds. We prove this inequality for two cases: $p\leq q-2$ and $p>q-2$. 
    
    \medskip\noindent\textbf{Case $p\geq q-2$.} In this case, the set $(\{p, \dots, q\} \setminus \{p+2, \dots, q-2\})\cup\{k+1\}$ equals $\{p, k+1\}$ if $p= q$, equals $\{p, q, k+1\}$ if $p= q-1$, and equals $\{p, p+1, q, k+1\}$ if $p= q-2$. In all cases, for every $\ell$ in this set, we have $t_\ell-t_{\ell-1} = t_\ell'-t_{\ell-1}'$ by Definition~\ref{def:f}, and the sequences $S_\ell$ and $S_\ell'$ are all-step $1$-neighbors by Observation~\ref{obs:svt-inputs-neighboring}~\ref{item:svt-inputs-all-step-neighbor-k-plus-1},~\ref{item:svt-inputs-all-step-neighbor-p-and-q}, and~\ref{item:svt-inputs-all-step-neighbor-q-equal-p-plus-2}. Hence, by Lemma~\ref{lem:svt-privacy},
    \begin{align*}
        \Pr\big[\svt(S_\ell)=t_\ell-t_{\ell-1}\wedge E_\ell \big]
        \leq \Pr\big[\svt(S_\ell)=t_\ell-t_{\ell-1}\big]
        \leq e^{\eps}\Pr\big[\svt(S_\ell')=t_\ell'-t_{\ell-1}'\big],
    \end{align*}
    which implies Inequality~\eqref{eq:partition-final} as the cardinality of the set $(\{p, \dots, q\} \setminus \{p+2, \dots, q-2\})\cup\{k+1\}$ is at most $4$.

    \medskip\noindent\textbf{Case $q>p+2$.} In this case, the set $(\{p, \dots, q\} \setminus \{p+2, \dots, q-2\})\cup\{k+1\}$ equals $\{p, p+1, q-1, q, k+1\}$. The sequences $\sigma$ and $\sigma'$ are edit neighbors, meaning that one of them is an insertion neighbor of the other one at step $i$. We prove Inequality~\eqref{eq:partition-final}  separately for each case. In both cases, we apply the result of Lemma~\ref{lem:consecutive-output-distributions} with the following modification: For each $\ell\in\{p+1,q-1\}$, consider an application of Lemma~\ref{lem:consecutive-output-distributions} to the $\ell$-th $\svt$ instance executed by $\partt$. By definition, the event $E_\ell$ implies the event $E^* = Z_0^\ell \ge -(\tau-1)/2 \wedge Y_\ell \le (\tau-1)/2$ in that lemma. Consequently, the left-hand side of Lemma~\ref{lem:consecutive-output-distributions}~\ref{item:t-leq-t+1} does not increase if we replace $E^*$ with $E_\ell$. Therefore, the inequality continues to hold with $E_\ell$ in place of $E^*$.

    \smallskip\noindent\textit{Case 1: $\sigma$ is an insertion neighbor of $\sigma'$.} In this case, by Definition~\ref{def:f}, $t_{p+1}-t_{p} = t_{p+1}'-t_{p}'+1$, and by Observation~\ref{obs:svt-inputs-neighboring}~\ref{item:svt-inputs-shift-neighbor}, the sequence $S_{p+1}$ is a $1$-shift $1$-neighbor of $S_{p+1}'$. Therefore, by Lemma~\ref{lem:consecutive-output-distributions}~\ref{item:t+1-leq-t},
    \begin{align*}
        \Pr\big[\svt(S_{p+1})=t_{p+1}-t_{p}\wedge E_{p+1}\big]
        &\leq \Pr\big[\svt(S_{p+1})=t_{p+1}-t_{p}]\\
        &\leq e^{3\eps/2}\Pr\big[\svt(S_{p+1}')=t_{p+1}-t_{p}-1\big]\\
        &= e^{3\eps/2} \Pr\big[\svt(S_{p+1}')=t_{p+1}'-t_{p}'\big].
    \end{align*}
    Moreover, by Definition~\ref{def:f}, $t_{q-1}-t_{q-2} = t_{q-1}'-t_{q-2}'-1$, and by Observation~\ref{obs:svt-inputs-neighboring}~\ref{item:svt-inputs-shift-neighbor}, the sequence $S_{q-1}'$ is a $1$-shift $1$-neighbor of $S_{q-1}$. Therefore, by Lemma~\ref{lem:consecutive-output-distributions}~\ref{item:t-leq-t+1},
    \begin{align*}
        \Pr\big[\svt(S_{q-1})=t_{q-1}-t_{q-2}\wedge E_{q-1}\big]
        &\leq e^{7\eps/6}\Pr\big[\svt(S_{q-1}')=t_{q-1}-t_{q-2}+1\big]\\
        &=e^{7\eps/6}\Pr\big[\svt(S_{q-1}')=t'_{q-1}-t'_{q-2}\big].
    \end{align*}
    Furthermore, for each $\ell\in \{p, q, k+1\}$, by Definition~\ref{def:f}, $t_\ell-t_{\ell-1} = t_\ell'-t_{\ell-1}'$, and by Observation~\ref{obs:svt-inputs-neighboring}~\ref{item:svt-inputs-all-step-neighbor-k-plus-1},~\ref{item:svt-inputs-all-step-neighbor-p-and-q}, and~\ref{item:svt-inputs-all-step-neighbor-q-equal-p-plus-2}, the sequences $S_\ell$ and $S_\ell'$ are all-step $1$-neighbors. Hence, by Lemma~\ref{lem:svt-privacy}, 
    \begin{align*}
        \Pr\big[\svt(S_\ell)=t_\ell-t_{\ell-1}\wedge E_\ell \big]
        &\leq \Pr\big[\svt(S_\ell)=t_\ell-t_{\ell-1}\big]\\
        &\leq e^{\eps}\Pr\big[\svt(S_\ell')=t_\ell'-t_{\ell-1}'\big].
    \end{align*}
    Combining these three inequalities implies Inequality~\eqref{eq:partition-final}.

    \smallskip\noindent\textit{Case 2:} Assume $\sigma'$ is an insertion neighbor of $\sigma$. Then the proof is identical  to the previous case with Observation~\ref{obs:svt-inputs-neighboring}~\ref{item:svt-inputs-shift-neighbor-prime} being used instead of Observation~\ref{obs:svt-inputs-neighboring}~\ref{item:svt-inputs-shift-neighbor} and the arguments for $p+1$ and $q-1$ being swapped.
\end{proof}
\section{A key property for SVT}\label{sec:svt-consecutive-outputs}
In this section, we will prove the following lemma. We recall that, for an input stream $\sigma$ of length $T\in\N$, $\svt(\sigma)\in [T+1]$ denotes the time step at which the mechanism outputs $\top$, with $\svt(\sigma)=T+1$ if the mechanism never outputs $\top$.

\begin{lemma*}[Restatement of Lemma~\ref{lem:consecutive-output-distributions}]
    Let $\svt$ be the mechanism described in Algorithm~\ref{alg:svt} with privacy parameter $\eps>0$ and threshold parameter $\tau>1$. Recall the random variable $Z_0\sim\Lap(2/\eps)$ from that algorithm, and let $Y\sim\Lap(4/\eps)$ be an independent random variable. Define the event
    $$E^*:=\{Z_0\ge -(\tau-1)/2\}\wedge\{Y\le (\tau-1)/2\}.$$
    For $t\in\N$, let $\sigma=(x_1,\dots,x_{t+1})$ and $\sigma'=(x_1',\dots, x_t')$ be two real-valued streams such that $\sigma$ is a $1$-shift $1$-neighbor of $\sigma'$ (see Definition~\ref{def:one-shift-delta-neighbor}). Then,
    \begin{enumerate}[label=(\roman*)]
        \item
        $\Pr[\svt(\sigma)=t+1]\leq e^{3\eps/2}\Pr[\svt(\sigma')=t]$, and
        \item
        $\Pr[\svt(\sigma')=t \wedge E^*]\leq e^{7\eps/6}\Pr[\svt(\sigma)=t+1]$.
    \end{enumerate}
\end{lemma*}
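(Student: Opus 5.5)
The plan is a direct coupling argument on the noise variables of Algorithm~\ref{alg:svt}, in the same spirit as the standard SVT privacy proof. Write $c(s):=8\ln(s+K)/\eps$ with $K:=\lceil 12/\eps\rceil$.

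\textbf{Decomposing the events.} Condition on $Z_0=z$. Then the event $\svt(\sigma)=t+1$ is the intersection of two kinds of independent events:
\begin{itemize}
\item the rejection events $\{x_s+Z_s\le \tau+c(s)+z\}$ for $s\in[t]$;
\item the acceptance event $\{x_{t+1}+Z_{t+1}>\tau+c(t+1)+z\}$.
\end{itemize}
Analogously, $\svt(\sigma')=t$ is the intersection of the rejections $\{x'_s+Z'_s\le\tau+c(s)+z\}$ for $s\le t-1$ and the acceptance $\{x'_t+Z'_t>\tau+c(t)+z\}$. I will align step $s$ of $\sigma$ with step $s-1$ of $\sigma'$, using $|x_{s}-x'_{s-1}|\le 1$ from Definition~\ref{def:one-shift-delta-neighbor}.

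\textbf{Threshold increments.} The key auxiliary estimate is that consecutive threshold offsets differ by little:
\[
0\le c(s)-c(s-1)=\tfrac{8}{\eps}\ln\tfrac{s+K}{s-1+K}\le\tfrac{8}{\eps K}\le \tfrac23 .
\]
This is exactly why the offset $\lceil 12/\eps\rceil$ appears in the algorithm. Combined with the $1$-shift condition, the rejection threshold for step $s$ of $\sigma$, namely $\tau+c(s)+z-x_s$ in terms of $Z_s$, and that for step $s-1$ of $\sigma'$ differ by at most $5/3$ after an appropriate shift of $z$.

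\textbf{Part (i).} Map $Z_0=z$ for $\sigma$ to $z'=z+5/3$ for $\sigma'$, which costs a density ratio $e^{(5/3)(\eps/2)}=e^{5\eps/6}$ for $\Lap(2/\eps)$. With this shift, every rejection of $\sigma$ at step $s\in\{2,\dots,t\}$ implies rejection of $\sigma'$ at step $s-1$ under the identity coupling $Z'_{s-1}=Z_s$, because the rejection sets are nested. The extra rejection of $\sigma$ at step $1$ is simply dropped, since its probability is at most $1$. For the acceptance step, the two thresholds on $Z_{t+1}$ and $Z'_t$ differ by at most $5/3+1\le 8/3$. Shifting the $\Lap(4/\eps)$ variable therefore costs $e^{(8/3)\eps/4}=e^{2\eps/3}$. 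Integrating over $z$ and multiplying over the independent steps gives the factor $e^{5\eps/6+2\eps/3}=e^{3\eps/2}$.

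\textbf{Part (ii).} Reverse the roles and set $z=z'+1$, which costs $e^{\eps/2}$. This shift suffices because we now need $\tau+c(s-1)+z'-x'_{s-1}\le\tau+c(s)+z-x_s$, and $c(s)-c(s-1)\ge 0$ helps here. Then rejection of $\sigma'$ at step $s-1$ implies rejection of $\sigma$ at step $s$ for $s=2,\dots,t$. The acceptance thresholds differ by at most $2/3+1+1=8/3$, which again costs $e^{2\eps/3}$. The total is $e^{7\eps/6}$.

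\textbf{The main obstacle.} The delicate step is the extra first rejection that $\sigma$ must make at step $1$, which has no counterpart in $\sigma'$. This is what $E^*$ handles. I identify $Y$ with the fresh noise $Z_1$ of $\sigma$; this is valid because $Y$ is independent of everything and has the same law. On $\{Z_0\ge-(\tau-1)/2\}$, the shifted value $z=z'+1$ still satisfies $z\ge-(\tau-1)/2$. Together with $Y\le(\tau-1)/2$ and $|x_1|\le 1$, this gives $x_1+Y\le 1+(\tau-1)/2\le \tau+z\le\tau+c(1)+z$, so step $1$ is a guaranteed rejection for $\sigma$.

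Consequently, $\Pr[\svt(\sigma')=t\wedge E^*]$ integrates over $\{z'\ge -(\tau-1)/2\}$ with the factor $\Pr[Y\le(\tau-1)/2]$, and this factor is dominated by the corresponding probability that $\sigma$ rejects at step $1$ at the shifted $z$. I expect checking this bookkeeping carefully to be the only nontrivial part; the rest is the standard interval-shift density-ratio argument.
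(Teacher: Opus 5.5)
Your proposal is correct and follows essentially the same route as the paper. Both arguments shift $Z_0$ by $5/3$ (resp.\ $1$), reuse the rejection noises under the index shift, shift the acceptance noise by $8/3$, and in part (ii) let $Y$ play the role of $\sigma$'s first noise $Z_1$ so that $E^*$ forces the extra step-$1$ rejection. The only difference is presentation: you condition on $Z_0$ and compare per-step probabilities, whereas the paper writes an injective map on full noise vectors.
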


\begin{proof}
    Recall the noise random variables $Z_0\sim\Lap(2/\eps)$ and $Z_1, Z_2, \dots\sim\Lap(4/\eps)$ used by $\svt$ in Algorithm~\ref{alg:svt}. Consider two executions of $\svt$ on inputs $\sigma$ and $\sigma'$. Define
    $$W :=\bigl\{(z_0,\dots,z_{t+1})\in\R^{t+2} : \svt(\sigma)=t+1 \text{ when } Z_i=z_i \text{ for all } i\in \{0, \dots, t+1\} \bigr\}$$
    and
    $$W' :=\bigl\{(z_0,\dots,z_{t})\in\R^{t+1} : \svt(\sigma')=t \text{ when } Z_i=z_i \text{ for all } i\in \{0, \dots, t\} \bigr\}$$
    
    \medskip\noindent\textbf{\ref{item:t+1-leq-t}:} {
    By definition of $W$ and independence of $Z_0,\dots,Z_{t+1}$,
    \[
    \Pr[\svt(\sigma)=t+1] = \sum_{(z_0,\dots,z_{t+1})\in W} \prod_{i=0}^{t+1}\Pr[Z_i=z_i],
    \]
    which can be reformulated as
    \begin{equation}\label{eq:condition-z1}
    \begin{aligned}
        \Pr[\svt(\sigma)=t+1]
        &= \sum_{z_1'\in\R}\Pr[Z_1=z_1']
        \SPACESUM
        \sum_{\substack{(z_0,z_2,\dots,z_{t+1}):\\(z_0,z_1',z_2,\dots,z_{t+1})\in W}}
        \SPACESUM
        \Pr[Z_0=z_0]\cdot \prod_{i=2}^{t+1}\Pr[Z_i=z_i],
    \end{aligned}
    \end{equation}
    Since $Z_0\sim\Lap(2/\eps)$ and $Z_{t+1}\sim\Lap(4/\eps)$, the Laplace distribution satisfies
    \[
    \Pr[Z_0=z_0]\le e^{\frac{5}{3}\cdot \frac{\eps}{2}}\Pr[Z_0=z_0+\frac{5}{3}],
    \]
    and
    \[
    \Pr[Z_{t+1}=z_{t+1}]\le e^{\frac{8}{3}\cdot \frac{\eps}{4}}\Pr[Z_{t+1}=z_{t+1}+\frac{8}{3}].
    \]
    Applying these bounds gives
    \begin{align*}
        &\Pr[\svt(\sigma)=t+1] \leq e^{3\eps/2} \sum_{z_1'\in\R}\Pr[Z_1=z_1']\\
        &\quad\cdot\SPACESUM
        \sum_{\substack{(z_0,z_2,\dots,z_{t+1}):\\(z_0,z_1',z_2,\dots,z_{t+1})\in W}}
        \SPACESUM \Pr[Z_0=z_0+\frac{5}{3}]\cdot \Pr[Z_{t+1}=z_{t+1}+\frac{8}{3}]\cdot \prod_{i=2}^{t}\Pr[Z_i=z_i].
    \end{align*}
    Fix $z_1'\in\R$. For every $i\in\{2,\dots,t+1\}$, the random variables $Z_i$ and $Z_{i-1}$ are identically distributed. Thus,
    \begin{equation}\label{eq:shift-random-variables}
        \begin{aligned}
        &
        \sum_{\substack{(z_0,z_2,\dots,z_{t+1}):\\(z_0,z_1',z_2,\dots,z_{t+1})\in W}}
        \SPACESUM
        \Pr[Z_0=z_0+\frac{5}{3}]\cdot \Pr[Z_{t+1}=z_{t+1}+\frac{8}{3}]\cdot \prod_{i=2}^{t}\Pr[Z_i=z_i]\\
        &=
        \quad\SPACESUM
        \sum_{\substack{(z_0,z_2,\dots,z_{t+1}):\\(z_0,z_1',z_2,\dots,z_{t+1})\in W}}
        \SPACESUM
        \Pr[Z_0=z_0+\frac{5}{3}]\cdot \Pr[Z_{t}=z_{t+1}+\frac{8}{3}]\cdot \prod_{i=2}^{t}\Pr[Z_{i-1}=z_i]\\
        &= 
        \quad\SPACESUM
        \sum_{\substack{(z_0,z_2,\dots,z_{t+1}):\\(z_0,z_1',z_2,\dots,z_{t+1})\in W}}
        \SPACESUM
        \Pr[(Z_0,\dots,Z_t)=(z_0+\frac{5}{3},z_2,\dots,z_t,z_{t+1}+\frac{8}{3})].
    \end{aligned}
    \end{equation}
    We will show that
    \begin{equation}\label{eq:top-t+1-to-t-mapping}
    (z_0,z_1',z_2,\dots,z_{t+1})\!\in\! W
    \Longrightarrow
    (z_0+\frac{5}{3},\,z_2,\dots,z_t,\,z_{t+1}\!+\!\frac{8}{3})\!\in\! W' .
    \end{equation}
    Assuming~\eqref{eq:top-t+1-to-t-mapping} holds and using the fact that, for fixed $z_1'$, the mapping
    \[
    (z_0,z_1',z_2,\dots,z_{t+1})
    \;\mapsto\;
    (z_0+\frac{5}{3},z_2,\dots,z_t,z_{t+1}+\frac{8}{3})
    \]
    is injective, we conclude that
    \begin{align*}
    &\SPACESUM\sum_{\substack{(z_0,z_2,\dots,z_{t+1}):\\(z_0,z_1',z_2,\dots,z_{t+1})\in W}}
    \SPACESUM\Pr[(Z_0,\dots,Z_t)=(z_0+\frac{5}{3},z_2,\dots,z_t,z_{t+1}+\frac{8}{3})]\\
    &\le
    \sum_{(z_0,\dots,z_t)\in W'}
    \Pr[(Z_0,\dots,Z_t)=(z_0,\dots,z_t)]
    \\
    &=
    \Pr[\svt(\sigma')=t].
    \end{align*}
    Combining this bound with~\eqref{eq:condition-z1} and~\eqref{eq:shift-random-variables} yields
    \begin{align*}
        \Pr[\svt(\sigma)=t+1]
        &\le
        e^{3\eps/2} \sum_{z_1'\in\R}\Pr[Z_1=z_1']\cdot \Pr[\svt(\sigma')=t] = e^{3\eps/2}  \Pr[\svt(\sigma')=t].
    \end{align*}

    It remains to prove that~\eqref{eq:top-t+1-to-t-mapping} holds: Let $(z_0,z_1',z_2,\dots,z_{t+1})\in W$. By definition, (a) $x_i+z_i\le\tau+8\ln\left(i+\frac{12}{\eps}\right)/\eps+z_0$ for all $i\le t$ and (b) $x_{t+1}+z_{t+1}>\tau+8\ln\left(t+1+\frac{12}{\eps}\right)/\eps+z_0$. Using $|x_i-x_{i-1}'|\le 1$, we obtain $x_{i-1}'+z_i\le\tau+8\ln\left(i+\frac{12}{\eps}\right)/\eps+z_0+1$ for all $i\ge2$, and $x_t'+1 +z_{t+1}>\tau+8\ln\left(t+1+\frac{12}{\eps}\right)/\eps+z_0$. It is known that $\ln(y+1)-\ln(y)\leq \frac{1}{y}$ for every $y>0$. Thus, for $i\geq 2$, we have 
    $$\frac{8}{\eps}\left(\ln\left(i+\frac{12}{\eps}\right) - \ln\left(i-1+\frac{12}{\eps}\right)\right) \leq \frac{8}{\eps}\cdot \frac{1}{i-1 + \frac{12}{\eps}}\leq \frac{8}{\eps}\cdot \frac{\eps}{12} = \frac{2}{3}.$$
    Therefore, (a') $x_{i-1}'+z_i\le\tau+8\left(i-1+\frac{12}{\eps}\right)/\eps+z_0+\frac{5}{3}$. Moreover, since $\ln\left(t+1+\frac{12}{\eps}\right)>\ln\left(t+\frac{12}{\eps}\right)$, we have $x_t'+1 +z_{t+1}>\tau+8\ln\left(t+\frac{12}{\eps}\right)/\eps+z_0$. Consequently, (b') $x_t'+ (z_{t+1}+\frac{8}{3})>\tau+8\ln\left(t+\frac{12}{\eps}\right)/\eps+ (z_0+\frac{5}{3})$. Combining (a') and (b') shows that $(z_0+\frac{5}{3},z_2,\dots,z_t,z_{t+1}+\frac{8}{3})\in W'$, finishing the proof of part~\ref{item:t+1-leq-t}.
    }

    \medskip\noindent\textbf{\ref{item:t-leq-t+1}:} {
    Recall event $E^*= Z_0\ge -(\tau-1)/2 \wedge Y\le (\tau-1)/2$. We write
    \begin{align*}
    \Pr[\svt(\sigma')=t \wedge E^*]
    &=\!\!\!\sum_{\substack{y\le (\tau-1)/2\\z_0\ge -(\tau-1)/2\\(z_0,z_1,\dots,z_t)\in W'}}
    \Pr[Y=y]\cdot \prod_{i=0}^{t}\Pr[Z_i=z_i]\\
    &\le
    e^{7\eps/6}\SPACESUM
    \sum_{\substack{y\le (\tau-1)/2\\z_0\ge -(\tau-1)/2\\(z_0,z_1,\dots,z_t)\in W'}}\SPACESUM
    \Pr[Y\!=\!y]\!\cdot\!\Pr[Z_0\!=\!z_0\!+\!1]\!\cdot\! \Pr[Z_t\!=\!z_t\!+\!\frac{8}{3}]\!\cdot\!
    \prod_{i=1}^{t-1}\Pr[Z_i\!=\!z_i],
    \end{align*}
    where the inequality uses the facts that $\Pr[Z_0=z_0]\le e^{\eps/2}\Pr[Z_0=z_0+1]$ and
    $\Pr[Z_{t}=z_{t}]\le e^{\frac{8}{3}\frac{\eps}{4}}\Pr[Z_{t}=z_{t}+\frac{8}{3}]$.

    For every $i\in \{1, \dots, t\}$, the random variables $Z_i$ and $Z_{i+1}$ are identically distributed. By definition, $Z_1$ and $Y$ also have the same distribution. Therefore,
    \begin{align*}
    &\Pr[\svt(\sigma')=t \wedge E^*]
    \\&\le
    e^{7\eps/6}\SPACESUM
    \sum_{\substack{y\le (\tau-1)/2\\z_0\ge -(\tau-1)/2\\(z_0,z_1,\dots,z_t)\in W'}}\SPACESUM
    \Pr[(Z_0,\dots,Z_{t+1})=(z_0+1,y,z_1,\dots,z_{t-1},z_t+\frac{8}{3})].
    \end{align*}
    We will show that
    \begin{equation}
    \label{eq:top-t-to-t+1-mapping}
    \begin{aligned}
        y\le (\tau-1)/2,\; &z_0\ge -(\tau-1)/2,\; (z_0,z_1,\dots,z_t)\in W'\\
        &\Longrightarrow
    (z_0+1,y,z_1,\dots,z_{t-1},\,z_t+\frac{8}{3})\in W .
    \end{aligned}
    \end{equation}
    Assuming~\eqref{eq:top-t-to-t+1-mapping} holds and using the fact that  the mapping
    \[
    (y,z_0,z_1,\dots,z_t)
    \;\mapsto\;
    (z_0+1,y,z_1,\dots,z_{t-1},z_t+\frac{8}{3})
    \]
    is injective, we obtain
    \begin{align*}
    \Pr[\svt(\sigma')=t \wedge E^*]
    &\leq
    e^{7\eps/6}
    \sum_{(z_0,\dots,z_{t+1})\in W}
    \Pr[(Z_0,\dots,Z_{t+1})=(z_0,\dots,z_{t+1})]
    \\
    &=
    e^{7\eps/6}\Pr[\svt(\sigma)=t+1].
    \end{align*}
    It remains to prove~\eqref{eq:top-t-to-t+1-mapping} holds. Let $z_0\ge -(\tau-1)/2$, $y\le (\tau-1)/2$, and $(z_0,z_1,\dots,z_t)\in W'$. Since $\sigma$ is a $1$-shift $1$-neighbor of $\sigma'$, we have $x_1\leq 1$. Thus, 
    $$x_1+y\le x_1+(\tau-1)/2\leq \tau - (\tau-1)/2< \tau + z_0 + 8\ln\left(1+\frac{12}{\eps}\right)/\eps+1.$$
    Moreover,
    as $ (z_0,z_1,\dots,z_t)\in W'$,
    we have (a) $x_i'+z_i\le \tau+8\ln\left(i+\frac{12}{\eps}\right)/\eps+z_0$ for all $i\le t-1$ and (b) $x_t'+z_t>\tau+8\ln\left(t+\frac{12}{\eps}\right)/\eps+z_0$. Using $|x_{i+1}-x_i'|\le 1$, we have $x_{i+1}+z_i\le \tau+8\ln\left(i+\frac{12}{\eps}\right)/\eps+z_0+1$ for all $i\le t-1$ and $x_{t+1}+z_t>\tau+8\ln\left(t+\frac{12}{\eps}\right)/\eps+z_0-1$. We know that $\ln\left(i+\frac{12}{\eps}\right)<\ln\left(i+1+\frac{12}{\eps}\right)$ and, as discussed before, 
    $$\frac{8}{\eps}\left(\ln\left(t+1+\frac{12}{\eps}\right) - \ln\left(t+\frac{12}{\eps}\right)\right)\leq \frac{2}{3}.$$
    Therefore, (a') $x_{i+1}+z_i\le \tau+8\ln\left(i+1+\frac{12}{\eps}\right)/\eps+(z_0+1)$ for all $i\le t-1$ and (b') $x_{t+1}+z_t>\tau+8\ln\left(t+1+\frac{12}{\eps}\right)/\eps+z_0-\frac{5}{3}$. Equivalently, $x_{t+1}+(z_t+\frac{8}{3})>\tau+8\ln\left(t+1+\frac{12}{\eps}\right)/\eps+(z_0+1)$. Hence, if $(Z_0,\dots,Z_{t+1})=(z_0+1,y,z_1,\dots,z_{t-1},z_t+\frac{8}{3})$, then $\svt(\sigma)=t+1$. Thus, $(z_0+1,y,z_1,\dots,z_{t-1},z_t+\frac{8}{3})\in W$, completing the proof of part~\ref{item:t-leq-t+1}.
    }
\end{proof}

\section{The \texorpdfstring{$\simecc$}{SimECC} mechanism}\label{sec:simecc}

In this section, we introduce a simple partitioning subroutine, denoted by $\simpart$, which---unlike the partitioning subroutine $\partt$ from Section~\ref{sec:ecc}---does not access the private inputs of $\ecc$. Similar to $\partt$, the subroutine $\simpart$ outputs either $\top$ or $\bot$ upon each update call and can be used by $\ecc$ to indicate checkpoints. We then explain how to construct the simpler continual counter mechanism $\simecc$ described in \Cref{thm:informal-ub} by replacing $\partt$ in $\ecc$ with $\simpart$.

The mechanism $\simpart$, described in Algorithm~\ref{alg:simple-partition}, is initialized with privacy parameter $\eps > 0$ and step-size parameter $\mu \in \N$. It first draws a sample $Z_1 \sim \DLap(2/\eps)$ and sets the first (future) checkpoint to $t_1 = \max\{1,\, \mu + Z_1\}$.
It also initializes a counter $\ell = 1$, tracking the number of samples drawn so far, and a time variable $t = 0$. We note that since $\ln(1)=0$,
\[
t_1 = \max\{1,\, \mu + \lceil 4\ln(\ell)/\eps\rceil + Z_\ell\}
\]
at initialization.

Upon each update call, $\simpart$ proceeds as follows. It increments the time variable $t \leftarrow t+1$, and checks whether $t$ has reached the next checkpoint $t_\ell$. If $t < t_\ell$, it outputs $\bot$. If $t = t_\ell$, it increments $\ell \leftarrow \ell + 1$, draws a new sample $Z_\ell \sim \DLap(2/\eps)$, sets the next checkpoint to
\[
t_\ell = t + \max\{1,\, \mu + \lceil 4\ln(\ell)/\eps\rceil + Z_\ell\},
\]
and outputs $\top$. Note that, by construction, the current time $t$ is always at most the next checkpoint $t_\ell$.

\begin{notation}
    For $T \in \N$, we denote by $\simpart(T)$ the sequence of time steps at which $\simpart$ outputs $\top$ during its execution for $T$ steps.
\end{notation}

\begin{algorithm}
    \begin{algorithmic}
        \Function{Initialize}{Privacy parameter $\eps>0$, mean step size $\mu\in \N$}
            \State $t\leftarrow 0$
            \State $\ell\leftarrow 1$
            \State $Z_1\sim \DLap(2/\eps)$
            \State $t_\ell \leftarrow \max\{1, \mu + Z_1\}$
        \EndFunction
        \Function{Update}{}
            \State $t \leftarrow t+1$
            \If{$t=t_\ell$}
                \State $\ell\leftarrow \ell+1$
                \State $Z_\ell\sim \DLap(2/\eps)$
                \State $t_\ell \leftarrow t + \max\{1,\, \mu + \lceil 4\ln(\ell)/\eps\rceil + Z_\ell\}$
                \State Output $\top$
            \Else\Comment{$t<t_\ell$}
                \State Output $\bot$
            \EndIf
        \EndFunction
    \end{algorithmic}
    \caption{$\simpart$}\label{alg:simple-partition}
\end{algorithm}

We next show the following lemma for $\simpart$, which is analogous to Lemma~\ref{lem:partition} for the partitioning subroutine $\partt$.

\begin{lemma}\label{lem:simple-partition}
    Let $\simpart$ be the mechanism described in Algorithm~\ref{alg:simple-partition} with privacy parameter $\eps > 0$ and step-size parameter $\mu > 3$. Then there exists an event $E$ such that
    \[
    \Pr[E] \ge 1 - \frac{\pi^2}{12} \cdot e^{-\eps(\mu - 3)/2},
    \]
    and for every $T \in \N$, the following statements hold. Let $\calC_T$ and $\calC_T^+$ be as in Definition~\ref{def:f}, and define
    \[
    \calC_T^{++} = \left\{ (t_1, \dots, t_k) \in \calC_T^+ \mid t_\ell - t_{\ell-1} \ge 3 \text{ for all } \ell \in [k] \right\}.
    \]
    Then:
    \begin{enumerate}[label=(\roman*)]
        \item \label{item:simple-partition-zero-prob} For every checkpoint sequence $t_{1:k}\in\calC_T\setminus\calC_T^{++}$,
        \[
        \Pr\left[\simpart(T)=t_{1:k}\wedge E\right] = 0.
        \]
        \item \label{item:simple-partition-dp} For every pair of edit-neighboring sequences $\sigma$ and $\sigma'$ in $([0, 1]\cup\{\bot\})^T$ and every $t_{1:k}\in\calC_T^{++}$,
        \[
        \Pr\left[\simpart(T)=t_{1:k}\right]
        \leq
        e^{2\cdot\eps}
        \Pr\left[\simpart(T) = f_{\sigma\to\sigma'}(t_{1:k})\right],
        \]
        where $f_{\sigma\to\sigma'}:\calC_T^+\to \calC_T$ is defined in Definition~\ref{def:f}.
    \end{enumerate}
\end{lemma}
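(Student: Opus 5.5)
The plan is to exploit that $\simpart$ does not read its input. Its checkpoint process is a renewal-type sequence of independent gaps. Write $g_1 := \max\{1,\mu+Z_1\}$ and $g_\ell := \max\{1,\mu+\lceil 4\ln(\ell)/\eps\rceil+Z_\ell\}$ for $\ell\ge 2$; since $\ln 1=0$, one formula covers all $\ell\ge1$. With $t_0=0$ and $d_\ell := t_\ell-t_{\ell-1}$, the event $\simpart(T)=t_{1:k}$ is exactly $\{g_\ell=d_\ell\ \forall \ell\in[k]\}\wedge\{g_{k+1}>T-t_k\}$. By independence of the $Z_\ell$,
\[
\Pr[\simpart(T)=t_{1:k}] \;=\; \prod_{\ell=1}^{k}\Pr[g_\ell=d_\ell]\cdot \Pr[g_{k+1}>T-t_k].
\]
Both parts of the lemma will be read off from this product formula. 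The same formula applies to both $\sigma$ and $\sigma'$, since the distribution of $\simpart(T)$ does not depend on the input stream.

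\textbf{Event $E$ and part (i).} I would define $E:=\bigwedge_{\ell\ge1}\{g_\ell\ge 3\}$, which is the same as $Z_\ell \ge 3-\mu-\lceil 4\ln(\ell)/\eps\rceil$ for all $\ell$. Part (i) is then immediate: every $t_{1:k}\notin\calC_T^{++}$ has some $d_\ell\le 2$ with $\ell\le k$, which forces $g_\ell\le2$ and contradicts $E$. For the probability bound I would apply \Cref{cor:discrete-laplace-privacy-accuracy}(2) with $b=2/\eps$ and threshold $\mu-2+\lceil4\ln\ell/\eps\rceil$. This gives
\[
\Pr[g_\ell\le2]\le \tfrac12 e^{-\eps(\mu-3+4\ln(\ell)/\eps)/2}=\tfrac12 e^{-\eps(\mu-3)/2}\,\ell^{-2}.
\]
A union bound over $\ell$, using $\sum_\ell \ell^{-2}=\pi^2/6$, yields $\Pr[\overline E]\le \tfrac{\pi^2}{12}e^{-\eps(\mu-3)/2}$. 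The condition $\mu>3$ is only needed to make this threshold a positive integer, so that the corollary applies.

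\textbf{Part (ii).} Fix $t_{1:k}\in\calC_T^{++}$ and let $t'_{1:k}=f_{\sigma\to\sigma'}(t_{1:k})$.
\begin{itemize}
\item If $p\ge q-2$, then $f$ is the identity and the inequality is trivial.
\item If $p<q-2$ and $\sigma$ is an insertion neighbor of $\sigma'$, then $d'_{p+1}=d_{p+1}-1$ and $d'_{q-1}=d_{q-1}+1$. If instead $\sigma'$ is the insertion neighbor, the signs are reversed. In both cases all other gaps $d'_\ell=d_\ell$ are unchanged.
\end{itemize}
Since $q\le k+1$, we have $q-2\le k-1$, so $t'_k=t_k$ and the tail factor $\Pr[g_{k+1}>T-t_k]$ is identical on both sides. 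Thus only the two factors at indices $p+1$ and $q-1$ change. For these I would use \Cref{cor:discrete-laplace-privacy-accuracy}(1) with $b=2/\eps$, which gives $\Pr[g_\ell=d]\le e^{\eps/2}\Pr[g_\ell=d\pm1]$. This requires that the $\max\{1,\cdot\}$ clipping does not interfere. It does not: $d_\ell\ge3$ and $d_\ell\pm1\ge2$, and clipping only alters the mass at the value $1$, so on $\{2,3,\dots\}$ the law of $g_\ell$ is a shifted discrete Laplace. The product of the two ratios is at most $e^{\eps}\le e^{2\eps}$, which gives (ii).

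\textbf{Main obstacle.} The only delicate points are the bookkeeping of which gaps $f$ alters, and the boundary cases. In particular, I must confirm that the index $k+1$ and the tail factor are untouched, and that the shifted gaps stay at least $2$, so that both unclipping and $t'_{1:k}\in\calC_T$ hold. Both follow from $t_{1:k}\in\calC_T^{++}$.
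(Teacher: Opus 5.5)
Your proposal is correct and follows essentially the same route as the paper. It uses the same event $E$ (all gaps at least $3$), the same discrete-Laplace tail bound with a union bound, and the same product decomposition in which only the factors at indices $p+1$ and $q-1$ change, each by a factor $e^{\eps/2}$, while the tail factor is untouched because $t'_k=t_k$. The only cosmetic difference is that you work out the reverse insertion direction explicitly where the paper appeals to symmetry, and both arguments actually give $e^{\eps}\le e^{2\eps}$.
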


\begin{proof}
    Consider an execution of $\simpart$ with privacy parameter $\eps$ and step-size $\mu$. For each $\ell \in \N$, let $Z_\ell$ denote the $\ell$-th (potential) sample drawn by this mechanism from $\DLap(2/\eps)$. Define the event
    \[
        E := \bigwedge_{\ell=1}^{\infty} \left( \mu + Z_\ell + \lceil 4\ln(\ell)/\eps\rceil \ge 3 \right).
    \]
    By Corollary~\ref{cor:discrete-laplace-privacy-accuracy} and a union bound, we have
    \begin{align*}
        \Pr[E]
        &\ge 1 - \sum_{\ell=1}^{\infty} \Pr\!\left[ Z_\ell < -\mu - \lceil 4\ln(\ell)/\eps\rceil + 3 \right] \\
        &\ge 1 - \sum_{\ell=1}^{\infty} \frac{1}{2} \exp\!\left(-\frac{\eps}{2}\left(\mu - 3 + \frac{4\ln(\ell)}{\eps}\right)\right) \\
        &= 1 - \frac{1}{2} e^{-\frac{\eps}{2}(\mu - 3)} \sum_{\ell=1}^{\infty} \frac{1}{\ell^2}
        = 1 - \frac{\pi^2}{12} \cdot e^{-\frac{\eps}{2}(\mu - 3)},
    \end{align*}
    where we used the fact that $\sum_{\ell=1}^\infty \frac{1}{\ell^2}= \frac{\pi^2}{6}$.
    
    \medskip
    Fix $T\in\N$ and consider an execution of $\simpart$ for $T$ steps. By construction, if the event $E$ holds, then the gap between any two consecutive checkpoints is at least $3$. Therefore, for every $t_{1:k} \in \calC_T \setminus \calC_T^{++}$,
    \[
    \Pr\!\left[\simpart(T) = t_{1:k} \wedge E\right] = 0,
    \]
    which proves part~\ref{item:simple-partition-zero-prob}.

    \medskip
    Let $\sigma$ and $\sigma'$ be two edit-neighboring sequences in $([0,1] \cup \{\bot\})^T$, meaning that one of these sequences is an insertion neighbor of the other one. We prove part~\ref{item:simple-partition-dp} for the case where $\sigma$ is an insertion neighbor of $\sigma'$; the other case follows by symmetry.

    Assume $\sigma$ is an insertion neighbor of $\sigma'$ at step $i \in [T]$. Fix a checkpoint sequence $t_{1:k} \in \calC_T^{++}$, and let $p,q \in [k+1]$ be the indices used in the definition of $f_{\sigma \to \sigma'}(t_{1:k})$. If $p > q-2$, then $f_{\sigma \to \sigma'}(t_{1:k}) = t_{1:k}$; hence,
    \[
    \Pr[\simpart(T) = t_{1:k}]
    =
    \Pr[\simpart(T) = f_{\sigma \to \sigma'}(t_{1:k})].
    \]
    Assume now that $p \le q-2$. In this case, $f_{\sigma\to\sigma'}(t_{1:k})=(t_1,\dots,t_p, t_{p+1}-1,\dots,t_{q-2}-1, t_{q-1},\dots,t_k)$. By the design of $\simpart$, $\simpart(T) = t_{1:k}$ if and only if the following conditions hold:
    \begin{itemize}
        \item[(a)] The mechanism $\simpart$ draws exactly $k+1$ samples $Z_1, \dots, Z_{k+1}$ during its $T$ steps of execution.
        \item[(b)] Let $t_0 := 0$. For each $\ell \in [k]$,
        \[
        t_\ell = t_{\ell-1} + \max\!\left\{1,\, \mu + \lceil 4\ln(\ell)/\eps\rceil +Z_\ell \right\}.
        \]
        \item[(c)] The last checkpoint scheduled for the future exceeds $T$, i.e.,
        \[
        t_k + \max\!\left\{1,\, \mu + \lceil4\ln(k+1)/\eps\rceil + Z_{k+1} \right\} > T.
        \]
    \end{itemize}
    Therefore,
    \begin{align*}
        \Pr[\simpart(T)=t_{1:k}]
        &= \Pr\!\left[t_k + \max\{1,\, \mu + \lceil 4\ln(k+1)/\eps\rceil + Z_{k+1}\} > T \right]\\
        &\qquad\cdot \prod_{\ell=1}^k \Pr\!\left[t_{\ell-1} + \max\{1,\, \mu + \lceil4\ln(\ell)/\eps\rceil + Z_\ell\} = t_\ell \right].
    \end{align*}
    Since $t_{1:k}\in\calC_T^{++}$, for every $\ell\in[k]$, $t_\ell-t_{\ell-1}\geq 3>1$. Thus, $\max\{1, \mu + \lceil4\ln(\ell)/\eps\rceil +Z_\ell\}=t_\ell-t_{\ell-1}$ if and only if $\mu + \lceil4\ln(\ell)/\eps\rceil + Z_\ell=t_\ell-t_{\ell-1}$. Therefore,
    \begin{align*}
        \Pr[\simpart(T)=t_{1:k}]
        &= \Pr\!\left[t_k + \max\{1,\, \mu + \lceil4\ln(k+1)/\eps\rceil + Z_{k+1}\} > T \right]
        \cdot \prod_{\ell=1}^k \Pr\!\left[t_{\ell-1} + \mu + \lceil4\ln(\ell)/\eps\rceil + Z_\ell = t_\ell \right]\\
        &= \Pr\!\left[t_k + \max\{1,\, \mu + \lceil4\ln(k+1)/\eps\rceil + Z_{k+1}\} > T \right]
        \cdot \prod_{\ell=1}^k \Pr\!\left[Z_\ell = t_\ell - t_{\ell-1} - \mu - \lceil4\ln(\ell)/\eps\rceil\right]
    \end{align*}
    By Corollary~\ref{cor:discrete-laplace-privacy-accuracy}, for every $\ell\in [k]$, we have
    \begin{align*}
        \Pr\!\left[Z_\ell = t_\ell - t_{\ell-1} - \mu - \lceil4\ln(\ell)/\eps\rceil \right]
        \leq e^{\eps/2}\Pr\!\left[Z_\ell = t_\ell - t_{\ell-1} - \mu - \lceil 4\ln(\ell)/\eps\rceil +1\right],
    \end{align*}
    and
    \begin{align*}
        \Pr\!\left[Z_\ell = t_\ell - t_{\ell-1} - \mu - \lceil4\ln(\ell)/\eps\rceil\right]
        &\leq e^{\eps/2}\Pr\!\left[Z_\ell = t_\ell - t_{\ell-1} - \mu - \lceil4\ln(\ell)/\eps\rceil -1\right].
    \end{align*}
    Hence,
    \begin{align*}
        \Pr[\simpart(T)=t_{1:k}]
        &= \Pr\!\left[t_k + \max\{1,\, \mu + \lceil4\ln(k+1)/\eps\rceil + Z_{k+1}\} > T \right]\\
        &\qquad\times \prod_{\ell=1}^k \Pr\!\left[Z_\ell = t_\ell - t_{\ell-1} - \mu - \lceil4\ln(\ell)/\eps\rceil\right]\\
       &\le e^\eps \, \Pr\!\left[t_k + \max\!\left\{1,\, \mu + \lceil4\ln(k+1)/\eps\rceil + Z_{k+1}\right\} > T \right] \\
        &\qquad \times \Pr\!\left[Z_{p+1} = t_{p+1} - t_p - \mu - \lceil4\ln(p+1)/\eps\rceil - 1\right] \\
        &\qquad \times \Pr\!\left[Z_{q-1} = t_{q-1} - t_{q-2} - \mu - \lceil4\ln(q-1)/\eps\rceil + 1\right] \\
        &\qquad \times \prod_{\ell \in [k]\setminus \{p+1,q-1\}}
        \Pr\!\left[Z_\ell = t_\ell - t_{\ell-1} - \mu - \lceil4\ln(\ell)/\eps\rceil \right].
    \end{align*}
    Since $t_{1:k}\in \calC_T^{++}$, we have (a) $t_\ell-t_{\ell-1}\geq 3$ for every $\ell\in \{1, \dots, p\}\cup\{q, \dots, k\}$; (b) $(t_{p+1} - 1) -t_p\geq 2$; (c) $t_{q-1} -(t_{q-2}-1)\geq 4$; and (d) $(t_\ell-1)-(t_{\ell-1}-1)= t_\ell-t_{\ell-1}\geq 3$ for every $\ell\{p+2, \dots, q-2\}$. Since all of these differences are strictly larger than $1$, the same as above, we can drop the maximum with $1$ in the following equality:
    \begin{align*}
        &\Pr[\simpart(T)= f_{\sigma\to\sigma'}(t_{1:k})]\\
        &=\Pr[\simpart(T)=(t_1,\dots,t_p, t_{p+1}-1,\dots,t_{q-2}-1, t_{q-1},\dots,t_k)]\\
        &= \Pr\!\left[t_k + \max\{1,\, \mu + \lceil4\ln(k+1)/\eps\rceil + Z_{k+1}\} > T \right]
        \\&\qquad \times  \Pr\!\left[Z_{p+1} = (t_{p+1} - 1) - t_p - \mu - \lceil4\ln(p+1)/\eps\rceil\right]
        \\&\qquad \times \Pr\!\left[Z_{q-1} = t_{q-1} - (t_{q-2} - 1) - \mu - \lceil4\ln(q-1)/\eps\rceil\right]
        \\&\qquad \times \prod_{\ell\in [k]\setminus \{p+1,q-1\}}\Pr\!\left[Z_\ell = t_\ell - t_{\ell-1} - \mu - \lceil4\ln(\ell)/\eps\rceil\right]
    \end{align*}
    \sloppy
    Note that since $q \le k+1$ and the mapping $f_{\sigma \to \sigma'}$ leaves all checkpoints from index $q-1$ onward unchanged, the final checkpoint in both sequences $t_{1:k}$ and $f_{\sigma \to \sigma'}(t_{1:k})$ is $t_k$. Consequently, the term $\Pr\!\left[t_k + \max\{1,\, \mu + \lceil4\ln(k+1)/\eps\rceil + Z_{k+1}\} > T \right]$ appears identically in the equality relation for $\Pr[\simpart(T)= f_{\sigma\to\sigma'}(t_{1:k})]$ and the inequality relation for $\Pr[\simpart(T)= t_{1:k}]$. Combining these two implies part~\ref{item:simple-partition-dp}.
\end{proof}

\medskip\noindent\textbf{$\simecc$.} 
The mechanism $\simecc$ as described before is simply an instantiation of $\ecc$ (Algorithm~\ref{alg:ecc}) with the $\partt$ subroutine replaced by a simplified checkpoint generating mechanism $\simpart$. 
At initialization, instead of setting the threshold parameter $\tau=1+\frac{408\ln(\frac{\pi^2}{2 \delta})}{\eps}$, the mechanism $\ecc$ sets a step-size parameter $\mu=3+ \lceil102\ln(\frac{2\pi^2}{3\delta})/\eps\rceil$, and instead of instantiating $\partt$ with parameters $\eps/51$ and $\tau$, it initializes $\simpart$ with parameters $\eps/51$ and $\mu$. The update procedure remains unchanged, except that each call to $\partt(x_t)$ is replaced by a call to $\simpart()$, which does not access the input.

\medskip\noindent\textbf{Privacy analysis for $\simecc$.}
The privacy analysis of the modified $\ecc$ is identical to that in Section~\ref{sec:ecc-privacy}, with Lemma~\ref{lem:partition} replaced by Lemma~\ref{lem:simple-partition}. Although the two lemmas slightly differ in their statements, Lemma~\ref{lem:simple-partition} provides guarantees that are at least as strong as those provided by Lemma~\ref{lem:partition} and used in the original privacy analysis of $\ecc$. Thus, that analysis holds for the modified $\ecc$ as well. We highlight these differences below:
\begin{itemize}
    \item \emph{Probability of the good event.}
    Lemma~\ref{lem:partition} guarantees an event $E$ with
    \[
    \Pr[E] \ge 1 - \frac{\pi^2}{4} e^{-\eps(\tau-1)/8},
    \]
    whereas Lemma~\ref{lem:simple-partition} ensures
    \[
    \Pr[E] \ge 1 - \frac{\pi^2}{12} e^{-\eps(\mu - 3)/2}.
    \]
    For the chosen parameters $\tau=1+\frac{408\ln(\frac{\pi^2}{2 \delta})}{\eps}$ and $\mu = 3+\lceil102\ln(\frac{2\pi^2}{3 \delta})/\eps\rceil$, the latter bound is at least as strong as the former one (i.e., yields a smaller failure probability). Hence, the guarantee on $\Pr[E]$ in Lemma~\ref{lem:simple-partition} implies the one in Lemma~\ref{lem:partition} and suffices for the analysis.
    \item \emph{Stability under neighboring inputs.}
    Lemma~\ref{lem:partition} bounds
    \[
    \Pr[\partt(\sigma)=t_{1:k} \wedge E]
    \le
    e^{17\eps/3}\,
    \Pr\!\left[\partt(\sigma') = f_{\sigma \to \sigma'}(t_{1:k})\right],
    \]
    whereas Lemma~\ref{lem:simple-partition} gives the stronger bound
    \[
    \Pr[\simpart(T)=t_{1:k}]
    \le
    e^{2\eps}\,
    \Pr\!\left[\simpart(T)=f_{\sigma \to \sigma'}(t_{1:k})\right].
    \]
    Since $\Pr[\simpart(T)=t_{1:k} \wedge E] \le \Pr[\simpart(T)=t_{1:k}]$ and $2\eps \le 17\eps/3$, the guarantee of Lemma~\ref{lem:simple-partition} directly implies the bound provided by Lemma~\ref{lem:partition} and required in the original analysis.
\end{itemize}
Putting everything together, we get the following privacy guarantee.
\begin{theorem}\label{thm:simecc-privacy}
    Let $\eps>0$ and $0<\delta\le 1$. Suppose there exists a continual counting mechanism $\cc$ that is $(4\eps/27, e^{-19\eps/27}\delta/16)$-DP with respect to the $1$-step $1$-neighbor relation, defined in Definition~\ref{def:k-step-delta-neighboring}. Let $\bcc$ denote the biased version of $\cc$ as constructed in Lemma~\ref{lem:cc-biased}. Then, using $\cc$ and $\bcc$ as black boxes, the continual mechanism $\simecc$, described in the text above, is $(\eps,\delta)$-DP with respect to the edit neighbor relation, defined in Definition~\ref{def:queue-neighboring}.
\end{theorem}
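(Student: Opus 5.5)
The plan is to rerun the proof of Theorem~\ref{thm:ecc-privacy} essentially verbatim, with Lemma~\ref{lem:simple-partition} taking the place of Lemma~\ref{lem:partition}. Fix $T$, two edit-neighboring streams $\sigma,\sigma'$, and a measurable set $\calY$. Let $E$ be the event of Lemma~\ref{lem:simple-partition} for $\simpart$ with parameters $\parteps=\eps/51$ and $\mu=3+\lceil 102\ln(\frac{2\pi^2}{3\delta})/\eps\rceil$. Split $\Pr[\simecc(\sigma)\in\calY]$ into the part on $E$ and the part on $\overline{E}$. Since $\parteps(\mu-3)/2\ge \ln(\frac{2\pi^2}{3\delta})$, we get $\Pr[\overline E]\le \frac{\pi^2}{12}\cdot\frac{3\delta}{2\pi^2}=\delta/8\le\delta/2$.

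Next, condition on the checkpoint sequence $\simpart(T)=t_{1:k}$. Here it matters that $\simpart$ does not read the input. Given $t_{1:k}$, the inputs to $\cc$ and $\bcc$ are the deterministic sequences $\intsum(\sigma,t_{1:k})$ and $d(t_{1:k})$ of Observation~\ref{obs:inputs-cc-bcc}. The output is then $\postfunc$ of the outputs of $\cc$ and $\bcc$, exactly as in Observation~\ref{obs:ecc-post-cc-and-bcc}; that observation only uses the checkpoints and the schedule list, so it carries over unchanged. Also, the output conditioned on $t_{1:k}$ does not depend on whether $E$ holds, since $E$ concerns only the $Z_\ell$.

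By part (i) of Lemma~\ref{lem:simple-partition}, every $t_{1:k}$ with positive probability together with $E$ lies in $\calC_T^{++}\subseteq\calC_T^+$, so $f_{\sigma\to\sigma'}$ is defined on it. Part (ii) gives
\[
\Pr[\simpart(T)=t_{1:k}\wedge E]\le e^{2\parteps}\Pr[\simpart(T)=f_{\sigma\to\sigma'}(t_{1:k})]\le e^{17\parteps/3}\Pr[\simpart(T)=f_{\sigma\to\sigma'}(t_{1:k})].
\]
Observation~\ref{obs:neighbor-mapping-property} is purely combinatorial in $\sigma,\sigma',t_{1:k}$, so it still says that the $\intsum$ sequences are $4$-step $1$-neighbors and the $d$ sequences are $2$-step $1$-neighbors. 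Combining group privacy (Lemma~\ref{lem:group-privacy}), basic composition, and post-processing with $\cceps=\bcceps=4\eps/27$, $\delta_C=e^{-19\eps/27}\delta/16$ and $\delta_B=2\delta_C$ (the latter via Lemma~\ref{lem:cc-biased}) gives total multiplicative factor $e^{17\parteps/3+4\cceps+2\bcceps}=e^{\eps}$ and additive term at most $\delta/2$.

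Finally, sum over $t_{1:k}$. Injectivity of $f_{\sigma\to\sigma'}$ (Observation~\ref{obs:injective}) together with nonnegativity of the summands bounds the sum by $e^{\eps}\Pr[\simecc(\sigma')\in\calY]+\delta/2$. Adding $\Pr[\overline E]\le\delta/2$ yields the claim, and the reverse inequality follows symmetrically. The only point needing care, which I expect to be the main obstacle though it is a mild one, is confirming that nothing in the earlier proof used data dependence of $\partt$ or the specific form of Lemma~\ref{lem:partition} beyond its two numbered conclusions. One also needs $\calC_T^{++}\subseteq\calC_T^+$ so that the domain of $f_{\sigma\to\sigma'}$ matches.
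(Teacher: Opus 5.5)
Your proposal is correct and matches the paper's proof: the paper likewise reruns the proof of Theorem~\ref{thm:ecc-privacy} with Lemma~\ref{lem:simple-partition} in place of Lemma~\ref{lem:partition}, noting only that the good event has smaller failure probability and that $e^{2\parteps}\le e^{17\parteps/3}$. Your explicit restriction to checkpoint sequences of positive probability under $E$, hence lying in $\calC_T^{++}$, also handles the sequences in $\calC_T^+\setminus\calC_T^{++}$, which the paper's black-box substitution leaves implicit.
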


\medskip\noindent\textbf{Accuracy analysis for \texorpdfstring{$\simecc$}{SimECC}.} The accuracy analysis of the $\ecc$ when executed with $\simpart$ is significantly simpler than when it is executed with \partt.
In \Cref{lem:simpart-acc} we establish that true checkpoints as defined by $\simpart$ cannot be too far apart.
Then, in \Cref{thm:ecc-simpart-accuracy}, we show that it follows from the bound between the gaps between true checkpoints, and the error bound of $\bcc$, that the next noisy checkpoint following any given time-step $t$ cannot occur too late.
In particular, this implies that the most recent update prior to $t$ could not have been too far in the past either, and, by the $1$-sensitivity of the running sum, the error at time-step $t$ is also bounded.
The caveat here is that the error does not adapt to the sparsity of the stream, as it does when we execute $\ecc$ with $\simpart$.

\begin{lemma}\label{lem:simpart-acc}
    Assume $\beta \leq 1/2$, and let $\eps_p$ denote the privacy parameter with which $\simpart$ is executed. There exists a $(1-\beta)$-probability event $\calE_\simpart$ defined over the random 
    coins of $\simpart$ such that, conditioned 
    on $\calE_\simpart$, for every $\ell \in \N$,
    \[
    t_\ell - t_{\ell-1} \;\leq\; \mu + O\!\left(\frac{\ln(\ell/\beta)}{\eps_p}\right).
    \]
\end{lemma}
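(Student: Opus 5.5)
The plan is to read the gap between consecutive checkpoints directly off the sampling rule of \simpart and then apply a union bound over the discrete Laplace tail bounds. By construction (Algorithm~\ref{alg:simple-partition}), for every $\ell\geq 1$ we have
\[
t_\ell - t_{\ell-1} = \max\{1,\, \mu + \lceil 4\ln(\ell)/\eps_p\rceil + Z_\ell\},
\]
where $Z_\ell\sim\DLap(2/\eps_p)$ are i.i.d. For $\ell=1$ the ceiling term vanishes, since $\ln 1 = 0$. The gap is therefore a deterministic function of $Z_\ell$ alone. Because $\mu\geq 1$, the gap is at most $\mu + \lceil 4\ln(\ell)/\eps_p\rceil + \max\{Z_\ell, 0\}$, so it suffices to bound the upper tail of each $Z_\ell$.

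I will use the failure probabilities $\beta_\ell = \tfrac{6}{\pi^2}\tfrac{\beta}{\ell^2}$, which satisfy $\sum_\ell\beta_\ell=\beta$. I define $\calE_\simpart$ to be the event that $Z_\ell < \tau_\ell$ for every $\ell\in\N$, where $\tau_\ell := 1 + \lceil (2/\eps_p)\ln(1/(2\beta_\ell))\rceil$. By part~2 of Corollary~\ref{cor:discrete-laplace-privacy-accuracy} with scale $b=2/\eps_p$, we get $\Pr[Z_\ell\geq\tau_\ell]\leq \tfrac12 e^{-(\tau_\ell-1)\eps_p/2}\leq \beta_\ell$. A union bound then gives $\Pr[\calE_\simpart]\geq 1-\beta$.

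The hypothesis $\beta\leq 1/2$ ensures $1/(2\beta_\ell)\geq 1$. This makes $\tau_\ell\in\N$ with nonnegative logarithm, so the corollary applies. It also lets constants such as $\ln(\pi^2/3)$ be absorbed into $O(\ln(1/\beta))$.

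Conditioned on $\calE_\simpart$, for every $\ell$ we obtain
\[
t_\ell - t_{\ell-1} \leq \mu + \frac{4\ln\ell}{\eps_p} + 1 + \frac{2}{\eps_p}\ln\frac{\pi^2\ell^2}{12\beta} + 1 = \mu + O\!\left(\frac{\ln(\ell/\beta)}{\eps_p}\right).
\]
The additive constants $2$ are absorbed because, under $\beta\leq 1/2$, the quantity $\ln(\ell/\beta)/\eps_p$ is at least a constant times $1/\eps_p$. If one only wants small $\eps_p$, this is automatic; otherwise the $O(\cdot)$ can be read as also hiding an additive $O(1)$.

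There is one bookkeeping point. At the horizon $T$ the true checkpoint may be truncated, and there $t_\ell$ is capped at $T$. Capping only decreases gaps, so the bound persists.

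None of these steps is a real obstacle. The only delicate part is getting the integer/ceiling conventions in the discrete Laplace tail bound right, and making sure the $\ell=1$ case and the $\max\{1,\cdot\}$ are handled. Both only help the upper bound.
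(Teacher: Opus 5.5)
Your proposal is correct and follows the paper's proof essentially step for step. It uses the same failure budget $\beta_\ell = 6\beta/(\pi^2\ell^2)$ and the same integer cutoff: your $\tau_\ell$ coincides with the paper's $\xi_\ell = \lceil 1 + (2/\eps_p)\ln(1/(2\beta_\ell))\rceil$. It then applies the discrete Laplace tail bound with a union bound and reads the gap bound directly off the sampling rule $t_\ell - t_{\ell-1} = \max\{1, \mu + \lceil 4\ln(\ell)/\eps_p\rceil + Z_\ell\}$, as the paper does. Your remark about truncation at the horizon $T$ is unnecessary, since \simpart never caps checkpoints, but it does no harm.
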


\begin{proof}
    Recall $t_\ell - t_{\ell-1} = \max\{1,\, \mu + \lceil4\ln(\ell)/\eps_p\rceil + Z_\ell\}$ 
    where $Z_\ell \sim \DLap(2/\eps_p)$ i.i.d. Recall $\beta_\ell := 6\beta/(\pi^2\ell^2)$, 
    so $\sum_{\ell=1}^\infty \beta_\ell = \beta$ and $\beta_\ell \leq \beta \leq 1/2$. Define
    \[
    \xi_\ell \;:=\; \left\lceil 1 + \frac{2}{\eps_p}\ln\!\frac{1}{2\beta_\ell}\right\rceil \in \N,
    \]
    where $\xi_\ell \geq 1$ since $2\beta_\ell \leq 1$. By Corollary~\ref{cor:discrete-laplace-privacy-accuracy} 
    and $\xi_\ell - 1 \geq (2/\eps_p)\ln(1/(2\beta_\ell))$,
    \[
    \Pr[Z_\ell \geq \xi_\ell] \;\leq\; \tfrac{1}{2}\exp\!\left(-(\xi_\ell - 1)\eps_p/2\right) 
    \;\leq\; \tfrac{1}{2} \cdot 2\beta_\ell \;=\; \beta_\ell.
    \]
    Let $\calE_\simpart := \bigwedge_{\ell=1}^\infty \{Z_\ell < \xi_\ell\}$; a union bound 
    gives $\Pr[\calE_\simpart] \geq 1 - \sum_\ell \beta_\ell = 1 - \beta$.
    
    Condition on $\calE_\simpart$, so $Z_\ell \leq \xi_\ell - 1$ (as $Z_\ell, \xi_\ell$ are integer-valued).
    Define $B_\ell := \mu + \lceil4\ln(\ell)/\eps_p\rceil + \xi_\ell - 1$. Since $\mu \geq 3$, $\ln\ell \geq 0$, 
    and $\xi_\ell \geq 1$, we have $B_\ell \geq 3 > 1$, so
    \[
    t_\ell - t_{\ell-1} \;=\; \max\{1,\, \mu + \lceil4\ln(\ell)/\eps_p\rceil + Z_\ell\} \;\leq\; B_\ell.
    \]
    By the definition of $\xi_\ell$, $\xi_\ell - 1 \leq 1 + (2/\eps_p)\ln(1/(2\beta_\ell))$, so
    \[
    B_\ell \;\leq\; \mu + 1 + \left\lceil\frac{4\ln\ell}{\eps_p}\right\rceil + \frac{2}{\eps_p} \ln\!\frac{1}{2\beta_\ell}.
    \]
    Expanding $\ln(1/(2\beta_\ell)) = 2\ln\ell + \ln(1/\beta) + \ln(\pi^2/12)$, it follows that
    $B_\ell = \mu + O(\ln(\ell/\beta)/\eps_p)$.
\end{proof}

\begin{theorem}
    \label{thm:ecc-simpart-accuracy}
    Consider the modified $\ecc$ using $\simpart$ in place of $\partt$ with step-size 
    parameter $\mu$, and assume $\beta \leq 1/2$ and $\delta \leq \beta$. 
    Condition on $\calE_\simpart$ as defined in Lemma~\ref{lem:simpart-acc} and on the 
    $1-\beta_C$ and $1-\beta_B$ events under which $\cc$ and $\bcc$ are accurate. 
    Then for all $t \in \N$,
    \[
    |s_t - y_t| \;\leq\; 2 E_\bcc(t) + E_\cc(t) + \mu + O\!\left(\tfrac{1}{\eps}\ln(t/\delta)\right).
    \]
\end{theorem}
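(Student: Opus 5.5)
The plan mirrors the proof of \Cref{thm:ecc-accuracy}, but replaces the SVT-based bound on the lag since the last checkpoint with the deterministic gap bound of \Cref{lem:simpart-acc}. Fix $t$. Let $\widehat{t}$ be the most recent noisy checkpoint at or before $t$, and let $\ell$ be the index whose value $\widehat{v}_\ell$ is released at $\widehat{t}$, so that $y_t=\widehat{v}_\ell$. If no noisy checkpoint has occurred yet, take $\widehat{t}=0$ and $y_t=0$; the lag argument below covers this case as well.

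\Cref{lem:part-accuracy} depends only on the accuracy of $\cc$ and $\bcc$, not on how the checkpoints were generated, so it applies verbatim. It gives $|s_{\widehat{t}}-\widehat{v}_\ell|\le E_\bcc(\ell)+E_\cc(\ell)$. Checkpoints are strictly increasing and $t_\ell\le\widehat{t}_\ell\le t$, so $\ell\le t$. By monotonicity of the error functions this bound becomes $E_\bcc(t)+E_\cc(t)$. It then remains to bound $s_t-s_{\widehat{t}}\le t-\widehat{t}$, using that the inputs lie in $[0,1]$.

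To bound the lag $t-\widehat{t}$, let $\ell^*$ be the index of the last true checkpoint with $t_{\ell^*}\le t-E_\bcc(t)$; if there is none, take $\ell^*=0$ with $t_0=0$.

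\begin{itemize}
\item By part~1 of \Cref{lem:part-accuracy}, $\widehat{t}_{\ell^*}\le t_{\ell^*}+E_\bcc(\ell^*)\le t$. Hence $\widehat{t}\ge\widehat{t}_{\ell^*}\ge t_{\ell^*}$.
\item By maximality of $\ell^*$, the next true checkpoint satisfies $t_{\ell^*+1}>t-E_\bcc(t)$.
\item \Cref{lem:simpart-acc} gives $t_{\ell^*+1}-t_{\ell^*}\le\mu+O(\ln((\ell^*+1)/\beta)/\eps_p)$.
\end{itemize}

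Combining these, $t-\widehat{t}\le t-t_{\ell^*}< E_\bcc(t)+\mu+O(\ln((\ell^*+1)/\beta)/\eps_p)$.

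The parameters are absorbed as follows. Since $\ell^*+1\le t+1$ and $\delta\le\beta$, the logarithmic term is $O(\tfrac1\eps\ln(t/\delta))$, using $\eps_p=\eps/51$. The triangle inequality then yields $|s_t-y_t|\le 2E_\bcc(t)+E_\cc(t)+\mu+O(\tfrac1\eps\ln(t/\delta))$.

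I expect the main obstacle to be the bookkeeping for the schedule's tie-breaking and out-of-order noisy checkpoints. The value released at $\widehat{t}$ may belong to an index $\ell$ different from $\ell^*$, and some later index could have been scheduled earlier. This is harmless, because part~2 of \Cref{lem:part-accuracy} holds for whichever $\ell$ is actually released. The only other fact needed is that $\widehat{t}\ge\widehat{t}_{\ell^*}$, which holds because $\widehat{t}_{\ell^*}\le t$ is itself a noisy checkpoint at or before $t$.
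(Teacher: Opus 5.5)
Your proof is correct, but at the key step it takes a different route from the paper's proof of this theorem.

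\textbf{The paper's route.} The paper bounds the lag through consecutive noisy checkpoints. It asserts $\widehat t_{\ell+1}>t$ ``by maximality of $\widehat t$'' and then writes $t-\widehat t<\widehat t_{\ell+1}-\widehat t_\ell\le (t_{\ell+1}+E_\bcc(\ell+1))-t_\ell$. It treats the case with no prior noisy checkpoint separately.

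\textbf{Your route.} You import the $t^\dagger=t-E_\bcc(t)$ device from the proof of \Cref{thm:ecc-accuracy}.
\begin{itemize}
\item The last true checkpoint with $t_{\ell^*}\le t-E_\bcc(t)$ must have its noisy copy released by time $t$, so $\widehat t\ge t_{\ell^*}$.
\item \Cref{lem:simpart-acc}, applied to the gap $t_{\ell^*+1}-t_{\ell^*}$ that straddles $t-E_\bcc(t)$, then bounds $t-t_{\ell^*}$.
\end{itemize}

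\textbf{What each buys.} The paper's version is shorter. However, its step $\widehat t_{\ell+1}>t$ tacitly assumes two things: that noisy checkpoints are released in increasing order, and that the index released at $\widehat t$ is the largest one scheduled there. The paper itself notes in its overview that noisy checkpoints may collide or be reordered. Nothing in the accuracy guarantees rules out $\widehat t_{\ell+1}\le\widehat t_\ell$, and in that case the displayed inequality fails as written. Your argument never compares noisy checkpoints with one another, so it is immune to reordering and to the tie-breaking rule of the schedule list. It also covers the no-checkpoint case uniformly, with $\widehat t=0$ and $\ell^*=0$.

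\textbf{One detail to state explicitly.} $\widehat v_{\ell^*}$ is appended to the schedule at time $t_{\ell^*}\le\widehat t_{\ell^*}$, before the check at that step. This is what makes $\widehat t_{\ell^*}$ a genuine release time at or before $t$.
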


\begin{proof}
    If there is no noisy checkpoint at or before $t$, then  $y_t = 0$ and $|s_t - y_t| = s_t \leq t$. Since $t < \widehat t_1$, 
    $t \;<\; t_1 + E_\bcc(1) \;\leq\; \mu + E_\bcc(t) + O\!\left(\tfrac{1}{\eps}\ln(t/\delta)\right),$
    using $\delta \leq \beta$ and monotonicity. The stated bound follows.
    
    If there is a noisy checkpoint at or before $t$, then let $\widehat t$ denote the most recent such noisy checkpoint. 
    Let  $\ell$ be its index, so $y_t = \widehat v_\ell$ and by Lemma~\ref{lem:part-accuracy}(2), 
    monotonicity of the error terms, and $\ell \leq t_\ell \leq t$,
    \begin{equation}\label{eqn:simpart-accuracy.1}
        |s_{\widehat t} - y_t| \;\leq\; E_\bcc(t) + E_\cc(t).
    \end{equation}
    
    Since $\widehat t_{\ell+1} > t$ by maximality of $\widehat t$, combining 
    Lemmas~\ref{lem:part-accuracy}(1) and~\ref{lem:simpart-acc} with monotonicity of 
    $E_\bcc$ and $\ell \leq t$,
    \[
    t - \widehat t \;<\; \widehat t_{\ell+1} - \widehat t_\ell 
    \;\leq\; (t_{\ell+1} + E_\bcc(\ell+1)) - t_\ell 
    \;\leq\; \mu + E_\bcc(t) + O\!\left(\tfrac{1}{\eps_p}\ln(t/\delta)\right),
    \]
    using $\delta \leq \beta$. By $1$-sensitivity, $|s_t - s_{\widehat t}| \leq t - \widehat t$. 
    Triangle inequality with \eqref{eqn:simpart-accuracy.1} yields
    \[
    |s_t - y_t| \;\leq\; 2 E_\bcc(t) + E_\cc(t) + \mu + O\!\left(\tfrac{1}{\eps_p}\ln(t/\delta)\right).
    \]
    Since $\eps_p = \eps/51$, the theorem statement follows. \qedhere
\end{proof}
\section{Lower bounds}

In this section we will derive lower bounds for private continual counting under edit-neighboring, as well as under \emph{prefix-sum neighboring}, streams.
We give the latter definition (implied by the \emph{sensitivity vector set} characterization in \cite{andersson26improved}; we will return to this) next.
\begin{definition}[Prefix-Sum Neighbor Relation]\label{def:prefix-neighboring}
    For $T\in\N$, two sequences $\sigma=(x_1,\dots,x_T)$ and $\sigma'=(x_1',\dots,x_T')$ in $\mathbb{Z}^T$ (or a subset thereof) are said to be \emph{prefix-sum neighboring}, written $\sigma\sim_p\sigma'$, if one of the following conditions hold:
    \begin{itemize}
        \item For every $t\in[T]$, $\sum_{i=1}^t x_i - \sum_{i=1}^t x_i'\in \{0,1\}$.
        \item For every $t\in[T]$, $\sum_{i=1}^t x_i - \sum_{i=1}^t x_i'\in \{0,-1\}$.
    \end{itemize}
\end{definition}
Intuitively, two sequences are prefix-sum neighbors iff one of the prefix sums always either equals the other one, or trails it by 1.
Similarly, we will use the shorthand \enquote{$\sim_e$} for edit-neighboring streams.
Throughout this section we will also use the notation $\ccgeneric{T}$ to denote the problem of continual counting with neighboring relation $\sim$ and input streams with elements $\mathcal{S}$ of length $T$.
We emphasize that all lower bounds we derive are for streams of a finite length $T\in\N$.

This section is organized as follows:
\begin{enumerate}
    \item First we show that $\ccptrits{T}$ requires error polynomial in $T$.
    \item Secondly, we show that $\ccpbits{T}$ requires error polynomial in $T$.
    \item Lastly, we show that any input-independent mechanism for $\ccebits{T}$ requires error polynomial in $T$.
\end{enumerate}

\subsection{A lower bound for \texorpdfstring{$\ccptrits{T}$}{Count}}

The lower bound here is based on a reduction from the $\cdistinct$ problem.
It follows relatively straightforwardly from \cite{jain23distinct} together with \cite{andersson26improved}.
The only technical challenge is reconciling the subtle differences in how the two works define the problem.
Our version of $\cdistinct$ is defined next.
\begin{definition}[$\cdistinct$]\label{def:cdistinct}
    Fix a universe $\mathcal{U}$, let $\mathcal{U}_{\pm} = (\{+,-\}\times\mathcal{U})\cup\{\bot\}$ and $T\in\N$.
    Consider a stream $\sigma = (x_1, \dots, x_T) \in \mathcal{U}_{\pm}^T$.
    Here $(+, u)$ is interpreted as \enquote{add item $u$},  $(-, u)$ as \enquote{delete item $u$}, and $\bot$ as \enquote{do nothing}.
    We define the problem $\cdistinct$ as the continual release problem, where, on receiving $x_t$, we are to immediately output
    \begin{equation*}
       f(\sigma)_t =  \left\lvert \left\{ u\in\mathcal{U} : \sum_{i=1}^t \indic[x_i = (+, u)] > \sum_{i=1}^t \indic[x_i = (-, u)]\right\}\right\rvert\,.
    \end{equation*}
    We say $\sigma$ and $\sigma'$ are \emph{item-level neighboring}, denoted $\sigma \sim \sigma'$, if one can be derived from the other by replacing all updates involving one item $u\in\mathcal{U}$ by $\bot$'s.
\end{definition}
We make a few remarks before proceeding.
Firstly, this definition of item-level neighboring is the one adopted in \cite{andersson26improved}, and distinct from the one in \cite{jain23distinct}, where instead two streams are item-level neighboring if we can derive one from the other by replacing \emph{any subset} of updates pertaining to any one item.
As remarked by \citeauthor{andersson26improved}, any two streams neighboring by the definition in \cite{jain2023price}, are at most 2-neighboring in $\cdistinct$.
Additionally, while \cite{andersson26improved} allowed for an arbitrary number of updates per step, we restrict to at most one update per step, as in \cite{jain23distinct}.
We will use the following lower bound.
\begin{theorem}[{\cite[Theorem 1.7 (worst-case bounds)]{jain23distinct}}]\label{thm:cdistinct-lb}
      Let $\eps\in(0, 1/2], \delta\in[0, 1]$, and $T\in\N$ be sufficiently large.
      Then any $(\alpha, \frac{1}{100})$-accurate item-level $(\eps, \delta)$-DP mechanism for $\cdistinct$ over $T$ steps satisfies:
      \begin{enumerate}
          \item If $\delta = o(\eps / T)$, then $\alpha = \tilde{\Omega}\left(\min\left\{\frac{T^{1/3}}{\eps^{2/3}}, T\right\}\right)$.
          \item If $\delta = 0$, then $\alpha = \Omega\left(\min\left\{\sqrt{\frac{T}{\eps}}, T\right\}\right)$.
      \end{enumerate}
\end{theorem}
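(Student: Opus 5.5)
The plan is to obtain the statement directly from \cite[Theorem 1.7]{jain23distinct}, treating that result as a black box. The only work is to reconcile the two definitions of item-level neighboring. Our relation (\Cref{def:cdistinct}) replaces \emph{all} updates of one item $u$ by $\bot$. The relation in \cite{jain23distinct} allows replacing \emph{any subset} of the updates of one item. Their lower bound is stated for mechanisms that are private under their relation, which is the larger one. So I cannot transfer it verbatim; I have to show that privacy under our relation implies privacy, with slightly worse parameters, under theirs.

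\textbf{Step 1 (2-neighbor reduction).} Take two streams $\sigma,\sigma'$ that are neighbors in the sense of \cite{jain23distinct}. They agree except on some updates involving a single item $u$. Let $\sigma''$ be obtained from $\sigma$ by replacing every update involving $u$ by $\bot$. Then $\sigma''$ is also obtained from $\sigma'$ in the same way. Hence $\sigma\sim\sigma''\sim\sigma'$ under our relation, so the two streams are at most 2-neighboring for us. Both relations use one update per step and the same stream length $T$, so $\sigma''$ is a valid input of length $T$.

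\textbf{Step 2 (group privacy).} Let $\cM$ be $(\eps,\delta)$-DP under our relation. Apply the standard two-step chaining argument, as in \Cref{lem:group-privacy}, along the path $\sigma\to\sigma''\to\sigma'$. This gives $(\Pr[\cM(\sigma)\in S]\le e^{2\eps}\Pr[\cM(\sigma')\in S]+(1+e^{\eps})\delta)$ for every measurable $S$. So $\cM$ is $(2\eps,(1+e^{\eps})\delta)$-DP under the relation of \cite{jain23distinct}. The accuracy guarantee $(\alpha,\tfrac{1}{100})$ is unaffected, since the function $f$ being estimated is identical in both settings.

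\textbf{Step 3 (invoke their bound and absorb constants).} Apply \cite[Theorem 1.7]{jain23distinct} with $\eps'=2\eps$ and $\delta'=(1+e^{\eps})\delta\le 3\delta$.
\begin{itemize}
\item If $\delta=o(\eps/T)$, then $\delta'=o(\eps'/T)$ still holds.
\item If $\delta=0$, then $\delta'=0$.
\end{itemize}
The resulting bounds $\tilde{\Omega}(\min\{T^{1/3}/\eps'^{2/3},T\})$ and $\Omega(\min\{\sqrt{T/\eps'},T\})$ differ from the claimed ones only by constant factors in $\eps$, which the asymptotic notation absorbs.

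The main obstacle is the admissible range of the privacy parameter. For $\eps\in(0,1/2]$ we get $\eps'\in(0,1]$, which may exceed the range in which \cite{jain23distinct} state their theorem. I would first check whether their proof holds for any constant upper bound on $\eps$, which is typical for such reconstruction-style arguments. If it does not, I would note that for $\eps\in(1/4,1/2]$ an $(\eps,\delta)$-DP mechanism can be compared against the case $\eps=1/4$ at only a constant loss. The resulting bound is then $\Theta(1)$-equivalent, so the statement holds up to the implicit constants.
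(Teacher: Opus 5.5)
Your proposal is correct and follows the paper's proof sketch essentially verbatim. The steps match: observe that neighbors under the relation of \cite{jain23distinct} are at most 2-neighbors under ours, use group privacy to get a $(2\eps,O(\delta))$-DP guarantee under their relation (your $(1+e^{\eps})\delta$ is even slightly tighter than the paper's $2e^{2\eps}\delta\le 6\delta$), and invoke their Theorem~1.7 with $\eps'=2\eps$.

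The range issue you flag is settled because \cite{jain23distinct} state their theorem for $\eps\in(0,1]$, which is exactly why the statement caps $\eps$ at $1/2$. Your fallback is therefore unnecessary, and as written it would not work anyway: an $(\eps,\delta)$-DP mechanism with $\eps>1/4$ need not be $(1/4,\delta)$-DP, so the $\eps=1/4$ case of the theorem cannot be applied to it.
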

\begin{proof}[Proof sketch.]
    The statement is identical to \cite[Theorem 1.7]{jain23distinct} when setting their \emph{maximum flippancy} parameter $w = \Theta(T)$, with the exception that they allow $\eps \in (0, 1]$.
    Since any neighboring inputs in their setting are guaranteed to be 2-neighbors in ours, we have that any $(\eps, \delta)$-DP mechanism $\cM$ for $\cdistinct$ is also a $(2\eps, 2e^{2\eps}\delta)$-DP for their version of the problem via group privacy.
    Hence, invoking their lower bound for $\eps' = 2\eps$ and $\delta' = 6\delta \geq 2e^{2\eps}\delta$ (valid for $\eps \leq 1/2$), extends their lower bound to $(\eps, \delta)$-DP for $\cdistinct$.
    This restricts the range on $\eps$ accordingly.
    For $\delta$, we can keep its range unchanged.
    If $\delta = o(\eps / T)$, then $6\delta = o(2\eps/T)$ is implied, so the condition on $\delta'$ is satisfied for sufficiently large $T$.
    We have now proved the statement for $\eps\in(0, 1/2], \delta\in[0,1]$.
\end{proof}

\paragraph{Counting on the difference stream.} Towards proving the lower bound on $\ccptrits{T}$, by reducing from $\cdistinct$, we need the following concept.
Let $\sigma \sim \sigma'$  be two neighboring inputs to $\cdistinct$, and define the two \emph{(output) difference streams}
\begin{align*}
    d(\sigma) &= (f(\sigma_1), f(\sigma_2) - f(\sigma_1), \dots, f(\sigma_T) - f(\sigma_{T-1}))\,,\\
    d(\sigma') &= (f(\sigma_1'), f(\sigma_2') - f(\sigma_1'), \dots, f(\sigma_T') - f(\sigma_{T-1}'))\,.
\end{align*}
We will prove the following claim, using a result from \cite{andersson26improved}.
\begin{claim}\label{clm:difference-streams}
   $d(\sigma), d(\sigma') \in \{-1, 0, 1\}^T$ and $d(\sigma)\sim_p d(\sigma')$.
\end{claim}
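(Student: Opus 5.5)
The plan is to prove the two assertions of Claim~\ref{clm:difference-streams} separately and directly from Definition~\ref{def:cdistinct}. The paper attributes the sensitivity-vector characterization to \cite{andersson26improved}, but the argument below is self-contained and does not need to invoke it.

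First, the ternary range. Since each step contains at most one update, the stream either touches a single item $u$ at step $t$ or touches nothing. The count of $u$, namely the number of insertions minus the number of deletions, changes by at most one. Hence the indicator that $u$ is present, $[\text{count}_u>0]$, flips by at most one. All other items keep their status, so $f(\sigma)_t - f(\sigma)_{t-1}\in\{-1,0,1\}$. The same holds for the first entry $f(\sigma)_1\in\{0,1\}$, taking $f(\sigma)_0=0$. The identical argument applies to $\sigma'$.

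Second, the prefix-sum neighboring property. Prefix sums telescope, so $\sum_{i\le t} d(\sigma)_i = f(\sigma)_t$, and it suffices to show that $f(\sigma)_t - f(\sigma')_t\in\{0,1\}$ for all $t$, or $\in\{0,-1\}$ for all $t$, with one sign fixed throughout. By symmetry of the relation, assume $\sigma'$ is obtained from $\sigma$ by replacing every update involving a fixed item $u$ with $\bot$. For every item $v\neq u$, the subsequence of updates to $v$ is identical in $\sigma$ and $\sigma'$, so $v$'s presence status agrees at every time step. Item $u$ never appears in $\sigma'$, so it is always absent there. Therefore
\[ f(\sigma)_t - f(\sigma')_t = \indic\Bigl[\textstyle\sum_{i\le t}\indic[x_i=(+,u)] > \sum_{i\le t}\indic[x_i=(-,u)]\Bigr]\in\{0,1\} \]
for every $t$. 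Thus the first condition of Definition~\ref{def:prefix-neighboring} holds with the sign uniform in $t$. In the reverse case, where $\sigma$ is derived from $\sigma'$, the second condition holds instead.

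The only step that requires care is this uniformity of sign across $t$. It is exactly what distinguishes $\sim_p$ from merely bounded prefix differences, and it is also why the paper's item-level relation, which removes all updates of $u$, is the right one here. Under the relation of \cite{jain23distinct}, where an arbitrary subset of $u$'s updates may be removed, the difference could alternate in sign, so that case is deliberately avoided. Beyond that, the argument is a direct bookkeeping check, and no real obstacle is expected.
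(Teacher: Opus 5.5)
Your proof is correct. The ternary-range argument is the same as the paper's, but the prefix-sum part takes a more direct route.

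The paper never computes $f(\sigma)_t - f(\sigma')_t$. Instead it imports Lemma~\ref{lem:cdistinct-s1} from \cite{andersson26improved}, which places $d(\sigma)-d(\sigma')$ in the set $\mathcal{S}_1$ of integer vectors all of whose interval sums have absolute value at most $1$. It then argues combinatorially that a vector $v\in\mathcal{S}_1$ cannot have one prefix sum equal to $-1$ and another equal to $1$, since their difference would be an interval sum of absolute value $2$. Hence the prefix sums of $v$ lie uniformly in $\{0,1\}$ or uniformly in $\{-1,0\}$.

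You instead observe that, under the ``replace all updates of $u$'' relation, the prefix difference is exactly the presence indicator of $u$ in the stream that retains $u$. This gives the uniform sign immediately, with no external lemma. It also shows clearly why the subset-removal relation of \cite{jain23distinct} must be handled separately, which the paper does through group privacy in the proof of Theorem~\ref{thm:cdistinct-lb}.

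The paper's detour buys modularity rather than simplicity. It reuses a characterization proved in the more general multi-update setting of \cite{andersson26improved}, and it shows that $\mathcal{S}_1$ and $\sim_p$ describe the same difference vectors. That is the sensitivity-vector viewpoint the paper cites as the origin of Definition~\ref{def:prefix-neighboring}.
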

\begin{lemma}[{\cite{andersson26improved}}]\label{lem:cdistinct-s1}
   Let $\sigma, \sigma'$ be two neighboring inputs to $\cdistinct$, and $d(\sigma), d(\sigma')$ be the corresponding output difference streams.
   Then $d(\sigma) - d(\sigma') \in \mathcal{S}_1$ where
   \begin{equation*}
      \mathcal{S}_1 = \left\{ v\in\mathbb{Z}^T : \max_{i, j \in [T]} \left\lvert \sum_{k=i}^{j} v_k \right\rvert \leq 1 \right\}\,.
   \end{equation*}
\end{lemma}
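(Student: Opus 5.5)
The plan is to reduce every interval sum of $d(\sigma) - d(\sigma')$ to a difference of two values of a single auxiliary function, and then show that this function takes only two adjacent integer values. Concretely, write $F_t := f(\sigma)_t$ and $F'_t := f(\sigma')_t$, set $F_0 = F'_0 := 0$, and define $g_t := F_t - F'_t$ for $t \in \{0,\dots,T\}$. By the definition of the output difference streams, $d(\sigma)_k = F_k - F_{k-1}$ and $d(\sigma')_k = F'_k - F'_{k-1}$ for every $k\in[T]$; for $k=1$ this uses $F_0 = F'_0 = 0$. Hence for all $1\le i\le j\le T$ the sum telescopes:
\[
\sum_{k=i}^{j} \bigl(d(\sigma)_k - d(\sigma')_k\bigr) \;=\; g_j - g_{i-1}.
\]
It therefore suffices to show that $\{g_0,\dots,g_T\}$ lies in a set of diameter $1$, namely $\{0,1\}$ or $\{0,-1\}$. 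Integrality of $d(\sigma)-d(\sigma')$ is immediate, since $f$ is integer-valued.

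Next I will use the item-level neighboring relation of \Cref{def:cdistinct}. Without loss of generality, $\sigma'$ is obtained from $\sigma$ by replacing every update involving a fixed item $u\in\mathcal{U}$ by $\bot$; the reverse case is symmetric and flips the sign of $g$. For each $t$ and each item $w$, say that $w$ is present in $\sigma$ at time $t$ if $\sum_{i\le t}\indic[x_i=(+,w)] > \sum_{i\le t}\indic[x_i=(-,w)]$, and define presence in $\sigma'$ analogously. For $w\neq u$, the updates concerning $w$ are identical in $\sigma$ and $\sigma'$, because replacing $u$'s updates by $\bot$ does not move or alter any other record. So presence of $w$ is the same in both streams at every time $t$. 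For $w=u$, the stream $\sigma'$ has no updates for $u$, so both counts are $0$ and $u$ is never present in $\sigma'$.

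Consequently, $g_t = \indic[u \text{ present in } \sigma \text{ at time } t] \in \{0,1\}$ for every $t\in[T]$, and $g_0 = 0$. In the symmetric case, where $\sigma$ is obtained from $\sigma'$ by deleting $u$'s updates, the same argument gives $g_t\in\{0,-1\}$. In either case $|g_j - g_{i-1}|\le 1$ for all $i\le j$, so $d(\sigma)-d(\sigma')\in\mathcal{S}_1$.

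There is no substantial obstacle in this argument. The only points needing care are including the index $i=1$, handled by the convention $g_0=0$, and checking that the item-level relation of \Cref{def:cdistinct} replaces updates by $\bot$ in place rather than deleting them. That in-place replacement is what keeps the other items' update times aligned, so their presence is identical in $\sigma$ and $\sigma'$.
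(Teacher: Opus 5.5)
Your proof is correct and complete. Telescoping each interval sum to $g_j - g_{i-1}$ and observing that $g_t$ is exactly the indicator that $u$ is present in $\sigma$ (with the sign flipped in the symmetric case) settles the statement. You also correctly identify the two features of the item-level relation that make this work: \emph{all} of $u$'s updates are removed, and they are replaced by $\bot$ in place rather than deleted.

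The paper does not prove this lemma itself; it imports it from \cite{andersson26improved}. Note that your intermediate fact, that $g_t\in\{0,1\}$ for all $t$ (or $g_t\in\{0,-1\}$ for all $t$), is already precisely the conclusion $d(\sigma)\sim_p d(\sigma')$ of \Cref{clm:difference-streams}. Your argument therefore also makes the paper's subsequent passage through $\mathcal{S}_1$, which rules out prefix sums $-1$ and $1$ occurring together, unnecessary.
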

\begin{proof}[Proof of \Cref{clm:difference-streams}]
    The fact that $d(\sigma), d(\sigma') \in \{-1, 0, 1\}^T$ follows from the definition of $\cdistinct$: at each time step, a single update to an item is made, hence the number of distinct elements from one step to the next can change by $\pm 1$, or not at all (i.e., by zero).

    For the neighboring relation, we will argue about the set $\mathcal{S}_1$ from \Cref{lem:cdistinct-s1}.
    Note that $\mathcal{S}_1$ is the set of all integer vectors of length $T$ with interval sums that are no larger than $1$ in absolute value.
    Let $v\in \mathcal{S}_1$.
    By definition, $v\in\{-1, 0, 1\}^T$, since otherwise there would exist a $k\in[T]$ for which $\lvert v_k \rvert > 1$, and so the singleton interval sum (from $k$ to $k$) would exceed 1 in absolute value.

    Assume, towards a contradiction, that there exists $i, j \in[T]$ for which the prefix sum of $v$ up to $i$ and up to  $j$ assume a value of $-1$ and $1$ respectively.
    Then,
    \begin{equation*}
        \left\lvert \sum_{k=1}^i v_k - \sum_{k=1}^{j} v_k \right\rvert
        = \left\lvert \sum_{k=\min\{i, j\}+1}^{\max\{i,j\}} v_k \right\rvert = 2\,,
    \end{equation*}
    implying $v\notin\mathcal{S}_1$, a contradiction.
    Hence every prefix on $v$ lies in either $\{0, 1\}$ or $\{-1, 0\}$.

    As $d(\sigma) - d(\sigma') \in \mathcal{S}_1$ from \Cref{lem:cdistinct-s1}, we have that the difference in the prefixes on $d(\sigma)$ and $d(\sigma')$ satisfy the same condition.
    This is exactly the condition for $d(\sigma) \sim_p d(\sigma')$, and so we are done.
\end{proof}

The following theorem follows naturally by a reduction from $\cdistinct$. 
\begin{theorem}\label{thm:trits-prefix-lb}
      Let $\eps\in(0, 1/2], \delta\in[0, 1]$, and $T\in\N$ be sufficiently large.
      Then any $(\alpha, \frac{1}{100})$-accurate $(\eps, \delta)$-DP mechanism for $\sim_p$-neighboring continual counting on $\{-1, 0, 1\}$ over $T$ steps satisfies:
      \begin{enumerate}
          \item If $\delta = o(\eps / T)$, then $\alpha = \tilde{\Omega}\left(\min\left\{\frac{T^{1/3}}{\eps^{2/3}}, T\right\}\right)$.
          \item If $\delta = 0$, then $\alpha = \Omega\left(\min\left\{\sqrt{\frac{T}{\eps}}, T\right\}\right)$.
      \end{enumerate}
\end{theorem}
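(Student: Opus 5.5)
We argue by reduction: any good mechanism for $\ccptrits{T}$ yields an equally good mechanism for $\cdistinct$, and \Cref{thm:cdistinct-lb} then gives the bound. Suppose $\cM$ is an $(\alpha,\frac{1}{100})$-accurate $(\eps,\delta)$-DP mechanism for $\ccptrits{T}$. We build a mechanism $\cM'$ for $\cdistinct$ on streams $\sigma\in\mathcal{U}_\pm^T$ with at most one update per step, as follows.

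\textbf{Construction.} Online, $\cM'$ maintains the exact counts of insertions and deletions per item, so it can compute $f(\sigma)_t$ exactly. At each step $t$ it forms $d(\sigma)_t = f(\sigma)_t - f(\sigma)_{t-1}$, feeds this value to $\cM$, and outputs $\cM$'s answer. By \Cref{clm:difference-streams}, $d(\sigma)_t\in\{-1,0,1\}$, so this is a valid input stream for $\ccptrits{T}$. Since $d(\sigma)_t$ depends only on $\sigma_{1:t}$, the construction is online.

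\textbf{Accuracy.} The prefix sums of $d(\sigma)$ telescope to $f(\sigma)_t$. Hence $\cM'$ inherits the $(\alpha,\frac{1}{100})$-accuracy of $\cM$ verbatim.

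\textbf{Privacy.} Let $\sigma\sim\sigma'$ be item-level neighbors. The map $\sigma\mapsto d(\sigma)$ is deterministic, and by \Cref{clm:difference-streams} we have $d(\sigma)\sim_p d(\sigma')$. So the output distributions $\cM(d(\sigma))$ and $\cM(d(\sigma'))$ are $(\eps,\delta)$-indistinguishable. Consequently $\cM'$ is item-level $(\eps,\delta)$-DP, with no loss in the parameters.

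\textbf{Conclusion.} Applying \Cref{thm:cdistinct-lb} to $\cM'$, with the same $\eps\in(0,1/2]$, $\delta$, and $T$, gives both claimed lower bounds on $\alpha$. The group-privacy constant loss was already absorbed inside \Cref{thm:cdistinct-lb}, so nothing further is needed here.

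\textbf{Main obstacle.} The argument depends on the two problems being defined consistently. In particular, \Cref{thm:cdistinct-lb} must apply to our one-update-per-step version, and \Cref{lem:cdistinct-s1}, proved in \cite{andersson26improved} for multiple updates per step, must specialize to streams with single updates. Both hold: the single-update streams are a special case of theirs, and the lower-bound instances of \cite{jain23distinct} already use one update per step. Once this is settled, the rest of the reduction is bookkeeping.
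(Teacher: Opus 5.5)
Your proposal is correct and follows the paper's proof essentially step for step. You feed the difference stream $d(\sigma)$ of the distinct-count sequence into the $\sim_p$-private counter, use \Cref{clm:difference-streams} for validity and neighbor preservation (so $(\eps,\delta)$-DP carries over with no loss), use telescoping for accuracy, and invoke \Cref{thm:cdistinct-lb}, whose proof sketch already absorbs the group-privacy loss; your check that the multi-update lemma specializes to one update per step is a reasonable sanity check the paper leaves implicit.
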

\begin{proof}
    We prove the claim by a reduction from $\cdistinct$, following \citeauthor{jain23distinct} and \citeauthor{andersson26improved}.
    Let $\cc$ be an $(\eps, \delta)$-DP mechanism for $\ccptrits{T}$ and suppose that $\cc$ is $(\alpha, 1/100)$-accurate.
    We construct a mechanism $\cM$ for $\cdistinct$ as follows.
    On input stream $\sigma$, compute the output difference stream $d(\sigma)\in\{-1,0,1\}^T$, and output $\cc(d(\sigma))$.
    This can be done in an online manner, as $d(\sigma)_t = f(\sigma)_t - f(\sigma)_{t-1}$.
    Let $\sigma, \sigma'$ be neighboring inputs to $\cdistinct$.
    By \Cref{clm:difference-streams}, we have that $d(\sigma), d(\sigma') \in\{-1,0,1\}^T$ and that $d(\sigma)\sim_p d(\sigma')$.
    Hence, $d(\sigma), d(\sigma')$ are valid, neighboring inputs to $\ccptrits{T}$.
    Thus, as $\cc$ is an $(\eps,\delta)$-DP mechanism for $\ccptrits{T}$, $\cM$ is an $(\eps,\delta)$-DP mechanism for $\cdistinct$.
    Moreover, as prefix sums on $d(\sigma)$ equals the distinct-count sequence $f(\sigma)$, we have, by the $(\alpha, 1/100)$-accuracy of $\cc$, $(\alpha, 1/100)$-accuracy for $\cM$ for $\cdistinct$.
    Consequently, $\cM$ inherits the lower bounds from \Cref{thm:cdistinct-lb}, and the stated bounds on $\alpha$ follow.
\end{proof}

\subsection{A lower bound for \texorpdfstring{$\ccpbits{T}$}{Count}}

Having proved a lower bound for $\sim_p$-neighboring continual counting on $\{-1, 0, 1\}$-streams, we will next extend it to binary streams.
The key idea will be to encode strings $x,x'\in\{-1,0,1\}^T$ into strings $s,s'\in\{0, 1\}^{2T}$ in such a way that:
\begin{enumerate}
    \item Prefix sums on $x$ and $x'$ can be derived from $s$ and $s'$ respectively.
    \item If $x$ and $x'$ are prefix-sum neighboring, then so are $s$ and $s'$.
\end{enumerate}
Equipped with such an encoding, the reduction from $\ccptrits{T}$ to $\ccpbits{T}$ proceeds naturally.
We give the encoding next.

\paragraph{Subroutine $\pmonetobin$.}

The procedure $\pmonetobin$ maps a sequence $x \in \{-1,0,1\}^{T}$ into a binary sequence $s \in \{0,1\}^{2T}$, where for each $i \in [T]$,
\begin{equation*}
    (s_{2i-1}, s_{2i}) =
    \begin{cases}
        (0,1), & \text{if } x_i = 0, \\[4pt]
        (1,1), & \text{if } x_i = 1, \\[4pt]
        (0,0), & \text{if } x_i = -1,
    \end{cases}
    \qquad \forall i \in [T].
\end{equation*}
As can be shown by a straightforward induction, by construction, for every $i \in [T]$,
\begin{equation}
    \sum_{j=1}^{2i} s_j = \sum_{j=1}^i x_j + i.\label{eq:bits-to-trit-prefix}
\end{equation}

\begin{lemma}\label{lem:neighbor-pmonetobin}
    Let $x,x' \in \{-1,0,1\}^{T}$ be $\sim_p$-neighboring sequences.
    Define $s=\pmonetobin(x)$ and $s'=\pmonetobin(x')$.
    Then $s, s'\in\{0, 1\}^{2T}$ are $\sim_p$-neighboring sequences.
\end{lemma}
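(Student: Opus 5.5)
Write $X_i=\sum_{j\le i}x_j$, $X'_i=\sum_{j\le i}x'_j$, and $S_k,S'_k$ for the prefix sums of $s,s'$. By symmetry of the definition of $\sim_p$, I assume without loss of generality that $D_i:=X_i-X'_i\in\{0,1\}$ for all $i\in[T]$ (with $D_0=0$). The goal is then to show $S_k-S'_k\in\{0,1\}$ for every $k\in[2T]$, which is the first bullet of \Cref{def:prefix-neighboring} for $s,s'$. Membership $s,s'\in\{0,1\}^{2T}$ is immediate from the definition of $\pmonetobin$.

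\textbf{Even indices.} By \eqref{eq:bits-to-trit-prefix}, $S_{2i}=X_i+i$ and $S'_{2i}=X'_i+i$. Hence $S_{2i}-S'_{2i}=D_i\in\{0,1\}$, so these positions need no further work.

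\textbf{Odd indices.} Here $S_{2i-1}=S_{2i-2}+s_{2i-1}=X_{i-1}+(i-1)+\indic[x_i=1]$, and similarly for $s'$. So
\[
S_{2i-1}-S'_{2i-1}=D_{i-1}+\indic[x_i=1]-\indic[x'_i=1].
\]
I also use $x_i-x'_i=D_i-D_{i-1}$. This is the only step requiring a case analysis.

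\begin{itemize}
\item \emph{Case $D_{i-1}=0$.} The difference is $\indic[x_i=1]-\indic[x'_i=1]$, and it fails only if it equals $-1$. That would need $x'_i=1$ and $x_i\neq 1$, giving $x_i-x'_i\le -1$ and so $D_i\le -1$, a contradiction.
\item \emph{Case $D_{i-1}=1$.} The difference is $1+\indic[x_i=1]-\indic[x'_i=1]$, and it fails only if it equals $2$. That would need $x_i=1$ and $x'_i\neq1$, giving $x_i-x'_i\ge 1$ and so $D_i\ge 2$, a contradiction.
\end{itemize}

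In both cases the odd-index difference lies in $\{0,1\}$, which completes the proof. The main (mild) obstacle is this odd-index case. The encoding places the $+1$ of $x_i=1$ in the first bit of the pair and the $0$-information in the second bit. A careless encoding, such as $(1,0)$ for $x_i=0$, could make intermediate prefix sums temporarily violate the neighboring condition. The case split above is exactly what rules this out.
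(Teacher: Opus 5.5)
Your proof is correct and follows the paper's argument essentially verbatim: even prefix differences equal the trit prefix differences, odd ones equal $D_{i-1}+\indic[x_i=1]-\indic[x'_i=1]$, and the two out-of-range values are excluded using $x_i-x'_i=D_i-D_{i-1}$, with your symmetry reduction replacing the paper's separately written second case. One small correction to your closing aside: the encoding $(1,0)$ for $x_i=0$ would work equally well, because its odd bit $\indic[x_i\neq -1]$ is still non-decreasing in $x_i$, so the identical case split goes through.
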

\begin{proof}
    For $i\in[T]$ and $k\in[2T]$, define the prefix-differences
    \begin{equation*}
        P_i := \sum_{t=1}^i x_t - \sum_{t=1}^i x'_t,
        \qquad D_k := \sum_{j=1}^k s_j - \sum_{j=1}^k s_j',
    \end{equation*}
    and also set $P_0=D_0 := 0$.
    
    \paragraph{Even prefixes.}
    By the defining property of $\pmonetobin$, for every $i\in[T]$,
    \begin{equation*}
        \sum_{j=1}^{2i} s_j = \sum_{t=1}^i x_t + i
        \qquad\text{and}
        \qquad\sum_{j=1}^{2i} s_j' = \sum_{t=1}^i x_t' + i,
    \end{equation*}
    and hence
    \begin{equation}\label{eq:even-prefix-diff}
    D_{2i}
    = \Big(\sum_{t=1}^i x_t + i\Big) - \Big(\sum_{t=1}^i x_t' + i\Big)
    = P_i.
    \end{equation}
    
    \paragraph{Odd prefixes.}
    Write $a_i := s_{2i-1}$ and $a'_i := s'_{2i-1}$.
    By the definition of $\pmonetobin$, $a_i=\indic[x_i=1]$ and $a'_i=\indic[x'_i=1]$.
    Moreover,
    \begin{equation*}
        \sum_{j=1}^{2i-1} s_j = a_i + \sum_{j=1}^{2(i-1)} s_j,
        \qquad\sum_{j=1}^{2i-1} s'_j = a_i' + \sum_{j=1}^{2(i-1)} s'_j,
    \end{equation*}
    so
    \begin{equation}\label{eq:odd-prefix-diff}
        D_{2i-1}
        = D_{2(i-1)} + (a_i-a'_i)
        = P_{i-1} + \big(\indic[x_i=1]-\indic[x'_i=1]\big).
    \end{equation}
    
    We now verify the prefix-neighbor condition for $s,s'$.
    Since $x\sim_p x'$, we are in one of the following two cases.
    
    \paragraph{Case 1: $P_i\in\{0,1\}$ for all $i\in[T]$.}
    Then by~\eqref{eq:even-prefix-diff}, $D_{2i}\in\{0,1\}$ for all $i$.
    It remains to show $D_{2i-1}\in\{0,1\}$.
    From~\eqref{eq:odd-prefix-diff}, we have $D_{2i-1}\in\{-1,0,1,2\}$, so it suffices to rule out the values $-1$ and $2$.
    
    Suppose $D_{2i-1}=-1$ for some $i$.
    Then~\eqref{eq:odd-prefix-diff} forces $P_{i-1}=0$ and $\indic[x_i=1]-\indic[x'_i=1]=-1$,
    i.e., $x_i\neq 1$ and $x_i' = 1$.
    Hence $x_i-x'_i \leq -1$, and therefore
    \begin{equation*}
        P_i = P_{i-1} + (x_i-x'_i) \leq 0 + (-1) = -1,
    \end{equation*}
    contradicting $P_i\in\{0,1\}$.
    
    Suppose instead that $D_{2i-1}=2$ for some $i$.
    Then~\eqref{eq:odd-prefix-diff} forces $P_{i-1}=1$ and $\indic[x_i=1]-\indic[x'_i=1]=1$,
    i.e., $x_i=1$ and $x'_i\neq 1$.
    Hence $x_i-x'_i\ge 1$, and therefore
    \begin{equation*}
        P_i = P_{i-1} + (x_i-x'_i) \ge 1 + 1 = 2,
    \end{equation*}
    contradicting $P_i\in\{0,1\}$.
    
    Thus $D_{2i-1}\notin\{-1,2\}$, so $D_{2i-1}\in\{0,1\}$ for all $i$.
    Combined with $D_{2i}\in\{0,1\}$, this implies $D_k\in\{0,1\}$ for all $k\in[2T]$,
    i.e., $s\sim_p s'$.
    
    \paragraph{Case 2: $P_i\in\{0,-1\}$ for all $i\in[T]$.}
    Then by~\eqref{eq:even-prefix-diff}, $D_{2i}\in\{0,-1\}$ for all $i$.
    Again it remains to show $D_{2i-1}\in\{0,-1\}$.
    From~\eqref{eq:odd-prefix-diff}, we have $D_{2i-1}\in\{-2,-1,0,1\}$, so it suffices to rule out $1$ and $-2$.
    
    If $D_{2i-1}=1$, then~\eqref{eq:odd-prefix-diff} forces $P_{i-1}=0$ and $\indic[x_i=1]-\indic[x'_i=1]=1$,
    so $x_i=1$ and $x'_i\neq 1$, implying $x_i-x'_i\ge 1$ and hence
    \begin{equation*}
        P_i = P_{i-1} + (x_i-x'_i) \ge 1,
    \end{equation*}
    contradicting $P_i\in\{0,-1\}$.
    
    If $D_{2i-1}=-2$, then~\eqref{eq:odd-prefix-diff} forces $P_{i-1}=-1$ and $\indic[x_i=1]-\indic[x'_i=1]=-1$,
    so $x'_i=1$ and $x_i\neq 1$, implying $x_i-x_i' \leq -1$ and hence
    \begin{equation*}
        P_i = P_{i-1} + (x_i-x'_i) \leq -2,
    \end{equation*}
    contradicting $P_i\in\{0,-1\}$.
    
    Thus $D_{2i-1}\notin\{1,-2\}$, so $D_{2i-1}\in\{0,-1\}$ for all $i$.
    Combined with $D_{2i}\in\{0,-1\}$, this implies $D_k\in\{0,-1\}$ for all $k\in[2T]$,
    i.e., $s\sim_p s'$.
    
    In both cases $s\sim_p s'$, proving the statement.
\end{proof}

The proof of the following lower bound is a natural reduction from \Cref{thm:trits-prefix-lb}.
\begin{theorem}\label{thm:bits-prefix-lb}
      Let $\eps\in(0, 1/2], \delta\in[0, 1]$, and $T\in\N$ be sufficiently large.
      Then any $(\alpha, \frac{1}{100})$-accurate $(\eps, \delta)$-DP mechanism for $\sim_p$-neighboring continual counting on $\{0, 1\}$ over $T$ steps satisfies:
      \begin{enumerate}
          \item If $\delta = o(\eps / T)$, then $\alpha = \tilde{\Omega}\left(\min\left\{\frac{T^{1/3}}{\eps^{2/3}}, T\right\}\right)$.
          \item If $\delta = 0$, then $\alpha = \Omega\left(\min\left\{\sqrt{\frac{T}{\eps}}, T\right\}\right)$.
      \end{enumerate}
\end{theorem}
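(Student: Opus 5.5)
The plan is a reduction from \Cref{thm:trits-prefix-lb} via the encoding $\pmonetobin$. Suppose $\cM$ is an $(\eps,\delta)$-DP mechanism for $\ccpbits{n}$ that is $(\alpha(n), \frac{1}{100})$-accurate, where $n$ is the stream length. We build a mechanism $\cM'$ for $\ccptrits{T}$ as follows. On input $x\in\{-1,0,1\}^T$, $\cM'$ computes $s=\pmonetobin(x)\in\{0,1\}^{2T}$ online: upon receiving $x_i$ it feeds the two bits $(s_{2i-1},s_{2i})$ to $\cM$. Let $z_{2i}$ be the output of $\cM$ at step $2i$. Then $\cM'$ releases $y_i := z_{2i}-i$ at step $i$. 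This is online, since $(s_{2i-1},s_{2i})$ depends only on $x_i$.

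\emph{Privacy.} Take $x\sim_p x'$ in $\{-1,0,1\}^T$. By \Cref{lem:neighbor-pmonetobin}, $s\sim_p s'$ in $\{0,1\}^{2T}$, so these are valid neighboring inputs to $\cM$. Hence $\cM(s)$ and $\cM(s')$ are $(\eps,\delta)$-indistinguishable. The map from $(z_k)_k$ to $(y_i)_i$ is deterministic and data-independent (subtracting $i$ and discarding odd indices), so by \Cref{lem:post} the mechanism $\cM'$ is $(\eps,\delta)$-DP for $\ccptrits{T}$.

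\emph{Accuracy.} By \eqref{eq:bits-to-trit-prefix}, $\sum_{j\le 2i}s_j=\sum_{j\le i}x_j+i$. Therefore $|y_i-\sum_{j\le i}x_j|=|z_{2i}-\sum_{j\le 2i}s_j|$. On the probability-$0.99$ event where $\cM$ has error at most $\alpha(2T)$ at all steps, $\cM'$ has error at most $\alpha(2T)$ at all steps. So $\cM'$ is $(\alpha(2T),\frac{1}{100})$-accurate.

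\emph{Conclusion.} Applying \Cref{thm:trits-prefix-lb} to $\cM'$ gives $\alpha(2T)=\tilde\Omega(\min\{T^{1/3}/\eps^{2/3},T\})$ when $\delta=o(\eps/T)$, and $\alpha(2T)=\Omega(\min\{\sqrt{T/\eps},T\})$ when $\delta=0$.

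To state the result for bit streams of length $T$, write the target length as $n$ and set $T'=\lfloor n/2\rfloor$. Apply the reduction with trit streams of length $T'$, and pad the bit stream to length $n$ with a trailing $0$ if $n$ is odd. Padding with zeros preserves $\sim_p$, because all later prefix differences equal the last one. The condition $\delta=o(\eps/n)$ implies $\delta=o(\eps/T')$. Since $T'=\Theta(n)$, the constant factor is absorbed into the $\Omega$ and $\tilde\Omega$ notation, and taking $n$ large enough makes $T'$ large enough.

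The main obstacle would be preserving the neighbor relation under the encoding, but \Cref{lem:neighbor-pmonetobin} already handles it. What remains is only this bookkeeping on lengths and parameters.
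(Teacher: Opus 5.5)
Your proposal is correct and follows the paper's proof: the same $\pmonetobin$ reduction from \Cref{thm:trits-prefix-lb}, privacy via \Cref{lem:neighbor-pmonetobin} plus post-processing, and accuracy via the identity $\sum_{j\le 2i}s_j=\sum_{j\le i}x_j+i$. The only difference is the odd-length case: you pad with a trailing $0$, which preserves $\sim_p$, whereas the paper just says to repeat the argument for $T-1$. Your version makes that step slightly more explicit.
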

\begin{proof}
    Assume wlog.~that $T$ is even and let $n = T/2$.
    We prove the claim by a reduction from $\ccptrits{n}$.
    Let $\cc$ be an $(\eps, \delta)$-DP mechanism for $\ccpbits{T}$, and suppose that $\cc$ is $(\alpha, 1/100)$-accurate.
    Let $x \in\{-1,0,1\}^n$ be the input to an instance of $\ccptrits{n}$, and let $s = \pmonetobin(x)\in\{0,1\}^{T}$.
    We define $\cM$ as the mechanism performing the following routine on receiving $x_t$:
    \begin{enumerate}
        \item Construct $(s_{2t-1}, s_{2t})$ from $x_t$.
        \item Feed $s_{2t-1}$ followed by $s_{2t}$ to $\cc$, yielding corresponding outputs $a_{2t-1}$ and $a_{2t}$ from $\cc$.
        \item Output $y_t = a_{2t} - t$.
    \end{enumerate}
    Let $x, x'\in\{-1, 0, 1\}^n$ be neighboring inputs to our problem instance, i.e., $x\sim_p x'$.
    Let $s' = \pmonetobin(x')\in\{0,1\}^{T}$.
    By \Cref{lem:neighbor-pmonetobin}, we have that $s\sim_p s'$.
    Hence $s, s'$ are valid neighboring inputs to $\ccpbits{T}$.
    Therefore, since $\cc$ is $(\eps,\delta)$-DP for $\ccpbits{T}$, and $\cM$ outputs a postprocessing of $\cc$, $\cM$ is an $(\eps,\delta)$-DP mechanism for $\ccptrits{n}$.
    
    Moreover, $y_t = a_{2t} - t$ is a noisy estimate of $\sum_{i=1}^t x_i$ by \eqref{eq:bits-to-trit-prefix}.
    Hence, by the $(\alpha, 1/100)$-accuracy of $\cc$, we also have $(\alpha, 1/100)$-accuracy for $\cM$.
    Consequently, $\cM$ inherits the lower bounds from \Cref{thm:trits-prefix-lb}, where 
    \begin{enumerate}
          \item If $\delta = o(\eps / n) = o(\eps / T)$, then $\alpha = \tilde{\Omega}\left(\min\left\{\frac{n^{1/3}}{\eps^{2/3}}, n\right\}\right) = \tilde{\Omega}\left(\min\left\{\frac{T^{1/3}}{\eps^{2/3}}, T\right\}\right)$.
          \item If $\delta = 0$, then $\alpha = \Omega\left(\min\left\{\sqrt{\frac{n}{\eps}}, n\right\}\right) = \Omega\left(\min\left\{\sqrt{\frac{T}{\eps}}, T\right\}\right)$.
    \end{enumerate}
    since $n = \Theta(T)$.
    Finally, if $T$ is odd then repeat the proof for $T - 1 = \Theta(T)$ for the corresponding lower bound statement.
\end{proof}

\subsection{A lower bound for \texorpdfstring{$\ccebits{T}$}{Count}}
Fix $T\in\N$ for the rest of this section.
We will prove a lower bound for $\ccebits{T}$ for a particular class of mechanisms, namely those that are \emph{data-independent}.
We give a definition next in the context of continual release problems, which we first formally define.
\begin{definition}[Continual Release Problem]\label{def:continual-release-problem}
    Let $\mathcal{U}$ denote an update universe, and let $\mathcal{D}$ denote a dataset space.
    Let $U:\mathcal{D}\times\mathcal{U}\to\mathcal{D}$ be an update operator, and fix an initial dataset $D_0 = \emptyset$.
    For an input stream $\sigma=(u_1,\dots,u_T)\in\mathcal{U}^T$, define the induced datasets $(D_t(\sigma))_{t=0}^T$ recursively by
    \begin{equation*}
        D_t(\sigma) := U(D_{t-1}(\sigma),u_t)\qquad\text{for }t\in[T].
    \end{equation*}
    Let $f:\mathcal{D}\to\R$ be a real-valued function (a \emph{query}).
    We define the induced continual query $f:\mathcal{U}^T\to\R^T$ (with a slight abuse of notation) by
    \begin{equation*}
        f(\sigma) := \left(f(D_1(\sigma)),\dots,f(D_T(\sigma))\right).
    \end{equation*}
    Let $\sim$ be a neighboring relation over input streams in $\mathcal{U}^T$.
    We call $P=(f,\sim)$ a \emph{continual release problem}.
    A randomized mechanism $\cM$ is \emph{$(\eps,\delta)$-DP for $P$} if it satisfies $(\eps,\delta)$-DP, under continual observation, with respect to the neighboring relation $\sim$ on $\mathcal U^T$ for the mapping $\sigma \mapsto \cM(\sigma)\in\R^T$.
\end{definition}
For convenience, when discussing a continual release problem $P=(f, \sim)$, we assume the corresponding surrounding notation (e.g., input streams are denoted by $\sigma$).
We next formalize what it means to be data-independent for such problems.
\begin{definition}[Data-Independent Mechanism]\label{def:input-independent-additive-error}
    Let $P = (f, \sim)$ be a continual release problem.
    Let $\cM(\sigma)$ denote the vector of outputs produced by $\cM$ on input stream $\sigma$ to $P$.
    We say that $\mathcal{M}$ is a \emph{data-independent} mechanism, if there exists a distribution $\mu$ on $\mathbb{R}^T$, such that, for any input stream $\sigma$ of length $T$,
    \begin{equation*}
        \cM(\sigma) \stackrel{d}{=} f(\sigma) + Z,
    \end{equation*}
    where $Z$ is drawn from $\mu$, and the law of $Z$ is \emph{independent} of $\sigma$.
\end{definition}
Intuitively, a data-independent mechanism is distributionally equivalent to a mechanism that adds input-independent, possibly correlated across time steps, noise to the true function values.
Notably, it includes all mechanisms based on \emph{matrix-factorization} techniques \citep{li2015matrix,edmonds2020power}.

We remark that the notion of data-independent mechanism, together with related ideas in this section, appear in the literature on differentially private linear queries, see e.g., \citep{HardtT10geometry,li2015matrix,bhaskara2012unconditional,edmonds2020power,awan21structure,nikolov13geometry} and references therein.
These works primarily deal with $\ell_1$ neighboring inputs, and not the more elaborate neighboring relations we do.
Nonetheless, our key lemma in this section (\Cref{lem:sens-polytope-relation}) follows from basic properties of data-independent mechanism and \emph{sensitivity sets} (sometimes also referred to as a \emph{sensitivity polytope}, when appropriate).
We define this last notion next.

\begin{definition}[Sensitivity Set]\label{def:sensitivity-set}
    Let $P=(f, \sim)$ be a continual release problem. The \emph{sensitivity set} of $P$ is defined as
    \[
    S_{P} := \{\, f(\sigma) - f(\sigma') \mid \sigma \sim \sigma' \,\}.
    \]
\end{definition}

Rather than constructing a full reduction, as we did previously, we will prove a lower bound based on \emph{sensitivity-set containment}.
More precisely, we show that if the sensitivity set of $P'=(f', \sim')$ is contained in the sensitivity set of $P=(f, \sim)$, then for every data-independent mechanism for $P$ with additive error $\alpha$, there exists a mechanism for $P'$ with the same error $\alpha$.
Hence, we are able to lift a lower bound from $P'$ to any data-independent mechanism for $P$.
While not all mechanisms are data-independent, it shows inherent limitations for a broad and natural class of constructions (notably including factorization mechanisms).
It also highlights that any hope of achieving accuracy beyond these bounds requires mechanisms whose additive error depends on the input stream.

The following lemma formalizes this observation.
\begin{lemma}\label{lem:sens-polytope-relation}
    Consider two continual release problems $P=(f, \sim)$ and $P'=(f', \sim')$.
    Let $S_{P}$ and $S_{P'}$ denote the corresponding sensitivity sets, as defined in \Cref{def:sensitivity-set}.
    Suppose that
    \begin{equation*}
        S_{P'} \subseteq S_{P}.
    \end{equation*}
    For $\eps \geq 0, \delta\in[0, 1), \beta\in(0, 1)$, let $\cM$ be a \emph{data-independent} $(\eps, \delta)$-DP mechanism for $P$ that is $(\alpha, \beta)$-accurate.
    Then there exists an $(\eps, \delta)$-DP mechanism $\cM'$ for $P'$, that is also $(\alpha, \beta)$-accurate.
\end{lemma}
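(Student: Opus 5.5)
Since $\cM$ is data-independent for $P$, there is a distribution $\mu$ on $\R^T$ with $\cM(\sigma)\stackrel{d}{=} f(\sigma)+Z$, $Z\sim\mu$, for every input $\sigma$ of $P$. We define $\cM'$ for $P'$ by $\cM'(\sigma') := f'(\sigma') + Z$ with $Z\sim\mu$ drawn fresh. Accuracy is then immediate, and privacy follows from the inclusion of sensitivity sets.

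\textbf{Accuracy.} The error vector of $\cM'$ on any input $\sigma'$ is $\cM'(\sigma')-f'(\sigma') = Z$. For $\cM$ on any $\sigma$ it is $\cM(\sigma)-f(\sigma)\stackrel{d}{=}Z$. The $(\alpha,\beta)$-accuracy of $\cM$, applied to any single input of $P$, gives
\[
\Pr_{Z\sim\mu}\bigl[\forall t\in[T]: |Z_t|\le\alpha(t)\bigr]\ge 1-\beta .
\]
This is exactly $(\alpha,\beta)$-accuracy of $\cM'$. One point to check is that $P$ has at least one input stream, so that $\mu$ is actually constrained; this is harmless, since otherwise $S_P$ would be empty.

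\textbf{Privacy.} Fix $\rho\sim'\rho'$ for $P'$ and set $v:=f'(\rho)-f'(\rho')\in S_{P'}\subseteq S_P$. By definition of $S_P$ there exist $\sigma\sim\sigma'$ with $f(\sigma)-f(\sigma')=v$. Let $S\subseteq\R^T$ be measurable. We translate it by $w:=f(\sigma')-f'(\rho')$, which is also equal to $f(\sigma)-f'(\rho)$ because the two differences agree. Then
\[
\Pr[\cM'(\rho)\in S]=\Pr[f'(\rho)+Z\in S]=\Pr[f(\sigma)+Z\in S+w]=\Pr[\cM(\sigma)\in S+w],
\]
and similarly $\Pr[\cM'(\rho')\in S]=\Pr[\cM(\sigma')\in S+w]$. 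Applying $(\eps,\delta)$-DP of $\cM$ to the pair $\sigma\sim\sigma'$ and the measurable set $S+w$ gives both directions of $(\eps,\delta)$-indistinguishability. The reverse inequality needs no separate symmetry assumption on $\sim$, because Definition~\ref{def:indistinguishability} is already two-sided.

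\textbf{Remaining details.} The only step needing care is that equality in distribution is enough for every event probability used above. This holds because DP and accuracy depend only on the law of the output. I also need translations of measurable sets to be measurable, which is standard in $\R^T$. I expect no real obstacle; the main content is choosing the translation $w$ correctly.
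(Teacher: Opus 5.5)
Your proposal is correct and follows essentially the paper's proof: both reuse the noise distribution $\mu$ to define $\cM'(\rho)=f'(\rho)+Z$, observe that accuracy depends only on the law of $Z$, and use translation invariance so that privacy of an additive-noise mechanism depends only on the differences $f(\sigma)-f(\sigma')$. The only cosmetic difference is that the paper first rewrites the DP condition as a condition quantified over all $\lambda\in S_P$, whereas you pick an explicit witness pair $\sigma\sim\sigma'$ and translate the event by $w$.
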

\begin{proof}
    By $\cM$ being data-independent, there exists a distribution $\mu$ on $\mathbb{R}^T$, such that for every $\sigma$ of length $T$, $\cM(\sigma) \stackrel{d}{=} f(\sigma) + Z$ with $Z$ distributed according to $\mu$.
    Wlog assume that $\cM(\sigma) = f(\sigma) +Z$.
    By \Cref{def:dp}, $\cM$ is $(\eps,\delta)$-DP for $P$ if and only if, for all $\sigma\sim\sigma'$, and all measurable $O\subseteq \R^T$:
    \begin{align*}
        \Pr[f(\sigma) + Z \in O] &\leq e^{\eps}\Pr[f(\sigma') + Z \in O] + \delta,\\
        \Pr[f(\sigma') + Z \in O] &\leq e^{\eps}\Pr[f(\sigma) + Z \in O] + \delta.
    \end{align*}
    Let $\sigma\sim\sigma'$ and measurable $O\subseteq\R^T$ be arbitrary.
    Let $O' := \{y - f(\sigma) : y\in O\}$.
    Then,
    \begin{align*}
        f(\sigma) + Z \in O &\iff Z\in O'\,,\\
        f(\sigma') + Z \in O &\iff Z + f(\sigma') - f(\sigma)\in O'\,.
    \end{align*}
    Moreover, $O\mapsto O'$ is a bijection on all measurable subsets of $\mathbb{R}^T$.
    Hence quantifying over all measurable $O'$ is equivalent to quantifying over all measurable $O$.
    Therefore (after relabeling), we have equivalent inequalities:
    \begin{align*}
        \Pr[Z \in O] &\leq e^{\eps}\Pr[Z + f(\sigma') - f(\sigma) \in O] + \delta,\\
        \Pr[Z + f(\sigma') - f(\sigma) \in O] &\leq e^{\eps}\Pr[Z \in O] + \delta.
    \end{align*}
    Observe that the dependence on $\sigma, \sigma'$ only appears in the term $f(\sigma) - f(\sigma') \in S_{P}$.
    It follows that $\cM$ is $(\eps,\delta)$-DP for $P$, if and only if, for all $\lambda \in S_{P}$, and all measurable $O\subseteq \R^T$:
    \begin{align*}
        \Pr[Z \in O] &\leq e^{\eps}\Pr[Z - \lambda \in O] + \delta,\\
        \Pr[Z - \lambda \in O] &\leq e^{\eps}\Pr[Z \in O] + \delta.
    \end{align*}
    Repeating the argument for a data-independent $\cM'$ for problem $P'$, and additive noise $Z'\in\mathbb{R}^T$, you arrive at corresponding equations.
    For completeness: $\cM'$ is $(\eps,\delta)$-DP for $P'$, if and only if, for all $\lambda \in S_{P'}$, and all measurable $O\subseteq \R^T$:
    \begin{align*}
        \Pr[Z' \in O] &\leq e^{\eps}\Pr[Z' - \lambda \in O] + \delta,\\
        \Pr[Z' - \lambda \in O] &\leq e^{\eps}\Pr[Z' \in O] + \delta.
    \end{align*}
    By the set inclusion $S_{P'} \subseteq S_{P}$, if the additive-noise $Z$ provides privacy for $P$, then using $Z'= Z$ would provide privacy for $P'$.
    As the noise distributions $Z$ and $Z'$ uniquely determine the additive error of $\cM$ and $\cM'$ respectively, the statement follows.
\end{proof}

To prove our lower bound, we will show that the sensitivity set for continual counting on \emph{edit-neighboring} streams, is at least as expressive as that for continual counting under \emph{prefix-sum neighboring} streams.
\begin{lemma}\label{lem:sensitivity-sets-prefix-and-queue-neighboring}
    Let $P = \ccebits{T}$ and $P' = \ccpbits{T}$.
    Then,
    \begin{enumerate}
        \item $S_{P} \supseteq \{0,1\}^T \;\cup\; \{0,-1\}^T$.
        \item $S_{P'} = \{0,1\}^T \;\cup\; \{0,-1\}^T$.
    \end{enumerate}
\end{lemma}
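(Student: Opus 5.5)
Both parts follow from explicit constructions together with one symmetry remark. In both problems the neighboring relation is symmetric: $\sim_e$ is symmetric by \Cref{def:queue-neighboring}, since it only asks that one of the two streams be an insertion neighbor of the other, and $\sim_p$ is symmetric by \Cref{def:prefix-neighboring}. Hence each sensitivity set in \Cref{def:sensitivity-set} is closed under negation. So in each part it suffices to realize every $v\in\{0,1\}^T$ as a difference $f(\sigma)-f(\sigma')$; the vectors in $\{0,-1\}^T$ are then obtained by swapping $\sigma$ and $\sigma'$.

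\textbf{Part 2.} For $\subseteq$: if $\sigma\sim_p\sigma'$, the vector of prefix-sum differences $f(\sigma)-f(\sigma')$ lies in $\{0,1\}^T$ or in $\{0,-1\}^T$ directly by \Cref{def:prefix-neighboring}.

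For $\supseteq$, fix $v\in\{0,1\}^T$, set $v_0:=0$, and define bit streams $x,x'$ step by step according to the increment $v_t-v_{t-1}\in\{-1,0,1\}$:
\begin{itemize}
\item if the increment is $1$, set $(x_t,x'_t)=(1,0)$;
\item if it is $-1$, set $(x_t,x'_t)=(0,1)$;
\item if it is $0$, set $(x_t,x'_t)=(0,0)$.
\end{itemize}
Then $x_t-x'_t=v_t-v_{t-1}$ for every $t$, so by telescoping $\sum_{i\le t}x_i-\sum_{i\le t}x'_i=v_t\in\{0,1\}$. Hence $x\sim_p x'$ and $f(x)-f(x')=v$.

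\textbf{Part 1.} Fix $v\in\{0,1\}^T$ and take $\sigma'=(x'_1,\dots,x'_T)$ with $x'_t:=1-v_t\in\{0,1\}$. Let $\sigma$ be the insertion neighbor of $\sigma'$ at step $i=1$ with inserted value $x_1:=1$.
\begin{itemize}
\item Because $\sigma'$ contains no $\bot$, the index $j$ of \Cref{def:queue-neighboring} equals $T+1$.
\item So the definition forces $x_t=x'_{t-1}$ for all $2\le t\le T$, which makes $\sigma$ a valid stream in $\{0,1\}^T\subseteq\{0,1,\bot\}^T$.
\end{itemize}
Writing $s_t,s'_t$ for the prefix sums (with $s'_0=0$), we have $s_t=1+s'_{t-1}$. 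Therefore
\[
s_t-s'_t=1-x'_t=v_t \qquad\text{for all } t\in[T].
\]
This shows $v\in S_P$, and negation closure then gives $\{0,-1\}^T\subseteq S_P$.

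\textbf{Checks and obstacle.} The only step needing care is the bookkeeping at the stream boundary. At $t=1$ we need $s_1-s'_1=x_1-x'_1=1-(1-v_1)=v_1$, which holds. At the end of the stream, the last element $x'_T$ is discarded, which the edit relation explicitly permits. I expect the main conceptual obstacle to be noticing that a shift at step $1$ of a $\bot$-free stream turns the difference of prefix sums into a pointwise expression $1-x'_t$, and so realizes every binary pattern without needing any $\bot$ symbols at all.
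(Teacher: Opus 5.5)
Your proof is correct and follows the paper's argument: the same telescoping construction for Part~2, and for Part~1 the same insertion of a $1$ at step $1$ into a $\bot$-free stream with $x'_t = 1 - v_t$. The only difference is how you get $\{0,-1\}^T$ in Part~1: you use the symmetry of $\sim_e$, so that the sensitivity set is closed under negation, whereas the paper gives a second explicit construction that inserts a $0$ instead of a $1$ at step $1$.
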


\begin{proof}
For the input stream $\sigma = (x_1, \dots, x_T)$ in $\{0, 1, \bot\}^T$, recall that $f(\sigma)_t = f(D_t) = \sum_{i=1}^t x_i$, where $\bot$ summands are treated as zeros.
We prove each item separately.
\paragraph{Item 1.}
    We begin by proving that
    \begin{equation*}
        S_{P} \supseteq \{0,1\}^T.
    \end{equation*}
    Fix $y\in\{0, 1\}^T$.
    Consider the two sequences $\sigma = (1, z_1, \dots, z_{T-1})$ and $\sigma'=(z_1, \dots, z_T)$ from $\{0, 1\}^T$ (we do not use any $\bot$'s), where $\sigma\sim_e\sigma'$ ($\sigma$ is an edit-insert neighbor at $i=1$).
    For any $t\in[T]$, we have that
    \begin{equation*}
        f(\sigma)_t - f(\sigma')_t = 1 - z_t
    \end{equation*}
    where, on setting $z_t = 1 - y_t \in\{0, 1\}$, we get that $y=f(\sigma)-f(\sigma')\in S_{P}$.
    Hence $S_{P} \supseteq \{0, 1\}^T$.
    
    The argument for proving $S_{P} \supseteq \{0,-1\}^T$ proceeds similarly.
    Fix $y\in\{0, -1\}^T$, and instead consider $\sigma = (0, z_1, \dots, z_{T-1})$ and $\sigma'=(z_1, \dots, z_T)$ from $\{0, 1\}^T$.
    Now set $z_1, \dots, z_T$ based on
    \begin{equation*}
        y_t = f(\sigma)_t - f(\sigma')_t = -z_t,
    \end{equation*}
    yielding $y = f(\sigma) - f(\sigma') \in S_{P}$, and so $S_{P} \supseteq \{0, -1\}^T$.
    
    It follows that $S_{P} \supseteq \{0, 1\}^T \cup \{0, -1\}^T$, completing the proof of the first item.
\paragraph{Item 2.}
    The definition of prefix-sum neighboring (\Cref{def:prefix-neighboring}) explicitly enforces that:
    \begin{equation*}
        S_{P'} \subseteq \{0,1\}^T \cup \{0, -1\}^T.
    \end{equation*}
    Towards proving the other direction, we will show that
    \begin{equation*}
        S_{P'} \supseteq \{0,1\}^T.
    \end{equation*}
    Fix $y \in \{0, 1\}^T$.
    Next, define the sequences $\sigma=(x_1, \dots, x_T), \sigma'=(x_1', \dots, x_T')$ in $\{0, 1\}^T$ via
    \begin{equation*}
        x_t = \begin{cases}
           1\quad &\text{if}\; y_t - y_{t-1} = 1\\
           0\quad &\text{otherwise}
        \end{cases}
        \qquad
        x_t' = \begin{cases}
           1\quad &\text{if}\; y_t - y_{t-1} = -1\\
           0\quad &\text{otherwise}
        \end{cases}
    \end{equation*}
    where we define $y_0 = 0$.
    We will argue that $\sigma \sim_p \sigma'$.
    Indeed, for any $t\in[T]$:
    \begin{align*}
        f(\sigma)_t - f(\sigma')_t
        &= \sum_{i=1}^t x_i - \sum_{i=1}^{t} x_i'
        = \sum_{i=1}^{t} (x_i - x_i')\\
        &= \sum_{i=1}^{t} (y_i - y_{i-1})
        = y_t\in\{0, 1\}.
    \end{align*}
    Hence, we have that $y = f(\sigma) - f(\sigma') \in S_{P'}$, proving $S_{P'} \supseteq \{0, 1\}^T$.
    
    By a symmetrical argument, we also have that $S_{P'} \supseteq \{0, -1\}^T$, allowing us to conclude:
    \begin{equation*}
        S_{P'} \supseteq \{0,1\}^T \cup \{0, -1\}^T.
    \end{equation*}
    This finishes the proof of the second item, and the lemma.
\end{proof}

We are now ready to give our lower bound for $\ccebits{T}$.
\begin{theorem}
      Let $\eps\in(0, 1/2], \delta\in[0, 1)$, and $T\in\N$ be sufficiently large.
      Then any $(\alpha, \frac{1}{100})$-accurate $(\eps, \delta)$-DP \emph{data-independent} algorithm (see~\Cref{def:input-independent-additive-error}) for $\sim_e$-neighboring continual counting on $\{0, 1, \bot\}$ over $T$ steps satisfies:
      \begin{enumerate}
          \item If $\delta = o(\eps / T)$, then $\alpha = \tilde{\Omega}\left(\min\left\{\frac{T^{1/3}}{\eps^{2/3}}, T\right\}\right)$.
          \item If $\delta = 0$, then $\alpha = \Omega\left(\min\left\{\sqrt{\frac{T}{\eps}}, T\right\}\right)$.
      \end{enumerate}
\end{theorem}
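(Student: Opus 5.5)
The plan is to combine \Cref{lem:sens-polytope-relation}, \Cref{lem:sensitivity-sets-prefix-and-queue-neighboring}, and the unconditional lower bound of \Cref{thm:bits-prefix-lb}. Set $P = \ccebits{T}$ and $P' = \ccpbits{T}$.

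\textbf{Step 1: both are continual release problems with the same query.} In both problems the query is the prefix sum $f(\sigma)_t = \sum_{i\le t} x_i$, with $\bot$ read as $0$. The input universes differ, $\{0,1,\bot\}$ versus $\{0,1\}$, but \Cref{lem:sens-polytope-relation} only uses the noise distribution $Z$ and the two sensitivity sets. So the mechanism it builds for $P'$, namely $\cM'(\sigma) = f(\sigma) + Z$ on $\{0,1\}^T$, is well defined.

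\textbf{Step 2: sensitivity-set containment.} By \Cref{lem:sensitivity-sets-prefix-and-queue-neighboring},
\[ S_{P'} = \{0,1\}^T\cup\{0,-1\}^T \subseteq S_P. \]

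\textbf{Step 3: transfer the mechanism.} Suppose $\cM$ is a data-independent $(\eps,\delta)$-DP mechanism for $P$ that is $(\alpha,\frac{1}{100})$-accurate. \Cref{lem:sens-polytope-relation} then gives an $(\eps,\delta)$-DP mechanism $\cM'$ for $P'$ that is $(\alpha,\frac{1}{100})$-accurate. Accuracy carries over because the error of $\cM'$ is exactly $Z$, the same additive noise $\cM$ uses. This holds uniformly over inputs, including those in $\{0,1\}^T \subset \{0,1,\bot\}^T$.

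\textbf{Step 4: apply the lower bound.} Apply \Cref{thm:bits-prefix-lb} to $\cM'$ with the same $\eps\in(0,1/2]$, the same $\delta$, and sufficiently large $T$. This yields both items of the statement for $\alpha$ directly.

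The only subtlety I expect is in Step 1 and Step 3: checking that \Cref{lem:sens-polytope-relation} applies when the two problems have different input alphabets. This should be harmless. Its proof reduces privacy of an additive-noise mechanism to a condition on $Z$ for every shift $\lambda$ in the sensitivity set, and that reduction never refers to the input alphabet.

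A second minor point is the parameter ranges. The theorem allows $\delta\in[0,1)$, and \Cref{thm:bits-prefix-lb} allows $\delta\in[0,1]$, so the range needed here is covered.
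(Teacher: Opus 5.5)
Your proposal is correct and matches the paper's proof. Both combine three ingredients: the containment $S_{P'}\subseteq S_P$ from \Cref{lem:sensitivity-sets-prefix-and-queue-neighboring}, the transfer of the additive noise $Z$ to a mechanism $\cM'$ for $P'$ via \Cref{lem:sens-polytope-relation}, and \Cref{thm:bits-prefix-lb} applied to $\cM'$. Your remark that the differing input alphabets are harmless is also right, because that lemma's argument depends only on the law of $Z$ and the set of shifts $\lambda$.
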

\begin{proof}
    Let $P = \ccebits{T}$ and $P' = \ccpbits{T}$.
    Consider a data-independent $(\eps, \delta)$-DP mechanism $\cM$ for $P$ that is $(\alpha, 1/100)$-accurate.
    By \Cref{lem:sensitivity-sets-prefix-and-queue-neighboring} $S_{P'} \subseteq S_P$, and so by \Cref{lem:sens-polytope-relation} there exists an $(\eps, \delta)$-DP mechanism $\cM'$ for $P'$ that is $(\alpha, 1/100)$-accurate.
    Applying \Cref{thm:bits-prefix-lb} to $\cM'$ yields the stated lower bound on $\alpha$.
\end{proof}
\section{Empirical Work}\label{sec:empirical-work}

\begin{figure*}[t]
  \centering
  \begin{subfigure}[b]{0.32\linewidth}
    \centering
    \includegraphics[width=\linewidth]{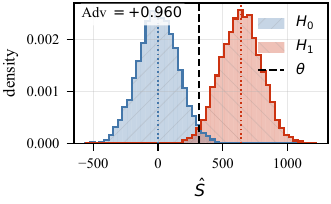}
    \caption{\basecc at $\eps = 2.78$.}
    \label{fig:hist-basecc}
  \end{subfigure}
  \hfill
  \begin{subfigure}[b]{0.32\linewidth}
    \centering
    \includegraphics[width=\linewidth]{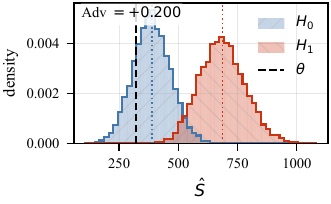}
    \caption{\partitioncc at $\eps = 2.78$.}
    \label{fig:hist-partitioncc}
  \end{subfigure}
  \hfill
  \begin{subfigure}[b]{0.32\linewidth}
    \centering
    \includegraphics[width=\linewidth]{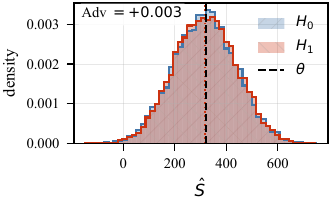}
    \caption{\simecc at $\eps = 150$.}
    \label{fig:hist-simpleecc}
  \end{subfigure}
  \caption{
    Plots of (normalized) histograms over the statistic $\hat{S}$ from the experiment setting in \Cref{fig:adv-vs-rmse-main}.
    Each plot shows the corresponding histogram over the statistic $\hat{S}$ under $H_0$ (blue; left-hatched) vs.\ $H_1$ (red, right-hatched), based on drawing $N=20000$ streams, with the strong vertical dashed line denoting the decision threshold $\theta$.
    The weaker dashed lines correspond to the mean value of the statistic under $H_0$ and $H_1$.
    No weak dashed lines are visible in \Cref{fig:hist-simpleecc} as they overlap with the line for $\theta$.
    }
  \label{fig:stat-hist}
\end{figure*}

We empirically substantiate the claim that swap-private continual counters are not a good substitute for an edit-private mechanism: when calibrated to defend against a natural distinguishing attack on edit-neighboring streams, they incur meaningfully larger error than our mechanisms for comparable attack success probability.\footnote{Our source code is available: \url{https://github.com/jodander/CCS26}.} More specifically, we construct the attack on a parametric family of structured Bernoulli streams, and sweep privacy parameters to trace the trade-off between how often an attack succeeds, and the respective errors incurred by different continual counters.

\paragraph{Formal context for experiments.}
We formulate the distinguishing attack as a guessing game, a standard approach in the DP auditing literature~\cite{jagielskiUO20,nasr21adversary,steinke23auditing}. 
The defender draws a stream $\sigma^{(0)}\sim \mathcal{D}$ (a distribution over input streams described below), constructs an edit-neighbor of $\sigma^{(0)}$ denoted $\sigma^{(1)}$, samples a challenge bit $b\in\{0,1\}$, and releases $y = \cM(\sigma^{(b)})$.
The attacker, given knowledge of $y, \mathcal{D}$, $\cM$ and how $\sigma^{(1)}$ is constructed for any choice of $\sigma^{(0)}$, outputs a guess $\hat{b}\in\{0, 1\}$ for the value of the challenge bit.
We measure the attacker's performance via their \emph{advantage}, as done in~\cite{cherubin17bayes,swanberg25beyond}.
We use the signed attack advantage defined by the expression
\begin{equation*}
    Adv = 2\Pr[\hat{b} = b] - 1 \in[-1, 1].
\end{equation*}
Note that the advantage is defined over the randomness of the stream(s), in addition to the randomness of the challenge bit, and the privatizing mechanism.
As such, it captures a notion of \enquote{average} performance of the attack for a distribution over streams.
In contrast with work on DP auditing, we use advantage as a descriptive statistic for attacker performance under $\mathcal{D}$, not as a tool for refuting DP guarantees. It captures how often  a given attacker distinguishes two streams.

We study the special case where $\mathcal{D}$ is a \emph{Bernoulli stream} of rate $p=(p_1, \dots, p_T)\in[0,1]^T$, i.e.,~the stream $\sigma^{(0)} = (x_1, \dots, x_T)$ is drawn i.i.d.\ via $x_t \sim \mathrm{Bernoulli}(p_t)$.
For constructing a neighboring stream, we use $\sigma^{(1)} = f(\sigma^{(0)}) = (1, x_1, \dots, x_{T-1})$, i.e., we insert a 1 at the first position in the stream and discard the last element.
Intuitively, since we permit the attacker knowledge of $p$, the attacker can leverage the fact that the probability of seeing a 1 at any given point in the stream may differ across the two streams.
More precisely, for $t\geq 2$, $\Pr[\sigma^{(0)}_t = 1] = p_t$ and $\Pr[\sigma^{(1)}_t = 1] = p_{t-1}$.

We will formalize this with a natural correlation attack.
Let $\Delta p, \Delta y \in \mathbb{R}^{T}$ be defined via $\Delta p_t = p_{t} - p_{t-1}$ and $\Delta y_t = y_t - y_{t-1}$ (with $p_0 = y_0 = 0$).
The attacker computes the values $\hat{S}(y, p)$ and $\theta(p)$ defined by the expressions
\begin{align*}
    \hat{S}(y, p) \,:=\, \sum_{t=2}^{T} (p_t - \Delta y_t)\Delta p_t,\quad \theta(p) \,:=\, \frac{1}{2}\sum_{t=2}^{T} (\Delta p_t)^2.
\end{align*}
The attacker then outputs $\hat{b} = \mathbbm{1}[ \hat{S}(y, p) \geq \theta(p)]$.
To motivate the attack, consider an unbiased additive-noise mechanism, so $\mathbb{E}[\Delta y_t | \sigma^{(b)}] = \sigma^{(b)}_t$.
Under $b=0$ we have $\sigma_t^{(b)} = x_t$, while for $b=1$ and $t\geq 2$, we have that $\sigma_t^{(b)} = x_{t-1}$.
Hence the residual $r_t := p_t - \Delta y_t$ has mean $0$ under $b=0$, and mean $\Delta p_t$ under $b=1$.
The attacker can thus exploit a known signal in the noise, and $\hat{S}$ is a natural linear test correlating $r$ against the signature $\Delta p$; the threshold $\theta(p)$ represents the midpoint of the expected value of $\hat{S}$ under the two possible values of $b$.
Restricting the sum to $t\geq 2$ makes the attack independent of the inserted bit's value: it only measures the shift in the rate profile arising from the edit.

\paragraph{Experimental setup.}
We have implemented the following $(\eps, \delta)$-DP (unbounded) continual counters:
\begin{enumerate}
    \item \basecc: Implementation of the 1-step-1-neighboring continual counter from \Cref{lem:cc-standard}, internally using the near-optimal bounded continual counter from~\cite{fichtenberger2022constant,henzinger2023almost}.
    \item \partitioncc: Implementation of the 1-step-1-neighboring continual counter in~\cite[Theorem~3.1]{dwork2015pure}, internally using \basecc.
    For setting the threshold for their partitioning mechanism, we allow the mechanism knowledge of the stream length $T$.
    \item \ecc: \Cref{alg:ecc}, with \basecc as the \enquote{base} 1-step-1-neighboring continual counter.
    \item \simecc: Implementation of \Cref{alg:ecc}, but based on the \enquote{simple partitioning} (\Cref{alg:simple-partition}) with \basecc as the \enquote{base} 1-step-1-neighboring continual counter.
    \item \gausscc: An edit-private continual counter, based on at each step making a fresh release of the running sum with the Gaussian mechanism, and using \emph{Gaussian DP}~\cite{dong2022gdp} for the composition.
    It is presented in detail as \Cref{alg:edit-gaussian} in \Cref{app:empirical-work-details}.
\end{enumerate}
The full pipeline - from stream generation to mechanism output to distinguishing attack - uses common random numbers.
Each (pseudorandomly) generated trial $(b, \sigma^{(0)})$ is fed to each of the mechanisms being tested, i.e.,~each mechanism receives the same input stream but the pseudorandomness across mechanisms is not shared.

To compare the error of different mechanisms we use the \emph{root-mean-squared error (RMSE)}.
For a continual counter, taking as input $\sigma\in\{0,1\}^T$ and producing an output $y\in\mathbb{R}^T$, the RMSE of its output is defined as $\sqrt{\frac{1}{T}\sum_{t=1}^{T}(y_t - \sum_{j=1}^{t} \sigma_j )^2}$.
We write \emph{per-run} RMSE when we want to emphasize that the error we report is the average RMSE over many trials.

\paragraph{Attack advantage vs.\ error}
Our main empirical result is shown in~\Cref{fig:adv-vs-rmse-main}.
The plot shows the outcome of running and attacking each mechanism (with the exception of \ecc) on inputs from Bernoulli streams with a block structure.\footnote{Including \ecc in the comparison would yield greater error than \simecc because (1) sparsity adaptivity incurs constant-factor overhead, and (2) the stream in \Cref{fig:adv-vs-rmse-main} is dense (expected $\approx T/2$ ones). For sufficiently long sparse streams, \ecc is expected to outperform \simecc.}
Specifically, we consider streams of length $T=10^4$ with a Bernoulli rate $p \in \{0.1, 0.9\}^T$ where the rate changes every 10 steps.
Each point shown corresponds to running one mechanism with privacy parameters $\eps\in[0.1, 150], \delta=10^{-5}$, with per-run RMSE and attack advantage estimated over $N=20000$ runs.
We proceed to make some observations.

Firstly, the results suggest that our correlation attack is effective against both 1-step-1-neighboring counters \basecc and \partitioncc.
This is not too surprising in the case of \basecc\ - an unbiased continual counter, against which the attack derivation directly applies - but it is arguably more surprising in the case of \partitioncc.
While \partitioncc has biased error, the time steps at which it updates its output are nevertheless positively correlated with the Bernoulli rate.
Our experiments indicate that this correlation is sufficient to keep the attack effective.

Secondly, the same attack proves less effective against the edit-neighboring counters: when we fix a bound on the attack advantage and read off the error each mechanism incurs at that bound, the gap is substantial.
\begin{table}[ht]
  \centering
  \caption{Per-run RMSE attainable for given attack advantage 0.1 or 0.05 at 95\% confidence for the experimental setup in \Cref{fig:adv-vs-rmse-main}. Subscript of the reported error denotes the standard error (SE). Across the privacy sweep $\eps\in[0.1, 150]$ \basecc did not achieve an advantage below $0.05$.}
  \label{tab:adv-vs-rmse}
  \begin{tabular}{@{}lcc@{}}
    \toprule
    {Mechanism} & {RMSE @ $Adv \leq 0.1$} & {RMSE @ $Adv \leq 0.05$}\\
    \midrule
    \basecc & \err{280.6}{0.6} & N/A\\
    \partitioncc & \err{107.5}{0.3} & \err{188.8}{0.4}\\
    \gausscc & \err{84.081}{0.005} & \err{325.98}{0.02}\\
    \simecc & \err{31.32}{0.02} & \err{31.32}{0.02}\\
    \bottomrule
  \end{tabular}
\end{table}
From~\Cref{tab:adv-vs-rmse}, we see that on fixing a bound on the attack advantage, the error incurred by \simecc is significantly lower than that incurred by its competitors.
Asymptotically, both \basecc and \gausscc are input-independent additive-noise mechanisms, so by~\Cref{thm:lb-edit-informal}, any calibration that achieves privacy for edit-neighboring streams must incur polynomial $\ell_\infty$ error in $T$.
Our edit-neighboring mechanisms by contrast achieve polylogarithmic error, so a separation at sufficiently large $T$ is predicted by theory; the empirical contribution is to show that the separation manifests at realistic $T$ against a single, simple attack.

Our particular choice of $T=10^4$ is motivated in part by resource constraints: running the setup given in \Cref{fig:adv-vs-rmse-main} over $N=20000$ trials takes a few hours on a commercial MacBook Air (M4).
At smaller values of $T$, we observe that the trade-off between advantage and error is less pronounced, which is primarily explained by our large constant factors.
Nevertheless, we note that our findings are largely robust to the precise parameter setting, and refer the reader to \Cref{fig:robustness} in \Cref{app:empirical-work-details} for evidence to this effect.

To give additional insight into why the attack works, we also plot the associated (normalized) histogram over the statistic in \Cref{fig:stat-hist}. In short, the statistic $\hat{S}$ is effective at separating the signal based on the secret bit $b\in\{0,1\}$ when applied to the 1-step-1-neighboring counters, but in the case of \partitioncc the decision threshold $\theta$ could be fine-tuned; doing so would only widen the gap with \simecc.
When applied to \simecc, the statistic fails to separate the hypotheses, even at $\eps=150$.
We emphasize that this is not to suggest that \simecc is unconditionally robust: it is resistant to this particular, natural attack.

\paragraph{Comparing edit-neighboring counters.}
We finish our empirical section by briefly remarking upon the relative error of the counters for edit-neighboring streams, treated in \Cref{fig:rmse-vs-t-main}.
Here we make the comparison at $(\eps=10, \delta=10^{-5})$-DP, on Bernoulli streams of length $T\in\{10^3, 10^4, 10^5, 10^6\}$ and constant rate 0.01, and over $N=100$ runs per point.
\simecc achieves substantially lower error than \gausscc and lower error than \ecc, except at the smallest values of $T$, where \ecc produces few or no updates.

\subsection{Technical Details}\label{app:empirical-work-details}

\paragraph{Details on \texorpdfstring{\texttt{GaussianCC}}{GaussianCC}}
The baseline for edit-DP we compare against is that of at each time step releasing the full prefix sum with the Gaussian mechanism (\Cref{lem:gaussian-mech}), using that the $\ell_2$ sensitivity under edit neighbors for each prefix sum is $1$.
For this to be private on streams of unbounded lengths, however, we need to set the privacy parameters of each individual release such that they in the limit converge to our target $(\eps, \delta)$-DP guarantee.
\Cref{alg:edit-gaussian} does exactly this.
\begin{algorithm}[b]
\caption{\textsc{GaussianCC}}
\label{alg:edit-gaussian}
\begin{algorithmic}[1]
\Require Privacy parameters $\varepsilon, \delta$; stream $x_1, x_2, \dots \in \mathbb{R}$
\Require Square-summable sequence $(a_t)_{t \ge 1}$ with $\sum_{t \ge 1} a_t^2 \le 1$
\State $\mu \gets$ $\mu$-GDP parameter calibrated s.t.\ $\mu$-GDP implies $(\varepsilon,\delta)$-DP
\State $S_0 \gets 0$
\For{$t = 1, 2, \dots$}
    \State receive $x_t$
    \State $S_t \gets S_{t-1} + x_t$
    \State $\mu_t \gets \mu \cdot a_t$
    \State sample $Z_t \sim \mathcal{N}\!\left(0,\, 1/\mu_t^{2}\right)$
    \State \textbf{release} $\widehat{S}_t \gets S_t + Z_t$
\EndFor
\end{algorithmic}
\end{algorithm}
The only facts of Gaussian DP we leverage is that (1) the Gaussian mechanism with variance $1/\mu^2$ is $\mu$-GDP for sensitivity-1 queries and (2) the composition of $n$ releases at privacy levels $\mu_1, \dots, \mu_n > 0$ satisfy $\sqrt{\sum_{k=1}^n \mu_k^2}$-GDP.
By the square-summability of the sequence $(a_t)_{t\ge 1}$, $\sqrt{\sum_{t=1}^{\infty} \mu_t^2} \leq \mu$.
Hence \Cref{alg:edit-gaussian} satisfies $\mu$-GDP, and therefore also $(\eps, \delta)$-DP.
For the experiments, we chose the sequence $(a_t)_{\geq 1}$ where $a_t = 1/(1.85 \sqrt{t}\ln(t+1))$.

\paragraph{On the applicability of Clopper--Pearson}
For the advantage error bars, we report 95\% Clopper--Pearson confidence intervals on the success rate $q$ of the attacker, and translate this into the advantage via the affine map $Adv = 2q -1$.
This is motivated by the fact that each trial is composed of three independent sampling steps (1) sampling the Bernoulli stream (2) sampling a challenge bit, and (3) sampling noise for the mechanism.
As each of these sources of randomness are kept separate, and the adversary does not adapt across trials, the success probability of each attack is $\mathrm{Bern}(q)$ for some $q$, hence the total number of successful attacks is $\mathrm{Bin}(N, q)$, which justifies Clopper--Pearson on $q$, and the confidence interval carrying over to $Adv$.

\paragraph{On the robustness of our results.}
To demonstrate that our results are not sensitive to the precise parameter setting, we re-compute the results in \Cref{fig:adv-vs-rmse-main} at $T=1000$, and vary other additional parameters.
The results are given in \Cref{fig:robustness}.
\begin{figure*}[t]
  \centering
  \begin{subfigure}[b]{0.49\linewidth}
    \centering
    \includegraphics[width=\linewidth]{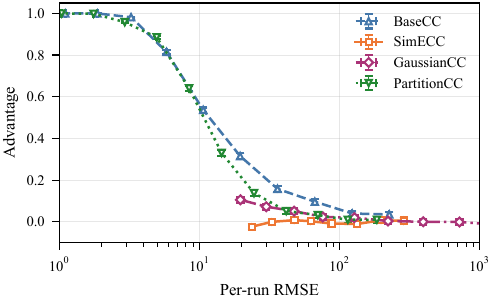}
    \caption{Same parameters, except $T=1000$.}
    \label{fig:exp3-same}
  \end{subfigure}
  \hfill
  \begin{subfigure}[b]{0.49\linewidth}
    \centering
    \includegraphics[width=\linewidth]{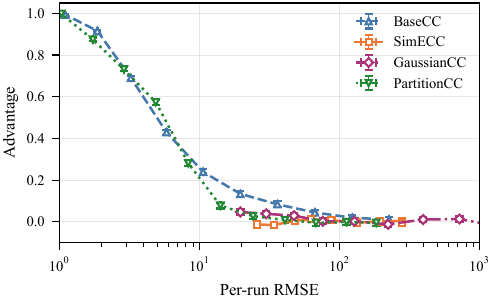}
    \caption{Bernoulli rate flips every 50 steps.}
    \label{fig:exp4-bigger-blocks}
  \end{subfigure}\\
  \begin{subfigure}[b]{0.49\linewidth}
    \centering
    \includegraphics[width=\linewidth]{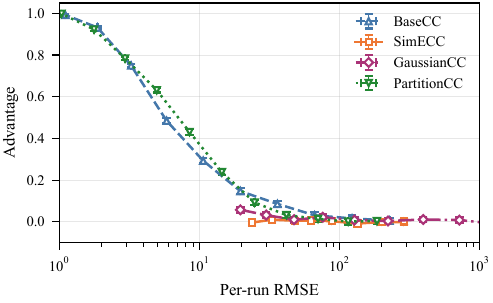}
    \caption{Bernoulli rate $p\in\{0.3, 0.7\}^T$.}
    \label{fig:exp5-smaller-rate-spread}
  \end{subfigure}
  \hfill
  \begin{subfigure}[b]{0.49\linewidth}
    \centering
    \includegraphics[width=\linewidth]{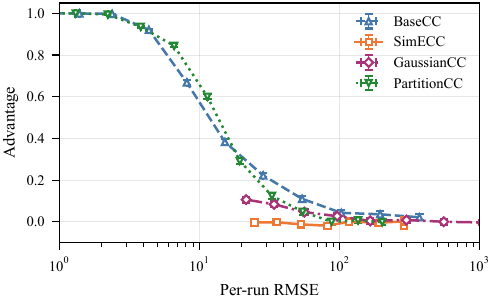}
    \caption{$\delta = 10^{-9}$.}
    \label{fig:exp6-smaller-delta}
  \end{subfigure}\\
  \caption{
    Plots re-running the setup of \Cref{fig:adv-vs-rmse-main} but at $T=10^3$, different seeded randomness and slight parameter variations.
    The base setup is $T=1000$, $\delta=10^{-5}$, alternating Bernoulli rates $p\in\{0.1, 0.9\}^T$ on blocks of width $10$, with each point shown being the average over $N=20000$ runs for one value of $\varepsilon\in\{0.10, 0.19, 0.38, 0.73, 1.4, 2.8, 5.4, 10, 20, 40, 77, 150\}$.
    The error bars on the per-run RMSE are showing the standard error (SE), and the error bars on the advantage are 95\% Clopper--Pearson confidence intervals; $N$ has been set such that the error bars are small enough to be barely visible, if at all.
    \Cref{fig:exp3-same} runs this precise setup, whereas each of \Cref{fig:exp4-bigger-blocks,fig:exp5-smaller-rate-spread,fig:exp6-smaller-delta} change the value of one parameter.
    }
  \label{fig:robustness}
\end{figure*}
The main point we want to highlight is that all four plots are qualitatively quite similar.
While the curves do change shape based on the parameter tuning, they do not change by much.
Compared to \Cref{fig:adv-vs-rmse-main}, the major change at $T=10^3$ is that the benefit of \simecc is not as pronounced.
As we have discussed, this is not surprising: we expect for the polylogarithmic error scaling of \simecc to become more impactful at larger $T$.

\section*{Acknowledgements}
    \noindent\begin{minipage}[t]{\dimexpr\linewidth-5.5cm\relax}%
    \vspace{0pt}%
    \indent This research was supported by the European Research Council (ERC) under the European Union's Horizon 2020 research and innovation programme (Grant agreement No.\ 101019564), and the Austrian Science Fund (FWF) under grant DOI 10.55776/Z422.
    \end{minipage}
    \begin{minipage}[t]{5cm}%
    \vspace{0pt}%
    \includegraphics[width=5cm]{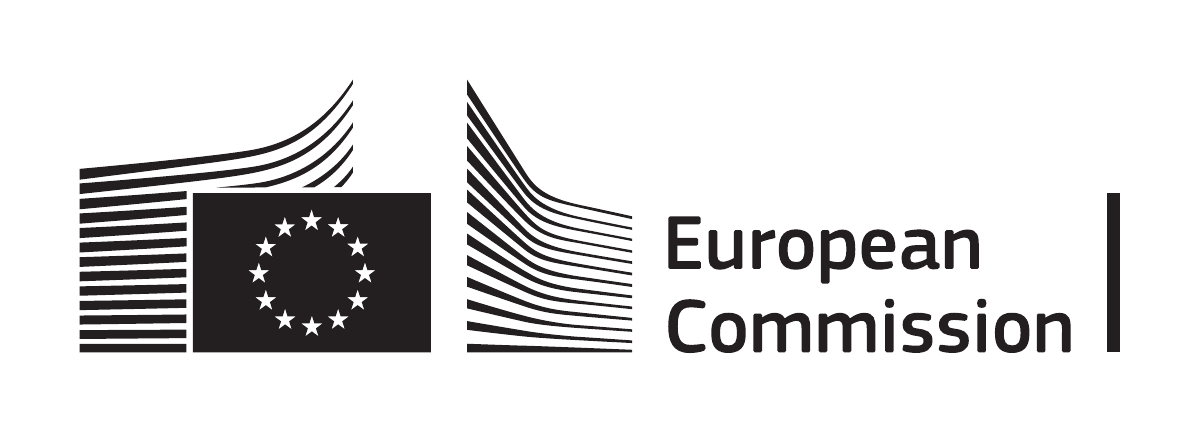}%
    \end{minipage}
    
  \noindent For open access purposes, the authors have applied a CC BY public copyright license to any author-accepted manuscript version arising from this submission. Views and opinions expressed are however those of the author(s) only and do not necessarily reflect those of the European Union or the European Research Council Executive Agency. Neither the European Union nor the granting authority can be held responsible for them.

  \medskip
  \noindent Anamay Chaturvedi was supported by an ISTA Fellowship.

\bibliography{main}
\appendix
\newpage

\section{Results for \enquote{Standard} Continual Counters}\label{app:cc-proofs}

In this section we briefly prove results for continual counting under $1$-step 1-neighboring.
We begin by stating the explicit construction with the tightest known bound on the (maximum) variance for the case of bounded streams.
\begin{lemma}\label{lem:cc-standard-bounded}[\cite{henzinger25normalized}]
    For every $\eps > 0$, $\delta \in (0,1)$ and $T\in\mathbb{N}$, there exists a continual counting mechanism $\cc$ that is $(\eps,\delta)$-DP with respect to the 1-step $1$-neighbor relation for streams of length $T$.
    Given any input stream of length $T$, let $z_t$ denote the error of $\cc$ at time step $t\in[T]$.
    Then,
    \begin{equation*}
        z_t \sim \mathcal{N}(0, \sigma_t^2)\qquad\text{where}\qquad\sigma_t \leq C_{\eps, \delta}\left( 0.846 + \frac{\ln(T)}{\pi} + o(1) \right).
    \end{equation*}
    Here $C_{\eps,\delta}$ is the Gaussian mechanism constant from \Cref{lem:gaussian-mech} and the $o(1)$ term goes to zero as $T\to \infty$.
\end{lemma}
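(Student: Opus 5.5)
This lemma is cited from \cite{henzinger25normalized}, so the plan is to instantiate their construction and verify the two claims: privacy, and the Gaussian error profile. We take the lower-triangular all-ones matrix $A\in\R^{T\times T}$ (the prefix-sum operator) and its canonical square-root factorization $A=LR$ with $L=R$. Here $L$ is the lower-triangular Toeplitz matrix whose entries $c_k=\binom{2k}{k}4^{-k}$ are the Taylor coefficients of $(1-z)^{-1/2}$. The paper's ``normalized'' variant rescales columns of $R$ (and rows of $L$ inversely, via a diagonal $D$), giving $A=(LD^{-1})(DR)$. The mechanism outputs $Ax+LD^{-1}Z$, where $Z\sim\mathcal{N}(0,\sigma^2 I)$. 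Equivalently, it releases $DRx+Z$ and post-processes by $LD^{-1}$.

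\textbf{Privacy.} Under the 1-step 1-neighbor relation, $x-x'=\lambda e_j$ with $|\lambda|\le 1$. Hence $\|DR(x-x')\|_2\le \max_j\|(DR)_{:,j}\|_2=:\mathsf{sens}$. Setting $\sigma=C_{\eps,\delta}\,\mathsf{sens}$ and applying \Cref{lem:gaussian-mech} makes the release of $DRx+Z$ $(\eps,\delta)$-DP. Post-processing (\Cref{lem:post}) then extends this to $Ax+LD^{-1}Z$.

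\textbf{Error distribution.} The error is $z=LD^{-1}Z$, a linear image of an isotropic Gaussian. So $z_t\sim\mathcal{N}(0,\sigma_t^2)$ with
\[
\sigma_t=C_{\eps,\delta}\,\mathsf{sens}\cdot\|(LD^{-1})_{t,:}\|_2 .
\]
It remains to bound $\mathsf{sens}\cdot\max_t\|(LD^{-1})_{t,:}\|_2$ by $0.846+\ln(T)/\pi+o(1)$. In the unnormalized case both factors equal $\sqrt{\sum_{k<T}c_k^2}$, so the product is $\sum_{k=0}^{T-1}c_k^2$. Using $c_k^2=\tfrac{1}{\pi k}(1+O(1/k))$, this sum is $\tfrac{1}{\pi}\ln T+O(1)$.

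The main obstacle is the sharp additive constant $0.846$. Pinning it down requires the exact asymptotic expansion of $\sum_k c_k^2$ (which involves the Euler--Mascheroni constant and $\ln 4$ divided by $\pi$). It also requires verifying that the chosen normalization $D$ does not worsen the product and attains the claimed constant. We would take this computation directly from \cite{henzinger25normalized} rather than rederive it, since only the $O(\ln T)$ scaling is used later in proving \Cref{lem:cc-standard}.
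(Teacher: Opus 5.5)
Your plan matches the paper's treatment, which gives no proof and simply imports this bound from \cite{henzinger25normalized}. The generic factorization-mechanism argument you supply (Gaussian mechanism on the encoded stream with sensitivity equal to the largest column norm of the right factor, then post-processing by the left factor $B$, so that $z_t$ is Gaussian with $\sigma_t=C_{\eps,\delta}\cdot\mathsf{sens}\cdot\|B_{t,:}\|_2$) is correct for any factorization $A=BC$, and the constant $0.846$ indeed has to be taken from the source.

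There are two harmless slips. First, $A=(LD^{-1})(DR)$ rescales rows of $R$ and columns of $L$, not columns of $R$ and rows of $L$; normalizing the columns of the right factor as $RD$ would force the left factor $A(RD)^{-1}$, which is not a diagonal rescaling of $L$. Second, the unnormalized square root gives $\sum_{k<T}c_k^2=\frac{\ln T}{\pi}+\frac{\gamma_{\mathrm{EM}}+\ln 16}{\pi}+o(1)\approx\frac{\ln T}{\pi}+1.066$, where $\gamma_{\mathrm{EM}}$ is the Euler--Mascheroni constant, so the normalization must strictly improve the additive constant rather than merely not worsen it.
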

Our work requires continual counters supported on unbounded streams; \Cref{lem:cc-standard} describes such a continual counter.
Before proving the lemma, we remark that the result can be derived given access to any $(\eps,\delta)$-DP continual counter with error matching \Cref{lem:cc-standard-bounded} up to multiplicative constants - e.g., the binary tree mechanism~\cite{chan2011private,dwork2010differentially} based on Gaussian noise - at a cost of worse constants in the error.
\begin{proof}[Proof of \Cref{lem:cc-standard}]
    The construction is based on the so-called \enquote{doubling trick} from \cite{chan2011private}, and essentially matches their Algorithm~4 with the exception that we consider \emph{approximate} DP, and so uses Gaussian rather than Laplace noise.
    We sketch the argument; constants can be improved by optimizing the privacy budget split.

    \paragraph{Construction.}
    Divide the timeline into dyadic intervals $\mathcal{I}_1, \mathcal{I}_2, \dots$, where $\mathcal{I}_{k} = [2^{k-1}, 2^{k}-1]$.
    We define 
    \begin{equation*}
        p_k := \sum_{t\in\mathcal{I}_k} x_t + z_k^{(psum)}\qquad\text{where}\qquad z_k^{(psum)} \sim N(0, C_{\eps/2, \delta/2}^2)
    \end{equation*}
    as the partial sum on $\mathcal{I}_k$, released with the Gaussian mechanism.
    Define the \emph{logarithmic mechanism} $\mathcal{L}$ as the mechanism, that, on receiving $x_t$, outputs
    \begin{equation*}
       \ell_t :=  \sum_{k=1}^{\lfloor \log(t+1) \rfloor} p_k.
    \end{equation*}
    In other words, $\ell_t$ is a noisy prefix sum up to the last completed interval.

    While $\mathcal{L}$ is an unbounded mechanism, its error grows linearly with time.
    To improve it, we run an instance of the continual counter from \Cref{lem:cc-standard-bounded} on each unfinished interval.
    Whenever we enter a new interval $\mathcal{I}_k$, we initialize a fresh $(\eps/2, \delta/2)$-DP continual counter for streams of length $2^{k-1}$ that receives inputs $(x_t)_{t\in\mathcal{I}_k}$, and produces corresponding outputs $(q_t)_{t\in\mathcal{I}_k}$.
    We define our unbounded $(\eps, \delta)$-DP continual counter $\cc$ as the one that, on receiving $x_t$, outputs
    \begin{equation*}
        y_t = \begin{cases}
            \ell_t\quad&\text{if } t = 2^j - 1\text{ for some }j\in\N\\
            \ell_t + q_t\quad&\text{o.w.}
        \end{cases}
    \end{equation*}

    \paragraph{Privacy.}
    Since each partial sum used by $\mathcal{L}$ is (1) a $(\eps/2, \delta/2)$-DP release with the Gaussian mechanism and (2) disjoint, $\mathcal{L}$ itself is $(\eps/2, \delta/2)$-DP via postprocessing.
    Additionally, any input $x_t$ is used as input to exactly one time-bounded $(\eps/2, \delta/2)$-DP continual counter.
    By simple composition, the unbounded mechanism $\cc$ is $(\eps, \delta)$-DP.

    \paragraph{Accuracy.}
    Fix $t \in \N$ where $t\in\mathcal{I}_k$ for $k=\lfloor\log(t+1)\rfloor$.
    Note that by construction, the output can be written as
    \begin{equation*}
        y_t = \sum_{j=1}^{t} x_j + z_t
    \end{equation*}
    where $z_t\sim \mathcal{N}(0, \sigma_t^2)$.
    Assume $t = 2^{j} - 1$ for some $j\in\N$.
    Then by construction:
    \begin{equation*}
        y_t = \sum_{j=1}^{t} x_j + \sum_{i=1}^{\lfloor\log(t+1)\rfloor} z_i^{(psum)}
    \end{equation*}
    and so $\sigma_t = O(C_{\eps/2, \delta/2} \sqrt{\log t})$ in this case.
    For the second case, we also get a contribution to the error from the time-bounded continual counter initialized for $T=\lvert \mathcal{I}_k \rvert = 2^{k} = O(t)$:
    \begin{equation*}
        y_t = \ell_t + q_t = \sum_{j=1}^{t} x_j + \sum_{i=1}^{\lfloor\log(t+1)\rfloor} z_i^{(psum)} + \hat{z}_t
    \end{equation*}
    where $\hat{z}_t$ is the zero-mean Gaussian noise from \Cref{lem:cc-standard-bounded} of standard deviation $O(C_{\eps/2, \delta/2} \log t)$.
    As sums of Gaussians are Gaussian, and the intra-interval error dominates, we can conclude that $\sigma_t = O(C_{\eps/2, \delta/2} \log t)$ for every $t$.

    To give an $\ell_\infty$-error bound and finish the argument, we begin by using a standard Gaussian tail bound.
    For a fixed $t\in\mathbb{N}$ and $\beta\in(0,1)$, we have that:
    \begin{equation*}
        \Pr[ \lvert z_t \rvert \geq \sigma_t \sqrt{2\ln(2/\beta)}] \leq \beta.
    \end{equation*}
    In particular, if we choose a failure probability of $\beta_t = \frac{6\beta}{\pi^2t^2}$ for the output at time $t$, then
    \begin{align*}
        \Pr\big[\exists t \in \N\,&:\, \lvert z_t \rvert \geq \sigma_t \sqrt{2\ln(2/\beta_t)}\big]
        \leq \sum_{t=1}^{\infty} \Pr\left[\lvert z_t \rvert \geq \sigma_t \sqrt{2\ln(\pi^2 t^2/(3\beta))}\right] \leq \sum_{t=1}^{\infty} \beta_t = \beta,
    \end{align*}
    where the first inequality is using a union bound over all time steps, and the last inequality that $\sum_{t=1}^\infty \frac{1}{t^2} = \frac{\pi^2}{6}$.
    Rearranging, we have shown that
    \begin{equation*}
        \Pr[\forall t\in\N\,:\, \lvert z_t \rvert \leq \sigma_t \sqrt{2\ln(\pi^2 t^2/(3\beta))} ] \geq 1-\beta
    \end{equation*}
    and so we have proved $(\alpha, \beta)$-accuracy for $\cc$ where
    \begin{equation*}
        \alpha(t) = \sigma_t \sqrt{2\ln(\pi^2 t^2/(3\beta))}
        = O\!\left (C_{\eps/2, \delta/2}\log(t)\sqrt{\log(t/\beta)}\right)\,.
    \end{equation*}
    The final statement follows from using that $C_{\eps/2, \delta/2} = O(C_{\eps, \delta})$.
    This concludes the proof.
\end{proof}

Given access to an unbounded continual counter, we are able to construct the \emph{biased} continual counter.
We provide the constructive proof of \Cref{lem:cc-biased} next.
\begin{proof}[Proof of \Cref{lem:cc-biased}]
    Consider $\cc$ instantiated to be $(\eps, \delta/2)$-DP.
    On receiving an input stream $\sigma = (x_1, x_2, \dots)$, let its output be $\cc$ be $y_1', y_2', \dots$.
    We define the new continual counter $\bcc$, with output sequence $y_1, y_2, \dots$ via
    \begin{equation*}
        \forall t\in\mathbb{N} \;:\; y_t = \left\lceil \max\left\{  y_t' + \alpha_{\cc}(\eps, \delta/2, \eta, t) \;,\; \sum_{i=1}^t x_i\right\}\right\rceil\,.
    \end{equation*} 
    By the definition of $\bcc$, its output is integral, and it cannot underestimate the true count.
    \paragraph{Privacy.} 
    Define another continual counter $\cM$ that produces outputs $\tilde{y}_1, \tilde{y}_2, \dots$ where $\tilde{y}_t = \lceil y_t' + \alpha_{\cc}(\eps, \delta/2, \eta, t) \rceil$.
    $\cM$ is $(\eps, \delta/2)$-DP, as it is a data-independent postprocessing of $\cc$.
    From the accuracy of $\cc$, we have that
    \begin{equation*}
        \Pr\left[\forall t\in\N \;:\; \left\lvert y_t' - \sum_{i=1}^t x_i\right\rvert \leq \alpha_{\cc}(\eps, \delta/2, \eta, t)\right] \geq 1-\eta\,,
    \end{equation*}
    and so, for any input stream $\sigma$,
    \begin{equation*}
        \Pr\left[\bcc(\sigma) \neq \cM(\sigma)\right] \leq \eta.
    \end{equation*}
    To prove privacy for $\bcc$, fix $T\in\N$ and let $\sigma, \sigma'$ be 1-step 1-neighboring inputs of length $T$ and consider any measurable subset $O$ of length-$T$ output sequences.
    We write
    \begin{align}
        \Pr\left[\bcc(\sigma) \in O \right]
        &\leq \Pr\left[\cM(\sigma) \in O \right] + \Pr\left[\bcc(\sigma)\neq \cM(\sigma)\right]\notag\\
        &\leq \Pr\left[\cM(\sigma) \in O \right] + \eta.\label{eq:cc-biased1}
    \end{align}
    Additionally, from $\cM$ being $(\eps, \delta/2)$-DP,
    \begin{align}
        \Pr\left[\cM(\sigma) \in O \right]
        &\leq e^{\eps} \Pr\left[\cM(\sigma') \in O \right] + \delta/2\notag\\
        &\leq e^{\eps} \Pr\left[\bcc(\sigma') \in O \right] + e^{\eps}\eta + \delta/2.\label{eq:cc-biased2}
    \end{align}
    Combining \eqref{eq:cc-biased1} and \eqref{eq:cc-biased2}, we arrive at
    \begin{align*}
        \Pr\left[\bcc(\sigma) \in O \right]
        &\leq e^{\eps} \Pr\left[\bcc(\sigma') \in O \right] + (1+e^{\eps})\eta + \delta/2\\
        &= e^{\eps} \Pr\left[\bcc(\sigma') \in O \right] + \delta\,,
    \end{align*}
    where the last step follows from $\eta = 0.5\delta/(1+e^{\eps})$.
    As the neighboring relation is symmetric, we are done.
    \paragraph{Accuracy.}
    For the accuracy, note that a mechanism that directly outputs $y_t' + \alpha_{\cc}(\eps, \delta/2, \eta, t)$ at time $t\in\mathbb{N}$, would be $(\alpha, \beta)$-accurate for any $\beta\in(0, 1)$ where
    \begin{equation*}
        \alpha(t) = \alpha_\cc(\eps, \delta/2, \beta, t) + \alpha_\cc(\eps, \delta/2, \eta, t).
    \end{equation*}
    The same guarantee extends if the algorithm were to output $\max\{y_t' + \alpha_{\eta}(t), \sum_{i=1}^t x_i\}$, as taking the maximum can only reduce the error.
    Finally, as $y_t$ additionally takes the ceiling of the maximum, it follows that $\bcc$ is $(\alpha_{\bcc}, \beta)$-accurate for any $\beta\in(0, 1)$ where
    \begin{equation*}
        \alpha_\bcc(t) = \alpha(t) + 1 = \alpha_\cc(\eps, \delta/2, \beta, t) + \alpha_\cc(\eps, \delta/2, \eta, t) + 1.\qedhere
    \end{equation*}
\end{proof}
\section{Missing Observations in Section~\ref{sec:ecc-privacy}}\label{app:missing-privacy}

\begin{observation*}[Restatement of Observation~\ref{obs:neighbor-mapping-property}]
    Let $T\in\N$. Define $\calC_T^+$ as in Definition~\ref{def:f}. Let $\sigma=(x_1,\dots,x_T)$ and $\sigma'=(x_1',\dots,x_T')$ be two edit-neighboring sequences in $\left([0,1]\cup \{\bot\}\right)^T$. Let $f_{\sigma\to\sigma'}$ be the corresponding mapping. Let $t_{1:k}\in\calC_T^+$, and define
    \[
    t_{1:k}' = f_{\sigma\to\sigma'}(t_{1:k}).
    \]
    Let $i,j\in[T]$ and $p,q\in[k]$ be the indices associated with the definition of $f_{\sigma\to\sigma'}(t_{1:k})$. Let the sequences $\intsum(\sigma, t_{1:k})$ and $d(t_{1:k})$ denote the input streams of $\cc$ and $\bcc$ when $\ecc$ executes on $\sigma$ and $\partt(\sigma)=t_{1:k}$, and let $\intsum(\sigma', t_{1:k}')$ and $d(t_{1:k}')$ denote the same sequences for $\sigma'$ and $t_{1:k}'$ (see Observation~\ref{obs:inputs-cc-bcc} and Notation~\ref{not:privacy}). Then, the following statements hold:
    \begin{itemize}
        \item The sequences $\intsum(\sigma, t_{1:k})$ and $\intsum(\sigma', t_{1:k}')$ are $4$-step $1$-neighbors (see Definition~\ref{def:k-step-delta-neighboring}).
        \item The sequences $d(t_{1:k})$ and $d(t_{1:k}')$ are $2$-step $1$-neighbors (see Definition~\ref{def:k-step-delta-neighboring}).
    \end{itemize}
\end{observation*}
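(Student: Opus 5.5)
We will treat the case where $\sigma$ is an insertion neighbor of $\sigma'$ at step $i$; the reverse case is symmetric, with $-1$ shifts replaced by $+1$ and the roles of $p+1$ and $q-1$ exchanged. Throughout, $j$ is the index from Definition~\ref{def:f}, and $p,q$ are chosen so that $t_{p-1}<i\le t_p$ and $t_{q-1}<j\le t_q$. If $p\ge q-2$, then $t'_{1:k}=t_{1:k}$, so $d(t_{1:k})=d(t'_{1:k})$ trivially. Otherwise $t'_\ell=t_\ell-1$ for $\ell\in\{p+1,\dots,q-2\}$ and $t'_\ell=t_\ell$ for all other $\ell$. Hence $d_\ell-d'_\ell=0$ except at $\ell=p+1$, where it equals $+1$, and at $\ell=q-1$, where it equals $-1$. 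This gives the $2$-step $1$-neighbor claim for $d$.

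For the interval sums, the plan is to express every $\intsum$ entry as a difference of prefix sums and track how $s'$ relates to $s$. By the definition of edit neighbors, $x_t=x'_{t-1}$ for $i<t\le\min\{j,T\}$, while $x_t=x'_t$ for $t<i$ and for $t>j$. Consequently:
\begin{itemize}
\item for $t<i$, $s_t=s'_t$;
\item for $i\le t\le \min\{j,T\}$, $s_t=s'_{t-1}+x_i$;
\item for $t\ge j$, $s_t-s'_t\in[0,1]$, because the inserted value $x_i\in[0,1]$ is absorbed by the $\bot$ at $x'_j$, which counts as $0$, or else lies beyond $T$.
\end{itemize}

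Under the coupling with $p<q-2$, we then check each bucket $\ell$.
\begin{itemize}
\item For $\ell<p$, the buckets end before $i$, so $\intsum_\ell=\intsum'_\ell$.
\item For $\ell=p$, the endpoint $t_p$ is unchanged and $s_{t_{p-1}}=s'_{t_{p-1}}$. Since $t_p\in[i,j)$, we have $s_{t_p}-s'_{t_p}=x_i-x'_{t_p}\in[-1,1]$, so the entry differs by at most $1$.
\item For $\ell\in\{p+2,\dots,q-2\}$, both endpoints lie in $[i,j]$ and are shifted by one. Hence $\intsum_\ell=s_{t_\ell}-s_{t_{\ell-1}}=s'_{t_\ell-1}-s'_{t_{\ell-1}-1}=\intsum'_\ell$.
\item For $\ell=p+1$, the identity $s_{t_{p+1}}=s'_{t'_{p+1}}+x_i$ combined with the bound at $t_p$ gives a difference in $[-1,1]$; more precisely it is $x'_{t_p}\in[0,1]$.
\item For $\ell=q-1$, the entry is $s_{t_{q-1}}-s_{t_{q-2}}$ versus $s'_{t_{q-1}}-s'_{t_{q-2}-1}$. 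The second terms agree up to the constant $x_i$. For the first terms, $t_{q-1}<j$, so $s_{t_{q-1}}=s'_{t_{q-1}-1}+x_i$, and the difference is $-x'_{t_{q-1}}\in[-1,0]$.
\item For $\ell=q$, the start $t_{q-1}$ has $s-s'=x_i-x'_{t_{q-1}}$, and the end $t_q\ge j$ has $s-s'=x_i$ or something in $[0,1]$. Their difference therefore lies in $[-1,1]$.
\item For $\ell>q$, both endpoints are $\ge j$, where $s-s'$ is the constant $x_i$ (or the same constant in the truncated case). So these entries agree.
\end{itemize}
Thus only $\ell\in\{p,p+1,q-1,q\}$ can differ, each by at most $1$, giving $4$-step $1$-neighbors.

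The main obstacle will be the bookkeeping of boundary cases rather than any idea. These include $q=k+1$ (meaning $j$ falls after the last checkpoint, so bucket $q$ is the untracked tail, and also $j=T+1$ where no $\bot$ exists), $p=q$, $p=q-1$, and $p=q-2$ (where $f$ is the identity and the affected buckets are only $p,q$ or $p,p+1,q$). In each of these cases the same prefix-difference identities still show that at most the buckets indexed $p$ through $q$, and at most four of them, change by at most $1$. We will also verify that $\bot$ entries cause no trouble, since they are treated as $0$ and $x'_j=\bot$ contributes nothing.
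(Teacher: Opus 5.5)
Your proposal is correct and follows essentially the same route as the paper. Both split on whether $f_{\sigma\to\sigma'}$ acts as the identity. In either case $d$ changes only at indices $p+1$ and $q-1$, and $\intsum$ changes only at buckets $p,p+1,q-1,q$ (at most three buckets in the identity case), each by at most $1$.

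Tracking the prefix-sum gap $s_t-s'_t$ is a cleaner packaging of the paper's termwise telescoping. Your exact value $-x'_{t_{q-1}}$ for bucket $q-1$ is right; the paper's $x'_{t_{q-2}}$ there is a typo.
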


\begin{proof}
    We prove the observation for the case where $\sigma$ is an insertion neighbor of $\sigma'$. The reverse case is symmetric. For $\ell\in [k]$, let $d_\ell=t_\ell-t_{\ell-1}$ and $d_\ell'= t_\ell'-t_{\ell-1}'$ denote the $\ell$-th entries of $d(t_{1:k})$, and let $\intsum_\ell=\sum_{t=t_{\ell-1}+1}^{t_\ell}x_t$ and $\intsum_\ell=\sum_{t=t_{\ell-1}'+1}^{t_\ell'}x_t'$ denote the $\ell$-th entries of $\intsum(\sigma, t_{1:k})$ and $\intsum(\sigma', t_{1:k}')$. We prove this observation in two cases: $p\geq q-2$ and $p<q-2$.

    \medskip\noindent\textbf{Case 1: $p \ge q-2$.}
    In this case, by Definition~\ref{def:f}, we have $t_{1:k}' = t_{1:k}$. Hence, $d(t_{1:k}) = d(t_{1:k}')$. Next, we compare the interval sums. For every $\ell\in \{1, \dots, p-1\}\cup \{q+1, \dots, k\}$, the corresponding interval lies entirely outside the affected region from index $i$ to $j$. Thus $x_t = x_t'$ for all indices in the interval, and therefore $\intsum_\ell=\intsum_\ell'$. It remains to consider $\ell \in \{p,\dots,q\}$. For such $\ell$, both sums are taken over the same index set $\{t_{\ell-1}+1, \dots, t_\ell\}$. By the edit-neighboring property, the sequences $(x_1, \dots, x_T)$ and $(x_1', \dots, x_T')$ differ only in the indices $\{i, \dots, j\}$, where the values are shifted by one position. A direct comparison between $\intsum_\ell$ and $\intsum_\ell'$ shows that all terms cancel except possibly at the two boundary indices, where the lower boundary is the maximum of $i$ and the starting index of the $\ell$-th interval and the upper boundary is the minimum of $j$ and the ending index of this interval. More precisely, we have $t_{\ell-1}=t_{\ell-1}'$; $t_{\ell}=t_\ell'$; $x_t=x_t'$ for every $t\in\{t_{\ell-1}+1, \dots, i-1\}\cup\{j+1, \dots, t_\ell\}$; $x_t=x_{t-1}'$ for every $t\in \left\{\max\{i, t_{\ell-1}+1\}+1, \dots, \min\{t_\ell, j\}\right\}$. Therefore,
    \begin{align*}
        |\intsum_\ell-\intsum_\ell'|&= \left|\sum_{t=t_{\ell-1}+1}^{t_\ell}x_t - \sum_{t=t_{\ell-1}'+1}^{t_\ell'}x_t'\right|\\
        &= \left|\sum_{t=t_{\ell-1}+1}^{t_\ell}(x_t-x_t')\right|\\
        &= |x_{\max\{i, t_{\ell-1}+1\}}-x_{\min\{t_\ell, j\}}'|\\
        &\leq 1.
    \end{align*}  
    In the last inequality we used the fact that all $x_t$ values lie in $[0,1]$.
    Since $p \ge q-2$, there are at most $3$ indices $\ell\in \{p, \dots, q\}$. Hence the sequences $\intsum(\sigma,t_{1:k})$ and $\intsum(\sigma',t_{1:k}')$ differ in at most a $3$ coordinates, each by at most $1$, and thus (since $3<4$) these sequences are $4$-step $1$-neighbors.

    \medskip\noindent\textbf{Case 2: $p < q-2$.}
    By Definition~\ref{def:f}, the checkpoints $t_\ell'=t_\ell$ for every $\ell\in \{1, \dots, p\}\cup\{q-1, \dots, k\}$ and $t_\ell'=t_\ell-1$ for every $\ell\in \{p+1, \dots, q-2\}$. Therefore, the checkpoint differences $d_\ell=t_\ell-t_{\ell-1}$ and $d_\ell'=t_\ell'-t_{\ell-1}'$ are identical for every $\ell\in [k]\setminus\{p+1, q-1\}$ and differ by $1$ at $p+1$ and $q-1$. Hence, the sequences $d(t_{1:k})$ and $d(t_{1:k}')$ are $2$-step $1$-neighbors.

    As in Case~1, we have $\intsum_\ell=\intsum_\ell'$ for every $\ell\in \{1, \dots, p-1\}\cup \{q+1, \dots, k\}$, and $|\intsum_\ell-\intsum_\ell'|\leq 1$ for every $\ell\in \{p, q\}$. For $\ell \in \{p+2,\dots,q-2\}$, the intervals shift by one position, and we have $t_{\ell-1}+1=t_{\ell-1}'+2$; $t_\ell=t_\ell'+1$; and $x_t=x_{t-1}'$ for every $t\in \left\{t_{\ell-1}+1, \dots, t_\ell\right\}$. Thus,
    \begin{align*}
        \intsum_\ell-\intsum_\ell'&= \sum_{t=t_{\ell-1}+1}^{t_\ell}x_t - \sum_{t=t_{\ell-1}'+1}^{t_\ell'}x_t' \\
        &= \sum_{t=t_{\ell-1}+1}^{t_\ell}x_t - \sum_{t=t_{\ell-1}}^{t_\ell-1}x_t' \\
        &= \sum_{t=t_{\ell-1}+1}^{t_\ell} (x_t-x_{t-1}')=0
    \end{align*}    
    Finally, we consider $\ell = q-1$ and $\ell = p+1$. Here, $t_{q-2}' = t_{q-2}-1$ while $t_{q-1}' = t_{q-1}$. Again, we have $x_t=x_{t-1}'$ for every $t\in \{t_{q-2}+1, \dots, t_{q-1}\}$, $x_t=x_{t-1}'$. Therefore,
    \begin{align*}
        |\intsum_{q-1}-\intsum_{q-1}'|&= \left|\sum_{t=t_{q-2}+1}^{t_{q-1}}x_t - \sum_{t=t_{q-2}'+1}^{t_{q-1}'}x_t'\right|\\
        &= \left|\sum_{t=t_{q-2}+1}^{t_{q-1}}x_t - \sum_{t=t_{q-2}}^{t_{q-1}}x_t' \right|\\
        &= |-x_{t_{q-2}}'|\leq 1,
    \end{align*}
    which shows the bound for $\ell = q-1$. For $\ell = p+1$ an analogous argument shows $|\intsum_{p+1} - \intsum_{p+1}'| \le 1$. As a result, the sequences $\intsum(\sigma,t_{1:k})$ and $\intsum(\sigma',t_{1:k}')$ are identical, except in the indices $p, p+1, q-1, q$, each by at most $1$, and hence are $4$-step $1$-neighbors.
\end{proof}

\begin{observation}\label{obs:svt-inputs-neighboring}
    Let $T\in\N$, and let $\sigma=(x_1,\dots,x_T)$ and $\sigma'=(x_1',\dots,x_T')$ be two edit-neighboring sequences in $([-1,1]\cup\{\bot\})^T$. Let $f_{\sigma\to\sigma'}:\calC^+\to\calC$ be the function defined in Definition~\ref{def:f}. Let $t_{1:k}\in \calC^+$ and $t_{k+1}=T+1$. Define 
    $$t_{1:k}'= f_{\sigma\to\sigma'}(t_{1:k}).$$ 
    Define indices $i, j\in [T]$ and $p,q\in [k+1]$ as in the definition of $f_{\sigma\to\sigma'}(t_{1:k})$. For every $\ell\in [k]$, define
    \begin{align*}
        S_\ell &:= \left(\sum_{w=t_{\ell-1}+1}^t x_w\right)_{t=t_{\ell-1}+1}^{t_{\ell}}\text{ and } \qquad S_\ell' := \left(\sum_{w=t_{\ell-1}'+1}^t x_w'\right)_{t=t_{\ell-1}'+1}^{t_{\ell}'},
    \end{align*}
    with $x_w=\bot$ and $x_w'=\bot$ treated as $0$. Moreover, define
    \begin{align*}
        S_{k+1} &:= \left(\sum_{w=t_k+1}^t x_w\right)_{t=t_k+1}^{T}\text{ and } \qquad S_{k+1}' := \left(\sum_{w=t_k'+1}^t x_w'\right)_{t=t_k'+1}^{T},
    \end{align*}
    with $\bot$ values treated as $0$.
    \begin{enumerate}[label=\arabic*., ref=(\arabic*)]
        \item \label{item:svt-inputs-equal}
        For every $\ell\in \{1, \dots, p-1\}\cup \{p+2, \dots, q-2\}\cup \{q+1, \dots, k\}$, we have $S_\ell=S_\ell'$.
        \item \label{item:svt-inputs-shift-neighbor} Assume $q>p+2$ and $\sigma$ is an insertion neighbor of $\sigma'$ at step $i$. Then $S_{p+1}$ is a $1$-shift $1$-neighbor of $S_{p+1}'$, and $S_{q-1}'$ is a $1$-shift $1$-neighbor of $S_{q-1}$.
        \item \label{item:svt-inputs-shift-neighbor-prime} Assume $q>p+2$ and $\sigma'$ is an insertion neighbor of $\sigma$ at step $i$. Then $S_{p+1}'$ is a $1$-shift $1$-neighbor of $S_{p+1}$, and $S_{q-1}$ is a $1$-shift $1$-neighbor of $S_{q-1}'$.
        \item \label{item:svt-inputs-all-step-neighbor-k-plus-1} The sequences $S_{k+1}$ and $S_{k+1}'$ are all-step $1$-neighbors.
        \item \label{item:svt-inputs-all-step-neighbor-p-and-q} For $\ell\in \{p, q\}$, the sequences $S_\ell$ and $S_\ell'$ are all-step $1$-neighbors. 
        \item \label{item:svt-inputs-all-step-neighbor-q-equal-p-plus-2} Assume $q=p+2$. Then, the sequences $S_{p+1}$ and $S_{p+1}'$ are all-step $1$-neighbors.       
    \end{enumerate}
\end{observation}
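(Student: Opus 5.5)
The plan is a direct, index-by-index comparison of the SVT input streams. Throughout I treat the case where $\sigma$ is an insertion neighbor of $\sigma'$ at step $i$; the reverse case, and item~\ref{item:svt-inputs-shift-neighbor-prime}, follow by exchanging the roles of $\sigma$ and $\sigma'$, which also swaps the signs $\mp1$ in Definition~\ref{def:f}.

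The only structural facts I use come from Definition~\ref{def:queue-neighboring}, with $\bot$ read as $0$:
\begin{itemize}
  \item $x_w=x'_w$ for $w<i$ and for $w>j$;
  \item $x_w=x'_{w-1}$ for $i<w\le\min\{j,T\}$;
  \item $x'_j=\bot$, so $x'_j=0$, whenever $j\le T$.
\end{itemize}
From Definition~\ref{def:f}, the checkpoints satisfy $t'_\ell=t_\ell$ outside $\{p+1,\dots,q-2\}$, and $t'_\ell=t_\ell-1$ inside this range when $p<q-2$.

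\textbf{Item~\ref{item:svt-inputs-equal}.}
\begin{itemize}
  \item For $\ell<p$ or $\ell>q$, the interval $(t_{\ell-1},t_\ell]$ lies entirely before $i$ or entirely after $j$. The checkpoints coincide, so $S_\ell=S'_\ell$ termwise.
  \item For $p+2\le\ell\le q-2$, both endpoints are shifted by one. Hence the $h$-th entry of $S'_\ell$ is $\sum_{w=t_{\ell-1}}^{t_{\ell-1}+h-1}x'_w$. Since the whole interval $(t_{\ell-1},t_\ell]$ lies in $(i,j)$, this equals $\sum_{w=t_{\ell-1}+1}^{t_{\ell-1}+h}x_w$, which is the $h$-th entry of $S_\ell$.
\end{itemize}

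\textbf{Item~\ref{item:svt-inputs-shift-neighbor}.} Here $q>p+2$, so $t'_p=t_p$ and $t'_{p+1}=t_{p+1}-1$.
\begin{itemize}
  \item Because $(t_p+1,t_{p+1}]\subseteq(i,j)$, for every $h\le t'_{p+1}-t'_p$ we get
  \[
  S_{p+1}(h+1)=x_{t_p+1}+\sum_{w=t_p+2}^{t_p+h+1}x'_{w-1}=x_{t_p+1}+S'_{p+1}(h).
  \]
  Condition~\ref{def:one-shift-delta-neighbor:1} of Definition~\ref{def:one-shift-delta-neighbor} then reduces to $|x_{t_p+1}|\le1$.
  \item Condition~\ref{def:one-shift-delta-neighbor:2} is $|S_{p+1}(1)|=|x_{t_p+1}|\le1$.
\end{itemize}
Both hold because entries lie in $[-1,1]\cup\{\bot\}$; this is exactly where the stated domain $[-1,1]$ suffices. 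Symmetrically, $t'_{q-2}=t_{q-2}-1$ and $t'_{q-1}=t_{q-1}$ give
\[
S'_{q-1}(h+1)=x'_{t_{q-2}}+S_{q-1}(h),
\]
using $x_w=x'_{w-1}$ on $(t_{q-2},t_{q-1}]$. Hence $S'_{q-1}$ is a $1$-shift $1$-neighbor of $S_{q-1}$, again because $|x'_{t_{q-2}}|\le1$.

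\textbf{Items~\ref{item:svt-inputs-all-step-neighbor-k-plus-1}, \ref{item:svt-inputs-all-step-neighbor-p-and-q} and \ref{item:svt-inputs-all-step-neighbor-q-equal-p-plus-2}.} In each of these cases the two streams have equal start points and equal lengths. I will compute the entrywise difference by telescoping. For a position $u$ in such an interval, let $a:=\max\{i,\text{start}\}$, where start denotes the first index of the interval, and let $b:=\min\{u,j\}$.
\begin{itemize}
  \item If $u<i$, the difference is $0$.
  \item Otherwise, cancelling the pairs $x_w=x'_{w-1}$ on $(a,b]$ gives
  \[
  \sum_{w}(x_w-x'_w)=x_a-x'_b .
  \]
  If $u\ge j$ with $j\le T$, then $b=j$ and $x'_j=0$, so the difference is $x_a$, which has absolute value at most $1$. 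The prefix sums after $j$ add identical terms, so this bound persists.
\end{itemize}
This handles:
\begin{itemize}
  \item $\ell=q$ and $\ell=k+1$, which contain $j$ or lie beyond it;
  \item $\ell=p$ when the interval reaches $j$;
  \item $\ell=p+1$ when $q=p+2$, which starts inside $(i,j)$ so that $a$ is the interval start.
\end{itemize}

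\textbf{Main obstacle.} The delicate step is the remaining case $i\le u<j$, which occurs for $\ell=p$, possibly for $\ell=p+1$ when $q=p+2$, and for $k+1$ when $j=T+1$. There the difference is $x_a-x'_u$, and for values in $[-1,1]$ of opposite signs this can reach $2$. My plan is to bound it via $|x_a-x'_u|\le\max\{|x_a|,|x'_u|\}\le1$, which is valid whenever the two entries are of the same sign. This covers the setting in which the observation is invoked, namely inputs in $[0,1]\cup\{\bot\}$ in Lemma~\ref{lem:partition}.

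I expect this sign issue to be the one point that needs care in a full write-up. If mixed signs must genuinely be allowed, the honest conclusion for items~\ref{item:svt-inputs-all-step-neighbor-k-plus-1}--\ref{item:svt-inputs-all-step-neighbor-q-equal-p-plus-2} is only all-step $2$-neighbors. I would state that caveat explicitly rather than hide it. The remaining work is routine boundary bookkeeping: $t_k=T$, empty $S_{k+1}$, and $p=q$.
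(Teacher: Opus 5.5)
Your proposal is correct and follows essentially the paper's route: reduce to $\sigma$ being the insertion neighbor, then compare the SVT input streams entrywise by telescoping. Your sign caveat is well founded. Items 4--6 genuinely fail for mixed signs: for example $\sigma=(1,-1)$, $\sigma'=(-1,-1)$, $i=1$, $t_{1:1}=(2)$ gives $S_1=(1,0)$ and $S_1'=(-1,-2)$. The paper's proof uses $x_w\in[0,1]$ at exactly that step, which is legitimate because Definition~\ref{def:queue-neighboring} only defines edit neighbors on $([0,1]\cup\{\bot\})^T$.

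One slip to fix: for $q-1$ the correct identity is $S'_{q-1}(h+1)=S_{q-1}(h)+x'_{t_{q-2}+h}$, not $x'_{t_{q-2}}+S_{q-1}(h)$. In fact $S'_{q-1}$ agrees with $S_{q-1}$ entrywise and has one extra final entry. The $1$-shift $1$-neighbor conclusion still holds, since every entry has magnitude at most $1$. Likewise, the case $i\le u<j$ also arises inside the $q$-th interval, not only in the cases you list, but your generic bound already covers it.
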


\begin{proof}
    {Throughout this proof, whenever we invoke the assumption that every entry of $\sigma$ and $\sigma'$ has magnitude at most $1$, we mean that each entry is either a real value in this range or the symbol $\bot$, which is treated as the value $0\in [-1, 1]$. 
    For $\ell\in[k]$, the sequence $S_\ell$ has length $t_\ell-t_{\ell-1}$. For $h\in[t_\ell-t_{\ell-1}]$, we denote the $h$-th entry of $S_\ell$ by $S_\ell[h]$, which satisfies
    \[
        S_\ell[h]=\sum_{w=t_{\ell-1}+1}^{t_{\ell-1}+h} x_w.
    \]
    Similarly, for $\ell\in[k]$, the sequence $S_\ell'$ has length $t_\ell'-t_{\ell-1}'$. For $h\in[t_\ell'-t_{\ell-1}']$, we denote the $h$-th entry of $S_\ell'$ by $S_\ell'[h]$, which satisfies
    \[
        S_\ell'[h]=\sum_{w=t_{\ell-1}'+1}^{t_{\ell-1}'+h} x_w'.
    \]
    The sequences $\sigma$ and $\sigma'$ are edit-neighboring sequences, meaning that one of them is an insertion neighbor of the other at step $i$. By Definition~\ref{def:queue-neighboring} and Definition~\ref{def:f}, we have:
    \begin{itemize}[left=0pt]
        \item Assume $\sigma$ is an insertion neighbor of $\sigma'$ at step $i$. Then $j$ is the smallest index in $\{i,\dots,T\}$ such that $x_j'=\bot$ (and $j=T+1$ if no such index exists). By Definition~\ref{def:queue-neighboring}, we have:
        \begin{itemize}
            \item For every $t\in \{1, \dots, i-1\}$, $x_t=x_t'$.
            \item For every $t\in \{i+1, \dots, \min\{j, T\}\}$, $x_t=x_{t-1}'$.
            \item For every $t\in \{j+1, \dots, T\}$, $x_t=x_t'$.
        \end{itemize}
        Moreover, by the definition of $f_{\sigma\to\sigma'}$, $p, q\in [k+1]$ are the indices satisfying $t_{p-1}<i\leq t_p$ and $t_{q-1}<j\leq t_q$, and we have
        \begin{align*}
            t_{1:k}' \!=\! 
            \begin{cases}
                (t_1,\dots,t_p, t_{p+1}\!-\!1,\dots,t_{q-2}\!-\!1, t_{q-1},\dots,t_k), & \text{if } p<q-2,\\[2mm]
                (t_1,\dots,t_k), & \text{o.w.}.
            \end{cases}
        \end{align*}
        \item Assume $\sigma'$ is an insertion neighbor of $\sigma$ at step $i$. Then $j$ is the smallest index in $\{i,\dots,T\}$ such that $x_j=\bot$ (and $j=T+1$ if no such index exists). By Definition~\ref{def:queue-neighboring}, we have:
        \begin{itemize}
            \item For every $t\in \{1, \dots, i-1\}$, $x_t'=x_t$.
            \item For every $t\in \{i+1, \dots, \min\{j, T\}\}$, $x_t'=x_{t-1}$.
            \item For every $t\in \{j+1, \dots, T\}$, $x_t'=x_t$.
        \end{itemize}
        Moreover, by the definition of $f_{\sigma\to\sigma'}$, $p, q\in [k+1]$ are the indices satisfying $t_{p-1}<i\leq t_p$ and $t_{q-1}<j\leq t_q$, and we have
        \begin{align*}
            t_{1:k}' \!=\! 
            \begin{cases}
                (t_1,\dots,t_p, t_{p+1}\!+\!1,\dots,t_{q-2}\!+\!1, t_{q-1},\dots,t_k), & \text{if } q>p,\\[2mm]
                (t_1,\dots,t_k), & \text{o.w.}
            \end{cases}
        \end{align*}
    \end{itemize}
    In the second case, the indices $p$ and $q$ indeed satisfy $t_{p-1}'<i\le t_p'$ and $t_{q-1}'<j\le t_q'$. Moreover, in this case we have
    \begin{align*}
        t_{1:k} \!=\! 
        \begin{cases}
            (t_1',\dots,t_{p}', t_{p+1}'\!-\!1,\dots,t_{q-2}'\!-\!1, t_{q-1}',\dots,t_k'), & \text{if } q>p,\\[2mm]
            (t_1',\dots,t_k'), & \text{o.w.}.
        \end{cases}
    \end{align*}
    With these observations, the relation between the sequences $\sigma$ and $\sigma'$ and the relation between the checkpoint sequences $t_{1:k}$ and $t_{1:k}'$ in the second case are identical to those in the first case, with the roles of $\sigma=(x_1,\dots,x_T)$ and $\sigma'=(x_1',\dots,x_T')$ as well as $t_{1:k}$ and $t_{1:k}'$ swapped.
 
    Throughout this proof, we only rely on these relations. Therefore, it suffices to prove the desired properties of $S_\ell$ and $S_\ell'$ under the assumption that $\sigma$ is an insertion neighbor of $\sigma'$. By symmetry, an identical argument applies when $\sigma'$ is an insertion neighbor of $\sigma$, with the roles of $S_\ell$ and $S_\ell'$ exchanged. Using this, we prove items~\ref{item:svt-inputs-equal},~\ref{item:svt-inputs-all-step-neighbor-k-plus-1},~\ref{item:svt-inputs-all-step-neighbor-p-and-q}, and~\ref{item:svt-inputs-all-step-neighbor-q-equal-p-plus-2} under the assumption that $\sigma$ is an insertion neighbor of $\sigma'$. We also conclude item~\ref{item:svt-inputs-shift-neighbor-prime} by symmetry from  item~\ref{item:svt-inputs-shift-neighbor}.
    }
    
    \medskip\noindent\textbf{\ref{item:svt-inputs-equal}}: {
    By the above discussion, we assume without loss of generality that $\sigma$ is an insertion neighbor of $\sigma'$ at step $i$. We must show that for every $\ell\in \{1, \dots, p-1\}\cup \{p+2, \dots, q-2\}\cup \{q+1, \dots, k\}$, the sequences $S_\ell$ and $S_\ell'$ have the same length and identical entries.
    \begin{itemize}[left=0pt]
    \item \emph{Case $\ell\in\{1,\dots,p-1\}$.}
    In this case, $t_{\ell-1}=t_{\ell-1}'$ and $t_\ell=t_\ell'$, so both $S_\ell$ and $S_\ell'$ have length $t_\ell-t_{\ell-1}$. Since $i>t_{p-1}\ge t_\ell$, the additionally inserted element $x_i$ occurs strictly after all indices contributing to $S_\ell$. Hence,
    \[
        x_t=x_t' \qquad \text{for all } t\in\{t_{\ell-1}+1,\dots,t_\ell\}.
    \]
    It follows directly that $S_\ell=S_\ell'$.
    \item \emph{Case $\ell\in \{p+2, \dots, q-2\}$.}  
    Here, $t_{\ell-1}'=t_{\ell-1}-1$ and $t_\ell'=t_\ell-1$, so both sequences again have length $t_\ell-t_{\ell-1}$. Moreover, since $i\le t_p\le t_{\ell-1}$ and $j>t_{q-1}\ge t_\ell$, the shift affects all indices in the range $\{t_{\ell-1}+1,\dots,t_\ell\}$. In particular,
    \[
        x_t=x_{t-1}' \qquad \text{for all } t\in\{t_{\ell-1}+1,\dots,t_\ell\}.
    \]
    Therefore, for every $h\in[t_\ell-t_{\ell-1}]$,
    \begin{align*}
        S_\ell[h]
        = \sum_{w=t_{\ell-1}+1}^{t_{\ell-1}+h} x_w
        = \sum_{w=t_{\ell-1}+1}^{t_{\ell-1}+h} x_{w-1}'
        = \sum_{w=t_{\ell-1}}^{t_{\ell-1}+1+h} x_{w}'
        = \sum_{w=t_{\ell-1}'+1}^{t_{\ell-1}'+h} x_w'
        = S_\ell'[h].
    \end{align*}
    Hence, $S_\ell=S_\ell'$.
    \item \emph{Case $\ell\in\{q+1,\dots,k\}$.} In this case, $t_{\ell-1}=t_{\ell-1}'$ and $t_\ell=t_\ell'$, so the two sequences have equal length. Since $j\le t_q\le t_{\ell-1}$, the shifting finishes strictly before the indices contributing to $S_\ell$, and thus
    \[
        x_t=x_t' \qquad \text{for all } \quad t\in\{t_{\ell-1}+1,\dots,t_\ell\}.
    \]
    Consequently, $S_\ell=S_\ell'$.
    \end{itemize}
    }
    
    \medskip\noindent\textbf{\ref{item:svt-inputs-shift-neighbor}}: {
    We have $t_{p}=t_{p}$ and $t_{p+1}=t_{p+1}'+1$. Moreover, since $p+2<q\leq k+1$, we have $p+1\leq k$. Therefore, the sequence $S_{p+1}$ has length $t_{p+1}-t_{p}$, while $S_{p+1}'$ has length $t_{p+1}-t_{p}-1$. We must show that $S_{p+1}$ is a $1$-shift $1$-neighbor of $S_{p+1}'$, i.e., $S_{p+1}[1]\leq 1$ and $|S_{p+1}[h]-S_{p+1}[h-1]|\leq 1$ for every $h\in [t_{p+1}-t_{p}]/\{1\}$. The first condition holds because
    $$S_{p+1}[1] = |x_{t_{p}+1}|\leq 1.$$
    Fix $h\in\{2,\dots,t_{p+1}-t_{p}\}$. By definition, $i\leq t_p$ and $t_{p+1}<t_{q-1}< j$. Thus we have $x_t=x_{t-1}'$ for every $t\in \{t_p+1, \dots, t_{p+1}\}$. Therefore, for every $h\in[t_{p+1}-t_{p}]$, 
    \begin{align*}
        |S_{p+1}[h]-S_{p+1}'[h-1]|
        &= \left|\sum_{t=t_{p}+1}^{t_{p}+h}x_t-\sum_{t=t_{p}'+1}^{t_{p}'+h-1}x_t'\right| = \left|\sum_{t=t_{p}+1}^{t_{p}+h}x_t-\sum_{t=t_{p}+1}^{t_{p}+h-1}x_{t+1}\right|\\
        &= |x_{t_p+1}|\leq 1.
    \end{align*}
    Thus, $S_{p+1}$ is a $1$-shift $1$-neighbor of $S_{p+1}'$. The proof that $S_{q-1}'$ is a $1$-shift $1$-neighbor of $S_{q-1}$ follows by an identical argument.
    }
         
    \medskip\noindent\textbf{\ref{item:svt-inputs-shift-neighbor-prime}:} {
    This item holds immediately by symmetry.}

    \medskip\noindent\textbf{\ref{item:svt-inputs-all-step-neighbor-k-plus-1}:} {
    As before, we assume without loss of generality that $\sigma$ is an insertion neighbor of $\sigma'$ at step $i$. We have $t_\ell=t_\ell'$ for every $\ell\geq q-1$. Thus, since $q\leq k+1$, we have $t_k=t_k'$, and both sequences $S_{k+1}$ and $S_{k+1}'$ have length $T-t_k$. We must show that $|S_{k+1}[h]-S_{k+1}'[h]|\leq 1$ for every $h\in[T-t_k]$. 
    
    Fix $h\in [T-t_k]$. Both $p$ and $q$ can be equal to $k+1$, thus indices $i$ and $j$ could potentially lie in the set $\{t_k+1, \dots, T\}$. We consider the following three cases.
    \begin{itemize}[left=0pt]
        \item \emph{Case 1: $t_k+h < i$.} In this case, $x_t=x_t'$ for all $t\in\{t_k+1, \dots, t_k+h\}$, and thus $S_{k+1}[h]=S_{k+1}'[h]$.
        \item \emph{Case 2: $i \le t_k+h < j$.} In this case, $x_t=x_t'$ for all $t\in\{t_k+1, \dots, i-1\}$ and $x_t=x_{t-1}'$ for all $t\in\{i+1, \dots, t_k+h\}$. Hence,
        \begin{align*}
            |S_{k+1}[h]-S_{k+1}'[h]|&= \left|\sum_{t=t_k+1}^{t_k+h}x_t-\sum_{t=t_k'+1}^{t_k'+h-1}x_t'\right|\\ 
            &= \left|\sum_{t=t_k+1}^{t_k+h}x_t-\sum_{t=t_k+1}^{i-1}x_t - \sum_{t=i}^{t_k+h-1}x_{t+1} - x_{t_k+h}'\right|\\
            &= |x_i-x_{t_k+h}'| \leq 1,
        \end{align*}
        where we use the fact that $x_i, x_{t_k+h}'\in [0,1]$.
        \item \emph{Case 3: $t_k+h \geq j$.} In this case, $x_t=x_t'$ for all $t\in\{t_k+1, \dots, i-1, j+1, \dots, t_k+h\}$ and $x_t=x_{t-1}'$ for all $t\in\{i+1, \dots, t_k+j\}$. Therefore,
        \begin{align*}
            |S_{k+1}[h]-S_{k+1}'[h]|
            &= \left|\sum_{t=t_k+1}^{t_k+h}x_t-\sum_{t=t_k'+1}^{t_k'+h-1}x_t'\right|\\
            &= \left|\sum_{t=t_k+1}^{t_k+h}x_t-\sum_{t=t_k+1}^{i-1}x_t - \sum_{t=i}^{j-1}x_{t+1} - x_j' - \sum_{t=j+1}^{t_k+h}x_{t}\right|\\
            &= |x_i|\leq 1,
        \end{align*}
        where we use the facts that $x_i\in [0,1]$ and $x_j'$ is considered as $0$.
    \end{itemize}
    }
    
    \medskip\noindent\textbf{\ref{item:svt-inputs-all-step-neighbor-p-and-q}:} {
    As before, we assume without loss of generality that $\sigma$ is an insertion neighbor of $\sigma'$ at step $i$. If $p=k+1$, then the proof follows directly from item~\ref{item:svt-inputs-all-step-neighbor-k-plus-1}. Assume $p\leq k$. Then both sequences $S_p$ and $S_p'$ have length $t_p - t_{p-1}$. We must show that $S_p$ and $S_p'$ are all-step $1$-neighbors, i.e., $|S_p[h]-S_p'[h]|\leq 1$ for every $h\in[t_p-t_{p-1}]$. 
    
    Fix $h\in[t_p-t_{p-1}]$. By definition, $i$ lies in the range $\{t_{p-1}+1, \dots, t_p\}$. As $p$ and $q$ could be equal, $j$ could also lie in this range. We consider the following three cases. (Case three does not happen if $q>p$.)
    \begin{itemize}
        \item \emph{Case 1: $t_{p-1}+h < i$.} In this case, $x_t=x_t'$ for all $t\in\{t_{p-1}+1, \dots, t_{p-1}+h\}$, and thus $S_p[h]=S_p'[h]$.
        \item \emph{Case 2: $i \le t_{p-1}+h < j$.} In this case, $x_t=x_t'$ for all $t\in\{t_{p-1}+1, \dots, i-1\}$ and $x_t=x_{t-1}'$ for all $t\in\{i+1, \dots, t_{p-1}+h\}$. Hence,
        \begin{align*}
            |S_p[h]-S_p'[h]|&= \left|\sum_{t=t_{p-1}+1}^{t_{p-1}+h}x_t-\sum_{t=t_{p-1}'+1}^{t_{p-1}'+h-1}x_t'\right|\\ 
            &= \left|\sum_{t=t_{p-1}+1}^{t_{p-1}+h}x_t-\sum_{t=t_{p-1}+1}^{i-1}x_t - \sum_{t=i}^{t_{p-1}+h-1}x_{t+1} - x_{t_{p-1}+h}'\right|\\
            &= |x_i-x_{t_{p-1}+h}'| \leq 1,
        \end{align*}
        where we use the fact that $x_i, x_{t_{p-1}+h}'\in [0,1]$.
        \item \emph{Case 3: $t_{p-1}+h \geq j$.} In this case, $x_t=x_t'$ for all $t\in\{t_{p-1}+1, \dots, i-1, j+1, \dots, t_{p-1}+h\}$ and $x_t=x_{t-1}'$ for all $t\in\{i+1, \dots, t_{p-1}+j\}$. Therefore,
        \begin{align*}
            |S_p[h]-S_p'[h]|&= \left|\sum_{t=t_{p-1}+1}^{t_{p-1}+h}x_t-\sum_{t=t_{p-1}'+1}^{t_{p-1}'+h-1}x_t'\right|\\
            &= \left|\sum_{t=t_{p-1}+1}^{t_{p-1}+h}x_t-\sum_{t=t_{p-1}+1}^{i-1}x_t - \sum_{t=i}^{j-1}x_{t+1} - x_j' - \sum_{t=j+1}^{t_{p-1}+h}x_{t}\right|\\
            &\leq |x_i|\leq 1,
        \end{align*}
        where we use the facts that $x_i\in [0,1]$ and $x_j'$ is considered as $0$.
    \end{itemize}
    Thus, $S_p$ and $S_p'$ are all-step $1$-neighbors. 
    
    It remains to show $S_q$ and $S_{q}'$ are all-step $1$-neighbors. If $p=q$, then the statement holds as shown for $p$. Assume $q>p$. Both sequences $S_q$ and $S_q'$ have length $t_q - t_{q-1}$. We must show that $|S_q[h]-S_q'[h]|\leq 1$ for every $h\in[t_q-t_{q-1}]$. 
    
    Fix $h\in [t_q-t_{q-1}]$. Since $q>p$, we have $i\leq t_{q-1}$, and also by definition, $t_{q-1}< j\leq t_q$. There are two cases:
    \begin{itemize}[left=0pt]
        \item \emph{Case 1: $t_{q-1}+h \leq j$.} In this case, $x_t=x_{t-1}'$ for all $t\in\{t_{q-1}+1 \dots, t_{q-1}+h\}$. Hence,
        \begin{align*}
            |S_q[h]-S_q'[h]|
            &= \left|\sum_{t=t_{q-1}+1}^{t_{q-1}+h}x_t-\sum_{t=t_{q-1}'+1}^{t_{q-1}'+h-1}x_t'\right| = \left|\sum_{t=t_{q-1}+1}^{t_{q-1}+h}x_t-\sum_{t=t_{q-1}+1}^{t_{q-1}+h-1}x_{t+1} - x_{t_{q}+h}'\right|\\
            &\leq |x_{t_{q-1}+1}-x_{t_{q}+h}'| \leq 1,
        \end{align*}
        where we use the fact that $x_{t_{q-1}+1},x_{t_{q}+h}'\in [0,1]$.
        \item \emph{Case 2: $t_{p-1}+h > j$.} In this case, $x_t=x_{t-1}'$ for all $t\in\{t_{q-1}+1 \dots, j\}$ and $x_t=x_t'$ for all $t\in\{i+1, \dots, t_{q-1}+h\}$. Therefore,
        \begin{align*}
            |S_q[h]-S_q'[h]|
            &= \left|\sum_{t=t_{q-1}+1}^{t_{q-1}+h}x_t-\sum_{t=t_{q-1}'+1}^{t_{q-1}'+h-1}x_t'\right|\\
            &= \left|\sum_{t=t_{q-1}+1}^{t_{q-1}+h}x_t-\sum_{t=t_{q-1}+1}^{j-1}x_{t+1} - x_j' - \sum_{t=j+1}^{t_{q-1}+h}x_{t}\right|\\
            &\leq |x_{t_{q-1}+1}|\leq 1,
        \end{align*}
        where we use the facts that $x_{t_{q-1}+1}\in [0,1]$ and $x_j'$ is considered as $0$.
    \end{itemize}
    }

    \medskip\noindent\textbf{\ref{item:svt-inputs-all-step-neighbor-q-equal-p-plus-2}:} {
    As before, we assume without loss of generality that $\sigma$ is an insertion neighbor of $\sigma'$ at step $i$. Since $p=q-2$, by definition, we have $t_{p}=t_{p}'$ and $t_{p+1}=t_{p+1}'$. Moreover, since $p+1= q-1$ and $q\leq k+1$, we have $p+1\leq k$. Consequently, both sequences $S_{p+1}$ and $S_{p+1}'$ have length $t_{p+1}-t_{p}$. We must show that $S_{p+1}$ and $S_{p+1}'$ are all-step $1$-neighbors, i.e., for every $h\in[t_{p+1}-t_{p}]$, $|S_{p+1}[h]-S_{p+1}'[h]|\leq 1$. 
    
    Fix $h\in[t_{p+1}-t_p]$. By definition, $i\leq t_p$ and $j> t_{q-1}=t_{p+1}$. Thus, $x_t=x_{t-1}'$ for all $t\in\{t_p+1, \dots, t_{p}+h\}$. Hence,
        \begin{align*}
            |S_{p+1}[h]-S_{p+1}'[h]|&= \left|\sum_{t=t_p+1}^{t_p+h}x_t-\sum_{t=t_p'+1}^{t_p'+h-1}x_t'\right|\\
            &= \left|x_{t_p+1}+\sum_{t=t_p+2}^{t_p+h}x_t-\sum_{t=t_p+1}^{t_p+h-1}x_{t+1} - x_{t_p+h}'\right|\\
            &\leq |x_{t_p+1}-x_{t_p+h}'| \leq 1,
        \end{align*}
        where we use the fact that $x_{t_p+1},x_{t_p+h}'\in [0,1]$.
    }

\end{proof}

\section{Related Work on Continual Counting}\label{se:related}
Throughout the related work, discussions of error pertain to the $\ell_\infty$ error with constant probability on streams of known length, unless specified otherwise.
This corresponds to $\alpha$, for e.g., for an $(\alpha, 1/3)$-accurate mechanism.

\paragraph{Standard Continual Counting.}
The concurrent works \cite{dwork2010differentially} and \cite{chan2011private} initiated the study of continual counting for pure DP.
Their celebrated binary tree mechanism achieves an error of $O_{\eps}(\log^2 T)$, and \cite{dwork2010differentially} proved a lower bound of $\Omega_{\eps}(\log T)$.
\cite{jain2012differentially} adapted the binary tree mechanism to approximate DP, achieving an error of $O_{\eps, \delta}(\log^{3/2} T)$.
While the lower bound in \cite{dwork2010differentially} can be extended to $(\eps, \delta)$-DP for polynomially small $\delta$, the $\ell_2$ lower bound by \cite{henzinger2023almost} implies an $\ell_\infty$ error of $\Omega_{\eps, \delta}(\log T)$ for constant $\delta$.
Besides asymptotics, there is a rich literature on optimizing the constants in these upper bounds, see e.g., \cite{honaker2015efficient,kairouz2021practical,fichtenberger2021differentially,denisov2022improved,andersson2023smooth,henzinger2023almost,dvijotham2024efficient,henzinger2024unifying,andersson25binning,henzingerU2025,henzinger25normalized}.
Common across all these constructions is that they are based on adding correlated noise, drawn independently of the input, to the counts, and are all variants of the \emph{(matrix) factorization mechanism} introduced by~\cite{li2015matrix}.

If the error of the mechanism is not measured over worst-case inputs, but instead is allowed to adapt to the \emph{sparsity} of the input, then there are additional results.
Defining $n := \lvert\{ x_t \neq  0 : t\in[n] \}\rvert$ as the sparsity of an input $\sigma$, \cite{dwork2015pure} gave an accuracy bound of $O_\eps(\log T + \log^2 n)$ for pure DP.
Their key idea was the \emph{partitioning mechanism}, which decomposes the stream into $O_\eps(n)$ blocks, each containing no more than $O_\eps(\log T)$ updates, and then running a continual counter on the blocks.
The same idea can be applied to $(\eps, \delta)$-DP, yielding an error of $O_{\eps, \delta}(\log T + \log^{3/2} n)$. The work by \cite{DBLP:conf/aistats/HenzingerSS25} extended this to the histogram setting, i.e.~to higher-dimensional inputs and outputs and~\cite{DBLP:journals/corr/abs-2209-01387} generalizes the technique to all linear queries over a set of elements that is modified at each time step by the insertion or deletion of one item.
Finally, recent work by \cite{cohen24lower} showed a matching lower bound of $\Omega_{\eps, \delta}(n)$ for very sparse streams where $n = O(\log T)$.

\paragraph{Continual Counting under Non-Standard Neighborings}
To the best of our knowledge, all past work on continual counting that uses a different neighboring relation fall into one of two categories.
The first category derives from implementing user-level DP under the standard neighboring relation, allowing a user to contribute multiple entries in $\sigma$.
If any user is allowed to contribute up to $k\in [T]$ entries in a stream $\sigma$, then $\sigma\sim_k\sigma'$ if at most $k$ of the inputs differ.
This is for example a motivating example in DP machine learning \citep{kairouz2021practical}, where continual counting is used as a primitive (privately summing gradients), and we would like to support users contributing more than one example during training.\footnote{See the recent monograph by \cite{pillutla2025correlated} for an in-depth treatment of this application.}
To achieve better utility than what a direct application of group privacy would yield, it has become common to enforce more structure in user participation patterns.
In particular, \cite{choquette2022multi,choquette2023amplified} introduced the notion of \emph{$b$-min-separated $k$-repeated participation}, where $\sigma\sim_{(k, b)}\sigma'$ if the two streams differ in at most $k$ positions, and these positions are separated by at least $b$ steps.
There has been a considerable amount of work focusing on optimizing factorization-based mechanisms in this setting~\cite{kalinin24banded,mcmahan2024hassle,mcKenna25scaling,kalinin2026back,kalinin26learning}.
We also highlight the work of \cite{dong23continual} on user-level DP. Instead of restricting participation patterns, they dynamically track user participation to give a down-neighborhood optimal algorithm (up to polylogarithmic factors) for $\eps$-DP.

The second class of works is based on reductions from \emph{continual cardinality estimation} problems to continual counting.
Here the input stream $\sigma$ contains updates (additions and/or deletions of items), and we maintain a set of all present items, $S_t$, over time.
At each time step $t$, we are to output a statistic $f(S_t)$, e.g., the size of $S_t$.
Rather than designing algorithms for directly releasing $f$, \cite{song2018differentially} proposed solving the continual counting problem on $d=(f(S_1), f(S_2) - f(S_1), \dots, f(S_T) - f(S_{T-1}))$.
Depending on the problem, and the neighboring relation $\sigma\sim_{c}\sigma$, the induced neighboring relation $d\sim_r d'$ for the corresponding continual counting problem may either overlap with the standard notion, or deviate significantly.
E.g., releasing the number of edges or connected components under edge-insertions with edge-level DP cleanly reduces to regular continual counting \cite{song2018differentially,fichtenberger2021differentially}.
On the other hand, counting distinct elements in the turnstile model, even under event-level DP, requires error $\Omega_{\eps, \delta}(T^{1/4})$ for worst-case inputs~\cite{jain23distinct}.
This last result implies that continual counting under the corresponding induced neighboring $\sim_r$ suffers the same polynomial lower bound.
For additional works that leverage this reduction technique, see e.g., \cite{epasto2023differentially,fichtenberger2022constant,jain24time,henzinger24distinct,raskhodnikova25dynamic,cummings25space,andersson26improved} and references therein.

Our work on the edit-neighboring relation, $\sim_e$, is distinct from past works on continual counting.
Firstly, we are not aware of any past work that defines neighboring inputs in terms of \emph{shifts}, and, to the best of our knowledge, no neighboring notion has (semantically) treated \enquote{$\bot$} differently from \enquote{0}.
Additionally, we are not aware of any existing algorithm for continual counting, under any neighboring relation, that both (1) is not based on adding input-independent correlated noise to the true counts, \emph{and} (2) targets worst-case $\ell_\infty$ error.
For example, while the algorithm by Dwork, Naor, Reingold and Rothblum~\cite{dwork2015pure} based on private partitions for swap-neighboring streams is outside the class of input-independent additive noise algorithms, and allows for improved error on sparse streams, algorithms within that class still achieve a lower error on dense streams.
By contrast, we give an algorithm for continual counting under $\sim_e$-neighboring achieving polylogarithmic error, and additionally show that any algorithm based on input-independent additive noise has to incur an exponentially larger error.

\section{Continual Counting under Extended Edit Neighboring}
\label{sec:modified-insertion-neighboring} 

In this section, we study continual counting under an extended version of edit neighboring that allows streams of different lengths. In Definition~\ref{def:queue-neighboring}, an insertion into a stream without $\bot$ was defined by adding an element at position $i$ and discarding the final element to preserve the length. In contrast, in the extended definition, the final element is not removed, and consequently neighboring streams may have different lengths. This introduces an additional challenge for differential privacy: a mechanism must avoid revealing the exact length of the input stream. The goal of this extension is to process all inputs without discarding any element. To achieve this, the mechanism cannot simply run for a fixed number of steps; instead, it must continue processing until it receives a special symbol $\$$ denoting the end of the input stream, at which point it halts.

\begin{definition}[Extended Edit Neighbors]\label{def:modified-insertion-neighboring}
    Let $T, T'\in\N\cup\{0\}$, and let $\sigma=(x_1,\dots,x_T,\$)$ and $\sigma'=(x_1',\dots,x_{T'}',\$)$ be two sequences in $([0,1]\cup\{\bot\})^*\times \{\$\}$. We say that $\sigma$ is an \emph{extended insertion neighbor of $\sigma'$ at step} $i\in\{1, \dots, T\}$ if the following conditions hold:
    \begin{itemize}
        \item For all $t< i$, we have $x_t=x_t'$.
        \item Let $j$ be the smallest index in $\{i,\dots,T\}$ such that $x_{j}'=\bot$. If no such index exists, set $j=T+1$.
        \begin{itemize}
            \item If $j\leq T$, then the sequences $\sigma$ and $\sigma'$ are of the same length, i.e., $T=T'$; for every $t\in \{i+1, \dots, j\}$, we have $x_t=x_{t-1}'$; and for every $t\in\{j+1,\dots,T\}$, we have $x_t=x_t'$.
            \item If $j=T+1$, then $T=T'+1$, and $x_t = x_{t-1}'$ for all $t \in \{i+1, \dots, T+1\}$.
        \end{itemize}
    \end{itemize}
    We say that $\sigma$ and $\sigma'$ are \emph{length-changing edit neighbors} if either $\sigma$ is a length-changing insertion neighbor of $\sigma'$ at some step $i\in[T]$, or $\sigma'$ is a length-changing insertion neighbor of $\sigma$ at some step $i\in[T']$.
\end{definition}

We now define a wrapper mechanism $\wrp$ that transforms any continual counter that is private with respect to the edit neighboring in Definition~\ref{def:queue-neighboring} (e.g., $\ecc$ from Section~\ref{sec:ecc}) into one that is private with respect to the extended edit neighboring in Definition~\ref{def:modified-insertion-neighboring}.

\medskip
\noindent\textbf{Mechanism $\wrp$.}
At initialization, the mechanism $\wrp$ receives privacy parameters $\eps$ and $\delta$. It first samples $U \sim \Lap(2/\eps)$ and sets
$$C = \max\left\{0, 1 + \frac{2\ln(1/\delta)}{\eps} + \lceil U \rceil\right\}.$$
It then initializes an instance of $\ecc$ with parameters $\eps/2$ and $\delta/2$.

At each time step $t$, $\wrp$ takes one of the following actions:
\begin{itemize}
    \item If an input $x_t \in [0,1] \cup \{\bot\}$ arrives, it feeds $x_t$ to $\ecc$ and returns its output.
    \item If the end-of-stream symbol $\$$ arrives, $\wrp$ generates $C$ additional inputs over the next $C$ steps as follows: while $C > 0$, it decrements $C$, feeds $\bot$ to $\ecc$, and outputs the result. Once $C = 0$, $\wrp$ halts permanently.
\end{itemize}
By construction, given an input stream $\sigma=(x_1,\dots,x_T,\$)$, the mechanism $\wrp$ produces at least $T$ outputs.

\begin{lemma}\label{lem:modified-insertion-neighboring-privacy}
    Let $\wrp$ be the wrapper mechanism described above with privacy parameters $\eps>0$ and $0<\delta\leq 1$. Then, $\wrp$ is $(\eps, \delta)$-DP with respect to the notion of extended neighbors defined in Definition~\ref{def:modified-insertion-neighboring}.
\end{lemma}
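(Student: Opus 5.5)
The plan is to reduce the extended-neighbor case to the fixed-length edit-neighbor guarantee of \Cref{thm:ecc-privacy} (with parameters $(\eps/2,\delta/2)$), and to pay the remaining $(\eps/2,\delta/2)$ for hiding the stream length through the Laplace padding $U$. Fix extended edit neighbors $\sigma=(x_1,\dots,x_T,\$)$ and $\sigma'=(x'_1,\dots,x'_{T'},\$)$. If $j\le T$ then $T=T'$, and the two streams are ordinary edit neighbors of length $T$ in the sense of Definition~\ref{def:queue-neighboring}. We couple the padding variable $C$ identically in both runs. The padded streams $(\sigma,\bot^{C})$ and $(\sigma',\bot^{C})$ are then edit neighbors of length $T+C$, since the shift is absorbed at $j\le T$ and the padding is unchanged. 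Conditioning on $C$ and applying \Cref{thm:ecc-privacy} gives $(\eps/2,\delta/2)$-indistinguishability, which is even stronger than required. The output of $\wrp$ is the output of $\ecc$ on the padded stream, together with its length; that length is determined by $C$ and $T$.

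The main case is $j=T+1$, so $T=T'+1$ and $x_t=x'_{t-1}$ for $i<t\le T$. We couple $C'=C+1$, i.e.\ $U'=U+1$ in the $\sigma'$ run. Both padded streams then have total length $T+C=T'+C'$. The padded stream $\pi'=(x'_1,\dots,x'_{T'},\bot^{C+1})$ has $\pi'_{T}=\bot$. Therefore $\pi=(x_1,\dots,x_T,\bot^{C})$ is an insertion neighbor of $\pi'$ at step $i$ with absorbing index $j=T$, so $\pi\sim_e\pi'$ as length-$(T+C)$ streams. This holds for every realization with $C\ge 0$, and \Cref{thm:ecc-privacy} applies conditionally on $C$. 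The cost of the coupling is the density ratio of $U$ versus $U+1$ under $\Lap(2/\eps)$, which is at most $e^{\eps/2}$. Composing the two steps (the length shift from $U$ and the edit shift in $\ecc$) gives total $(\eps,\ \cdot)$. Formally, we write $\Pr[\wrp(\sigma)\in Y]=\sum_c \Pr[C=c]\Pr[\ecc(\pi_c)\in Y]$ and bound term by term, as in the proof of \Cref{thm:ecc-privacy}.

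The main obstacle is the truncation $C=\max\{0,\cdot\}$. First, the map $c\mapsto c+1$ must carry the law of $C$ to within a factor $e^{\eps/2}$. Second, we need $C'\ge1$ so that $\pi'$ contains a $\bot$ at position $T$, and in the reverse direction we need $C\ge 1$ so that $c\mapsto c-1$ stays valid. Discreteness from the ceiling is harmless: $\lceil U+1\rceil=\lceil U\rceil+1$. The plan is to define a bad event $\{1+2\ln(1/\delta)/\eps+\lceil U\rceil\le 1\}$. It is contained in $\{U\le -2\ln(1/\delta)/\eps\}$, which by Lemma~\ref{lem:laplace-concentration-bound} has probability at most $\delta/2$. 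Outside this event no clipping occurs and the shift argument is exact. The bad event contributes additively to $\delta$, giving $(\eps/2+\eps/2,\ \delta/2+\delta/2)$ after combining with the $\delta/2$ from $\ecc$. Some care is needed because $e^{\eps/2}$ multiplies the inner $\delta/2$. If the constants do not close exactly, we would enlarge the offset in $C$ slightly or split $\delta$ as $\delta/4$ per source; the bookkeeping is otherwise routine. Both directions of indistinguishability follow symmetrically, using the shift $U\mapsto U-1$ for the reverse inequality.
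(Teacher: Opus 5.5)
Your argument is the paper's proof. It uses the same case split: $T=T'$ is handled by conditioning on a common padding, and $T=T'+1$ by matching $\lceil U\rceil=z$ with $z+1$ at cost $e^{\eps/2}$, so that $\sigma\cdot\bot^{c}$ and $\sigma'\cdot\bot^{c+1}$ are same-length edit neighbors whose first padding $\bot$ absorbs the shift. It also discards the same truncation event of mass at most $\delta/2$ and invokes \Cref{thm:ecc-privacy} at $(\eps/2,\delta/2)$.

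The one loose end is the $\delta$ accounting. It closes exactly with no change to $\wrp$. Your fallbacks of enlarging the offset in $C$ or re-splitting $\delta$ would alter the mechanism, so they would not prove the lemma as stated, and they are not needed. The fix is to order the bounds as the paper does:
\begin{enumerate}
\item For each good $z$, first apply the conditional bound $\Pr[\wrp(\sigma)\in\mathcal{Y}\mid\lceil U\rceil=z]\le e^{\eps/2}\Pr[\wrp(\sigma')\in\mathcal{Y}\mid\lceil U\rceil=z+1]+\delta/2$.
\item Weight it by the original $\Pr[\lceil U\rceil=z]$, so the additive terms sum to at most $\delta/2$ and are never multiplied by $e^{\eps/2}$.
\item Only then apply $\Pr[\lceil U\rceil=z]\le e^{\eps/2}\Pr[\lceil U\rceil=z+1]$ to the main term.
\end{enumerate}
This gives $e^{\eps}\Pr[\wrp(\sigma')\in\mathcal{Y}]+\delta/2+\delta/2$. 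The reverse direction is identical with $z\mapsto z-1$.
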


\begin{proof}
    Let $T,T'\in\N\cup\{0\}$, and let $\sigma=(x_1,\dots,x_T,\$)$ and $\sigma'=(x_1',\dots,x_{T'}',\$)$ be two extended edit-neighboring sequences in $([0,1]\cup\{\bot\})^*\times \{\$\}$. We must show that the output distributions of $\wrp$ on $\sigma$ and $\sigma'$, denoted by $\wrp(\sigma)$ and $\wrp(\sigma')$, are $(\eps, \delta)$-indistinguishable. 
    
    Without loss of generality, we assume that $\sigma$ is an extended insertion neighbor of $\sigma'$ at step $i\in [T]$. We consider two cases: $T=T'$ and $T=T'+1$.
    
    \medskip\noindent\textbf{Case $T=T'$.} Recall the random variable $U \sim \Lap(2/\eps)$ and the padding length $C=\max\left\{0, 1 + 2\ln(1/\delta)/\eps + \lceil U \rceil\right\}$ in the definition of $\wrp$. The distribution of the padding length $c$ is identical whether $\wrp$ is executed on $\sigma$ or $\sigma'$. Therefore, to show $\wrp(\sigma)$ and $\wrp(\sigma')$ are $(\eps, \delta)$-indistinguishable, it suffices to show that for every fixed padding length $c$, the conditional output distributions are $(\eps, \delta)$-indistinguishable.

    Fix $c\in \N\cup\{0\}$. Conditioned on the padding length being $c$, the outputs $\wrp(\sigma)$ and $\wrp(\sigma')$ equal the output streams of $\ecc$ on the padded inputs $\sigma\cdot \bot^c$ and $\sigma'\cdot \bot^c$, respectively. We now show that these padded streams are edit neighbors under Definition~\ref{def:queue-neighboring}.

    Since $T=T'$, by Definition~\ref{def:modified-insertion-neighboring}, there exists $j\in [T]$ such that: (1) $x_j'=\bot$; (2) $x_t = x_{t-1}'$ for every $t\in \{i+1, \dots, j\}$; and (3) $x_t = x_t'$ for every $t\in \{j+1, \dots, T\}$. Comparing this with Definition~\ref{def:queue-neighboring} implies that the padded sequences $\sigma\cdot \bot^c$ and $\sigma'\cdot \bot^c$ are edit neighbors. Therefore, by Theorem~\ref{thm:ecc-privacy} and the choice of privacy parameters in $\wrp$, the output distributions of $\ecc$ on $\sigma\cdot \bot^c$ and $\sigma'\cdot \bot^c$ are $(\eps/2, \delta/2)$-DP (and consequently, $(\eps, \delta)$-DP) with respect to edit neighboring, completing the proof for this case.
    
    \medskip\noindent\textbf{Case $T=T'+1$.}
    We must show that for every measurable set $\calY\subseteq\R^*$,
    \begin{align*}
        \Pr[\wrp(\sigma)\in \calY] \leq e^{\eps}\Pr[\wrp (\sigma')\in \calY] + \delta,
    \end{align*}
    and 
    \begin{align*}
        \Pr[\wrp(\sigma')\in \calY] \leq e^{\eps}\Pr[\wrp(\sigma)\in \calY] + \delta.
    \end{align*}
    By Fact~\ref{fact:sum-conditions}, the above inequalities are equivalent to
    \begin{equation}\label{eq:privacy-insertion-neighboring-t+1-to-t}
    \begin{aligned}
        &\sum_{z\in \Z}\Pr\!\big[\lceil U\rceil = z\big]\cdot \Pr[\wrp(\sigma)\in \calY \mid \lceil U\rceil = z]\\
        &\quad\leq e^{\eps}\sum_{z\in \Z}\Pr\!\big[\lceil U\rceil = z\big]\cdot\Pr[\wrp(\sigma')\in \calY \mid \lceil U\rceil = z] + \delta,
    \end{aligned}
    \end{equation}
    and 
    \begin{equation}\label{eq:privacy-insertion-neighboring-t-to-t+1}
    \begin{aligned}
        &\sum_{z\in \Z}\Pr\!\big[\lceil U\rceil = z\big]\cdot \Pr[\wrp(\sigma')\in \calY \mid \lceil U\rceil = z] \\
        &\quad\leq e^{\eps}\sum_{z\in \Z}\Pr\!\big[\lceil U\rceil = z\big]\cdot\Pr[\wrp(\sigma)\in \calY \mid \lceil U\rceil = z] + \delta,
    \end{aligned}
    \end{equation}
    By Lemma~\ref{lem:laplace-concentration-bound},
    \begin{align*}
        \Pr[\lceil U \rceil \geq - \frac{2\ln(1/\delta)}{\eps}]\geq 1-\frac{1}{2}e^{-\frac{\eps}{2}\cdot \frac{2\ln(1/\delta)}{\eps}} = 1-\frac{\delta}{2},
    \end{align*}
    which combined with the fact that probabilities are non-negative implies
    \begin{align*}
        &\sum_{z\geq - 2\ln(1/\delta)/\eps - 1}\Pr\!\big[\lceil U\rceil = z\big]\cdot\Pr[\wrp(\sigma)\in \calY \mid \lceil U\rceil = z]\\
        &\leq \sum_{z\in \Z}\Pr\!\big[\lceil U\rceil = z\big]\cdot \Pr[\wrp(\sigma)\in \calY \mid \lceil U\rceil = z]\\
        &\leq \sum_{z\geq - 2\ln(1/\delta)/\eps}\Pr\!\big[\lceil U\rceil = z\big]\cdot\Pr[\wrp(\sigma)\in \calY \mid \lceil U\rceil = z] +\frac{\delta}{2},
    \end{align*}
    and
    \begin{align*}
        &\sum_{z\geq - 2\ln(1/\delta)/\eps + 1}\Pr\!\big[\lceil U\rceil = z\big]\cdot\Pr[\wrp(\sigma')\in \calY \mid \lceil U\rceil = z]\\
        &\leq \sum_{z\in \Z}\Pr\!\big[\lceil U\rceil = z\big]\cdot \Pr[\wrp(\sigma')\in \calY \mid \lceil U\rceil = z]\\
        &\leq \sum_{z\geq - 2\ln(1/\delta)/\eps}\Pr\!\big[\lceil U\rceil = z\big]\cdot\Pr[\wrp(\sigma')\in \calY \mid \lceil U\rceil = z] +\frac{\delta}{2}.
    \end{align*}
    Hence, to show Inequalities~\ref{eq:privacy-insertion-neighboring-t+1-to-t} and \ref{eq:privacy-insertion-neighboring-t-to-t+1} and complete the proof, it suffices to prove
    \begin{equation}\label{eq:ins-neighbor-t+1-to-t}
    \begin{aligned}
        &\sum_{z\geq - 2\ln(1/\delta)/\eps}\Pr\!\big[\lceil U\rceil = z\big]\cdot \Pr[\wrp(\sigma)\in \calY \mid \lceil U\rceil = z] + \frac{\delta}{2}\\
        &\leq e^{\eps}\SPACESUM\sum_{z\geq - 2\ln(1/\delta)/\eps+1}\!\!\!\Pr\!\big[\lceil U\rceil = z\big]\cdot \Pr[\wrp(\sigma')\in \calY \mid \lceil U\rceil = z] + \delta,
    \end{aligned}
    \end{equation}
    and 
    \begin{equation}\label{eq:ins-neighbor-t-to-t+1}
    \begin{aligned}
        &\sum_{z\geq - 2\ln(1/\delta)/\eps}\Pr\!\big[\lceil U\rceil = z\big]\cdot \Pr[\wrp(\sigma')\in \calY \mid \lceil U\rceil = z] + \frac{\delta}{2}\\
        &\leq e^{\eps}\SPACESUM\sum_{z\geq - 2\ln(1/\delta)/\eps-1}\!\!\!\Pr\!\big[\lceil U\rceil = z\big]\cdot \Pr[\wrp(\sigma)\in \calY \mid \lceil U\rceil = z] + \delta.
    \end{aligned}
    \end{equation}
    Since $U\sim\Lap(2/\eps)$, the Laplace distribution satisfies
    \[
    \Pr\!\big[\lceil U\rceil = z\big]\ge e^{-\frac{\eps}{2}}\Pr\!\big[\lceil U\rceil = z+1\big],
    \]
    and
    \[
    \Pr\!\big[\lceil U\rceil = z+1\big]\ge e^{-\frac{\eps}{2}}\Pr\!\big[\lceil U\rceil = z\big].
    \]
    Therefore, to prove Inequalities \ref{eq:ins-neighbor-t+1-to-t} and \ref{eq:ins-neighbor-t-to-t+1}, it suffices to show
    \begin{align*}
        &\sum_{z\geq - 2\ln(1/\delta)/\eps}\Pr\!\big[\lceil U\rceil = z\big]\cdot \Pr[\wrp(\sigma)\in \calY \mid \lceil U\rceil = z]\\ 
        &\leq e^{\eps/2}\SPACESUM\sum_{z\geq - 2\ln(1/\delta)/\eps+1}\Pr\!\big[\lceil U\rceil = z-1\big]\cdot \Pr[\wrp(\sigma')\in \calY \mid \lceil U\rceil = z] + \frac{\delta}{2},
    \end{align*}
    and 
    \begin{align*}
        &\sum_{z\geq - 2\ln(1/\delta)/\eps}\Pr\!\big[\lceil U\rceil = z\big]\cdot \Pr[\wrp(\sigma')\in \calY \mid \lceil U\rceil = z]\\
        &\leq e^{\eps/2}\SPACESUM\sum_{z\geq - 2\ln(1/\delta)/\eps-1}\Pr\!\big[\lceil U\rceil = z+1\big]\cdot \Pr[\wrp(\sigma)\in \calY \mid \lceil U\rceil = z] + \frac{\delta}{2}.
    \end{align*}
    By shifting the variable $z$ in the right-hand sides, we must equivalently show 
    \begin{equation}\label{eq:semi-final-inst-neighbor-1}
    \begin{aligned}
        &\sum_{z\geq - 2\ln(1/\delta)/\eps}\Pr\!\big[\lceil U\rceil = z\big]\cdot \Pr[\wrp(\sigma)\in \calY \mid \lceil U\rceil = z]\\ 
        &\leq e^{\eps/2}\SPACESUM\sum_{z\geq - 2\ln(1/\delta)/\eps}\!\!\!\!\!\Pr\!\big[\lceil U\rceil \!=\! z\big]\!\cdot\! \Pr[\wrp(\sigma')\!\in\! \calY \!\mid\! \lceil U\rceil \!=\! z\!+\!1] \!+\! \frac{\delta}{2},
        \end{aligned}
    \end{equation}
    and 
    \begin{equation}\label{eq:semi-final-inst-neighbor-2}
    \begin{aligned}
        &\sum_{z\geq - 2\ln(1/\delta)/\eps}\Pr\!\big[\lceil U\rceil = z\big]\cdot \Pr[\wrp(\sigma')\in \calY \mid \lceil U\rceil = z]\\
        &\leq e^{\eps/2}\SPACESUM\sum_{z\geq - 2\ln(1/\delta)/\eps}\!\!\!\!\Pr\!\big[\lceil U\rceil \!=\! z\big]\!\cdot\! \Pr[\wrp(\sigma)\!\in\! \calY \!\mid\! \lceil U\rceil \!=\! z\!-\!1] \!+\! \frac{\delta}{2}.
        \end{aligned}
    \end{equation}
    We will show that for every $z\geq - 2\ln(1/\delta)/\eps$, the following inequalities hold:
    \begin{equation}\label{eq:final-inst-neighbor-1}
    \begin{aligned}
        \Pr[\wrp(\sigma)\in \calY \mid \lceil U\rceil = z]
        \leq e^{\eps/2} \Pr[\wrp(\sigma')\in \calY \mid \lceil U\rceil = z+1] + \frac{\delta}{2},
    \end{aligned}
    \end{equation}
    and 
    \begin{equation}\label{eq:final-inst-neighbor-2}
    \begin{aligned}
        \Pr[\wrp(\sigma')\in \calY \mid \lceil U\rceil = z]
        \leq e^{\eps/2} \Pr[\wrp(\sigma)\in \calY \mid \lceil U\rceil = z-1] + \frac{\delta}{2}.
    \end{aligned}
    \end{equation}
    Multiplying Inequalities~\ref{eq:final-inst-neighbor-1} and \ref{eq:final-inst-neighbor-2} by $\Pr\!\big[\lceil U\rceil = z\big]$, summing over all integers $z\geq - 2\ln(1/\delta)/\eps$, and applying the fact that 
    $$\sum_{z\geq - 2\ln(1/\delta)/\eps} \Pr\!\big[\lceil U\rceil = z\big]\cdot \frac{\delta}{2} = \Pr[\lceil U\rceil \geq - 2\ln(1/\delta)/\eps]\cdot \frac{\delta}{2}< \frac{\delta}{2}$$
    yield Inequalities \ref{eq:semi-final-inst-neighbor-1} and \ref{eq:semi-final-inst-neighbor-2}.
 
    It remains to show Inequalities~\ref{eq:final-inst-neighbor-1} and \ref{eq:final-inst-neighbor-2} hold. Fix any $z\geq - 2\ln(1/\delta)/\eps$. For every $\lceil U\rceil\in \{z-1, z, z+1\}$, we have $\lceil U\rceil\geq - 2\ln(1/\delta)/\eps - 1$, and thus the padding size $C$ satisfies
    $$C=\max\left\{0, 1 + 2\ln(1/\delta)/\eps + \lceil U \rceil\right\} = 1 +  2\ln(1/\delta)/\eps + \lceil U \rceil.$$
    Consider Inequality~\ref{eq:final-inst-neighbor-1}. Conditioned on $\lceil U\rceil = z$ and given $\sigma=(x_1,\dots,x_{T+1})$ as input, $\wrp$ feeds the following stream to $\ecc$: 
    \[
    \tau = (x_1,\dots,x_{T+1}) \cdot \bot^c,
    \qquad
    c = 1 + \frac{2\ln(1/\delta)}{\eps} + z,
    \]
    Moreover, conditioned on $\lceil U\rceil = z+1$ and given $\sigma'=(x_1', \dots, x_{T})$ as input, $\wrp$ feeds the following stream to $\ecc$: 
    \[
    \tau' = (x'_1,\dots,x'_T)\cdot \bot^{c'},
    \qquad
    c' = 1 + \frac{2\ln(1/\delta)}{\eps} + z + 1.
    \]
    Since $T=T'+1$, by Definition~\ref{def:modified-insertion-neighboring}, we have $x_t = x_{t-1}'$ for all $t \in \{i+1, \dots, T+1\}$. Since $c'\geq 1$, we also know that $x_{T+1}'=\bot$. Comparing this with Definition~\ref{def:queue-neighboring} implies that the padded sequences $\tau$ and $\tau'$ are edit neighbors. Therefore, by Theorem~\ref{thm:ecc-privacy} and the choice of privacy parameters for $\ecc$, the distributions of the output sequences of $\ecc$ on $\tau$ and $\tau'$ are $(\eps/2, \delta/2)$-indistinguishable, which immediately implies Inequality~\ref{eq:final-inst-neighbor-1}. The proof of Inequality~\ref{eq:final-inst-neighbor-2} follows by an identical argument.
\end{proof}

\begin{lemma}\label{lem:modified-insertion-neighboring-accuracy}
    Let $\wrp$ be the wrapping mechanism described above with privacy parameters $\eps>0$ and $0<\delta\leq 1$, constructed using a continual counter $\ecc$ that for any given input stream $x'$, with probability $1-\beta$ for all $t\geq1$, incurs additive error $E_\ecc(x',t)$ at time-step $t$. Let $T\in\N$ and $\sigma=(x_1,\dots,x_{T})\in [0,1]^{T}$. Then, with probability at least $1-\beta-\delta/2$, the following conditions hold:
    \begin{enumerate}
        \item The mechanism $\wrp$ generates $T^*$ outputs $(y_1, \dots, y_{T^*})$, where $T^*\in \{T, T+1, \dots, T+1+ 4\lceil\ln(1/\delta)/\eps\rceil\}$.
        \item For $t\in [T^*]$, define $s_t=\sum_{i=1}^t x_i$, where $x_i=0$ for $i\in \{T+1, \dots, T^*\}$. For all $t\in [T^*]$, $|s_t-y_t|\leq E_\ecc(\sigma\cdot \bot^C, t)$.
    \end{enumerate}
\end{lemma}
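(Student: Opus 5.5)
The plan is to intersect two good events and show each item holds on their intersection. Let $\calE_U$ be the event that the padding noise is not too negative and not too positive. Let $\calE_\ecc$ be the event (of probability at least $1-\beta$) on which the inner $\ecc$ instance incurs error at most $E_\ecc(x',t)$ for all $t\ge 1$ on its actual input stream $x' = \sigma\cdot\bot^C$. A union bound then gives the claimed success probability $1-\beta-\delta/2$, provided $\Pr[\overline{\calE_U}]\le\delta/2$.

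For item~1, note that $\wrp$ produces exactly $T$ outputs while reading $x_1,\dots,x_T$ and then exactly $C$ further outputs, so $T^*=T+C$. Since $C\ge 0$ always, $T^*\ge T$. For the upper bound we need
\[
C\le 1+4\lceil\ln(1/\delta)/\eps\rceil,
\]
and because $C=\max\{0,\,1+2\ln(1/\delta)/\eps+\lceil U\rceil\}$, it suffices that $\lceil U\rceil\le 2\ln(1/\delta)/\eps+1$ (modulo rounding). By Lemma~\ref{lem:laplace-concentration-bound} with $U\sim\Lap(2/\eps)$,
\[
\Pr\bigl[U>2\ln(1/\delta)/\eps\bigr]\le \tfrac12 e^{-\ln(1/\delta)}=\delta/2.
\]
So we define $\calE_U:=\{U\le 2\ln(1/\delta)/\eps\}$. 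Only the upper tail matters here, since $C=0$ is harmless. On $\calE_U$,
\[
C\le 1+2\ln(1/\delta)/\eps+\lceil 2\ln(1/\delta)/\eps\rceil\le 1+4\lceil\ln(1/\delta)/\eps\rceil.
\]

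For item~2, observe that the outputs of $\wrp$ are exactly the outputs of $\ecc$ run on $\sigma\cdot\bot^C$. These outputs are produced online, so the first $T^*$ of them are well defined regardless of when the stream stops. The true prefix sums of this padded stream, with $\bot$ read as $0$, are exactly the $s_t$ in the statement. Conditioning on $C=c$ for any fixed $c$, the $\ecc$ accuracy guarantee applies to the fixed input $\sigma\cdot\bot^c$ with probability $1-\beta$. Here we use that $U$ is independent of $\ecc$'s internal coins. Averaging over $c$ then gives $\Pr[\calE_\ecc]\ge 1-\beta$ for the random padded stream.

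The main subtlety I expect is precisely this independence/conditioning step. The accuracy hypothesis is stated for a fixed input, while the padded stream has a random length. Conditioning on $C$ resolves it. Rounding constants in item~1 are only a minor check.
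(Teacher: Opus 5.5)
Your proposal is correct and follows the paper's proof: a Laplace upper-tail bound on $U$ (failure probability $\delta/2$) controls the padding length $C$, the accuracy guarantee of \ecc on the padded stream $\sigma\cdot\bot^C$ gives item~2, and a union bound combines the two. Your explicit conditioning on $C=c$ (using that $U$ is independent of the coins of \ecc) and your check that $\lceil 2\ln(1/\delta)/\eps\rceil \le 2\lceil \ln(1/\delta)/\eps\rceil$ make rigorous two steps the paper states only tersely.
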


\begin{proof}
    \begin{enumerate}
        \item From the Laplace tail bound on the random variable $U$, we have that with probability $1-\delta/2$, $U \leq \lceil 2 \ln (1/\delta)/\eps\rceil$. It follows that when $\wrp$ encounters $\$$, it sets $C \leq 1 + 4\lceil \ln (1/\delta)/\eps\rceil$ with probability $1-\delta/2$. The claim follows.
        \item We see that on the first $T$ steps of the stream $\sigma \cdot \bot^C$, the output of $\wrp$ is identical to the output of $\ecc$ on $\sigma$. For $t\in [T+1,T^*]$, we see that $s_t = s_T$, and from the accuracy guarantee of $\ecc$, with probability $1-\beta$, for all $t\geq1$, $|y_t - s_t| \leq E_\ecc(\sigma\cdot \bot^C, t)$.
    \end{enumerate}
\end{proof}

\end{document}